\newcommand{\testfunc}[1]{C^{\infty}_{0}(#1)}	
\newcommand{\testfuncv}[2]{C^{\infty}_{0}(#1,#2)}
\newcommand{\smoothfunc}[1]{C^{\infty}(#1)} 	
\newcommand{\smoothfuncv}[2]{C^{\infty}(#1,#2)} 
\newcommand{\supp}[1]{\operatorname{supp}(#1)}		
\newcommand{\Dom}[1]{\operatorname{Dom}(#1)}		
\newcommand{\Ran}[1]{\operatorname{Ran}(#1)}		
\newcommand{\bop}[1]{\mathbb B(#1)}		
\newcommand{\hs}[1]{\mathcal #1}		
\newcommand{\varhs}[1]{\mathfrak #1}		
\newcommand{\distrib}[1]{\mathcal{D}^{\prime}(#1)}		
\newcommand{\norm}[1]{\left\|#1\right\|}	
\newcommand{\sqint}[2]{L^{2}(#1,d#2)}		
\newcommand{\csqint}[2]{L^{2}(#1,d#2)_{\bC}}	
\newcommand{\ip}[2]{\langle #1, #2\rangle}	
\newcommand{\fock}[1]{\mathfrak{F}^{+}(#1)}   	
\newcommand{\Ch}{\operatorname{Ch}}		
\newcommand{\cl}[1]{\mbox{cl}(#1)}		
\newcommand{\open}[1]{ #1}			
\newcommand{\Int}[1]{\operatorname{Int}(#1)}		
\newcommand{\spacetime}[1]{\mathbb #1} 		
\newcommand{\sdc}[1]{\mathscr #1}		
\newcommand{\ccompl}[1]{{#1}^{\perp}}		
\newcommand{\cfuture}[1]{J^{+}(#1)} 		
\newcommand{\cpast}[1]{J^{-}(#1)} 		
\newcommand{\cdevel}[1]{J(#1)} 			
\newcommand{\tfuture}[1]{I^{+}(#1)} 		
\newcommand{\tpast}[1]{I^{-}(#1)} 		
\newcommand{\Cdata}[1]{D_{#1}}		
\newcommand{\Cdatav}[1]{\testfuncv{#1}{\bR}\oplus\testfuncv{#1}{\bR}}		
\newcommand{\DIAMOND}[1]{\Diamond(#1)}		
\newcommand{\alg}[1]{\mathcal #1} 			
\newcommand{\sympls}[1]{\mathsf #1}		
\newcommand{\sympl}[1]{\mathsf K(#1)}		
\newcommand{\sweyl}[1]{\mathcal W(#1)}		
\newcommand{\stweyl}[1]{\mathcal W[#1]}		
\newcommand{\weylalg}[2]{\alg{A}[#1, #2]}
\newcommand{\net}[2]{{\mathcal #1}_{\poset{#2}}}	
\newcommand{\snet}[2]{\mathcal #1(#2)}		
\newcommand{\stnet}[2]{\mathcal #1[#2]}		
\newcommand{\poset}[1]{\mathscr #1}    		
\newcommand{\coc}[1]{#1}			
\newcommand{\cocyclecat}[1]{{\mathcal Z}^{1}(#1)} 	
\newcommand{\ucocyclecat}[2]{{\mathcal Z}^{1}(\poset{#1}, \bop{#2})}	
\newcommand{\tcocyclecat}[1]{{\mathcal Z}^{1}_{t}(#1)}	
\newcommand{\itwiner}[1]{#1}			
\newcommand{\simplex}[2]{\Sigma_{#1}(#2)} 	
\newcommand{\pathset}[2]{\mathsf P(#1, #2)}	
\newcommand{\cpathset}[1]{\mathsf P(#1)}	
\newcommand{\homgroup}[2]{\pi_{1}\!\left(\poset{#1}, #2\right) }		
\newcommand{\shomgroup}[1]{\pi_{1}\!\left(\poset{#1}\right) }		
\newcommand{\stdhomgroup}[1]{\pi_{1}\!\left(#1\right) }		
\newcommand{\app}[1]{\mbox{App}(#1)}				
\newcommand{\indexset}[1]{\mathscr{I}(#1,\perp)}				
\newcommand{\category}[1]{\mathfrak{#1}}			
\newtheorem{proposition}{Proposition}
\newtheorem{theorem}{Theorem}
\newtheorem{lemma}{Lemma}
\newtheorem{definition}{Definition}
\newtheorem{corollary}{Corollary}
\def\bC{{\mathbb C}}           
\def\bI{{\mathbb I}}
\def\bN{{\mathbb N}}
\def\bR{{\mathbb R}}
\def\bS{{\mathbb S}}
\def\bZ{{\mathbb Z}}
\def\cpropagator{E}
\def\KGsol{\mathcal S}
\def\KGop{K}
\def\rest{\upharpoonright}
\def\Ch{\operatorname{Ch}}
\begin{document}
\begin{titlepage}
\begin{figure} [h]
\begin{center}
\includegraphics{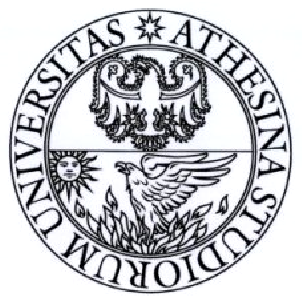}
\bigskip

{\bf{\Large UNIVERSIT\`A DEGLI STUDI DI TRENTO}}\\
\large Facolt\`a di Scienze Matematiche, Fisiche e Naturali
\end{center}
\end{figure}

\
\vspace{0.5 cm}
\begin{center}
\large Ph.D. Thesis
\end{center}
\
\vspace{1 cm}
\begin{center}
{\bf{\huge Topological sectors for Weyl-algebra net in the Einstein cylindrical universe}}
\end{center}
\
\vspace{1.5 cm}
\begin{flushleft}
\large Advisor: \hspace{\stretch{2}} \large Ph.D. Student:\\
\large Prof. Valter Moretti  \hspace{\stretch{1}} \large Dott. Lorenzo Franceschini\\
\end{flushleft}
\
\
\vspace{1,5 cm}
\begin{center}
\large January, 09 - 2009
\end{center} 
\end{titlepage}

\pagebreak

\tableofcontents
\newpage
\section*{Introduction}

A lot of research has be done in physics about the influence of topology on physical theories; to make just a few examples, we can cite the Aharonov-Bohm effect and the Berry phase in quantum mechanics \cite{Aharonov_Bohm_59}, and its classical counterpart called Hannay's angle \cite{Hannay_85}.
This work deals with a kind of topological effect arising in quantum field theory (QFT) on curved spacetime.

The underlying conceptual framework is the algebraic approach to QFT, also know as \emph{local quantum theory} \cite{Haag_96} where, as is well know, the physical content of the theory (observables) is encoded by a net of $C^{*}$-algebras $\stnet{A}{\open{O}$} indexed by a family (poset) of bounded spacetime regions $\open{O}\subset\spacetime{M}$; states are implemented, via the GNS theorem, by representations of these local abstract algebras as (concrete) operatorial algebras on a fixed Hilbert space. All the relevant physical information is contained in the inclusion relations of these concrete algebras for different spacetime regions.

So, given the importance of representations of local algebras, it's no surprise that considerable efforts have beeen spent in selecting physical meaningful representations of local observable algebras, that is to say physical interesting states. The way this choice is done goes by the name of \emph{selection criterion}, and a unitary equivalence class of representations satisfying the criterion is denoted as a \emph{superselection sector}.

A well known and important example of selection criterion is that introduced by Doplicher, Haag and Roberts \cite{DHR_71}, and the corresponding equivalence class is known as \emph{DHR superselection sector}; DHR superselection sectors describe charge localized in bounded regions. DHR theory has been developed on Minkowski spacetime and extended to simply connected globally hyperbolic spacetimes; however, it turns out that the extension to multiple-connected spacetimes is not straightforward, and it requires substantial changes to be of use. These changes involve modifying the selection criterion and generalizing the very notion of representation of a local algebra net; roughly speaking these generalized representations, named \emph{(unitary) net-representations}, are local version of standard representations, in a sense that will be made more precise later. Heuristically, net-representations are analogous to local coordinate charts on a manifold, while conventional representations are like a global coordinate chart; this time, although, the coordinate space is infinite-dimensional, being a subset of some $\bop{\hs{H}}$.

It can be shown \cite{Brunetti_Ruzzi_08} that these new notions of representation and selection criterion genuinely extend the old ones, reducing to them when specialized to simple-connected globally hyperbolic spacetimes; however, when the background spacetime isn't topologically trivial (i.e. its Cauchy surfaces are multiple-connected) they give rise to a truly new kind of superselection sectors of topological nature, leading as an aside to non trivial unitary representations of the first homotopy group of the spacetime. A crucial tool in such a construction is a class of objects named 1-cocycles \cite{Ruzzi_05}, relating net-representations with the fundamental group of the spacetime and encoding both the charge and topological content of representations.

 A recent article by Brunetti and Ruzzi \cite{Brunetti_Ruzzi_08} discusses the general theory of topological superselection sectors for spacetime dimension $\geq 3$; the purpose of this research work, instead, is to examine the situation in a lower-dimension context. Our approach is to consider a simple QFT model in two dimensions in order to explicitly work out the details and to gain some insight about what is going on. Actually, we choose the simpler (non trivial) model available out there, namely the free scalar massive (Klein-Gordon) field on the 2-dimensional Einstein cylinder; given the ultrastatic nature of the background spacetime we are able to pursue the calculations at a good depth. On the other end, since the Cauchy surfaces are diffeomorphic to $\bS^{1}$  we expect the appearance of some interesting phenomenon of topological nature. In fact we have proved the existence of a non countable family of topological net-representations, hence a corresponding class of non-trivial unitary representations of the fundamental group  of a Cauchy surface (it being the same as the spacetime fundamental group). 

However, we have not delved further about the charge or topological structure of the corresponding topological superselection sectors; in fact what we have done is to exhibit a family of (unitary inequivalent) non-trivial 1-cocycles, giving rise from the one hand to non-trivial net-representations of the Weyl local algebra net of observables, on the other hand to non-trivial unitary representations of the fundamental group. Actually, it turns out that some crucial results valid in dimension $\geq 3$ fail to be true in our model, and as a consequence even the proofs of various propositions that carry over to our case need essential modifications.

What can be said about the significance of the existence of topological net-representations ? As we'll see later, a topological net-representation cannot be trivialized, namely it's not unitary equivalent to a trivial net-representation; in turn, a trivial net-representation is, in fact, a conventional (global) representation of the local observable net. In other terms a topological net-representation cannot be turned into an ordinary representation via local unitary operators. Pursuing the geometric parallel issued before, the situation here is analogous to the impossibility of converting a set of local charts into a global chart by means of local coordinate trasformations.

We can also adopt another perspective. The GNS construction associates an operatorial representation to every state on a given $C^{*}$-algebra; so a state defined on the whole quasilocal observable algebra induces a global (ordinary) representation of the algebra itself. We could think a net-representation $\{\pi,\psi\}$ as originating from a family of local states $\omega_{O}$ acting on local algebras $\snet{A}{O}$; then, if the $\omega_{O}$'s were induced by restriction by a global state $\omega$, our net-representation $\{\pi,\psi\}$ would be topologically trivial. Summing up, we conclude that a topological net-representation cannot be generated by a global state. 
 
\section*{Overview}

In the first chapter we take a quick look to net-cohomology, briefly discussing concepts and facts needed for subsequent constructions. We start by introducing purely set-theoretic notions as those of poset, simplicial sets, (poset) paths and the first homotopy group of a generic poset; then we specialize the general theory to the case of interest to us, namely the topological one, and in particular the poset of spacetime diamonds. We conclude the section speaking about the category of 1-cocycles, cocycle equivalence and localization, net-representations, representations of the fundamental group, and the reciprocal relations between these notions.

In chapter 2 we set the stage for the central part of this work, contained in the last chapter. Here we give a rather detailed exposition of our reference QFT model, i.e. the massive free scalar field on the 2-dimensional Einstein universe. Following the logical path towards quantization, first of all we introduce the classical field equation (Klein-Gordon equation) with a special focus on well posedness of the Cauchy problem on globally hyperbolic spacetimes; we also define ultrastatic spacetimes.  Then we talk about the symplectic structure of the solution space of the field equation, leading to the construction of the net of local observable algebras via the Weyl correspondence. After a digression on spacetime symmetries, we define quasifree states on a general ultrastatic spacetime; then we specialize to our case, introducing the reference vacuum state and its associated vacuum representation. The central part of the chapter deals about standard properties of the von Neumann local net induced by the vacuum representation: additivity, local definiteness, factoriality, Haag duality, punctured Haag duality, Reeh-Schlieder property. We also give proofs, some of them are original; in particular our proof of Haag duality carries over to more general cases than our specific model. We conclude with an analysis of DHR sectors on the Einstein cylinder (showing that in fact there is only the vacuum one) and finally extending as far as possible the local net's properties from the spatial poset on the circle to the spacetime poset of diamonds.

Chapter 3 is the heart of this work; after some preparation we state the topological selection criterion \cite{Brunetti_Ruzzi_08} and we explicitly exhibit a family of topologically non-trivial 1-cocycles giving rise, as we said, from the one hand to a family of topological net-representations (that is to say topological superselection sectors), from the other hand to a family of non-trivial unitary representations of the fundamental group of the spacetime.

In the appendices we collect some rather technical proofs and statements from previous chapters, in order to avoid distracting the reader from the main line of reasoning. It's worth mentioning that in appendix \ref{app:universal_algebras} we proved the uniqueness of the \emph{universal algebra} in the general case, and determined its structure in the particular case of the Einstein cylinder.

\chapter{Posets and net-cohomology}

To formulate the main results of this work we need some background on homotopy of posets and net-cohomology. 
We will follow closely the exposition given in \cite{Ruzzi_05}, omitting proofs and details and focusing on the main definitions and results. While summarizing this topic we also fix notations and terminology used in the rest of the work.

The basic idea behind homotopy of posets is to reformulate standard topological concepts lying on the notion of ``paths'', i.e. continuos curves on a topological space, in terms of open sets belonging to the same topological space. Roughly speaking, this can be accomplished by approximating a curve with a chain of open sets lying on it (see fig. \ref{fig:path-approx}).

\begin{figure} [htbp]
\begin{center}
\includegraphics[width = 8cm]{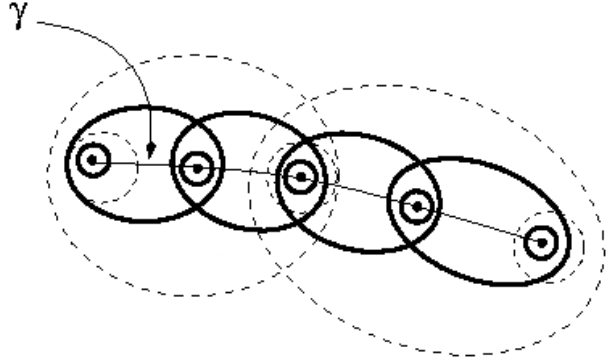}
\end{center}
\caption{Approximation of the curve $\gamma$ by a chain of open sets}
\label{fig:path-approx}
\end{figure}

The motivation for this kind of construction comes from the fact that we would like to study the relation between our net of local algebras encoding the quantum field and the topology of the background spacetime, in particular with respect to homotopy theory; unfortunately there is an obstruction here, due to the fact that local algebras are indexed by open sets while homotopy theory relies on the notion of paths, i.e. continuos curves. 

The goal of poset homotopy theory is indeed to smoothly join these different mathematical models, in order to overcome the aforementioned obstruction.
To start with we recall that a \textbf{poset} $(\poset{P},\leq)$ is a \textbf{p}artially \textbf{o}rdered \textbf{set}, namely a set $\poset{P}$ endowed with a reflexive, antisimmetric, transitive relation $\leq$; a standard example of poset is the family $2^{X}$ of all the subsets of a given set $X$, ordered with respect to set inclusion.

Then we  introduce the notion of \emph{(singular) n-simplex}. The point here is to express in terms of open sets notions like that of a line segment, a triangle, and so on. 
We recall that the \emph{standard n-simplex} $\Delta_{n}$ is defined as:

\begin{equation}
  \Delta_{n}:=
  \left\lbrace(\lambda_{0},\ldots, \lambda_{n})\in\bR^{n+1}|\lambda_{0}+\ldots\lambda_{n}=1,\:\lambda_{i}\in [0,1] \right\rbrace
\end{equation}
Then we see that $\Delta_{0}$ is  a point, $\Delta_{1}$ is a closed interval and so on. We also note that we can embed a $n$-simplex into a $n+1$-simplex in a obvious manner via inclusion maps $d_{i}^{n}:\Delta_{n-1}\mapsto\Delta_{n}$:
\begin{equation}
d_{i}^{n}(\lambda_{0},\ldots, \lambda_{n-1})=(\lambda_{0},\lambda_{1},\ldots,\lambda_{i-1},0,\lambda_{i}, \lambda_{n-1}) 
\end{equation}
for $n\geq$ 1 and $0\leq i\leq n$. 
Observing that a standard $n$-simplex can be viewed as a partially ordered set with respect to the inclusion of its subsimplices, we could define a $n$-simplex built upon open sets as an order-preserving map from a standard $n$-simplex to the poset made of open sets with respect to set inclusion. 

Having in mind this interpretation, we initially adopt a more general approach, constructing our homotopy theory for a generic poset and forgetting for a while every other topology-related details. 

So, fixing a poset $(\poset{P},\leq)$ we define a \emph{singular n-simplex} on $\poset{P}$ as an order-preserving map $f:\Delta_{n}\mapsto\poset{P}$. We denote by $\simplex{n}{\poset{P}}$ the collection of singular $n$-simplices on $\poset{P}$ and by  $\simplex{*}{\poset{P}}$ the collection of all singular simplices on $\poset{P}$, named the \emph{simplicial set} of $\poset{P}$.

The inclusion maps $d_{i}^{n}$ between standard simplices induce by duality the maps $\partial_{i}^{n}:\Sigma_{n}\mapsto\Sigma_{n-1}$, called \emph{boundaries} from their geometric meaning, by setting  $\partial_{i}^{n} f:=f\circ d_{i}^{n}$. For the sake of notational simplicity from now on we will omit the superscript from the symbol $\partial_{i}^{n}$, and denote $0$-simplices by the letter $a$, $1$-simplices by $b$, $2$-simplices by $c$ and so on.
Note that a $0$-simplex $a$ is just an element of $\poset{P}$, i.e. a point in our simplicial set; a $1$-simplex $b$ is formed by an element $|b|$ of $\poset{P}$, the \emph{support} of $b$, and two $0$-simplices $\partial_{0}b$, $\partial_{1}b$ such that $\partial_{0}$, $\partial_{1}\leq |b|$; so we can view $|b|$ as a segment and $\partial_{0}b$, $\partial_{1}b$ as its endpoints. Similarly, a $2$-simplex $c$ is made out of its support $|c|$, namely a ``triangle'', and three $1$-simplices $\partial_{0}c$, $\partial_{1}c$, $\partial_{2}c$, the sides of the triangle. 

\begin{figure} [htbp]
\begin{center}
\includegraphics[width = 10cm]{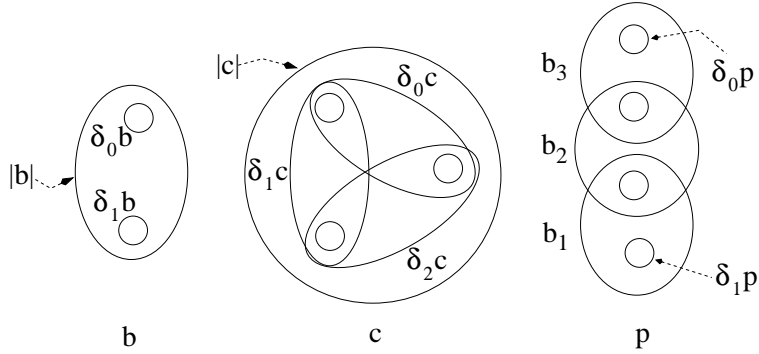}
\end{center}
\caption{$b$ is an 1-simplex, $c$ is a 2-simplex and $\path{p} =\{b_{3},b_{2},b_{1}\}$ is a path. The symbol $\delta$ stands for $\partial$.}
\label{fig:path-simplices}
\end{figure}

Chaining $1$-simplices we obtain a path; formally, given $a_{0}$, $a_{1}\in\simplex{0}{\poset{P}}$, a \emph{path} from $a_{0}$ to $a_{1}$ is a finite ordered sequence $\path{p}=\left\lbrace b_{n},\ldots,b_{1}\right\rbrace$  of $1$-simplices satisfying the relations
\begin{equation}\nonumber
 \partial_{1}b_{1} = a_{0},\:  \partial_{0}b_{i} = \partial_{1}b_{i+1},\: \forall\,\, i\in\left\lbrace 1,\ldots,n-1\right\rbrace,\: \partial_{0}b_{n} = a_{1}.
\end{equation} 
These conditions express the fact that the $1$-simplices making the path must be chained with each other starting at $a_{0}$, the \emph{starting point} $\partial_{1}\path{p}$ of $\path{p}$, and ending at $a_{1}$, the \emph{ending point} $\partial_{0}\path{p}$ of $\path{p}$. We will denote by $\pathset{a_{0}}{a_{1}}$ the set of paths from $a_{0}$ to $a_{1}$, and by $\cpathset{a_{0}}$ the set of loops (i.e. closed paths) based on $a_{0}$. The poset $\poset{P}$ is said to be \emph{pathwise connected} if for every pair $a_{0}$, $a_{1}$ of $0$-simplices the set $\pathset{a_{0}}{a_{1}}$ is nonempty. The support of the path $\path{p}$ is the collection $|\path{p}|\equiv\left\lbrace |b_{i}|, i=1,\ldots,n\right\rbrace$, and given a subset $P$ of $\poset{P}$ we will write $|\path{p}|\subseteq P$ if $|b_{i}|\in P$ for all $i$. 

\section{Causal disjointness}\label{sec:causal_disjointness} 

Until now we developed our poset machinery having in mind only the topological structure of the spacetime, but we know that every spacetime is endowed also with a \emph{causal structure} induced by its Lorentzian metric, and this is an essential ingredient in quantum field theory. So, we need to implement the causal structure in the poset framework; this can be done for a general poset $\poset{P}$, not only for the collection of open sets in a spacetime, introducing a \emph{causal disjointness relation} on the poset $\poset{P}$, i.e. a symmetric binary relation $\perp$ on $\poset{P}$ satisfying the following properties:
\begin{description}
 \item[i)] $\forall\,\,\open{O}_{1}\in\poset{P},\quad\exists\;\open{O}_{2}\in\poset{P}$ such that $\open{O}_{1}\perp\open{O}_{2}$,
 \item[ii)] if $\open{O}_{1}\leq\open{O}_{2}$ and $\open{O}_{2}\perp\open{O}_{3}$, then $\open{O}_{1}\perp\open{O}_{3}$.
\end{description}
These two properties encode the causal structure on the poset $\poset{P}$, and let us define such thing as the causal complement of an element of the poset. Actually, it's better to take a slightly more general approach, defining the causal complement of a whole family of elements in the poset, so that to be able to give a sense to the causal complement of sets \emph{not} contained in the poset. 

Given a subset $P\subseteq\poset{P}$, the \emph{causal complement} of $P$ is the subset $\ccompl{P}$ of $\poset{P}$ defined as

\[\ccompl{P}:=\left\lbrace\open{O}\in\poset{P}| \open{O}\perp\open{O}_{1},\;\forall\,\open{O}_{1}\in P\right\rbrace. 
\]
From the definition it follows immediately that $P_{1}\subseteq P$ implies $\ccompl{P}\subseteq\ccompl{P}_{1}$.

\section{The first homotopy group of a poset}\label{sec:the_first_homotopy_group_of_a_poset}

Our goal now is to introduce the notion of the first homotopy group of a poset; as in classical algebraic topology the  motivation is to have an algebraic object encoding some information about the topological structure of our space. The route to follow is already traced by the classical construction: to start with we have to define a notion of path (as we have already done); then we must know how to compose paths and say what is a reverted path; finally we need a rule to identify homotopic paths. 

First of all we define composition of paths and the reverse of a path. Given $\path{p}=\left\lbrace b_{n},\ldots,b_{1}\right\rbrace \in\pathset{a_{0}}{a_{1}}$ and $\path{q}=\left\lbrace b_{k}^{\prime},\ldots,b_{1}^{\prime}\right\rbrace \in\pathset{a_{1}}{a_{2}}$ the \emph{composition} of $\path{p}$ and $\path{q}$ is the path $\path{p}\ast\path{q}\in\pathset{a_{0}}{a_{2}}$ obtained linking the two paths:
\[
\path{p}\ast\path{q}:=\left\lbrace b_{k}^{\prime},\ldots,b_{1}^{\prime},b_{n},\ldots,b_{1}\right\rbrace
\]
The \emph{reverse} of a $1$-simplex $b$ is the $1$-simplex $\bar{b}$ such that 
\[
\partial_{0}\bar{b} = \partial_{1}b,\: \partial_{1}\bar{b} = \partial_{0}b,\: |\bar{b}| = |b|,
\]
that is the same ``segment'' with inverted endpoints. The reverse of a path is obtained simply reverting each of its components; if $\path{p}=\left\lbrace b_{n},\ldots,b_{1}\right\rbrace \in\pathset{a_{0}}{a_{1}}$, then $\bar{\path{p}}=\left\lbrace \bar{b}_{1},\ldots,\bar{b}_{n}\right\rbrace \in\pathset{a_{1}}{a_{0}}$.

Then we come to homotopy of paths. We recall that from the topological point of view two (topological) paths are homotopic if and only if one of them can be \emph{continuosly} deformed into the other; also the poset notion of homotopy is based on deformation of paths, but it's a deformation of a discrete kind, made up of a finite sequence of elementary deformations. This situation is ultimately due to the fact that (poset) paths are discrete in nature, being made up of a finite number of $1$-simplices. We don't insist here on the notion of elementary deformation, for which we refer to \cite{Ruzzi_05}; figure \ref{fig:deformation} should give an idea of the situation.

\begin{figure} [htbp]
\begin{center}
\includegraphics[width = 10cm]{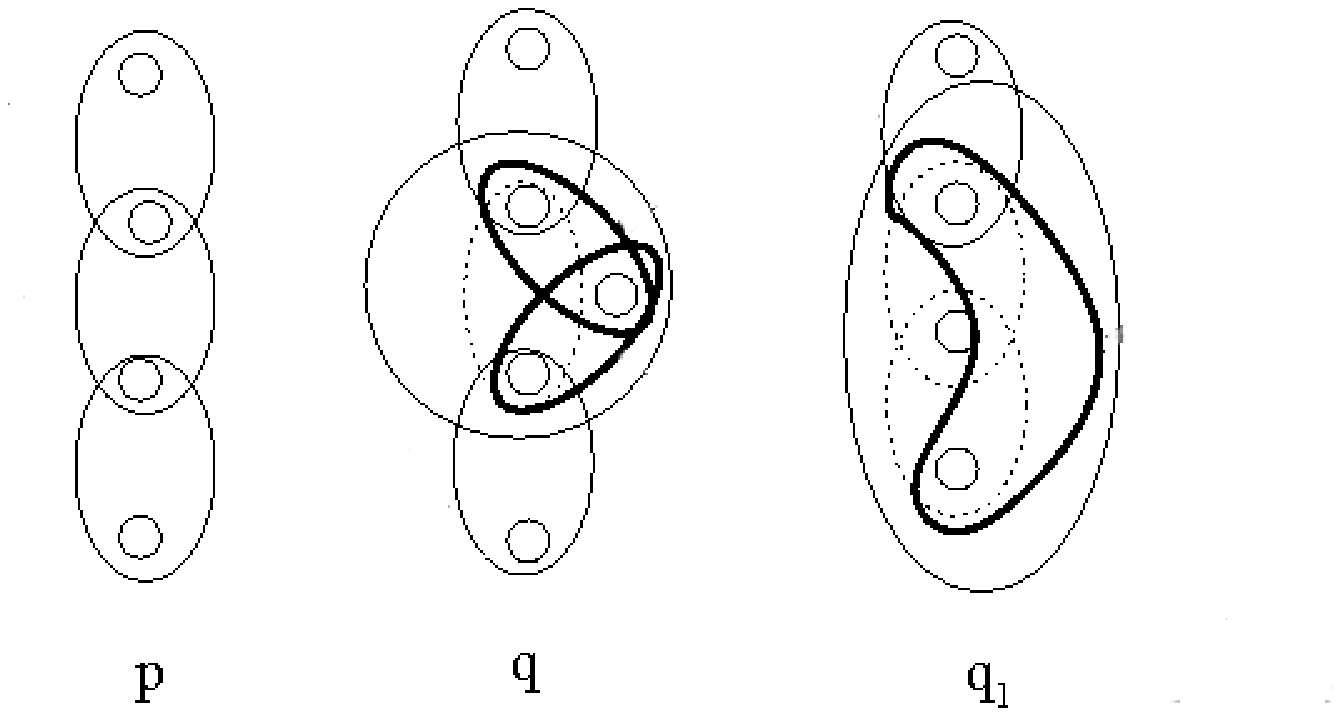}
\end{center}
\caption{$\path{q}$ is an elementary ampliation of the path $\path{p}$ and $\path{q}_{1}$ is an elementary contraction of $\path{p}$.}
\label{fig:deformation}
\end{figure}

Now, given $a_{0}, a_{1}\in\simplex{0}{\poset{P}}$, a \emph{homotopy of paths} in $\pathset{a_{0}}{a_{1}}$ is a map $h:\left\lbrace 1,\ldots,n\right\rbrace\mapsto\pathset{a_{0}}{a_{1}}$ such that $h(i)$ is an elementary deformation of $h(i-1)$ for $1<i\leq n$. Two paths $\path{p}, \path{q}\in\pathset{a_{0}}{a_{1}}$ are said to be \emph{homotopic}, $\path{p}\sim\path{q}$, if there exists a homotopy of paths $h$ in $\pathset{a_{0}}{a_{1}}$ such that $h(1)=\path{p}$ and $h(n)=\path{q}$. 

\noindent A $1$-simplex $b$ is said to be \emph{degenerate} to a $0$-simplex $a_{0}$ if
\[\partial_{0}b = a_{0} = \partial_{1}b,\quad a_{0} = |b|.\]
We  will denote by $b(a_{0})$ the $1$-simplex degenerate to $a_{0}$.
Below we summarize some basic properties of the notions just discussed.

\begin{lemma}
If $\path{p}$, $\path{q}$, $\path{p}_{i}$, $\path{q}_{i}$ are paths, then the following relations hold, whenever they make sense:
\begin{itemize}
\item Path composition is associative, i.e. $\path{p}_{1}\ast(\path{p}_{2}\ast\path{p}_{3}) = (\path{p}_{1}\ast\path{p}_{2})\ast\path{p}_{3}$,
\item $\bar{\bar{\path{p}}}=\path{p}$,
\item Homotopy of paths is an equivalence relation on paths with the same endpoints,
\item If $\path{p}_{1}\sim\path{q}_{1}$ and $\path{p}_{2}\sim\path{q}_{2}$, then $\path{p}_{2}\ast\path{p}_{1}\sim\path{q}_{2}\ast\path{q}_{1}$,
\item $\path{p}\sim\path{q}\Rightarrow \bar{\path{p}}\sim\bar{\path{q}}$, 
\item $\path{p}\ast b(\partial_{1}\path{p})\sim\path{p}\sim b(\partial_{0}\path{p})\ast\path{p}$,
\item $\path{p}\ast\bar{\path{p}}\sim b(\partial_{0}\path{p})$ and $\bar{\path{p}}\ast\path{p}\sim b(\partial_{1}\path{p})$.
\end{itemize}
\end{lemma}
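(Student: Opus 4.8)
The plan is to verify the seven listed relations essentially in the order given, since each one relies only on the definitions (composition, reverse, degenerate $1$-simplices, elementary deformations) recalled just above, together with earlier items in the list. The first two claims are purely combinatorial and require no homotopy at all. For associativity of $\ast$, I would simply note that concatenating the underlying finite sequences of $1$-simplices is associative (list concatenation is), and that the chaining conditions $\partial_{1}b_{1}=a_{0}$, $\partial_{0}b_{i}=\partial_{1}b_{i+1}$, $\partial_{0}b_{n}=a_{1}$ are preserved; both sides are literally the same sequence $\{b''_{\ell},\dots,b'_{1},b_{m},\dots,b_{1}\}$, so equality (not just homotopy) holds. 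For $\bar{\bar{\path{p}}}=\path{p}$, I would check at the level of a single $1$-simplex: reversing $b$ swaps $\partial_{0}b\leftrightarrow\partial_{1}b$ and fixes $|b|$, so doing it twice restores $b$; then reversing the whole path reverses the order of the list and reverses each entry, and doing that twice gives back the original ordered list.

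Next I would handle the homotopy-theoretic items. That $\sim$ is an equivalence relation on paths with fixed endpoints follows from the definition of a homotopy $h:\{1,\dots,n\}\to\pathset{a_0}{a_1}$ once one knows that the elementary deformation relation is \emph{reflexive} (the trivial one-step homotopy, $n=1$, gives $\path p\sim\path p$), \emph{symmetric} (every elementary ampliation is witnessed in reverse by an elementary contraction and vice versa — this is exactly the content of figure~\ref{fig:deformation}), and that homotopies can be \emph{concatenated} (juxtaposing $h$ and $h'$ with $h(n)=h'(1)$ yields a homotopy realizing transitivity). For the compatibility of $\sim$ with composition, I would prove the two "one-sided" statements first: if $\path p_1\sim\path q_1$ via $h$, then $\path p_2\ast h(\cdot)$ is a homotopy from $\path p_2\ast\path p_1$ to $\path p_2\ast\path q_1$, because prepending a fixed block of $1$-simplices carries each elementary deformation to an elementary deformation of the same type (the deformation happens inside the $\path p_1$-part and is unaffected by the extra initial segment); symmetrically for post-composition; then combine with transitivity to get $\path p_2\ast\path p_1\sim\path q_2\ast\path q_1$. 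The implication $\path p\sim\path q\Rightarrow\bar{\path p}\sim\bar{\path q}$ is similar: reversing a path turns an elementary ampliation into an elementary ampliation (or contraction into contraction) of the reversed path, so $i\mapsto\overline{h(i)}$ is again a homotopy.

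Finally, the two "unit" and "inverse" relations. For $\path p\ast b(\partial_1\path p)\sim\path p$ and $b(\partial_0\path p)\ast\path p\sim\path p$: inserting (or deleting) a degenerate $1$-simplex at an endpoint is precisely one of the elementary deformation moves (an elementary contraction removing a degenerate loop), so a one- or two-step homotopy does the job. For $\path p\ast\bar{\path p}\sim b(\partial_0\path p)$: I would argue by induction on the length $n$ of $\path p=\{b_n,\dots,b_1\}$. If $n=1$, $\path p\ast\bar{\path p}=\{\bar b_1,b_1\}$, and $\{\bar b,b\}$ being homotopic to the degenerate loop at $\partial_1 b$ is (another) elementary deformation move. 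For the inductive step, write $\path p=\path p'\ast\{b_1\}$ with $\path p'=\{b_n,\dots,b_2\}$, so $\bar{\path p}=\{\bar b_1\}\ast\bar{\path p'}$ and $\path p\ast\bar{\path p}=\path p'\ast(\{b_1\}\ast\{\bar b_1\})\ast\bar{\path p'}$; the middle block is homotopic to $b(\partial_1 b_1)$-type degenerate loop by the base case, using compatibility of $\sim$ with $\ast$ this reduces $\path p\ast\bar{\path p}$ to $\path p'\ast\bar{\path p'}$ up to homotopy (after absorbing the degenerate simplex by the unit law just proved), and the induction hypothesis finishes it. The second identity $\bar{\path p}\ast\path p\sim b(\partial_1\path p)$ follows by applying the first to $\bar{\path p}$ and using $\bar{\bar{\path p}}=\path p$. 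The main obstacle is not any single step but the fact that all of it rests on the precise combinatorics of "elementary deformation," which the excerpt deliberately suppresses and defers to \cite{Ruzzi_05}; so the honest version of this proof either reproduces those moves explicitly or, more economically, simply cites \cite{Ruzzi_05} for the three primitive facts used above (symmetry of the deformation relation, insertion/deletion of degenerate simplices as a move, and $\{\bar b, b\}\sim b(\partial_1 b)$ as a move) and then assembles the seven statements from them as sketched.
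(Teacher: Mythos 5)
Your proposal is correct in outline, but note that the paper itself offers no proof of this lemma: the chapter opens by announcing that it follows \cite{Ruzzi_05} ``omitting proofs and details,'' and the lemma is stated as a bare summary of that reference. So there is nothing in the paper to compare your argument against step by step; the relevant comparison is with the standard argument in \cite{Ruzzi_05}, and your reconstruction matches it. The first two items are indeed pure list combinatorics; the equivalence-relation and compatibility statements reduce, exactly as you say, to the facts that elementary deformations are symmetric, that homotopies concatenate, and that pre-/post-composing with a fixed block of $1$-simplices sends an elementary deformation to an elementary deformation; the unit and inverse laws reduce to the moves involving degenerate simplices and the pair $\{\bar{b},b\}$, with your induction on length handling general paths.

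The one place where you should be slightly more careful if you were to write this out in full is the base case $\{\bar{b},b\}\sim b(\partial_{1}b)$. With the actual definition of elementary contraction (via a $2$-simplex $c$ with $\partial_0 c=\bar{b}$, $\partial_2 c = b$), the contracted path is a single $1$-simplex with both endpoints equal to $\partial_1 b$ but with support $|b|$, not the degenerate simplex $b(\partial_{1}b)$ whose support is $\partial_{1}b$ itself; an additional deformation is needed to shrink the support. This is precisely the kind of combinatorial fine print you flag at the end as being deferred to \cite{Ruzzi_05}, and your closing remark — that the honest proof either spells out the elementary moves or cites them as primitives — is the right assessment. As written, your proposal is an acceptable proof sketch at the same level of rigour as the paper's own treatment (which is to say, it is more detailed than what the paper provides).
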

\noindent Finally we are ready to define the first homotopy group of a poset. Fix a base $0$-simplex $a_{0}$ and consider the set $\cpathset{a_{0}}$ of loops, i.e. closed paths, based at $a_{0}$; then the composition and the reverse are internal operations  on $\cpathset{a_{0}}$ and $b(a_{0})\in\cpathset{a_{0}}$. 
The first homotopy group $\homgroup{P}{a_{0}}$ of $\poset{P}$ is $\cpathset{a_{0}}$ quotiented with respect to the homotopy equivalence relation $\sim$:
\[
\homgroup{P}{a_{0}}\equiv\cpathset{a_{0}}/\sim.
\]
Let $[\path{p}]$ denote the homotopy class of an element $\path{p}$ of $\cpathset{a_{0}}$; the product 
\[
[\path{p}]\ast[\path{q}]:=[\path{p}\ast\path{q}],\quad [\path{p}],[\path{q}]\in\homgroup{P}{a_{0}},
\] 
is associative and makes $\homgroup{P}{a_{0}}$ into a group, the group identity being $[b(a_{0})]$ and $[\path{p}]^{-1}\equiv[\bar{\path{p}}]$. In general the first homotopy group depends on the base point where it is calculated; however, at least in the ``topological'' case, if the space is arcwise connected all homotopy groups are isomorphic. This situation carries over to the poset case, but arcwise connectedness must be replaced by pathwise connectedness. In fact, given another base point $a_{1}$, let $\path{q}$ be a path from $a_{0}$ to $a_{1}$; then the map

\[\homgroup{P}{a_{0}}\ni[\path{p}]\mapsto[\path{q}\ast\path{p}\ast\bar{\path{q}}]\in\homgroup{P}{a_{1}}\]
is a group isomorphism. We can summarize the above discussion giving the following

\begin{definition}
 With the above notations, we call $\homgroup{P}{a_{0}}$ the \emph{first homotopy group} of $\poset{P}$ based on $a_{0}\in\simplex{0}{\poset{P}}$. If $\poset{P}$ is pathwise connected, we denote this group by $\shomgroup{P}$ and call it the \emph{fundamental group} of $\poset{P}$. If the fundamental group is trivial, i.e. $\shomgroup{P}=1$, we'll say that $\poset{P}$ is \emph{simply connected}.
\end{definition}

We conclude this section citing an useful result.
\begin{proposition}
 If $\poset{P}$ is directed\footnote{A poset is (upward) directed if for any $a_{0}, a_{1}\in\poset{P}$ exists $a_{2}\in\poset{P}$ such that $a_{0}, a_{1}\leq a_{2}$.}, then it is pathwise and simply connected.
\end{proposition}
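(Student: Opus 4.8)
The plan is to prove both pathwise connectedness and simple connectedness directly from the directedness hypothesis, using the common upper bound as a ``pivot'' through which paths and homotopies can be routed.

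\emph{Pathwise connectedness.} First I would observe that for any $0$-simplex $a$ and any $\open{O}\in\poset{P}$ with $a\leq\open{O}$, there is a canonical $1$-simplex $b$ with $|b|=\open{O}$, $\partial_1 b = a$, and $\partial_0 b$ equal to any chosen $a'\leq\open{O}$; in particular one gets a $1$-simplex joining $a$ to $\open{O}$ itself (viewing $\open{O}$ as a $0$-simplex) whenever $a\leq\open{O}$. Now given arbitrary $0$-simplices $a_0,a_1$, apply directedness to obtain $a_2\in\poset{P}$ with $a_0,a_1\leq a_2$. Then there is a $1$-simplex $b$ with $|b|=a_2$, $\partial_1 b = a_0$, $\partial_0 b = a_2$, and a $1$-simplex $b'$ with $|b'|=a_2$, $\partial_1 b' = a_2$, $\partial_0 b' = a_1$. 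The path $\{b',b\}$ then lies in $\pathset{a_0}{a_1}$, so $\poset{P}$ is pathwise connected.

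\emph{Simple connectedness.} Fix a base point $a_0$ and let $\path{p}=\{b_n,\ldots,b_1\}\in\cpathset{a_0}$ be any loop. I would argue that $\path{p}\sim b(a_0)$ by an induction on $n$, the number of $1$-simplices. The key lemma to isolate is: if $b$ is a $1$-simplex and $\open{O}\in\poset{P}$ satisfies $|b|\leq\open{O}$ (equivalently, using directedness applied to $|b|$ and anything, one can always enlarge the support), then $b$ is homotopic — as a one-step path — to a two-step path $\{b'',b'\}$ that ``detours'' through $\open{O}$, i.e. with $|b'|=|b''|=\open{O}$ and $\partial_0 b' = \partial_1 b'' = \open{O}$; this is precisely an elementary ampliation/contraction of the kind pictured in figure \ref{fig:deformation}. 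Using such detours one shows each $b_i$ is homotopic to a path passing through a fixed common upper bound $\open{O}$ of $\{|b_1|,\ldots,|b_n|\}$ (obtained by iterating directedness), so that $\path{p}$ is homotopic to a concatenation of ``spikes'' $\bar{b}_{\open{O}}^{\,(i)}\ast b_{\open{O}}^{\,(i+1)}$ based at $\open{O}$. Consecutive spikes then cancel in pairs via $\path{q}\ast\bar{\path{q}}\sim b(\cdot)$ from the preceding lemma, collapsing the whole loop to $b(a_0)$.

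The main obstacle I anticipate is the bookkeeping around \emph{elementary deformations}: the excerpt deliberately suppresses their precise definition (deferring to \cite{Ruzzi_05}), so the argument must be phrased so that each claimed homotopy step is visibly one of the allowed elementary moves (ampliation by enlarging a support, contraction, or the identifications recorded in the Lemma) rather than an ad hoc manipulation. Concretely, the delicate point is verifying that ``inserting a detour through a larger open set $\open{O}$'' and ``merging two chained $1$-simplices whose supports lie in a common $\open{O}$'' are genuinely elementary, and that finitely many of them suffice — which is where directedness is used essentially, guaranteeing a \emph{single} $\open{O}$ dominating all supports at once so the induction terminates. Once that is granted, the pairwise cancellation of spikes and the reduction to $b(a_0)$ are routine consequences of the Lemma.
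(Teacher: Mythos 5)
Your proposal is correct, but note that the paper itself gives no proof of this proposition: it is explicitly ``cited'' from \cite{Ruzzi_05}, so there is no internal argument to compare against, and yours is essentially the standard one from that reference. Both halves check out. For pathwise connectedness, directedness in fact already yields a one-step path: with $a_{0},a_{1}\leq a_{2}$ the data $(|b|=a_{2},\ \partial_{1}b=a_{0},\ \partial_{0}b=a_{1})$ define a single $1$-simplex joining $a_{0}$ to $a_{1}$, so your two-step detour through $a_{2}$ is harmless but not needed. For simple connectedness, the point you rightly flagged as delicate --- that inserting the detour through $\open{O}$ is a genuine elementary deformation --- does hold: since a singular $2$-simplex is just an order-preserving map, the assignment of support $\open{O}$, vertices $\partial_{1}b_{i},\ \open{O},\ \partial_{0}b_{i}$ and faces $b_{i},b_{i}',b_{i}''$ (all supports $\leq\open{O}$) is a legitimate $2$-simplex $c$ with $\partial_{1}c=b_{i}$, and replacing $\partial_{1}c$ by the pair $\{\partial_{0}c,\partial_{2}c\}$ is exactly the allowed ampliation. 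The subsequent bookkeeping uses only items already recorded in the section's Lemma: $\bar{\path{q}}\ast\path{q}\sim b(\partial_{1}\path{q})$ kills the inner spikes, compatibility of $\sim$ with $\ast$ and absorption of degenerate simplices clean up, and the residual two-step path $\{b_{n}'',b_{1}'\}$ is itself of the form $\bar{\path{q}}\ast\path{q}$ (both legs have support $\open{O}$ and reversed endpoints, hence are reverses of one another), so it collapses to $b(a_{0})$. The only inputs beyond directedness for pairs are directedness for the finite family $\{|b_{1}|,\ldots,|b_{n}|\}$, obtained by iteration, and $a_{0}\leq\open{O}$, which is automatic since $a_{0}\leq|b_{1}|$.
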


\section{Topological posets}
So far we considered a generic poset $\poset{P}$; now we restrict our attention to the case of interest to us, that is when $\poset{P}$ is a basis for a topological space $X$. This way we can establish a relation between the poset notions discussed so far and the topological ones, our goal being to understand how topology affects net-cohomology. 
So let us consider a Hausdorff topological space $X$, and take as the poset $\poset{P}$ a topological basis of $X$ ordered under set inclusion $\subseteq$. As we said, our goal is to link classical topological concepts, particularly those of continuos curves and the (topological) first group of homotopy built on them, to poset ones, like paths and the (poset) first group of homotopy; the technical tool needed here is the notion of \emph{approximation} of a curve by a poset path.
By a curve in $X$ we mean as usual a continuos map $\gamma:[0,1]\mapsto X$.
\begin{definition}
 Given a (continuos) curve $\gamma$ in $X$, a path $\path{p}=\left\lbrace b_{n},\ldots, b_{1} \right\rbrace$ is said to be a \emph{poset approximation} of $\gamma$ (or simply an \emph{approximation}) if there is a partition $0 = s_{0} < s_{1} < \cdots < s_{n} = 1$ of the interval $[0,1]$ such that:
\[
\gamma([s_{i-1},s_{i}])\subseteq |b_{i}|,\: \gamma(s_{i-1})\in\partial_{1}b_{i},\:\gamma(s_{i})\in\partial_{0}b_{i},\quad i=1,\ldots,n 
\]
We denote the set of poset approximations of $\gamma$ by $\app{\gamma}$.
 \end{definition}

Since $\poset{P}$ is a topological basis for $X$, we have that $\app{\gamma}\neq\emptyset$ for any curve $\gamma$; it can also be shown \cite{Ruzzi_05} that approximations of curves behave well with respect to operations defined on curves (composition, reversal).

The next logical step is to introduce an order relation in the set of approximations of a given curve, so that we can talk about ``better'' or ``worse'' approximations.
\begin{definition}
Given $\path{p}, \path{q}\in \app{\gamma}$, we say that $\path{q}$ is \emph{finer} than $\path{p}$ whenever $\path{p}=\left\lbrace b_{n},\ldots, b_{1} \right\rbrace$ and $\path{q}=\path{q}_{n}\ast\cdots\ast\path{q}_{1}$, where $\path{q}_{i}$ are paths satisfying
\[
|\path{q}_{i}|\subseteq |b_{i}|,\:\partial_{0}\path{q}_{i}\subseteq\partial_{0}b_{i},\:\partial_{1}\path{q}_{i}\subseteq\partial_{1}b_{i},  \quad i=1,\ldots,n.
\]
We will write $\path{p}\prec\path{q}$ to denote that $\path{q}$ is a finer approximation than $\path{p}$. 
\end{definition}
\noindent It's easy to see from the definition that ($\app{\gamma}$, $\prec$) is a directed poset (i.e. for any $\path{p}$, $\path{q}\in\app{\gamma}$ there exists $\path{p}_{1}\in\app{\gamma}$ such that $\path{p}$, $\path{q}\prec\path{p}_{1}$; this follows from the fact that $\poset{P}$ is a basis for $X$. We already saw that we can find an approximation for any curve $\gamma$; the converse, namely that for a given path $\path{p}$ there is a curve $\gamma$ such that $\path{p}\in\app{\gamma}$ is also true, provided that elements of $\poset{P}$ are arcwise connected (with respect to the topological space $X$). Furthermore there is a relation between connectedness for posets and connectedness for topological spaces: if the elements of $\poset{P}$ are arcwise connected, then an open set $Y\subseteq X$ is arcwise connected in $X$ iff the poset $\poset{P}_{Y}$ defined as
\[
\poset{P}_{Y}:=\left\lbrace\open{O}\in\poset{P}|\open{O}\subseteq Y\right\rbrace 
\]   
is pathwise connected.

\noindent The following lemma relates poset homotopy of paths and topological homotopy of curves.
\begin{lemma}
Assume that elements of $\poset{P}$ are arcwise and simply connected subsets of $X$. Take two curves $\beta$ and $\gamma$ with the same endpoints, and let $\path{p}$, $\path{q}\in\pathset{a_{0}}{a_{1}}$ be, respectively, approximations of $\beta$ anf $\gamma$; then $\path{p}$ and $\path{q}$ are homotopic if and only if $\beta$ and $\gamma$ are homotopic.
\end{lemma}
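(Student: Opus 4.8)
The plan is to prove both implications separately, leaning on the interplay between the two notions of homotopy and the directedness of $\app{\gamma}$ established above.

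\medskip

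\textbf{From curve homotopy to path homotopy.} Suppose $\beta\sim\gamma$ via a continuous map $H:[0,1]\times[0,1]\to X$ with $H(\cdot,0)=\beta$, $H(\cdot,1)=\gamma$, and $H(0,t)$, $H(1,t)$ fixed at the common endpoints. First I would use compactness of $[0,1]^2$ and the fact that $\poset{P}$ is a basis to produce a Lebesgue-type mesh: a grid $0=s_0<\cdots<s_N=1$, $0=t_0<\cdots<t_M=1$ such that each closed rectangle $R_{ij}=[s_{i-1},s_i]\times[t_{j-1},t_j]$ has $H(R_{ij})$ contained in some basis element $O_{ij}\in\poset{P}$. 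For each horizontal level $t_j$ this grid produces an approximation $\path{p}^{(j)}\in\app{\gamma_j}$ of the intermediate curve $\gamma_j:=H(\cdot,t_j)$. The key geometric claim is then that $\path{p}^{(j-1)}\sim\path{p}^{(j)}$ for each $j$: passing one level of the grid amounts to a bounded number of elementary deformations (ampliations and contractions, in the terminology of \cite{Ruzzi_05}) because the two consecutive rows of rectangles overlap in basis elements, and one can read off the required elementary deformations cell by cell, using that each $O_{ij}$ is simply connected so that the local "square" of $1$-simplices bounds. Chaining these homotopies gives $\path{p}^{(0)}\sim\path{p}^{(M)}$, i.e.\ an approximation of $\beta$ is homotopic to an approximation of $\gamma$. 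Finally, any two approximations of the \emph{same} curve are homotopic: given $\path{p},\path{p}'\in\app{\gamma}$, directedness of $(\app{\gamma},\prec)$ yields a common refinement $\path{r}$, and a refinement $\path{p}\prec\path{r}$ forces $\path{p}\sim\path{r}$ since each $\path{q}_i$ replacing $b_i$ lies inside the simply connected $|b_i|$ and hence is homotopic, rel endpoints contained in $\partial_0 b_i,\partial_1 b_i$, to $b_i$ itself (again invoking simple connectedness of the elements of $\poset{P}$). Combining, $\path{p}\sim\path{q}$ for the originally given approximations.

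\medskip

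\textbf{From path homotopy to curve homotopy.} For the converse I would first record the realization direction: since the elements of $\poset{P}$ are arcwise connected, for any path $\path{p}=\{b_n,\ldots,b_1\}\in\pathset{a_0}{a_1}$ there is a curve $\gamma_{\path{p}}$ with $\path{p}\in\app{\gamma_{\path{p}}}$, and moreover any two realizations of the same path are homotopic rel endpoints (choose the connecting arcs inside the same $|b_i|$ and use simple connectedness). So it suffices to show that if $\path{p}\sim\path{q}$ then $\gamma_{\path{p}}\sim\gamma_{\path{q}}$. By definition $\path{p}\sim\path{q}$ is realized by a finite chain of elementary deformations, so by induction it is enough to treat a single elementary ampliation/contraction. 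An elementary deformation modifies the path only locally, replacing a short subchain by another subchain with the same endpoints, both supported in a common simply connected element of $\poset{P}$; realizing both subchains by arcs inside that element, simple connectedness gives a homotopy rel endpoints between them, and this extends to a homotopy between $\gamma_{\path{p}}$ and $\gamma_{\path{q}}$ by keeping the unchanged portion fixed. Since $\beta$ is homotopic to $\gamma_{\path{p}}$ (both being realizations of $\path{p}$, after noting $\path{p}\in\app{\beta}$) and likewise $\gamma\sim\gamma_{\path{q}}$, transitivity of topological homotopy yields $\beta\sim\gamma$.

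\medskip

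\textbf{Main obstacle.} The routine parts are the compactness/mesh argument and the "refinement implies homotopic" lemma; the genuine difficulty is the cell-by-cell bookkeeping showing that crossing one row of the grid is achieved by a controlled sequence of \emph{elementary} deformations in the precise sense of \cite{Ruzzi_05}. One must choose, for each vertex and edge of the grid, an auxiliary basis element and an auxiliary $0$- or $1$-simplex witnessing the overlap of adjacent cells, and then verify that the resulting modifications of the path are exactly ampliations and contractions rather than some coarser move — this is where simple connectedness of the elements of $\poset{P}$ is used in an essential way, and where a careful figure (as in Figure \ref{fig:deformation}) does most of the explanatory work. I would state this grid lemma as the technical core and relegate its diagram-chasing verification to a remark or to the cited reference.
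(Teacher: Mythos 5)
Your proposal cannot be compared line by line with the paper's argument for the simple reason that the paper gives none: this chapter explicitly omits proofs and defers to \cite{Ruzzi_05}, and this lemma is stated there without justification. Measured against the standard proof in that reference, your two-directional strategy --- a Lebesgue-number grid on the homotopy square for one implication, and realization of poset paths by curves plus induction over elementary deformations for the other, with the auxiliary facts that $(\app{\gamma},\prec)$ is directed, that a refinement of an approximation is poset-homotopic to it, and that two realizations of the same path are topologically homotopic rel endpoints --- is essentially the right one, and the decomposition into these sublemmas is sound.

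One point deserves tightening: you invoke simple connectedness of the elements of $\poset{P}$ in two places where it is not what does the work. When you collapse a refinement $\path{q}_i$ onto the $1$-simplex $b_i$, or when you cross one row of the grid by elementary deformations, the relevant moves are \emph{poset} homotopies, and what they require is the existence of suitable $2$-simplices, i.e.\ that the supports in question ($|b_i|$, resp.\ the basis element absorbing a pair of adjacent cells) are themselves elements of $\poset{P}$ dominating all the smaller supports; appealing to topological simple connectedness there is at best circular, since the purpose of the lemma is precisely to relate the two notions of homotopy. Topological simple connectedness of the poset elements is genuinely needed only on the curve side: to contract, rel endpoints, a curve segment lying inside a single element of $\poset{P}$ onto another such segment --- which is exactly the step you use when comparing two realizations of the same path and when converting an elementary deformation into a homotopy of curves. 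With the hypotheses relocated accordingly, and with the grid refined so that each pair of adjacent cells (not merely each single cell) maps into a common basis element so that the required $2$-simplices exist, your argument goes through.
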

Since first homotopy groups (both the poset and topological ones) are built from curves and paths using only the notion of homotopy, it's clear that they coincide, under suitable assumptions, as stated in the main theorem of this section:
\begin{theorem}\label{th:homotopy-groups}
Let $X$ be an Hausdorff, arcwise connected topological space, and let $\poset{P}$ be a topological basis for $X$ whose elements are arcwise and simply connected subsets of $X$; then $\stdhomgroup{X}\simeq\shomgroup{P}$.
\end{theorem}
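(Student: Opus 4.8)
The plan is to build group homomorphisms
\[
\Phi\colon\shomgroup{P}\longrightarrow\stdhomgroup{X},\qquad
\Psi\colon\stdhomgroup{X}\longrightarrow\shomgroup{P}
\]
out of the curve/approximation dictionary and to verify that they are mutually inverse. Fix a base $0$-simplex $a_{0}\in\simplex{0}{\poset{P}}$; since $a_{0}$ is an arcwise connected subset of $X$ we may choose a topological base point $x_{0}\in a_{0}$. Note that $\shomgroup{P}$ is meaningful here because $\poset{P}$ is pathwise connected: this follows from the arcwise connectedness of $X$ together with the equivalence, recalled above, between pathwise connectedness of $\poset{P}_{Y}$ and arcwise connectedness of $Y$.

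To define $\Phi$, take a loop $\path{p}\in\cpathset{a_{0}}$. Because the elements of $\poset{P}$ are arcwise connected there is a curve $\gamma$ with $\path{p}\in\app{\gamma}$, and the endpoint clauses in the definition of an approximation force $\gamma(0),\gamma(1)\in a_{0}$. Joining $x_{0}$ to $\gamma(0)$ and $\gamma(1)$ to $x_{0}$ by curves inside $a_{0}$ turns $\gamma$ into a loop $\tilde{\gamma}$ based at $x_{0}$, and since $a_{0}$ is simply connected the class $[\tilde{\gamma}]\in\stdhomgroup{X}$ does not depend on these auxiliary curves; we set $\Phi([\path{p}]):=[\tilde{\gamma}]$. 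Symmetrically, for a loop $\gamma$ based at $x_{0}$ one chooses an approximation $\path{p}\in\app{\gamma}$ which is a loop based at $a_{0}$ in the poset sense (this is the one place where a small technical lemma of \cite{Ruzzi_05} on the good behaviour of approximations is invoked, the basis elements being arcwise connected) and puts $\Psi([\gamma]):=[\path{p}]$.

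The core of the argument is well-definedness and the homomorphism property, and both are direct consequences of the last Lemma above, relating poset homotopy of paths to topological homotopy of curves (applicable precisely because the elements of $\poset{P}$ are arcwise \emph{and} simply connected). Indeed, if $\path{p}\sim\path{q}$ and $\gamma,\beta$ are curves they approximate, that Lemma gives $\gamma\sim\beta$, so $\Phi$ is well defined; taking $\beta=\gamma$ in the same Lemma shows that any two approximations of a single curve are poset-homotopic, which simultaneously yields well-definedness of $\Psi$ and independence of $\Phi$ from the curve chosen for $\path{p}$. Since, as recalled when approximations were introduced, a composite (resp. reverse) of curves is approximated by the composite (resp. reverse) of approximations, one obtains $\Phi([\path{p}]\ast[\path{q}])=\Phi([\path{p}])\ast\Phi([\path{q}])$ and the analogue for $\Psi$. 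Finally $\Psi\circ\Phi=\mathrm{id}$ and $\Phi\circ\Psi=\mathrm{id}$: in $\Phi([\path{p}])$ the curve produced has $\path{p}$ itself among its approximations, so $\Psi$ returns $[\path{p}]$, and the reverse composition is symmetric; hence $\Phi$ is the desired isomorphism.

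I expect the only genuine difficulty to be organizational: carefully reconciling the poset base $0$-simplex $a_{0}$ with a topological base point $x_{0}\in a_{0}$, and handling the fact that the endpoint conditions in the definition of an approximation merely place $\gamma(0),\gamma(1)$ in $a_{0}$ rather than at $x_{0}$ (and, dually, that an approximation of a loop at $x_{0}$ need not \emph{a priori} be a loop at $a_{0}$ in the poset). All of this is absorbed by the arcwise and simple connectedness of the basis elements, but it must be written out with care so that homotopy classes on neither side are disturbed. Everything beyond this bookkeeping is a straightforward appeal to the homotopy Lemma and to the elementary compatibility of approximations with composition and reversal, both available from the preceding material (Section~\ref{sec:the_first_homotopy_group_of_a_poset} and the discussion of topological posets) and from \cite{Ruzzi_05}.
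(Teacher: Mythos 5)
Your argument is correct and is exactly the route the paper intends: the paper omits the detailed proof (deferring to \cite{Ruzzi_05}) but justifies the theorem precisely by the observation that both homotopy groups are built from the approximation dictionary and the preceding Lemma relating poset homotopy of approximations to topological homotopy of curves, which is what your maps $\Phi$ and $\Psi$ implement. The base-point and endpoint bookkeeping you flag is genuinely the only delicate part, and your use of arcwise and simple connectedness of the basis elements to absorb it is the standard and correct resolution.
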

A consequence of this theorem will be useful later.
\begin{corollary}\label{cor:not_directed_poset} 
Let $X$ and $\poset{P}$ be as in the previous theorem; if $X$ is non-simply connected, then $\poset{P}$ is not directed under inclusion. 
\end{corollary}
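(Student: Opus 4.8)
The plan is to argue by contradiction, feeding the assumed directedness of $\poset{P}$ into the two results already at our disposal: the Proposition asserting that a directed poset is pathwise and simply connected, and Theorem~\ref{th:homotopy-groups} identifying $\stdhomgroup{X}$ with $\shomgroup{P}$. So suppose, contrary to the claim, that $\poset{P}$ is directed under inclusion. Since the hypotheses of Theorem~\ref{th:homotopy-groups} guarantee in particular that $\poset{P}$ is a basis of a nonempty space, $\poset{P}$ is nonempty, and the cited Proposition applies: $\poset{P}$ is pathwise connected and simply connected, i.e. $\shomgroup{P}=1$.

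Next I would invoke Theorem~\ref{th:homotopy-groups}. Its hypotheses are exactly those assumed here on $X$ and $\poset{P}$ (Hausdorff and arcwise connected $X$; $\poset{P}$ a topological basis with arcwise and simply connected elements), so it yields $\stdhomgroup{X}\simeq\shomgroup{P}$. Combining this with the previous paragraph gives $\stdhomgroup{X}\simeq 1$, that is, $X$ is simply connected. This contradicts the standing assumption that $X$ is non-simply connected, and therefore $\poset{P}$ cannot be directed.

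There is essentially no computational obstacle here; the statement is a formal consequence of the two quoted results, and the only point deserving a moment of care is checking that the hypotheses of Theorem~\ref{th:homotopy-groups} and of the directedness Proposition are genuinely met in the situation of the Corollary — in particular that pathwise connectedness of $\poset{P}$ (needed to even speak of $\shomgroup{P}$ as the fundamental group of the poset) is supplied by the Proposition rather than assumed separately. If one prefers to avoid passing through the intermediate Proposition, an alternative is to observe directly that in a directed poset every loop is homotopic to a degenerate one, since any finite family of $1$-simplices in a loop has a common upper bound into which the loop contracts; this makes $\shomgroup{P}$ trivial by hand, and then Theorem~\ref{th:homotopy-groups} finishes the argument as before.
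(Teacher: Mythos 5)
Your argument is correct and is precisely the one the paper intends: the corollary is stated as an immediate consequence of Theorem~\ref{th:homotopy-groups} combined with the earlier proposition that a directed poset is pathwise and simply connected, which is exactly the chain you run (in contrapositive/contradiction form). Your remark about checking that pathwise connectedness of $\poset{P}$ is supplied by the proposition, so that $\shomgroup{P}$ is well defined, is the right point of care and is handled correctly.
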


\subsection{Index sets}
Let $X$ be a topological space. As we'll see later, if we want to develop net-cohomology on it we need  objects like nets of local algebras, cocycles and so on, whose in turn rely on the choice of a poset endowed with a causal disjointness relation, namely a triple ($\poset{P}$, $\leq$, $\perp$). Of course we want our net-cohomology to be related to the topological structure of $X$, so we assume that $\poset{P}$ is a subset of the family $O(X)$ of open sets in $X$, and that the order relation $\leq$ coicides with set inclusion $\subseteq$; for the moment we make no hypotheses on the causal disjointness relation $\perp$. 

When we come to net-cohomology, we know that the poset $\poset{P}$ is used to index local algebras, so we may ask what requirements $\poset{P}$ must fulfill in order to be considered a ``good'' index set. Of course it would be desiderable to avoid the introduction of ``artificial'' topological obstructions, namely we would like to have  $\stdhomgroup{X}\simeq\shomgroup{P}$, so in view of theorem \ref{th:homotopy-groups} we require  $X$ to be Hausdorff and arcwise connected and the following definition is motivated:
\begin{definition}
Let $X$ be a Hausdorff, arcwise connected topological space and $\perp$ a causal disjointness relation defined on $O(X)$.
 We say that $\poset{P}\subseteq O(X)$ is a \emph{good index set} associated with ($O(X)$, $\subseteq$, $\perp$) if $\poset{P}$ is a topological basis for $X$ whose elements are arcwise and simply connected subsets of $X$ with non-empty causal complements. We denote by $\indexset{X}$ the collection of good index sets associated with ($O(X)$, $\subseteq$, $\perp$).
\end{definition}
Note that in general $\indexset{X}$ may be empty, although this doesn't happen for the application we have in mind.

\section{Spacetime posets}

Now we come to quantum field theory. For reasons that will become clearer in chapter \ref{ch:QFT on the cylindric flat spacetime}, we restrict ourselves to consider only globally hyperbolic background spacetimes. So let $\spacetime{M}$ be a globally hyperbolic spacetime of dimension $d=n+1\ge 2$; as we said, in order to develop  net-cohomology we need to choose a poset as an index set for nets of local algebras. Following the considerations exposed above, we will take a subposet of $O(\spacetime{M})$; the causal disjointness relation is the ordinary spacetime causal disjointness\footnote{We recall that $\cdevel{S}$ denotes the \emph{causal development} of a spacetime subset $S$, i.e. the set of points in the spacetime that can be joined to $S$ by a causal curve.}, namely 
\[S_{1}\perp S_{2}\Leftrightarrow S_{1}\subseteq\spacetime{M}\setminus \cdevel{S_{2}},
\]
if $S_{1}, S_{2}\subseteq\spacetime{M}$.
For our purposes a good choice  for the index set is the family of diamonds on $\spacetime{M}$.
\begin{definition}\label{def:diamonds}
Given a foliation of the spacetime by means of spacelike Cauchy surfaces, $\spacetime{M}\simeq\Sigma\times\bR$, take a surface in the foliation, say $\sdc{C}\equiv\Sigma_{t}$ and denote by $\mathfrak{G}(\sdc{C})$ the collection of open subsets $G$ of $\sdc{C}$ of the form $\phi(B)$, where ($U$,$\phi$) is a coordinate chart of $\sdc{C}$ and $B$ is an open ball of $\bR^{n}$ with $\cl{B}\subset\phi^{-1}(U)$. We call a \emph{diamond}  of $\spacetime{M}$ a subset $\open{O}$ of the form\footnote{$D(A)$ denotes the Cauchy development of the set $A$; see section \ref{sec:spacetime formulation}.} $D(G)$ where $G\in\mathfrak{G}(\sdc{C})$ for some surface in the foliation. $G$ is called the \emph{base} of $\open{O}$ while $\open{O}$ is said to be based on $G$: for short $\open{O}\equiv\DIAMOND{G}$. We denote by $\poset{K}$ the collection of all diamonds of $\spacetime{M}$. 
\end{definition}
It's easy to see that $\poset{K}$ is indeed a good choice for an index set.
\begin{proposition}
$\poset{K}$ is a topological basis for $\spacetime{M}$. Any diamond is a relatively compact, arcwise and simply connected open subset of $\spacetime{M}$ and $\poset{K}\in\indexset{\spacetime{M}}$.
\end{proposition}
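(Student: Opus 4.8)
The plan is to verify the four assertions in turn, exploiting the product structure $\spacetime{M}\simeq\Sigma\times\bR$ together with standard facts about globally hyperbolic spacetimes and Cauchy developments. First I would show that $\poset{K}$ is a topological basis for $\spacetime{M}$. Fix $p\in\spacetime{M}$ and an open neighbourhood $W$ of $p$; after choosing the foliation one may take the Cauchy surface $\sdc{C}=\Sigma_{t}$ through $p$, pick a chart $(U,\phi)$ around $p$ in $\sdc{C}$ and an open ball $B\subset\bR^{n}$ with $\cl{B}\subset\phi^{-1}(U)$ and $\phi(\cl{B})$ small, so that $G=\phi(B)\in\mathfrak{G}(\sdc{C})$ with $p\in G$. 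The point is that $\DIAMOND{G}=D(G)$ shrinks to $p$ as $B$ shrinks; since $D$ is monotone and $D(\{p\})=\{p\}$, by choosing $B$ sufficiently small one gets $D(G)\subseteq W$. This last containment is the first place care is needed — one uses that the Cauchy development of a relatively compact set with compact closure inside a chart stays inside any prescribed neighbourhood of that set, which follows from continuity/compactness arguments in the globally hyperbolic setting (e.g.\ via a time function). Hence every neighbourhood of every point contains a diamond, and $\poset{K}$ is a basis.

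Next, relative compactness: $G$ has compact closure $\cl{G}=\phi(\cl{B})$ inside $\sdc{C}$, and in a globally hyperbolic spacetime the Cauchy development $D(A)$ of a relatively compact subset $A$ of a Cauchy surface is itself relatively compact (its closure is contained in $\cfuture{\cl{A}}\cap\cpast{\cl{A}}$ together with the limit points, and causal-convexity plus global hyperbolicity keep this bounded). So $\DIAMOND{G}=D(G)$ is relatively compact. For arcwise and simple connectedness I would argue that $D(G)$ deformation retracts onto its base $G$: the foliation gives coordinates $(x,s)\in\Sigma\times\bR$, and the flow of the foliation vector field pushes every point of $D(G)$ back to $G$ along a path staying inside $D(G)$, because $D(G)$ is a union of the portions of the flow lines meeting $G$ that remain in the domain of dependence. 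This retraction yields a homotopy equivalence $D(G)\simeq G$; since $G=\phi(B)$ is homeomorphic to an open ball, it is arcwise connected and simply connected, and therefore so is $D(G)$.

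Finally, $\poset{K}\in\indexset{\spacetime{M}}$. By the previous paragraphs the elements of $\poset{K}$ are arcwise and simply connected open subsets of $\spacetime{M}$, and $\poset{K}$ is a topological basis; $\spacetime{M}$ being globally hyperbolic is in particular Hausdorff and, being a foliated connected manifold, arcwise connected. It remains to check the non-empty-causal-complement condition: given $\open{O}=\DIAMOND{G}$, since $G$ is relatively compact in $\sdc{C}$ one can find another element $G'\in\mathfrak{G}(\sdc{C})$ with $\cl{G}\cap\cl{G'}=\emptyset$ (the Cauchy surface, being a manifold of dimension $n\ge 1$, has room for a disjoint coordinate ball), and then $\DIAMOND{G'}$ is contained in $\spacetime{M}\setminus\cdevel{G}\supseteq\spacetime{M}\setminus\cdevel{\open{O}}$ by causal convexity, so $\DIAMOND{G'}\perp\open{O}$; hence $\ccompl{\{\open{O}\}}\neq\emptyset$. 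Assembling these facts with the definition of good index set completes the proof. The main obstacle I anticipate is the first one: making rigorous that $D(G)\subseteq W$ for small $G$ (equivalently, the continuity of the Cauchy development near a point), and, relatedly, pinning down the deformation retraction $D(G)\simeq G$ carefully enough to conclude simple connectedness — both require a genuine use of global hyperbolicity (existence of a Cauchy time function, behaviour of domains of dependence under shrinking) rather than a purely formal argument.
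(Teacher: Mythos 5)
The paper itself offers no proof of this proposition --- it is imported from \cite{Ruzzi_05} and \cite{Brunetti_Ruzzi_08} with the remark that it is ``easy to see'' --- so there is no argument of the paper's to compare yours against; what matters is whether your proof stands on its own. Its overall structure does: shrinking coordinate balls plus a nested-compact-sets argument ($\bigcap_n \cl{D(G_n)}=\{p\}$, hence $\cl{D(G_n)}\subseteq W$ eventually) gives the basis property; the integral curve of the foliation vector field through any $q\in D(G)$ is an inextendible timelike curve, hence meets $G$, and the segment joining $q$ to $G$ stays in $D(G)$ because $G$ is acausal, so the deformation retraction $D(G)\simeq G$ is legitimate and yields arcwise and simple connectedness; and a coordinate ball $G'$ with $\cl{G}\cap\cl{G'}=\emptyset$ on the same Cauchy surface does give $\DIAMOND{G'}\perp\DIAMOND{G}$, since an inextendible causal curve meets the Cauchy surface exactly once.

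Two points need repair. First, the containment you invoke for relative compactness, $\cl{D(G)}\subseteq \cfuture{\cl{G}}\cap\cpast{\cl{G}}$, is false: for an acausal set $\cl{G}$ in a causal spacetime the right-hand side equals $\cl{G}$ itself (a point lying in both the causal future and the causal past of an acausal set must lie on it, otherwise one produces a closed causal curve), whereas already in two-dimensional Minkowski space $D^+(G)$ contains points strictly to the future of the surface. The correct inclusion is $D(\cl{G})\subseteq\cfuture{\cl{G}}\cup\cpast{\cl{G}}$, which is not compact; relative compactness should instead be obtained from the standard lemma that the Cauchy development of a compact subset of a spacelike Cauchy surface in a globally hyperbolic spacetime is compact (see e.g.\ \cite{Bar_Ginoux_Pfaffle_06}), combined with $\cl{D(G)}\subseteq D(\cl{G})$. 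Second, the proposition asserts that diamonds are \emph{open} subsets of $\spacetime{M}$; you use this tacitly in the basis argument but never verify it. For $G$ open in a spacelike Cauchy surface the openness of $D(G)$ is again a standard fact, but it deserves a sentence, since $D(S)$ is not open for a general acausal $S$.
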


We also note that for a globally hyperbolic spacetime the (topological) first group of homotopy $\stdhomgroup{\spacetime{M}}$ coincide with $\stdhomgroup{\sdc{C}}$, where $\sdc{C}$ is an arbitrary Cauchy surface of $\spacetime{M}$; this follows from the fact that $\stdhomgroup{X\times Y}\sim\stdhomgroup{X}\times\stdhomgroup{Y}$ for each pair of topological spaces $X$, $Y$, and $\spacetime{M}\sim\sdc{C}\times\bR$.

\section{Net-cohomology}

Algebraic quantum field theory is based on nets of local algebras, and local algebras are indexed by open sets in the spacetime. Actually, this is the very motivation for developing the poset machinery above: to relate the physical content of the theory, encoded by local algebras, with spacetime topology. As we said, this goal requires to reformulate topological concepts, like curves and the first homotopy group, in terms of paths made up of open sets. We saw that this can be accomplished in a very general framework, where the only structure required is an order relation on a set. Interestingly enough, this poset approach is sufficient to develop the local algebras part of the machinery, at least for the first stages.

As it's well known, in algebraic quantum field theory local algebras come in two flavours: abstract and concrete. Nets of local abstract algebras define a correspondence between spacetime sets and abstract $C^*$-algebras (hence the name); in contrast,  nets of local concrete algebras associate to each spacetime set a von Neumann algebra, namely a closed\footnote{In the weak operatorial topology of $\bop{\hs{H}}$.} subalgebra of the  $C^*$-algebra $\bop{\hs{H}}$ of bounded operators acting on some (fixed) Hilbert space $\hs{H}$.
To be more precise, let ($\poset{P}$, $\leq$, $\perp$) be a poset endowed with a causal disjointness relation $\perp$; we also assume $\poset{P}$ to be pathwise connected. A \emph{net of local algebras} indexed by $\poset{P}$ is a correspondence
\[
\net{A}{P} :\poset{P}\ni\open{O}\mapsto\snet{A}{\open{O}}
\]
associating to any element $\open{O}$ in the poset an algebra $\snet{A}{\open{O}}$, and satisfying
\begin{eqnarray}
 \open{O}_{1}\leq \open{O}_{2} & \Rightarrow & \snet{A}{\open{O}_1}\subseteq\snet{A}{\open{O}_2}\quad \mbox{(isotony)}   \nonumber \\
 \open{O}_{1}\perp\open{O}_{2} & \Rightarrow & \snet{A}{\open{O}_{1}}\subseteq\snet{A}{\open{O}_{2}}^{\prime}\quad \mbox{(causality)} \nonumber
\end{eqnarray}
where with the symbol $\snet{A}{\open{O}}^{\prime}$  we denote the algebraic commutant of the algebra $\snet{A}{\open{O}}$.
Depending on the nature of the algebras $\snet{A}{\open{O}}$, we'll talk about abstract/concrete nets of local  algebras. 
It's worth noting that properties like isotony and causality, for abstract algebras, only make sense if we can embed ``smaller'' algebras into ``larger'' ones; in other terms for every $\open{O}_1\leq \open{O}_2$ an isometric $C^{*}$-algebras embedding $j_{\open{O}_2 \open{O}_1}:\snet{A}{\open{O}_1}\mapsto\snet{A}{\open{O}_2}$ must be defined; then, as in differential geometry, we can identify $A$ with $j_{\open{O}_2 \open{O}_1}(A)$, if $A\in\snet{A}{\open{O}_1}$\footnote{In addition, in order to define causality for abstract algebras, the poset $\poset{P}$ must be directed.}.

We previously defined the causal complement of an element $\open{O}\in\poset{P}$ as a family of elements in the poset, so what do we mean by $\snet{A}{\ccompl{\open{O}}}$ ?
The answer is: the algebra generated by all members of the family\footnote{As before, for abstract algebras we need to assume the existence of an ``environment'' algebra containing (via embeddings) all relevant local algebras in order to make sense of this definition.}. In symbols:  
 \[\snet{A}{\ccompl{\open{O}}}:=\bigvee_{\open{O}_{1}\perp\open{O}}\snet{A}{\open{O}_{1}}.\]
The net $\net{A}{P}$ is said to be \emph{irreducible} if\footnote{See previous footnote.} 
\[\bigcap_{\open{O}\in\poset{P}}\snet{A}{\open{O}}^{\prime}=\bC\cdot 1.\] 

\section{The category of $1$-cocycles}\label{sec:the_category_of_1-cocycles}

At this point we introduce a technical device which turn out to be useful in constructing unitary representations of the first homotopy group: the category of $1$-cocycles. First of all, though, we need to talk about net representations.

\subsection{Net representations}

So let $\poset{P}$ be a poset with a causal disjointness relation $\perp$, and let $\net{A}{P}:\poset{P} \ni \open{O} \to \snet{A}{\open{O}}$  be an irreducible net of local (abstract) algebras; from now on we refer to it as the \emph{reference net of  observables}.

\noindent If $\widetilde{\open{O}} \subseteq \open{O}$, the natural isometric  $*$-homomorphisms given by inclusion maps of $\snet{A}{\widetilde{\open{O}}}$ into 
$\snet{A}{\open{O}}$ will be denoted by $j_{\open{O}\widetilde{\open{O}}}$ and named the \emph{inclusion morphisms} as in \cite{Brunetti_Ruzzi_08}.
The coherence requirement $j_{\open{O}'\open{O}} = j_{\open{O}'\widetilde{\open{O}}} j_{\widetilde{\open{O}}\open{O}}$ for ${\open{O}}\subseteq \widetilde{\open{O}} \subseteq \open{O}'$ is trivially fulfilled. 

Now we come to the central concept, namely net representations. The idea here is to set up a local representation structure for the chosen reference net of  observables; the word ``local'' here means that the representation map changes as we consider different local (abstract) $C^{*}$-algebras; as a result we have a family of representations labeled by elements of the poset $\poset{P}$. As a matter of principle, Hilbert spaces carrying the representations can vary too, but for our purposes we fix one common Hilbert space and stick to it (we'll see in section \ref{sec:vacuum_representations} that this will be the one carrying the vacuum representation).

Of course, to be of any use in modeling quantum field theory, the various local representations need to be related to each other; a minimal request is the validity of an obvious compatibility condition with respect to the ordering of poset elements.
\noindent Fix an infinite-dimensional, separable Hilbert space $\hs{H}$:  a \emph{net representation} on $\hs{H}$ (for the observable net $\net{A}{P}$) is a pair $\{\pi,\psi\}$, where $\pi$ denotes a function that associates to any $\open{O}\in\poset{P}$
a representation $\pi_\open{O}$ of  $\snet{A}{\open{O}}$ on the common Hilbert space $\hs{H}$; $\psi$ is a function associating a linear operator $\psi_{\open{O}\widetilde{\open{O}}}: \hs{H}\to \hs{H}$ with any pair $\open{O},\widetilde{\open{O}}\in \poset{P}$, with
$\widetilde{\open{O}}\subseteq \open{O}$.  The functions $\pi$ and $\psi$ are required to satisfy the following compatibility relations:
\begin{alignat}{2}\label{eq:UNR}
\psi_{\open{O}\widetilde{\open{O}}} \pi_{\widetilde{\open{O}}}(A) &= \pi_{\open{O}} j_{\open{O}\widetilde{\open{O}}}(A) \psi_{\open{O}\widetilde{\open{O}}}\:, &\quad\quad
A \in \snet{A}{\widetilde{\open{O}}}\:, \: \widetilde{\open{O}} \subseteq \open{O},\\
\psi_{\open{O}'\open{O}} \psi_{\open{O}\widetilde{\open{O}}} &= \psi_{\open{O}'\widetilde{\open{O}}} \:, &\quad\quad
\widetilde{\open{O}} \subseteq \open{O} \subseteq \open{O}'.
\end{alignat}
where $j_{\open{O}\widetilde{\open{O}}}:\snet{A}{\widetilde{\open{O}}}\mapsto\snet{A}{\open{O}}$ are the isometric embeddings introduced before.

For our purposes we'll consider a particular kind of net representations, requiring the operators $\psi_{\open{O}\widetilde{\open{O}}}$ to be unitary; in this case we'll speak about \emph{unitary net representations}.
The next step is to introduce an equivalence relation between (unitary) net representations, passing through the notion of an intertwiner.

\noindent An \emph{intertwiner} from $\{\pi,\psi\}$ to $\{\rho, \phi\}$ is a function $\itwiner{T}$ associating a bounded operator
$\itwiner{T}_\open{O} : \hs{H} \to \hs{H} $ with any $\open{O} \in \poset{P}$, and satisfying the relations
\begin{equation}
\itwiner{T}_\open{O} \pi_{\open{O}} = \rho_\open{O} \itwiner{T}_\open{O}, \quad \quad \mbox{and}
\quad \quad \quad \itwiner{T}_\open{O} \psi_{\open{O}\widetilde{\open{O}}}  = \phi_{\open{O}\widetilde{\open{O}}}  \itwiner{T}_\open{O} \:, \quad
\widetilde{\open{O}} \subseteq \open{O}. \label{eq:intertREP}
\end{equation}
We denote the set of intertwiners from $\{\pi,\psi\}$ to $\{\rho, \phi\}$ by the symbol  $(\{\pi,\psi\},\{\rho, \phi\})$,
and say that two net representations are \emph{unitarily equivalent} if they admit a unitary
intertwiner $\itwiner{T}$, that is $\itwiner{T}_\open{O}$ is a unitary operator for any $\open{O}\in \poset{P}$.
$\{\pi,\psi\}$ is \emph{irreducible} when unitary elements of $(\{\pi,\psi\},\{\pi, \psi\})$  are of the form $c\!\cdot\! 1$ with 
$c\in \bC$ and $|c|=1$.

\noindent\textbf{Remark}. The construction outlined above, concerning local representations for a net of local abstract algebras, follows closely a recurrent pattern in differential geometry, where a global object is often constructed up by pasting together locally defined objects satisfying suitable compatibility conditions. From this perspective it can be said that we are dealing with a kind of non-commutative geometry, where objects are modeled after those living in a (commutative) manifold. In this scheme we can regard the local $C^{*}$-algebras as open sets on the manifold, net representations as coordinate charts on $\bop{\hs{H}}$, the linear operators  $\psi_{\open{O}\widetilde{\open{O}}}$ as ``transition functions'' between local coordinate charts, and equivalent net representations as compatible atlas belonging to a given differentiable structure. Obviously these considerations are valid only at an heuristic level, but neverthless can be useful to understand what's going on. 

\subsection{Introducing cocycles}
As we said, $1$-cocycles are a technical tool useful for constructing representations. The basic idea here is to associate a non-commutative object (an operator on some Hilbert space) to each poset path in a homotopy-invariant manner, so that the operator depends only on the equivalence class the path belongs to; this behaviour can be achieved by requesting the map to be invariant with respect to elementary deformations of paths (see section \ref{sec:the_first_homotopy_group_of_a_poset}), and is espressed by a condition called \emph{$1$-cocycle identity}. 

The formal definition is as follows: given a complex, infinite dimensional, separable  Hilbert space $\hs{H}$, a \emph{(unlocalized) $1$-cocycle} $\coc{z}$ on $\poset{P}$ valued in $\bop{\hs{H}}$ is a field $\coc{z}:\simplex{1}{\poset{P}}\ni b\mapsto \coc{z}(b)\in\bop{\hs{H}}$ of unitary operators satisfying the \emph{$1$-cocycle identity}:
\begin{equation}\label{eq:cocycle_identity} 
\coc{z}(\partial_{0}c)\cdot\coc{z}(\partial_{2}c)=\coc{z}(\partial_{1}c),\quad c\in\simplex{2}{\poset{P}}. 
\end{equation}
Relations between $1$-cocycles are described by intertwiners.
An \emph{intertwiner} $\itwiner{t}$ between a pair of $1$-cocycles $\coc{z},\coc{z}_{1}$ is a field of operators $\itwiner{t}:\simplex{0}{\poset{P}}\ni a\mapsto \itwiner{t}_{a}\in\bop{\hs{H}}$ satisfying the relation 
\[
\itwiner{t}_{\partial_{0}b}\cdot\coc{z}(b) = \coc{z}_{1}(b)\cdot\itwiner{t}_{\partial_{1}b},\quad \forall\,b\in\simplex{1}{\poset{P}}.
\]
We denote by $(\coc{z},\coc{z}_{1})$ the set of intertwiners between $\coc{z}$ and $\coc{z}_1$. En passant we note the analogy in the definitions of cocycle intertwiners and net representations intertwiners. 

It turns out that the set of $1$-cocycles has a quite rich structure. The \emph{category of (unlocalized) $1$-cocycles} is the category $\ucocyclecat{P}{\hs{H}}$ whose objects are $1$-cocycles and whose arrows are the corresponding set of intertwiners. The composition between $\itwiner{s}\in(\coc{z},\coc{z}_{1})$ and $\itwiner{t}\in(\coc{z}_{1},\coc{z}_{2})$ is the arrow $\itwiner{t}\cdot\itwiner{s}\in(\coc{ z},\coc{z}_{2})$ defined  as 
\[
(\itwiner{t}\cdot\itwiner{s})_{a}:=\itwiner{t}_{a}\cdot\itwiner{s}_{a},\quad\forall\, a\in\simplex{0}{\poset{P}}.
\]
Note that the arrow $\itwiner{1}_{z}$ of $(\coc{z},\coc{z})$ defined as $(\itwiner{1}_{z})_{a} = 1\quad \forall\, a\in\simplex{0}{\poset{P}}$ is the identity in $(\coc{z},\coc{z})$. 
In addition, the (unlocalized) $1$-cocycle category is also a \emph{$C^*$-category}\footnote{See appendix \ref{app:misc} for the definition of a $C^*$-category.}.
 In fact the set $(\coc{z},\coc{z}_{1})$ has a complex vector space structure:
\[
(\alpha\cdot\itwiner{t}+\beta\cdot\itwiner{s})_{a}:=\alpha\cdot\itwiner{t}_{a}+\beta\cdot\itwiner{s}_{a},\quad a\in\simplex{0}{\poset{P}},\:\alpha,\beta\in\bC,\:\itwiner{t},\itwiner{s}\in(\coc{z},\coc{z}_{1}).
\] 
With these operations and the composition ``$\cdot$'', the set $(\coc{z},\coc{z})$ is an algebra with identity $1_{\coc{z}}$. The category $\ucocyclecat{P}{\hs{H}}$ has an adjoint $\ast$, defined as the identity, $\coc{z}^{*} = \coc{z}$, on the objects, while the adjoint  $\itwiner{t}^{*}\in(\coc{z}_{1},\coc{z})$ of an arrow $\itwiner{t}\in(\coc{z},\coc{z}_{1})$ is defined as 
\[
(\itwiner{t}^{*})_{a}:=(\itwiner{t_{a}})^{*},\quad\forall\, a\in\simplex{0}{\poset{P}},
\]
where $(\itwiner{t}_{a})^{*}$ stands for the operator adjoint of $\itwiner{t}_{a}$ in $ \bop{\hs{H}}$. 

\noindent Now let $\norm{\cdot}$ be the operator norm in $\bop{\hs{H}}$. Given $\itwiner{t}\in (\coc{z},\coc{z}_{1})$, it's easy to see \cite{Brunetti_Ruzzi_08}  that $\norm{\itwiner{t}_{a}} = \norm{\itwiner{t}_{a_1}}$ for any pair $a$, $a_{1}$ of $0$-simplices, since $\poset{P}$ is pathwise connected. Therefore by putting
\[\norm{\itwiner{t}}:=\norm{\itwiner{t}_{a}},\quad a\in\simplex{0}{\poset{P}},\]
we see that $(\coc{z},\coc{z}_{1})$ is a complex Banach space for any $\coc{z},\coc{z}_{1}\in\ucocyclecat{P}{\hs{H}}$, and $(\coc{z},\coc{z})$ is a $C^*$-algebra for any $\coc{z}\in\ucocyclecat{P}{\hs{H}}$. This entails that $\ucocyclecat{P}{\hs{H}}$ is a $C^*$-category.

Needless to say, the expression ``unlocalized $1$-cocycle'' suggests the existence of \emph{localized} $1$-cocycles, as it's the case. 
The only additional property of this new cocycle's flavour is localization: the operator obtained by evaluating a localized cocycle on a $1$-simplex belongs to the local (concrete algebra) living on the simplex' support. In other terms, let $\net{R}{P}$ be a net of local concrete algebras valued in $\bop{\hs{H}}$ (for example that arising from a net representation of the reference net  $\net{A}{P}$); localization amounts to add the following \emph{locality condition} to the definition given above for unlocalized $1$-cocycles:
\[
\coc{z}(b)\in\snet{R}{|b|},\quad \forall\, b\in\simplex{1}{\poset{P}}.
\]
A parallel notion of localized cocycle intertwiners can be made: the intertwiner $\itwiner{t}$ is said to be localized if
\[
\itwiner{t}_{a}\in\snet{R}{a},\quad \forall\, a\in\simplex{0}{\poset{P}}.
\]
The category of localized $1$-cocycles on the local algebras net $\net{R}{P}$ will be denoted by the symbol $\cocyclecat{\net{R}{P}}$.
Finally, we say that a $1$-cocycle $\coc{z}$ is a \emph{coboundary} if it can be written as $\coc{z}(b)= W^*_{\partial_0b } W_{\partial_1b }$, $b \in \simplex{1}{\poset{P}}$, for some field of unitaries $\simplex{0}{\poset{P}} \ni a \mapsto W(a)\in\bop{\hs{H}}$.

\subsection{Cocycle equivalence}
Now we introduce an equivalence relation between $1$-cocycles. 
Two (localized or not) $1$-cocycles $\coc{z}$, $\coc{z}_{1}$ are said to be \emph{equivalent} (or \emph{cohomologous}) if there exists a unitary arrow $\itwiner{t}\in (\coc{z},\coc{z}_{1})$ between them; a $1$-cocycle is \emph{trivial} if it's equivalent to the identity cocycle $\coc{i}$, defined as $\coc{i}(b)=1$ for any $b\in\simplex{1}{\poset{P}}$; this is equivalent to say that $\coc{z}$ is a coboundary.
Note that if the cocycles $\coc{z}$, $\coc{z}_{1}$ are localized, the above definition requires the intertwiner $\itwiner{t}\in (\coc{z},\coc{z}_{1})$ to be localized, too. This imply that, for localized cocycles, two different equivalence relations are available, depending on the fact we require or not the operator-valued field $\itwiner{t}$ to be localized. In the former case we simply speak about equivalent cocycles; in the latter we say that $\coc{z}$, $\coc{z}_{1}$ are equivalent in $\bop{\hs{H}}$. It's worth observing that for localized cocycles unitary equivalence is stronger than equivalence in $\bop{\hs{H}}$, since localization is required (for unlocalized cocycles only the latter one makes sense).

We denote by $\tcocyclecat{\net{R}{P}}$ the set of (localized) $1$-cocycles trivial in $\bop{\hs{H}}$. From a topological point of view, triviality in $\bop{\hs{H}}$  means path independence. In details, if the evaluation of a $1$-cocycle $\coc{z}$ on the path $\path{p} = \left\lbrace b_{n}, \ldots, b_{1} \right\rbrace$ is defined as
\[
\coc{z}(\path{p}):=\coc{z}(b_{n})\cdot\ldots\cdot\coc{z}(b_{2})\cdot\coc{z}(b_{1}),
\]
$\coc{z}$ is said to be \emph{path-independent} on a subset $P\subseteq\poset{P}$ whenever 
\[\coc{z}(\path{p}) = \coc{z}(\path{q})\quad\forall\,\path{p}, \path{q}\in\pathset{a_{0}}{a_{1}}\;\mbox{such that}\: |\path{p}|, |\path{q}|\subseteq P,\]
for any $a_{0}, a_{1}\in\simplex{0}{\poset{P}}$.
Then from pathwise connectedness of $\poset{P}$ it descends that a $1$-cocycle is trivial in $\bop{\hs{H}}$ if and only if it is path-independent on all $\poset{P}$.

We close this section with a result relating unitary net representations and $1$-cocycles.
We start by putting out a definition mimicking that given above for net representations: a $1$-cocycle $\coc{z} \in\ucocyclecat{P}{\hs{H}} $ is said to be  \emph{irreducible} if there are no unitary intertwiners  in $(\coc{z},\coc{z})$  barring those of the form $c\!\cdot\! 1$ with $c\in \bC$ and $|c|=1$.
Given a unitary net representation $\{\pi,\psi\}$ of $\net{A}{P}$ over $\hs{H}$ define
\begin{equation}\label{eq:zetapi}
\zeta^\pi(b) := \psi^*_{|b|,\partial_0b} \psi_{|b|,\partial_1b}\:, \quad b\in \simplex{1}{\poset{P}}.
\end{equation}
As usual $|b|\in \simplex{0}{\poset{P}}$ denotes the support of the symplex $b$.
One can check that $\zeta^\pi$ is a $1$-cocycle in $\ucocyclecat{P}{\hs{H}}$. 
$\{\pi,\psi\}$ is said to be \emph{topologically trivial} if $\zeta^\pi$ is trivial.

It can be proven \cite{Brunetti_Ruzzi_08} that
 if the unitary net representations $\{\pi,\psi\}$ and $\{\rho,\phi\}$ are unitarily equivalent, then the corresponding
$1$-cocycles $\zeta^\pi$ and $\zeta^\phi$ are equivalent in $\bop{\hs{H}}$; moreover, if
the unitary net representation $\{\pi,\psi\}$ is topologically trivial, then it is equivalent to a  
unitary net representation of the form $\{\rho,\bI\}$, where  all
 $\bI_{\widetilde{\open{O}} \open{O}}$ are  the identity operators.

\subsection{Representations of the first homotopy group}
At this point we are in the right position to understand the usefulness of the cocycle machinery: it can be used to construct unitary representation of the first homotopy group. In fact, the main result is that there is a one-to-one correspondence between $1$-cocycles and unitary representations, modulo suitable equivalence relations.
Let us consider a pathwise connected poset $\poset{P}$ equipped with a causal disjointness relation $\perp$, and let $\net{R}{P}$ be an irreducible net of local (concrete) algebras. To start with, we list some preliminary results about cocycles and paths, in particular invariance of $1$-cocycles for homotopic paths.
\begin{lemma}
 Let $\coc{z}\in\cocyclecat{\net{R}{P}}$. Then:
\begin{description}
\item[i)] If $\path{p},\path{q}\in\pathset{a_{0}}{a_{1}}$ are homotopic, $\path{p}\sim\path{q}$, then $\coc{z}(\path{p}) =  \coc{z}(\path{q})$,
\item[ii)]$\coc{z}(b(a)) = 1$ for any $0$-simplex $a$,
\item[iii)]$\coc{z}(\bar{\path{p}}) = \coc{z}(\path{p})^{*}$ for any path $\path{p}$.
 \end{description}
\end{lemma}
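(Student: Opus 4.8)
The three assertions are really consequences of the single fact that a localized $1$-cocycle $\coc{z}$, when evaluated on paths, is invariant under elementary deformations; so the plan is to reduce everything to the $1$-cocycle identity \eqref{eq:cocycle_identity} and the combinatorial description of homotopy. I would prove (ii) and (iii) first, since they are the elementary observations that then feed into (i).

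For (ii): given a $0$-simplex $a$, consider the degenerate $2$-simplex $c$ all of whose faces are the degenerate $1$-simplex $b(a)$ (such a $c$ exists because any constant order-preserving map $\Delta_2\to\poset P$ is a singular $2$-simplex, and composing with the $d_i^2$ gives $b(a)$ in each slot). Plugging this $c$ into the cocycle identity yields $\coc{z}(b(a))\cdot\coc{z}(b(a)) = \coc{z}(b(a))$, and since $\coc{z}(b(a))$ is unitary, multiplying by its inverse gives $\coc{z}(b(a)) = 1$. For (iii): it suffices to treat a single $1$-simplex $b$ and then extend multiplicatively along the path (being careful about the order reversal: $\coc{z}(\bar{\path p}) = \coc{z}(\bar b_1)\cdots\coc{z}(\bar b_n) = \coc{z}(b_1)^{*}\cdots\coc{z}(b_n)^{*} = (\coc{z}(b_n)\cdots\coc{z}(b_1))^{*} = \coc{z}(\path p)^{*}$). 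To handle a single $b$, build a $2$-simplex $c$ with $\partial_0 c = \bar b$, $\partial_2 c = b$, and $\partial_1 c = b(\partial_1 b)$ the degenerate simplex at the starting point; the cocycle identity then gives $\coc{z}(\bar b)\cdot\coc{z}(b) = \coc{z}(b(\partial_1 b)) = 1$ by (ii), so $\coc{z}(\bar b) = \coc{z}(b)^{-1} = \coc{z}(b)^{*}$ by unitarity.

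For (i), the strategy is: since $\path p\sim\path q$, by definition there is a finite chain of paths $\path p = \path r_0, \path r_1,\dots,\path r_k = \path q$ in which each $\path r_{j}$ is an elementary deformation of $\path r_{j-1}$, so it is enough to show $\coc{z}$ is unchanged under a single elementary deformation. An elementary deformation replaces, inside a path, an adjacent pair (or a single $1$-simplex) by another configuration with the same endpoints that is ``filled in'' by a $2$-simplex; concretely, replacing a consecutive pair $b_{i+1},b_i$ by a single $1$-simplex $b$ with $\partial_0 b = \partial_0 b_{i+1}$, $\partial_1 b = \partial_1 b_i$, where there is a $2$-simplex $c$ with $\partial_0 c = b_{i+1}$, $\partial_2 c = b_i$, $\partial_1 c = b$ — and the cocycle identity gives exactly $\coc{z}(b_{i+1})\cdot\coc{z}(b_i) = \coc{z}(b)$, so the value of $\coc{z}$ on the whole path is unaffected (the operators to the left and right of the modified block are untouched and just multiply through). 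One also has to absorb insertions/deletions of degenerate $1$-simplices, but those are neutral by (ii). Summing over the chain of elementary deformations gives $\coc{z}(\path p) = \coc{z}(\path q)$.

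The main obstacle is purely bookkeeping: I have deliberately suppressed the precise combinatorial definition of ``elementary deformation'' (the excerpt defers it to \cite{Ruzzi_05}), and the honest version of the argument requires unwinding each elementary move — ampliation, contraction, and the degenerate-simplex insertions — and checking in each case that the relevant $2$-simplex exists and that the cocycle identity applies with the faces in the correct slots (the indices $0,1,2$ must be matched carefully, since $\coc{z}(\partial_0 c)\cdot\coc{z}(\partial_2 c) = \coc{z}(\partial_1 c)$ is not symmetric in its inputs). None of this is deep, but it is the only place where real care is needed; everything else is a one-line application of unitarity plus \eqref{eq:cocycle_identity}.
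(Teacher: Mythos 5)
Your proof is correct, and it is the standard argument: the paper itself states this lemma without proof (deferring to \cite{Ruzzi_05}), and the intended reasoning is exactly what you give — the cocycle identity applied to the constant degenerate $2$-simplex for (ii), to the $2$-simplex with faces $\bar{b}$, $b(\partial_1 b)$, $b$ for (iii) (your face assignments satisfy the simplicial compatibility relations, and the order reversal in the path computation is handled correctly), and to the filling $2$-simplices of elementary deformations for (i). The only incompleteness is the one you flag yourself — the case-by-case unwinding of ampliations, contractions and degenerate insertions — which the paper also suppresses by referring to \cite{Ruzzi_05}; since each such move is by definition realized by a single $2$-simplex, your reduction is sound.
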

\noindent Now we come to the main theorem. Fix a base $0$-simplex $a_{0}$; given $\coc{z}\in\cocyclecat{\net{R}{P}}$ define the following map sending an equivalence class of (poset) paths into an operator in $\bop{\hs{H}}$:
\[
\pi_{\coc{z}}([\path{p}]):=\coc{z}(\path{p}),\quad[\path{p}]\in\shomgroup{P}.
\]
The definition is well posed due to cocycle invariance with respect to path homotopy.
\begin{theorem}\label{th:cocycle-representations}
 The correspondence $\cocyclecat{\net{R}{P}}\ni\coc{z}\mapsto\pi_{\coc{z}}$ maps $1$-cocycles $\coc{z}$, equivalent in $\bop{\hs{H}}$, into equivalent unitary representations $\pi_{\coc{z}}$ of $\shomgroup{P}$ in $\hs{H}$; up to equivalence this map is injective. If $\shomgroup{P} = 1$, then $\cocyclecat{\net{R}{P}} = \tcocyclecat{\net{R}{P}}$.
\end{theorem}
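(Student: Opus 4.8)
The statement splits into three assertions: (a) equivalent cocycles (in $\bop{\hs{H}}$) give rise to equivalent unitary representations $\pi_{\coc{z}}$ of $\shomgroup{P}$; (b) the induced map is injective up to equivalence; (c) if $\shomgroup{P}=1$ then every localized cocycle is trivial in $\bop{\hs{H}}$. The first task, before any of these, is to verify that $\pi_{\coc{z}}$ is actually a \emph{unitary representation}: well-posedness was already granted by the preceding lemma (cocycle values are homotopy-invariant on paths), so I would only need to check the homomorphism property $\pi_{\coc{z}}([\path{p}]\ast[\path{q}])=\pi_{\coc{z}}([\path{p}])\pi_{\coc{z}}([\path{q}])$ — which is immediate from $\coc{z}(\path{p}\ast\path{q})=\coc{z}(\path{p})\coc{z}(\path{q})$ — together with $\pi_{\coc{z}}([b(a_0)])=1$ and $\pi_{\coc{z}}([\bar{\path p}])=\coc{z}(\path p)^{*}$, both of which are items (ii) and (iii) of that lemma. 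Unitarity of each $\pi_{\coc{z}}([\path p])$ follows since each $\coc{z}(b)$ is unitary and a product of unitaries is unitary.

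For assertion (a): suppose $\itwiner{t}\in(\coc{z},\coc{z}_{1})$ is a unitary arrow, so $\itwiner{t}_{\partial_0 b}\,\coc{z}(b)=\coc{z}_{1}(b)\,\itwiner{t}_{\partial_1 b}$ for every $1$-simplex $b$. Iterating this along a loop $\path{p}\in\cpathset{a_0}$ telescopes to $\itwiner{t}_{a_0}\,\coc{z}(\path p)=\coc{z}_1(\path p)\,\itwiner{t}_{a_0}$, i.e. $\itwiner{t}_{a_0}$ intertwines $\pi_{\coc{z}}$ and $\pi_{\coc{z}_1}$; since $\itwiner{t}_{a_0}$ is unitary the two representations are unitarily equivalent. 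The subtlety here is that equivalence \emph{in} $\bop{\hs{H}}$ is exactly the notion for which intertwiners are not required to be localized, so the same telescoping argument applies without needing the net $\net{R}{P}$ at all.

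For assertion (b): given a unitary $U\in\bop{\hs{H}}$ with $U\pi_{\coc{z}}([\path p])=\pi_{\coc{z}_1}([\path p])U$ for all loops $[\path p]$, I must manufacture an intertwiner $\itwiner{t}$ of cocycles in $\bop{\hs{H}}$. The natural recipe is to fix the base point, set $\itwiner{t}_{a_0}:=U$, and \emph{transport} along paths: for any $0$-simplex $a$ pick $\path{p}_a\in\pathset{a_0}{a}$ and put $\itwiner{t}_a:=\coc{z}_1(\path{p}_a)\,U\,\coc{z}(\path{p}_a)^{*}$. One then checks this is independent of the chosen path using the hypothesis on loops (the discrepancy between two choices of $\path{p}_a$ is precisely $\coc{z}_1$ of a loop conjugating $U$, which by hypothesis equals $U$ composed with $\coc{z}$ of that loop), and that the intertwining relation $\itwiner{t}_{\partial_0 b}\coc{z}(b)=\coc{z}_1(b)\itwiner{t}_{\partial_1 b}$ holds because $\path{p}_{\partial_1 b}\ast b$ is a path to $\partial_0 b$. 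This step — constructing a well-defined global intertwiner from a representation-level equivalence — is where pathwise connectedness of $\poset{P}$ is essential and where the bulk of the bookkeeping lives, so I expect it to be the main obstacle, though it is routine once the transport idea is in place.

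Assertion (c) is the shortest: if $\shomgroup{P}=1$ then every loop based at $a_0$ is homotopic to the degenerate loop $b(a_0)$, hence by the preceding lemma $\coc{z}(\path p)=\coc{z}(b(a_0))=1$ for every loop. Thus $\coc{z}$ evaluates trivially on all loops, which (by the displayed remark just before the theorem, namely that triviality in $\bop{\hs{H}}$ is equivalent to path-independence on all of $\poset{P}$, and path-independence is in turn equivalent to $\coc{z}$ being $1$ on all loops) means $\coc{z}\in\tcocyclecat{\net{R}{P}}$. Since $\coc{z}\in\cocyclecat{\net{R}{P}}$ was arbitrary, $\cocyclecat{\net{R}{P}}=\tcocyclecat{\net{R}{P}}$.
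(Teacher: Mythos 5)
Your proposal is correct, and all three parts (telescoping a unitary intertwiner along a loop, reconstructing a cocycle intertwiner by parallel transport $\itwiner{t}_a:=\coc{z}_1(\path{p}_a)U\coc{z}(\path{p}_a)^{*}$ with well-definedness checked against loops, and deducing path-independence from triviality of $\shomgroup{P}$) are exactly the standard argument. The paper itself states this theorem without proof, deferring to the cited reference, and your route coincides with the one used there, so there is nothing to flag.
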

\noindent The previous theorem is important because it sheds some light on this question: can we find path-dependent $1$-cocycles ? 
Suppose we are in the topological case, namely $\poset{P} \in\indexset{X}$ where $X$ is a topological (Hausdorff, arcwise connected) space; then if $X$ is simply connected it follows from theorem \ref{th:homotopy-groups} that $\shomgroup{P} = 1$, so as a trivial consequence of theorem \ref{th:cocycle-representations} we have 
\begin{corollary}
If $X$ is simply connected, any $1$-cocycle is trivial in $\bop{\hs{H}}$, namely path-independent. In other words $\cocyclecat{\net{R}{P}} = \tcocyclecat{\net{R}{P}}$.
\end{corollary}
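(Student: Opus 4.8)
The plan is to chain together the two main theorems of the excerpt. Since the hypothesis places us in the topological setting $\poset{P}\in\indexset{X}$ with $X$ Hausdorff and arcwise connected, the defining conditions of a good index set guarantee that $\poset{P}$ is a topological basis for $X$ whose elements are arcwise and simply connected subsets of $X$. This is precisely the hypothesis of Theorem~\ref{th:homotopy-groups}, so I would first invoke that theorem to conclude $\shomgroup{P}\simeq\stdhomgroup{X}$. The assumption that $X$ is simply connected then forces $\shomgroup{P}=1$.

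Next I would feed this into Theorem~\ref{th:cocycle-representations}, whose final clause states exactly that $\shomgroup{P}=1$ implies $\cocyclecat{\net{R}{P}}=\tcocyclecat{\net{R}{P}}$, i.e.\ every localized $1$-cocycle on $\net{R}{P}$ is trivial in $\bop{\hs{H}}$. Finally, to justify the phrase ``namely path-independent'', I would recall the remark made just after the definition of $\tcocyclecat{\net{R}{P}}$: using pathwise connectedness of $\poset{P}$, a $1$-cocycle is trivial in $\bop{\hs{H}}$ if and only if it is path-independent on all of $\poset{P}$. Combining these three facts gives both assertions of the corollary.

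There is essentially no obstacle here — the statement is a direct corollary, and all the work has already been done in establishing Theorems~\ref{th:homotopy-groups} and~\ref{th:cocycle-representations}. The only point requiring a word of care is verifying that the hypotheses of Theorem~\ref{th:homotopy-groups} are genuinely met: one must note that membership $\poset{P}\in\indexset{X}$ unpacks to exactly those hypotheses (basis, arcwise and simply connected elements), together with the standing requirement that $X$ be Hausdorff and arcwise connected, which is built into the definition of $\indexset{X}$. Once that bookkeeping is acknowledged, the proof is a two-step implication.
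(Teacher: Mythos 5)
Your proposal is correct and follows exactly the paper's own route: the text preceding the corollary derives $\shomgroup{P}=1$ from Theorem~\ref{th:homotopy-groups} together with the hypotheses packaged in $\poset{P}\in\indexset{X}$, and then applies the final clause of Theorem~\ref{th:cocycle-representations}, with path-independence identified with triviality in $\bop{\hs{H}}$ via the earlier remark on pathwise connectedness. Nothing is missing.
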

\noindent So we conclude that the only topological obstruction to path-independence of $1$-cocycles can be non-simple connectedness of the topological space $X$; actually this is a real obstruction since in chapter \ref{chap:topological_cocycles} we will exhibit a topological (i.e. path-dependent) $1$-cocycle living in the cylindric flat universe.

\chapter{QFT on the cylindric Einstein universe}\label{ch:QFT on the cylindric flat spacetime}
In this section we discuss the QFT model on the top of which we'll construct, in the last chapter, our topological cocycles.
After developing the model we introduce the vacuum state and prove various properties of the reference vacuum representation, including DHR sectors anf Haag duality.

\section{The Cauchy problem for the field equation}
On the road to quantization the first step is to choose a classical wave equation on the background spacetime, namely that satisfied by the classical field; applying to it a suitable quantization scheme gives rise, at least in the algebraic approach \cite{Haag_96}, to the net of local algebras encoding the fundamental observables of the theory.
For our purposes we choose to start from the scalar massive Klein-Gordon equation, whose intrinsic form is:
\begin{equation}\label{eq:KG_general_form}
(\square + m^{2})u = 0. 
\end{equation}
Here the $\square$ symbol denotes the d'Alembertian operator which is given in local coordinates by the expression:
\[\square = \nabla^{\nu}\nabla_{\nu} = |g|^{- \frac{1}{2}}\partial_{\mu}g^{\mu\nu}|g|^{\frac{1}{2}}\partial_{\nu},\]
where $|g|:= |\operatorname{det}\left\lbrace g_{\mu\nu}\right\rbrace|$ is the metric determinant and m$^{2} > 0$ is the mass parameter.

An essential condition for later developments is the well posedness of the Cauchy problem for the equation (\ref{eq:KG_general_form}); but what does it mean well posedness for a wave equation in a general geometric context ? Loosely speaking, we require that, given suitable initial conditions on a suitable class of spacetime hypersurfaces, there exists only one solution of the equation satisfying these initial conditions, for every hypersurface in that class.

This well-posedness requirement forces us to restrict our scope to a particular family of background spacetime manifolds, namely the globally hyperbolic ones.
We recall that a connected time-oriented Lorentzian manifold $\spacetime{M}$ is said to be \emph{globally hyperbolic} if $\spacetime{M}$ admits a Cauchy hypersurface $\sdc{C}$, i.e. a subset such that every inextendible timelike curve in $\spacetime{M}$ meets it at exactly one point. 
It turns out \cite{O'Neill_83} that a globally hyperbolic spacetime $\spacetime{M}$ admits a (not unique) smooth foliation by Cauchy hypersurfaces; that is to say $\spacetime{M}$ is isometric to $\bR\times\sdc{C}$ with metric $-\beta dt^{2}\oplus  g_{t}$, where $\beta$ is a smooth positive function, $g_{t}$ is a Riemannian metric on $\sdc{C}$ depending smoothly on $t\in\bR$ and each set $\{t\}\times\sdc{C}$ is a smooth spacelike Cauchy hypersurface in $\spacetime{M}$. 

As we said, we restrict ourselves to consider only globally hyperbolic spacetimes so the Cauchy problem for the Klein-Gordon equation is well posed. Actually, we focus on a very special class of globally hyperbolic spacetimes, called \emph{ultrastastic spacetimes} (further details can be found e.g in \cite{Verch_93}). Roughly speaking, ultrastastic spacetimes are a product space $\times$ time, so are relatively easy to deal with. 
\begin{definition}
Let $(\Sigma, \gamma)$ be a smooth, $d$-dimensional Riemannian manifold, and consider the product manifold $\spacetime{M}\equiv\bR\times\Sigma$ endowed with the Lorentzian metric $g\equiv -dt^{2}\oplus\gamma$. We call $(\spacetime{M}, g)$ the ($d+1$-dimensional) \emph{ultrastastic spacetime} foliated by $(\Sigma, \gamma)$ and $\Sigma_{t}\equiv \{t\}\times\Sigma$, $t\in\bR$ the \emph{natural foliation} of $\spacetime{M}$. 
Moreover, if $(\Sigma, \gamma)$ is complete (as a Riemannian manifold) $\spacetime{M}$ is globally hyperbolic and its natural foliation is made up of spacelike Cauchy surfaces.
\end{definition}

\noindent The well-posedness of the Cauchy problem for the Klein-Gordon equation implies \cite{Bar_Ginoux_Pfaffle_06} the existence of two continuos\footnote{With respect to the standard locally convex topologies on $\testfuncv{\spacetime{M}}{\bR}$ and $\smoothfuncv{\spacetime{M}}{\bR}$.} linear maps 
$E^{\pm}: \testfunc{\spacetime{M}}\mapsto\smoothfunc{\spacetime{M}}$ uniquely determined by the following properties:
\[
(\square + m^{2})E^{\pm} f = f = E^{\pm}(\square + m^{2})f,\quad \forall\, f\in\testfunc{\spacetime{M}},
\]
and 
\[
\supp{E^{\pm} f}\subset J^{\pm}(\supp{f}).
\]
They are called the \emph{advanced (+) and retarded (-) fundamental solutions} of the Klein-Gordon equation. $E:=E^{+} - E^{-}$ is called the \emph{causal propagator} of the Klein-Gordon equation and it follows from the definition that 
\[
(\square + m^{2})E f = 0 = E(\square + m^{2})f,\quad \forall\, f\in\testfunc{\spacetime{M}}.
\]
So we see that $E$ maps $\testfunc{\spacetime{M}}$ to the set $\KGsol$ of smooth  solutions of the Klein-Gordon equation compactly supported on each Cauchy surface (and one can show \cite{Dimock_80} that this map is surjective).

\subsection{Well posedness}
As we said, a crucial fact for the theory we are going to develop is well-posedness\footnote{Well-posedness is a consequence of tha classical energy-estimate for solutions of second order hyperbolic partial differential equations; see e.g. \cite{Bar_Ginoux_Pfaffle_06}.} of the Klein-Gordon equation in a globally hyperbolic spacetime; that is to say that Cauchy data on a Cauchy surface uniquely determine a solution of (\ref{eq:KG_general_form}). 
In details, let $\spacetime{M}$ be a globally hyperbolic spacetime, and select a Cauchy surface $\sdc{C}$ belonging to a foliation: if $\spacetime{M} =\Sigma\times\bR$, then $\sdc{C}=\Sigma_{t}$ for some $t\in\bR$. Then the vector field $\partial_{t}$ is a timelike, future-pointing vector field normal to every $\Sigma_{t}$, so applying it to a spacetime function gives its normal derivative with respect to $\sdc{C}$. Now, if $\varphi$ is smooth solution of the Klein-Gordon equation, its \emph{Cauchy data} on $\sdc{C}$ are defined as the initial ``position'' $\varphi_{|\sdc{C}}$ and ``velocity'' $\partial_{t}\varphi_{|\sdc{C}}$ of the solution itself. So we can define a ``projection'' operator $P_{\sdc{C}}:\KGsol\mapsto\Cdatav{\sdc{C}}$  mapping a solution of equation (\ref{eq:KG_general_form}) to its Cauchy data:
\[
P_{\sdc{C}}(\varphi):=\varphi_{|\sdc{C}}\oplus\partial_{t}\varphi_{|\sdc{C}}.
\] 
Well posedness of equation (\ref{eq:KG_general_form}) means exactly that this map is bijective: given any pair $u_{0}\oplus u_{1}$ of Cauchy data in $\Cdatav{\sdc{C}}$ there exists exactly one smooth solution $\varphi$ of the Klein-Gordon equation such that
\[
P_{\sdc{C}}(\varphi) = u_{0}\oplus u_{1}.
\]
Furthermore, solutions of equation (\ref{eq:KG_general_form}) propagate with finite speed, i.e. if Cauchy data are supported on a set $G\subset\sdc{C}$, then the corresponding solution is supported on the causal development of this set, $\cdevel{G}$. From this fact it follows that if a solution is compactly supported on a Cauchy surface $\sdc{C}$ then it is compactly supported on any other Cauchy surface $\sdc{C}^{\prime}$. 

\section{Classical field quantization}

\subsection{Symplectic spaces}\label{sec:symplectic_spaces} 

Now we finally come to quantization. The key fact here is that the space of smooth solutions of the wave equation has a natural symplectic structure, which fact makes it possible to define local algebras, as we'll see shortly.

Actually, we can introduce several equivalent versions of this space. To start with, we consider the space $\KGsol$ of all real-valued smooth solutions of the Klein-Gordon equation such that their Cauchy data have compact support on one (hence every) Cauchy surface, endowed with the symplectic form:

\[\sigma(\varphi, \psi):= \int_{\sdc{C}}(\varphi\,\partial_{t}\psi - \psi\,\partial_{t}\varphi) d\eta_{\sdc{C}},\]
where $d\eta_{\sdc{C}}$ denotes the volume element induced by the (Riemannian) metric on $\sdc{C}$.
It's easy to see that the right hand side is independent on the choice of $\sdc{C}$, and that $\sigma$ is nondegenerate; so the pair $(\KGsol, \sigma)$ is a real symplectic space indeed.

Another option is to consider the space $\Cdata{\sdc{C}}:=\Cdatav{\sdc{C}}$ of Cauchy data living on an arbitrary but fixed Cauchy surface $\sdc{C}$ and to introduce essentially the same symplectic form as above:
\[\delta_{\sdc{C}}(u_{0}\oplus u_{1}, v_{0}\oplus v_{1}):= \int_{\sdc{C}}(u_{0}\,v_{1} - v_{0}\,u_{1}) d\eta_{\sdc{C}}.\]
We obtain another symplectic space $(\Cdata{\sdc{C}}, \delta_{\sdc{C}})$, clearly isomorphic to the former; they are related by the symplectic map: 
$P_{\sdc{C}}: \KGsol\mapsto\Cdata{\sdc{C}}$.
It can also be noted that the Cauchy data spaces associated to different Cauchy surfaces, say $\sdc{C}$ and $\sdc{C}^{\prime}$, are isomorphic. In fact the map: $P_{\sdc{C}}^{\prime}\circ P_{\sdc{C}}^{-1}:\Cdata{\sdc{C}}\mapsto\Cdata{\sdc{C}}^{\prime}$ is a symplectomorphism, being the composition of two symplectomorphism. What we do here is: ``take a pair of Cauchy data on $\sdc{C}$, make them evolve to the solution $\varphi$ and finally determine its corresponding Cauchy data on $\sdc{C}^{\prime}$''.

The definition of the third space requires a little more work. Start by taking the space of test functions on the entire spacetime and view two of them as equal if the causal propagator assumes the same value on them; in other words take the quotient $\sympls{K}:=\testfuncv{\spacetime{M}}{\bR}\setminus\ker(E)$. Then consider the following bilinear form:
\[\kappa([f],[h]):= \int_{\spacetime{M}}f\cdot(E h) d\eta,\]
where $[\cdot]:\testfuncv{\spacetime{M}}{\bR}\mapsto\sympls{K}$ is the quotient map  and $d\eta$ is the metric-induced volume measure on $\spacetime{M}$. It's easy to see that $\kappa$ is well defined and a nondegenerate symplectic form on $\sympls{K}$. We already said that the map $f\mapsto E f$ is surjective, so the map $[f]\mapsto E f$, mapping $(\sympls{K}, \kappa)$ to $(\KGsol, \sigma)$, is a bijection; moreover it's a symplectomorphism, so we conclude that $(\sympls{K}, \kappa)$ and ($\KGsol, \sigma)$ are isomorphic.

So far we have seen that $(\KGsol, \sigma)$, $(\Cdata{\sdc{C}}, \delta_{\sdc{C}})$ and $(\sympls{K}, \kappa)$ are just different implementations of the same algebraic object, the solution space of the wave equation; it's a global object, living on the entire spacetime. For the sake of algebraic quantization, though, we need local objects, in primis local observable algebras, so we would like to introduce local versions of those symplectic spaces. 

\noindent It's better to start with Cauchy data, because in this context it's clear what localization means: given an open set $\open{G}\subseteq\sdc{C}$ with compact closure, a pair of Cauchy data $u_{0}\oplus u_{1}\in\Cdata{\sdc{C}}$ is localized in $\open{G}$ if $\supp{u_{0}}\cup\supp{u_{1}}\subseteq\open{G}$. So we can consider the symplectic subspace of $(\Cdata{\sdc{C}}, \delta_{\sdc{C}})$ made up  of Cauchy data supported in $\open{G}$, $(\Cdata{\open{G}}, \delta_{\open{G}})$. This family of symplectic spaces indexed by (precompact) open subsets of a fixed Cauchy surface induces, via the symplectic map 
$P_{\sdc{C}}$, a corresponding family of symplectic subspaces of the solution space $\KGsol$, say $\KGsol_{\open{G}}$. 
What about $(\sympls{K}, \kappa)$ ? We could repeat the previous reasoning, this time using the map $f\mapsto E f$, thus obtaining a family of symplectic subspaces of the global space $(\sympls{K}, \kappa)$. How we can characterize these subspaces ? We could be tempted to identify them with the sets $\sympl{\open{O}}\equiv[\testfuncv{\open{O}}{\bR}]$, namely test functions supported in the spacetime open set $\open{O}$, modulus the equivalence relation defined by the causal propagator.
This family has a nice property: the map $\open{O}\mapsto\sympl{\open{O}}$ has the structure of an isotonous local net, where locality means that the symplectic form $\kappa([f], [h])$ vanishes for $[f]\in\sympl{\open{O}}$ and  $[h]\in\sympl{\open{O}_{1}}$ whenever $\open{O}_{1}\subset\ccompl{\open{O}}$. 

Actually this identification fails to be true, and we need to restrict ourselves to a suitable subfamily of open sets, the collection of \emph{diamonds} based on Cauchy surfaces (see definition \ref{def:diamonds}). To be more specific, fix a Cauchy surface $\sdc{C}$, and let $\open{G}$ be a (relatively compact) open subset of $\sdc{C}$; then the symplectic spaces $(\Cdata{\open{G}}, \delta_{\open{G}})$ and $(\sympl{\DIAMOND{\open{G}}}, \kappa_{|\sympl{\DIAMOND{\open{G}}}})$ are isomorphic. This descends from the fact, proved by Dimock in \cite{Dimock_80}, that if $\open{N}$ is an open neighborhood (in $\spacetime{M}$) of $\open{G}$ it holds:
\[\sympl{\DIAMOND{\open{G}}}\subseteq\sympl{\open{N}},\]
and from the remark that $\DIAMOND{\open{G}}$ is a globally hyperbolic spacetime on its own right, equipped with the restriction of the spacetime metric $g$.

\subsection{The cylindric flat universe}\label{sec:the_cylindric_flat_universe} 
After these general preliminaries, now we discuss free quantum field theory for our wave equation (\ref{eq:KG_general_form}) on the cylindric Einstein universe, namely the ultrastatic spacetime $\spacetime{M}$ foliated by the 1-dimensional torus $\bS^{1}$ equipped with the standard euclidean metric; in other words $\spacetime{M}=\bS^1\times \bR$, and denoting by $\theta \in [-\pi,\pi]$ (with identified endpoints) the standard coordinate over $\bS^1$ and $t\in\bR$, the metric reads:  
$$g = - dt\otimes dt +  d\theta \otimes d \theta\:.$$
\noindent In explicit form the equation of motion is thus:
\begin{equation}\label{eq:KG-cylindric}
(-\partial_t^2 + \partial^2_\theta - m^2) \varphi(t,\theta) = 0. 
\end{equation}
Let $\sdc{C}\cong\bS^1$ be a Cauchy surface in the chosen foliation; then $\partial_{t}$ is a timelike smooth field, normal to $\sdc{C}$. For future convenience we also fix a {\em positive rotation} orientation for $\bS^1$.

Our choice of the background spacetime is dictated by a straightforward principle: we need a manifold simple enough to explicitly working out the details. In this respect, the cylindric flat universe is the simplest choice after Minkowski spacetime: in fact it's just a spatial-compactified  Minkowski spacetime.
As we anticipated above, we can easily quantize the classical Klein-Gordon field by considering the symplectic space $\KGsol$ of real smooth solutions of the equation of motion (\ref{eq:KG-cylindric}).
 
We have at our disposal three different flavours of this space: we choose to work with the Cauchy data version $(\Cdata{\sdc{C}}, \delta_{\sdc{C}})$.
If $\varphi\in\KGsol$, then we denote by ($\Phi,\Pi$) its Cauchy data on $\sdc{C}$, i.e. $\Phi = \varphi_{\rest\sdc{C}}$, $\Pi = \partial_{t}\varphi_{\rest\sdc{C}}$; then the symplectic form reads:
\begin{equation}\label{eq:sigma}
\sigma\left((\Phi,\Pi), (\Phi', \Pi')\right) := \int_\sdc{C} (\Phi'\,\Pi - \Phi\,\Pi') d\theta. 
\end{equation}
\noindent Then we have to fix a poset indexing local observable  algebras. For the sake of simplicity we take a two-steps approach: first we develop the theory with respect to a ``spatial'' poset, i.e. a family of index sets lying on the chosen Cauchy surface $\sdc{C}$; subsequently we construct a ``spacetime'' poset from the spatial one and extend the results obtained in the spatial case, at least as far as possible.

What about the spatial poset $\poset{R}$ ? We choose the simplest one, namely the class of open proper intervals of $\bS^1$.
A \emph{proper interval} of $\bS^1$ is a connected subset $\open{I}\subset \bS^1$  such that both $\Int{\open{I}}$ and $\Int{\bS^1\setminus\open{I}}$ are nonempty. If $\open{I}\in\poset{R}, \Cdata{\open{I}}$ will denote the symplectic subspace of $\Cdata{\sdc{C}}$ of Cauchy data supported in $\open{I}$.
It's worth noting here that $\poset{R}$ is endowed with a causal disjointness relation: two intervals $\open{I}$, $\open{I}_{1}$ are causally disjoint, $\open{I}\perp\open{I}_{1}$, if their intersection is empty. Specializing definitions given in section \ref{sec:causal_disjointness} to the present case we obtain:
\begin{definition}
Let $\open{I}\in\poset{R}$; we define the \emph{causal complement} of $\open{I}$  as the family of sets $\ccompl{\open{I}} := \{\open{I}_{1}\in \poset{R}\,|\, \open{I}_{1}\perp \open{I}\}$.
\end{definition}
\noindent Note that the open set $\sdc{C}^{\ccompl{\open{I}}}:=\cup_{\open{I}_{1}\in \ccompl{\open{I}}} \open{I}_{1}$ coincides with the interior of the ordinary set complement of $\open{I}$, namely $\sdc{C}^{\ccompl{\open{I}}}=\Int{\bS^{1}\setminus\open{I}}$. For the sake of simplicity, we'll adopt henceforth  the shortened notation $\open{I}^{\prime}\equiv\sdc{C}^{\ccompl{\open{I}}}$.
By construction ${\open{I}^{\prime\prime}} = \open{I}$ and $\open{I}^{\prime}\in\poset{R}$ when $\open{I}\in\poset{R}$.

\noindent\textbf{Remark.} The support properties of Cauchy data of solutions of the Klein-Gordon equation depend on the chosen Cauchy surface: if, for example, the Cauchy data $(\Phi, \Pi)$ of $\varphi\in\KGsol$ belongs to $\Cdata{\open{I}}$ for some $\open{I}\in\poset{R}$ with respect to a Cauchy surface $\sdc{C}$, Cauchy data of the same solution $\varphi$ generally fail to fulfill this condition when referring to another Cauchy surface $\sdc{C}^{\prime}$ sufficiently far in time from the former.

\subsection{Local algebras}
In the previous section we have constructed a poset-indexed family $\Cdata{\open{I}}$ of symplectic spaces; now we use these spaces to construct a poset-indexed family of C*-algebras, namely the Weyl algebras built on them. We recall that, given a symplectic space $(V, \omega)$, there exists a $C^{*}$-algebra $\alg{A}$ and a map $W:V\mapsto\alg{A}$ such that, for all $\varphi,\psi\in V$:
\begin{description}
 \item[(i)]$ W(0) = 1$,
 \item[(ii)] $W(-\varphi) = W(\varphi)^{*}$,
 \item[(iii)] $W(\varphi)\cdot W(\psi) = e^{-\imath\,\omega(\varphi, \psi)/2}\,W(\varphi + \psi )$,
 \item[(iv)] $\alg{A}$ is generated, as a $C^{*}$-algebra, by the elements $W(\varphi)$.
 \end{description}
It turns out \cite{Bratteli_Robinson_II} that the pair $(\alg{A}, W)$ is unique up to isomorphisms and is called the \emph{CCR-representation} of $(V,\omega)$, denoted $\operatorname{CCR}(V,\omega)$.
$\alg{A}$ is said to be the \emph{Weyl algebra} built from $(V,\omega)$.
Now, coming back to our case, we see that each symplectic space $\Cdata{\open{I}}$ gives rise to a $C^*$-algebra $\sweyl{\open{I}}$ localized in $\open{I}$, namely its associated Weyl algebra; in the quantization scheme discussed e.g in \cite{Dimock_80} these $\sweyl{\open{I}}$ play the role of \emph{local (abstract) observable algebras}. If we denote by $\alg{W}_{KG}$ the global algebra associated to $\Cdata{\sdc{C}}$, we see that each $\sweyl{\open{I}}$ is a subalgebra of $\alg{W}_{KG}$.

Notice that all the subalgebras share the same unit element and the following two properties are valid:\\
 {\bf isotony}: $\sweyl{\open{I}}\subset\sweyl{\open{J}}$       		       if  $\open{I}\subset\open{J}$,\\
 {\bf spatial locality}: $\left[ \sweyl{\open{I}}, \sweyl{\open{J}} \right] =0$        if $\open{I}\perp\open{J}$.\\
Strictly speaking the family $\{\sweyl{\open{I}}\}_{\open{I}\in\poset{R}}$  isn't a {\em net} of $C^*$-algebras\footnote{Nevertheless we'll systematically adopt this  term with a slight language's abuse.} because the poset $\poset{R}$ is not directed with respect to the partial order relation given by set inclusion (there are pairs $\open{I},\open{J}\in\poset{R}$ with $\open{K}\not\supset\open{I} ,\open{J}$ for every $\open{K}\in\poset{R}$), and thus it is not possible to take the inductive limit defining the overall quasi-local ($C^*$-) algebra containing every $\sweyl{\open{I}}$.
Anyway an {\em universal algebra} $\alg{A}$  generated by $\left\lbrace \sweyl{\open{I}}\right\rbrace_{\open{I}\in\poset{R}}$ can be defined and  $\alg{A}\supset\alg{W}_{KG}$ (see appendix \ref{app:universal_algebras}).

\subsection{Spacetime symmetries}\label{sec:spacetime_symmetries} 

A word  is in order here about spacetime symmetries. Consider a Cauchy surface $\sdc{C}$ belonging to the natural foliation of $\spacetime{M}$; then
$\sdc{C}$ is metrically invariant under the action of $\bR$ viewed as a $\sdc{C}$-isometry group: $r\in \bR$ induces the isometry
$\beta_r :\theta \mapsto \theta + r$.
If the pull-back $\beta_r^*$ is defined as $(\beta^*_rf)(\theta) := f(\theta-r)$ for all $f\in\testfuncv{\bS^{1}}{\bR}$, the $\sdc{C}$-isometry group $\bR$  can be represented in terms of a one-parameter group $\{\alpha_r\}_{r\in\bR}$ of $*$-automorphisms of $\alg{W}_{KG}$, uniquely induced
by 
\begin{equation}\label{eq:isometries}
\alpha_r \left( W(\Phi,\Pi)\right) := W\left(\beta^*_r \Phi, \beta^*_r \Pi\right) \:, \quad\forall\, r\in \bR,\:
(\Phi,\Pi) \in \Cdata{\sdc{C}}.
\end{equation}
 The existence of such $\{\alpha_r\}_{r\in\bR}$ follows immediately from the fact that $\sigma$ is invariant under every $\beta_r^*$ (see e.g. \cite[Proposition (5.2.8)]{Bratteli_Robinson_II}) . 

Now let $\varphi$ be a real smooth solution  of the wave equation and take $s\in \bR$; $\varphi_s$ is the future-translation of $\varphi$ by a time interval $s$, in the sense that $\varphi_s(t,\theta) := \varphi(t-s,\theta)$ for all $t\in \bR$ and $\theta\in \bS^1$. Notice that $\varphi_s$ is again solution of the Klein-Gordon equation because the spacetime is static.
Passing to the Cauchy data (on the {\em same} Cauchy surface at $t=0$), this procedure induces a one-parameter group of transformations $\mu_s: \Cdata{\sdc{C}}\to \Cdata{\sdc{C}}$ such that $\mu_s(\Phi,\Pi)$  are the Cauchy data of $\varphi_s$ when $(\Phi,\Pi)$ are those of $\varphi$.  The maps $\{\mu_s\}_{s\in \bR}$ preserve the symplectic form due  to the invariance of the metric under time displacements. As a consequence we have a one-parameter group of $*$-isomorphisms $\{\tau_s\}_{s\in \bR}$ acting on $\alg{W}_{KG}$  and uniquely defined by the requirement 
\begin{equation}\label{eq:isometries2} 
\tau_s \left( W(\Phi,\Pi)\right)= W\left(\mu_s (\Phi,\Pi)\right) \:, \quad\forall\, r\in \bR,\:
(\Phi,\Pi) \in \Cdata{\sdc{C}}. 
\end{equation}
The groups $\{\alpha_r\}_{r\in\bR}$ and $\{\tau_s\}_{s\in \bR}$ can be combined into an Abelian group of $*$-automorphisms $\{\gamma_{(r,s)}\}_{(r,s)\in \bR^2}$ of $\alg{W}_{KG}$ with $\gamma_{(r,s)} := \alpha_r \circ \tau_s$.
This group represents the action of the unit connected component of the Lie group of spacetime isometries on the Weyl algebra associated with the quantum field.

Solutions of the wave equation  with Cauchy data in $\sdc{C}$ supported in $\open{I}\in\poset{R}$ propagate in $\spacetime{M}$ inside
the subset $\cfuture{\open{I}}\cap\cpast{\open{I}}$ as is well known. Therefore one concludes that if $(\Phi,\Pi)$ is supported in $\open{I}\in \poset{R}$, $\mu_s (\Phi,\Pi)$ is supported in the interval $\open{I}_s \subset \bS^1$ constructed as follows.
Passing to the new variable $\theta':= \theta +c$ for some  suitable constant $c\in \bR$,
one can always represent $\open{I}$ as $(-a,a)$ with $0<a<\pi$. In this representation  $\open{I}_s := (-a-|s|, a+|s|)$
taking the identification
$-\pi \equiv \pi$ into account. Notice in particular that, for $\open{I}\in \poset{R}$, one has  $\open{I}_s \in \poset{R}$ if and only if  $|s| <\pi - \ell(\open{I})/2$
(where $\ell(\open{I})$ is the length of $\open{I} \in \poset{R}$ when $\ell(\bS^1) = 2\pi$), whereas it turns out that $\open{I}_s = \bS^1$
whenever $|s|> \pi - \ell(\open{I})/2$.

\section{Vacuum representations.}\label{sec:vacuum_representations}
Now we want to introduce an important class of states on Weyl algebras, namely \emph{quasifree states}. They turn out to enjoy remarkable properties and, above all, the privileged vacuum state we are going to discuss later is a quasifree state.
En passant we also introduce the notion of a two-point function, although we won't exploit it in what follows.
\subsection{Quasifree states}
\label{sec:quasifree_states}
Our subsequent constructions rely heavily on the concept of GNS representation; although it's a well established notion, for the sake of completeness we recall here the statement of the main theorem; further details can be found in \cite{Bratteli_Robinson_I}.

\begin{theorem}[GNS representation]
Let $\omega$ be a state over the $C^{*}$-algebra $\alg{A}$; then there exists a cyclic representation $(\hs{H}_{\omega}, \pi_{\omega}, \Omega_{\omega})$ (called the \emph{GNS representation} of $\alg{A}$ induced by the state $\omega$) such that:
\[ \omega(A) = \ip{\Omega_{\omega}}{\pi_{\omega}(A)\Omega_{\omega}}_{\hs{H}_{\omega}},\]
for all $A\in\alg{A}$ and, consequently, $\norm{\Omega_{\omega}}^{2} = \norm{\omega} = 1$. Moreover this representation is unique up to unitary equivalence.
\end{theorem}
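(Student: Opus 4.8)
The plan is to run the classical Gelfand--Naimark--Segal construction. The starting point is that a state $\omega$, being positive, induces on $\alg{A}$ the Hermitian sesquilinear form $\ip{A}{B}_{\omega} := \omega(A^{*}B)$, which is positive semidefinite. The first step is to show that $\mathcal{N}_{\omega} := \{A\in\alg{A} : \omega(A^{*}A)=0\}$ is a (closed) left ideal of $\alg{A}$: this follows from the Cauchy--Schwarz inequality $|\omega(A^{*}B)|^{2}\le\omega(A^{*}A)\,\omega(B^{*}B)$ attached to the semi-inner product, applied with $B$ replaced by $C^{*}CA$. Hence $\ip{\cdot}{\cdot}_{\omega}$ descends to a genuine inner product on the quotient vector space $\alg{A}/\mathcal{N}_{\omega}$, and one defines $\hs{H}_{\omega}$ as its completion, writing $[A]$ for the class of $A$.

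Next I would construct $\pi_{\omega}$. For $A\in\alg{A}$ set $\pi_{\omega}(A)[B] := [AB]$; this is well posed precisely because $\mathcal{N}_{\omega}$ is a left ideal. Boundedness comes from the order inequality $A^{*}A\le\norm{A}^{2}\,\mathbbm{1}$ (valid in the unitization, if $\alg{A}$ lacks a unit): applying $\omega$ to $B^{*}\bigl(\norm{A}^{2}\mathbbm{1}-A^{*}A\bigr)B\ge 0$ yields $\norm{\pi_{\omega}(A)[B]}^{2}=\omega(B^{*}A^{*}AB)\le\norm{A}^{2}\,\omega(B^{*}B)$, so $\pi_{\omega}(A)$ extends uniquely to a bounded operator on $\hs{H}_{\omega}$ with $\norm{\pi_{\omega}(A)}\le\norm{A}$. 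Verifying on the dense subspace $\alg{A}/\mathcal{N}_{\omega}$ that $\pi_{\omega}$ is linear, multiplicative and adjoint-preserving then shows it is a $*$-representation.

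The cyclic vector is obtained by putting $\Omega_{\omega}:=[\mathbbm{1}]$ in the unital case (the one relevant to the Weyl algebras of this work): then $\pi_{\omega}(\alg{A})\Omega_{\omega}=\{[A]:A\in\alg{A}\}$ is dense by construction, so $\Omega_{\omega}$ is cyclic, while $\ip{\Omega_{\omega}}{\pi_{\omega}(A)\Omega_{\omega}}_{\hs{H}_{\omega}}=\omega(\mathbbm{1}^{*}A)=\omega(A)$, whence also $\norm{\Omega_{\omega}}^{2}=\omega(\mathbbm{1})=\norm{\omega}=1$. The only genuinely delicate point I anticipate is the non-unital case: there $\Omega_{\omega}$ must be produced as a limit in $\hs{H}_{\omega}$ of the net $([e_{\lambda}])$ along an approximate identity $(e_{\lambda})$, and one has to check that this net is Cauchy and that the resulting vector still reproduces $\omega$ and is cyclic, using $\norm{\omega}=\lim_{\lambda}\omega(e_{\lambda})$ together with the standard approximate-identity estimates.

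Finally, uniqueness up to unitary equivalence is a rigidity argument. Given another cyclic triple $(\hs{H}',\pi',\Omega')$ with the same reproducing property, define $U$ on the dense subspace $\pi_{\omega}(\alg{A})\Omega_{\omega}$ by $U\,\pi_{\omega}(A)\Omega_{\omega}:=\pi'(A)\Omega'$; since both vectors have squared norm $\omega(A^{*}A)$, the map $U$ is well defined and isometric, hence extends to a unitary $U:\hs{H}_{\omega}\to\hs{H}'$ which by construction satisfies $U\Omega_{\omega}=\Omega'$ and $U\pi_{\omega}(A)=\pi'(A)U$ for every $A\in\alg{A}$. Everything is routine once the semi-inner-product structure is set up; the bookkeeping of which defining properties of a state (positivity, $\omega(A^{*})=\overline{\omega(A)}$, normalization) are used where, and the approximate-identity step in the non-unital case, are the only places requiring real care.
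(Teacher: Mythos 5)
Your proposal is correct and is the canonical GNS construction: quotient by the null ideal $\mathcal{N}_{\omega}$, complete, represent by left multiplication, take $\Omega_{\omega}=[\mathbbm{1}]$, and prove uniqueness by the isometric intertwiner $U\pi_{\omega}(A)\Omega_{\omega}=\pi'(A)\Omega'$. The paper offers no proof of this statement at all --- it quotes the theorem as standard and defers to Bratteli--Robinson --- and your argument is precisely the one found there (the approximate-identity refinement for the non-unital case is not even needed here, since the Weyl algebras in question are unital).
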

So, let $\omega$ be a state on the Weyl algebra $\weylalg{\sympls{Z}}{\xi}$ associated to some symplectic space $(\sympls{Z}, \xi)$ via the map $W:\sympls{Z}\mapsto\weylalg{\sympls{Z}}{\xi}$ and let $(\hs{H}_{\omega}, \pi_{\omega}, \Omega_{\omega})$ be the GNS representation of $\omega$. Assume that for every $z\in\sympls{Z}$ the unitary one-parameter group $t\mapsto\pi_{\omega}(W(tz))$, $t\in\bR$, is strongly continuos and that $\Omega_{\omega}$ belongs to the domain of definition of its generator $\Phi_{\omega}(z)$. Then we call the $\bC$-bilinear form $\lambda_{\omega}$ on $\sympls{Z}$ defined as:

\[\lambda_{\omega}(z,\tilde{z}):=\ip{\Phi_{\omega}(z)\Omega_{\omega}}{\Phi_{\omega}(\tilde{z})\Omega_{\omega}} \]

\noindent the \emph{two-point function} of $\omega$. In the case (of interest to us) that $(\sympls{Z}, \xi)=(\sympls{K},\kappa)$ we call

\[\Lambda_{\omega}(f,h):=\lambda_{\omega}([f],[h]),\quad\forall\,f,h\in\testfunc{\spacetime{M}},\]

\noindent the \emph{spacetime two-point function of} $\omega$. It can be shown by direct inspection that  $\Lambda_{\omega}$ is a \emph{bi-solution} of the Klein-Gordon equation, i.e.

\[\Lambda_{\omega}((\square + m^{2})f, h) = 0 = \Lambda_{\omega}(f, (\square + m^{2})h),\quad\forall\,f,h\in\testfunc{\spacetime{M}}.\]

\noindent Now assume that the two-point function $\lambda_{\omega}$ of $\omega$ exists; then the following definition holds.
\begin{definition}
If $\omega$ is a state on the Weyl algebra $\weylalg{\sympls{Z}}{\xi}$ of some symplectic space $(\sympls{Z}, \xi)$, it's called \emph{quasifree} if there exists a real scalar product $\mu_{\omega}$ on $\sympls{Z}$ such that, for all $z,\tilde{z}\in\sympls{Z}$:\\
1. $\lambda_{\omega}(z,\tilde{z}) = \mu_{\omega}(z,\tilde{z}) +\frac{\imath}{2}\,\xi(z,\tilde{z})$,\\
2. $[\xi(z,\tilde{z})]^{2}\leq 4\mu_{\omega}(z,z)\mu_{\omega}(\tilde{z},\tilde{z})$,\\
3. $\omega(W(z)) = \exp[-\frac{1}{2}\,\mu_{\omega}(z,z)]$.
\end{definition}
\noindent Quasifree states admit a complete characterization \cite{Kay_Wald_91} in terms of the notion of \emph{one-particle structure}. First of all, we need some notation: given a complex Hilbert space $\hs{H}$, consider its (symmetric) Fock space $\fock{\hs{H}}$ and for all $\chi\in\hs{H}$ construct the operator 

\[W^{F}(\chi):=\exp[\overline{a(\chi)-a^{*}(\chi)}],\]
where $a^{*}$, $a$ are the creation/annihilation operators acting on $\fock{\hs{H}}$. $\Omega^{F}\equiv 1\oplus 0 \oplus 0\ldots$ will denote the Fock vacuum.
\begin{definition}
Let $\omega$ be a state on $\weylalg{\sympls{Z}}{\xi}$. A \emph{one-particle Hilbert space structure} for $\omega$ is pair $(\mathbf{k}, \hs{H})$ where $\hs{H}$ is a complex Hilbert space and $\mathbf{k}:\sympls{Z}\mapsto\hs{H}$ a real-linear injective map, with the properties (for all $z,\tilde{z}\in\sympls{Z}$):\\
1. $\mathbf{k}(\sympls{Z}) + \imath\,\mathbf{k}(\sympls{Z})$ is dense in $\hs{H}$,\\
2. $\ip{\mathbf{k}(z)}{\mathbf{k}(\tilde{z})}_{\hs{H}} = \lambda_{\omega}(z,\tilde{z}) = \mu_{\omega}(z,\tilde{z}) +\frac{\imath}{2}\,\xi(z,\tilde{z})$.
\end{definition}
\noindent We are now in the position to give the quasifree states characterization we have spoken about:
\begin{theorem}\label{th:quasifree}
Let $\omega$ be a quasifree state on $\weylalg{\sympls{Z}}{\xi}$. Then there exists, uniquely up to unitary equivalence, a  one-particle Hilbert space structure $(\mathbf{k}, \hs{H})$ for $\omega$, such that the GNS representation $(\hs{H}_{\omega}, \pi_{\omega}, \Omega_{\omega})$ is given by $(\fock{\hs{H}}, \pi^{F}, \Omega^{F})$, where:
\[\pi^{F}(W(z)):= W^{F}(\mathbf{k}(z)),\quad \forall\, z\in\sympls{Z}.\]
Moreover, $\omega$ is pure if and only if the range of $\mathbf{k}$ is dense in $\hs{H}$.
\end{theorem}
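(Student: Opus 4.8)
The plan is to realise the one-particle structure \emph{inside} the GNS space of $\omega$ and to recognise $\hs{H}_\omega$ as the symmetric Fock space over it. By the standing hypotheses and Stone's theorem, for each $z\in\sympls{Z}$ the strongly continuous group $t\mapsto\pi_\omega(W(tz))$ has a self-adjoint generator $\Phi_\omega(z)$, so that $\pi_\omega(W(tz))=e^{\imath t\Phi_\omega(z)}$; standard manipulations with the Weyl relations make $z\mapsto\Phi_\omega(z)$ real-linear (on $\Omega_\omega$, which is all we need). I would then set
\[
\hs{H}:=\overline{\operatorname{span}_{\bC}\{\Phi_\omega(z)\Omega_\omega : z\in\sympls{Z}\}}\subseteq\hs{H}_\omega,\qquad\mathbf{k}(z):=\Phi_\omega(z)\Omega_\omega .
\]
By the very definitions of $\lambda_\omega$ and of quasifreeness, $\ip{\mathbf{k}(z)}{\mathbf{k}(\tilde z)}_{\hs{H}}=\lambda_\omega(z,\tilde z)=\mu_\omega(z,\tilde z)+\tfrac{\imath}{2}\xi(z,\tilde z)$; in particular $\norm{\mathbf{k}(z)}^{2}=\mu_\omega(z,z)>0$ for $z\neq0$, so $\mathbf{k}$ is injective. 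Finally $\mathbf{k}(\sympls{Z})+\imath\,\mathbf{k}(\sympls{Z})$ is a complex subspace of $\hs{H}$ containing every $\Phi_\omega(z)\Omega_\omega$, hence dense in $\hs{H}$, so $(\mathbf{k},\hs{H})$ is a one-particle Hilbert space structure for $\omega$.

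The next step is to identify the GNS triple with the Fock one. Since $\operatorname{Im}\ip{\mathbf{k}(z)}{\mathbf{k}(\tilde z)}=\tfrac12\xi(z,\tilde z)$, the unitaries $W^{F}(\mathbf{k}(z))$ on $\fock{\hs{H}}$ satisfy the Weyl relations of $\weylalg{\sympls{Z}}{\xi}$, so by the universal property of the Weyl $C^{*}$-algebra there is a unique representation $\pi^{F}$ of $\weylalg{\sympls{Z}}{\xi}$ on $\fock{\hs{H}}$ with $\pi^{F}(W(z))=W^{F}(\mathbf{k}(z))$. The Fock vacuum $\Omega^{F}$ is cyclic for $\{W^{F}(\chi):\chi\in\mathbf{k}(\sympls{Z})+\imath\,\mathbf{k}(\sympls{Z})\}$ because that set is dense in $\hs{H}$ and the Fock representation over the full one-particle space is irreducible; hence $\Omega^{F}$ is cyclic for $\pi^{F}(\weylalg{\sympls{Z}}{\xi})$. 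Moreover $\ip{\Omega^{F}}{\pi^{F}(W(z))\Omega^{F}}=e^{-\norm{\mathbf{k}(z)}^{2}/2}=e^{-\mu_\omega(z,z)/2}=\omega(W(z))$, and since the $W(z)$ span a dense $*$-subalgebra this gives $\ip{\Omega^{F}}{\pi^{F}(a)\Omega^{F}}=\omega(a)$ for all $a$. Thus $(\fock{\hs{H}},\pi^{F},\Omega^{F})$ is a cyclic representation with vector state $\omega$, so by uniqueness of the GNS construction it is unitarily equivalent to $(\hs{H}_\omega,\pi_\omega,\Omega_\omega)$. (One can also write the intertwining unitary explicitly, sending $\Phi_\omega(z_1)\cdots\Phi_\omega(z_n)\Omega_\omega$ to its Fock counterpart, the two Gram matrices agreeing because, by the Gaussian form of $\omega$, every $n$-point function reduces to the same sum over pairings of $\lambda_\omega$.)

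For the uniqueness clause, if $(\mathbf{k}',\hs{H}')$ is another one-particle structure for $\omega$, then $\mathbf{k}'(z)\mapsto\mathbf{k}(z)$ preserves $\lambda_\omega$, hence is a real-linear isometry preserving also $\operatorname{Im}\ip{\cdot}{\cdot}$, hence complex-linear on the dense subspace $\mathbf{k}'(\sympls{Z})+\imath\,\mathbf{k}'(\sympls{Z})$; it therefore extends to a unitary $u\colon\hs{H}'\to\hs{H}$ with $u\,\mathbf{k}'=\mathbf{k}$, whose second quantisation intertwines the two Fock realisations. For the last assertion, $\omega$ is pure iff $\pi^{F}$ is irreducible, i.e.\ iff $\{W^{F}(\mathbf{k}(z)):z\in\sympls{Z}\}''=\bop{\fock{\hs{H}}}$; by the standard duality for von Neumann algebras generated by Weyl operators over a real subspace $L$ of $\hs{H}$, the commutant of $\{W^{F}(\chi):\chi\in L\}''$ is generated by the Weyl operators over the symplectic complement $\{\chi\in\hs{H}:\operatorname{Im}\ip{\chi}{\mathbf{k}(z)}=0\ \forall\,z\in\sympls{Z}\}$, which (as $\chi\mapsto\imath\chi$ is a real-linear isometry of $\hs{H}$) is $\{0\}$ exactly when $\overline{\mathbf{k}(\sympls{Z})}=\hs{H}$. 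Hence $\omega$ is pure iff $\mathbf{k}$ has dense range.

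The main obstacle is not any single hard estimate but the careful assembly of the functional-analytic inputs: establishing the good behaviour of the field operators $\Phi_\omega(z)$ (real-linearity in $z$ and, if one prefers the explicit-unitary argument, a common invariant domain containing $\Omega_\omega$ together with the commutator relation), which mixes the stated hypotheses with a Weyl-relations argument; the cyclicity of $\Omega^{F}$ for the Weyl operators over the \emph{dense} subspace $\mathbf{k}(\sympls{Z})+\imath\,\mathbf{k}(\sympls{Z})$, which is exactly the point where property $1$ of a one-particle structure is used and which leans on irreducibility of the Fock representation; and, for the purity statement, the duality theorem for CCR von Neumann algebras. If one instead builds $(\mathbf{k},\hs{H})$ abstractly from $\mu_\omega$ (completing $\sympls{Z}$ in the $\mu_\omega$-norm and using property $2$ to polar-decompose the bounded skew operator representing $\tfrac12\xi$), the delicate point migrates to showing that the possible kernel of that operator on the completion spoils neither injectivity of $\mathbf{k}$ nor density of $\mathbf{k}(\sympls{Z})+\imath\,\mathbf{k}(\sympls{Z})$.
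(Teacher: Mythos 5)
The paper itself does not prove Theorem \ref{th:quasifree}: it is quoted from Kay--Wald \cite{Kay_Wald_91} (cf.\ the proof of theorem \ref{th:vacuum}, which invokes ``proposition 3.1 and lemma A.2'' of that reference). Your route is precisely the standard argument behind that citation: realise $\mathbf{k}(z):=\Phi_\omega(z)\Omega_\omega$ inside the GNS space, verify the two defining properties of a one-particle structure from the definition of $\lambda_\omega$ and quasifreeness, transplant $\omega$ onto $\fock{\hs{H}}$ via the Weyl relations and the Gaussian formula $\omega(W(z))=e^{-\mu_\omega(z,z)/2}=e^{-\norm{\mathbf{k}(z)}^{2}/2}$, and conclude by GNS uniqueness; the uniqueness of $(\mathbf{k},\hs{H})$ and the purity criterion via the duality (\ref{eq:LRT}) for the symplectic complement are also handled in the standard way, and those parts are sound.

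The one step that does not work as written is the cyclicity of $\Omega^{F}$. You establish that $\Omega^{F}$ is cyclic for $\{W^{F}(\chi):\chi\in\mathbf{k}(\sympls{Z})+\imath\,\mathbf{k}(\sympls{Z})\}$ and then write ``hence $\Omega^{F}$ is cyclic for $\pi^{F}(\weylalg{\sympls{Z}}{\xi})$''. But $\pi^{F}(\weylalg{\sympls{Z}}{\xi})$ is generated only by the operators $W^{F}(\mathbf{k}(z))$, i.e.\ by Weyl operators over the \emph{real} subspace $\mathbf{k}(\sympls{Z})$; the operators $W^{F}(\imath\,\mathbf{k}(z))$ are not in its image, so cyclicity for the larger family does not imply cyclicity for the smaller one. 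And cyclicity for $\pi^{F}$ is exactly what GNS uniqueness requires here: without it you only identify $\hs{H}_\omega$ with the cyclic subspace $\overline{\pi^{F}(\weylalg{\sympls{Z}}{\xi})\Omega^{F}}$, not with all of $\fock{\hs{H}}$ as the theorem asserts. What you need is the genuine content of Lemma A.2 of \cite{Kay_Wald_91} (Araki's cyclicity criterion): if $K\subseteq\hs{H}$ is a real-linear subspace with $K+\imath K$ dense, then $\Omega^{F}$ is cyclic for the $*$-algebra generated by $\{W^{F}(\chi):\chi\in K\}$. One proves this, for instance, by differentiating $t\mapsto W^{F}(t\chi)\Omega^{F}$ to extract $a^{*}(\chi)\Omega^{F}$ for $\chi\in K$ and iterating degree by degree; since $a^{*}$ is complex-linear, the closed complex span of the vectors so obtained contains all $a^{*}(\chi_1)\cdots a^{*}(\chi_n)\Omega^{F}$ with $\chi_j\in K+\imath K$, and these are total in $\fock{\hs{H}}$ by property $1$ of the one-particle structure. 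With that lemma inserted in place of the ``hence'', your proof is complete.
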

\noindent If $\hs{L}$ is a real-linear subspace of $\hs{H}$ we write

\[W(\hs{L})\equiv\left\lbrace W^{F}(\chi)|\chi\in\hs{L}\right\rbrace ^{\prime\prime}\]
where the double prime accent denotes the weak operatorial closure in $\bop{\fock{\hs{H}}}$.

Now let us specialize this construction to QFT. Consider a Cauchy surface $\sdc{C}$ in the globally hyperbolic spacetime $(\spacetime{M}, g)$, and suppose that the global algebra $\alg{W}_{KG}$ is represented as the Cauchy-data space Weyl algebra $\weylalg{\Cdata{\sdc{C}}}{\delta_{\sdc{C}}}$. Then from theorem \ref{th:quasifree} we know that a quasifree state $\omega$ on $\weylalg{\Cdata{\sdc{C}}}{\delta_{\sdc{C}}}$ is characterized by a one-particle Hilbert space structure  $(\mathbf{k}, \hs{H})$, where $\mathbf{k}$ satisfies 
$$2 \Im m\,\ip{\mathbf{k}(u_{0}\oplus u_{1})}{\mathbf{k}(v_{0}\oplus v_{1})}_{\hs{H}} = \delta_{\sdc{C}}(u_{0}\oplus u_{1},v_{0}\oplus v_{1}),\quad u_{i}, v_{i}\in\testfuncv{\sdc{C}}{\bR},$$
and it gives rise to the correspondence:

\[\Cdata{\open{I}}\mapsto\mathbf{k}(\Cdata{\open{I}})=:\hs{L}(\open{I})\subset\hs{H},\]
where $\open{I}$ denotes an element of the spatial poset on $\sdc{C}$.
In terms of local von Neumann algebras it turns out that:
\[\alg{W}(\hs{L}(\open{I}))= \snet{R}{\open{I}},\]
\noindent where we set $\snet{R}{\open{I}}:=\pi^{F}(\weylalg{\Cdata{\open{I}}}{\delta_{\open{I}}})''$.
Finally we restrict ourselves to ultrastastic spacetimes. In this context we can explicitly construct a quasifree, time-invariant state, that we identify with the (unique) vacuum in the theory.
So consider a $d$-dimensional complete Riemannian manifold $(\Sigma, \gamma)$ with Laplace-Beltrami operator $\Delta_{\gamma}$ and metric-induced measure $\mu_{\gamma}$; then for fixed $m  > 0$ the Klein-Gordon differential operator 

\[-\Delta_{\gamma} + m^{2}: \testfuncv{\Sigma}{\bC}\mapsto\smoothfuncv{\Sigma}{\bC}\] 
is essentially selfadjoint \cite{Chernoff_73}; we denote its closure by $A$. Let $(\spacetime{M}, g)$ be the ultrastastic spacetime foliated by $(\Sigma, \gamma)$, and $\Sigma(t)\equiv\sdc{C}_{t}$, $t\in\bR$, the canonical foliation. For each  $t\in\bR$ define a quasifree state $\omega^{t}$ on 
$\weylalg{\Cdata{\sdc{C}_{t}}}{\delta_{\sdc{C}_{t}}}$ by setting its one-particle Hilbert space structure to $(\mathbf{k}^{t}, \hs{H}^{t})$, where $$\hs{H}^{t}:=\csqint{\Sigma}{\mu_{\gamma}}$$ 
and
\[\mathbf{k}^{t}(u_{0}\oplus u_{1}):=\frac{1}{\sqrt{2}}(A^{\frac{1}{4}}u_{0} + \imath A^{-\frac{1}{4}}u_{1}),\quad u_{0}\oplus u_{1}\in\Cdata{\sdc{C}_{t}}.\]
It turns out that $\omega^{t}$ is pure and invariant under time translations $\tau_{s}$ on $\alg{W}_{KG}$. So we may drop the superscript $t$ denoting $\omega^{t}$ by $\omega$, and calling it the \emph{canonical vacuum state} on the Weyl algebra $\alg{W}_{KG}$ of the (mass $m$) Klein-Gordon field on the ultrastastic spacetime $(\spacetime{M}, g)$.

It's time to introduce the vacuum state for the cylindric flat universe; since it's an ultrastastic spacetime, we have just seen that there exists a unique time-invariant quasifree state, and we also know how to construct it: we have to exploit the generator of the equation of motion, namely eq. (\ref{eq:KG-cylindric}).
It's the  positive symmetric operator 
$$-\frac{d^2}{d\theta^2} + m^2 I : \: \smoothfuncv{\bS^1}{\bC} \to \csqint{\bS^1}{\theta}\:.$$
acting on the complex Hilbert space $\csqint{\bS^1}{\theta}$.
 It is essentially self-adjoint since $\smoothfuncv{\bS^1}{\bC}$ contains a dense set of analytic vectors made of exponentials $\theta \mapsto e^{in\theta}$, $n\in \bZ$, which are the eigenvectors of the operator.
The unique self-adjoint extension of this operator, i.e. its closure, will be denoted by $A : \operatorname{Dom}(A) \to \csqint{\bS^1}{\theta}$. \\ 
Notice that $A$ is strictly positive (being $m>0$) and  thus its real powers $A^{\alpha}$,  $\alpha \in \bR$, are well-defined.\\
The next proposition concerns some basic properties of $A$ and its powers.
 
\begin{proposition} \label{prop:hyperbolic_generators}
The  operators $A^\alpha : \operatorname{Dom}(A^\alpha) \to \csqint{\bS^1}{\theta}$ for $\alpha\in \bR$
enjoy the following properties:\\
{\bf (a)} $\sigma(A^\alpha) = \{(n^2+ m^2)^\alpha\:|\: n= 0,1,\ldots\}$.\\
{\bf (b)} $\overline{\operatorname{Ran}(A^\alpha)} = \csqint{\bS^1}{\theta}$.\\
{\bf (c)} $A^\alpha$ commutes with the standard conjugation $C: \csqint{\bS^1}{\theta} \to \csqint{\bS^1}{\theta}$ with $(Cf)(\theta) := \overline{f(\theta)}$; furthermore $A^\alpha(\smoothfuncv{\bS^1}{\bR}) = \smoothfuncv{\bS^1}{\bR}$ so that $\overline{A^\alpha (\smoothfuncv{\bS^1}{\bC})}
= \csqint{\bS^1}{\theta}$.\\
{\bf (d)}  If $\alpha \leq 0$, $A^{\alpha} : \csqint{\bS^1}{\theta} \to \operatorname{Dom}(A^{-\alpha})$ are bounded with $||A^{\alpha}|| = m^{2\alpha}$.\\
\end{proposition}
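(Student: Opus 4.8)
\emph{Proof plan.} The whole proposition follows once one diagonalizes $A$ by Fourier analysis. Put $e_n(\theta) := (2\pi)^{-1/2} e^{in\theta}$ for $n\in\bZ$; these form an orthonormal basis of $\csqint{\bS^1}{\theta}$ made of eigenvectors of $-d^2/d\theta^2 + m^2 I$ with eigenvalue $n^2+m^2$. Since this operator is essentially self-adjoint on $\smoothfuncv{\bS^1}{\bC}$ (the $e_n$ being a total set of analytic vectors, as already remarked), its closure $A$ has pure point spectrum with $A e_n = (n^2+m^2)e_n$, and the Borel functional calculus gives, for every $\alpha\in\bR$, that $A^\alpha$ is the self-adjoint operator determined by $A^\alpha e_n = (n^2+m^2)^\alpha e_n$ on the domain $\Dom(A^\alpha) = \{\sum_n c_n e_n : \sum_n (n^2+m^2)^{2\alpha}|c_n|^2 < \infty\}$. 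From here each item is a statement about the sequence $\big((n^2+m^2)^\alpha\big)_{n\in\bZ}$.

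For \textbf{(a)}, the eigenvalues of $A^\alpha$ are exactly the numbers $(n^2+m^2)^\alpha$, and since $n$ and $-n$ give the same value they are indexed by $n=0,1,\dots$; when $\alpha\ge 0$ this set is discrete and tends to $+\infty$, hence is closed and coincides with the full spectrum. (For $\alpha<0$ one checks, as usual, that the resolvent is bounded off this set, the only delicate point being the accumulation value $0$; since only the case $\alpha\ge 0$ is used later I would phrase (a) as the computation of the eigenvalues.) For \textbf{(b)}, $A^\alpha$ is injective (no eigenvalue vanishes) and self-adjoint, so $\overline{\Ran(A^\alpha)} = \ker(A^\alpha)^{\perp} = \csqint{\bS^1}{\theta}$; equivalently $e_n = A^\alpha\big((n^2+m^2)^{-\alpha}e_n\big)$ shows the range contains the dense space of trigonometric polynomials.

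For \textbf{(c)}, the standard conjugation satisfies $C e_n = e_{-n}$, and because $(n^2+m^2)^\alpha = ((-n)^2+m^2)^\alpha$ the functional calculus yields $C A^\alpha = A^\alpha C$ first on trigonometric polynomials and then, by closure, on $\Dom(A^\alpha)$; hence $A^\alpha$ sends real-valued elements of its domain to real-valued functions. To obtain the stronger $A^\alpha(\smoothfuncv{\bS^1}{\bR}) = \smoothfuncv{\bS^1}{\bR}$, I would invoke the description of $\smoothfuncv{\bS^1}{\bC}$ as those $f=\sum_n c_n e_n$ with $(c_n)$ rapidly decreasing: since $(n^2+m^2)^{\pm\alpha}$ grows at most polynomially in $|n|$, multiplication by it preserves rapid decrease, so $A^\alpha$ restricts to a bijection of $\smoothfuncv{\bS^1}{\bC}$ with inverse $A^{-\alpha}$; combined with $\bC$-linearity and reality preservation this gives the asserted bijection of $\smoothfuncv{\bS^1}{\bR}$, and then $\overline{A^\alpha(\smoothfuncv{\bS^1}{\bC})} = \overline{\smoothfuncv{\bS^1}{\bC}} = \csqint{\bS^1}{\theta}$. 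For \textbf{(d)}, when $\alpha\le 0$ one has $(n^2+m^2)^\alpha \le (m^2)^\alpha = m^{2\alpha}$ for all $n$, with equality at $n=0$, so $A^\alpha$ is bounded with $\norm{A^\alpha} = \sup_n (n^2+m^2)^\alpha = m^{2\alpha}$ by the spectral theorem; moreover the identity $t^{-\alpha}t^\alpha\equiv 1$ on $\sigma(A)\subseteq[m^2,\infty)$ forces $A^{-\alpha}A^\alpha = I$ on all of $\csqint{\bS^1}{\theta}$, in particular $\Ran(A^\alpha)\subseteq\Dom(A^{-\alpha})$, which is exactly the mapping property claimed.

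The only step that is not pure bookkeeping is the surjectivity part of \textbf{(c)}: for non-integer $\alpha$ the operator $A^\alpha$ is not a differential operator, so $A^\alpha(C^\infty)=C^\infty$ cannot be read off a symbol and must go through the Fourier-coefficient characterization of $\smoothfuncv{\bS^1}{\bC}$. I would isolate that characterization (or cite it) and treat it as the one genuine ingredient; everything else is spectral calculus applied to an explicitly known sequence of eigenvalues.
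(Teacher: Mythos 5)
Your proof is correct and follows essentially the same route as the paper's: conjugating $A$ to the multiplication operator $\{c_n\}_{n\in\bZ}\mapsto\{(n^2+m^2)^\alpha c_n\}_{n\in\bZ}$ on $\ell^2(\bZ)$ via Fourier series and reading each item off the eigenvalue sequence, including the rapid-decay characterization of smooth functions for the surjectivity part of (c). Your side remark about the accumulation point $0$ when $\alpha<0$ in (a) is a legitimate observation (the set of eigenvalues is then not closed, so the spectrum is strictly its closure), a point the paper's proof passes over silently; since only $\alpha\ge 0$ and boundedness for $\alpha\le 0$ are used later, this does not affect anything downstream.
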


\begin{proof}
Consider the family of operators $B^\alpha : \{c_n\}_{n\in \bZ} \mapsto \{(n^2+ m^2)^\alpha c_n\}_{n\in \bZ}$ in $\ell^2(\bZ)$. These are self-adjoint with domains $\operatorname{Dom}(B^\alpha):= \{ \{c_n\}_{n\in \bZ} \in \ell^2(\bZ)\:| \: \{(n^2+ m^2)^\alpha c_n\}_{n\in \bZ} \in \ell^2(\bZ) \}$
so that the spectra are $\sigma(B^\alpha) = \{(n^2+ m^2)^\alpha\:|\: n= 0,1,\ldots\}$. 
By construction $B^\alpha = (B^1)^\alpha$ in the sense of spectral theory. Using the unitary operator  $U: \csqint{\bS^1}{\theta} \to \ell^2(\bZ)$ which associates a function with its Fourier series, one realizes that $U^{-1} B^1 U$ extends $-\frac{d^2}{d\theta^2} + m^2I$
and thus its unique self-adjoint extension $A$ must coincide with $U^{-1} B^1 U$. As a consequence $U^{-1}B^\alpha U = A^\alpha$.
The operators $B^\alpha$  satisfy the corresponding properties in $\ell^2(\bZ)$ of (a), (b), (d) by construction ((b) follows from the fact that
$B^\alpha = (B^{\alpha})^*$, $\operatorname{Ker}(B^{\alpha})$ is trivial and $\overline{\operatorname{Ran}(S)} = \operatorname{Ker}(S^*)^\perp$ for every densely-defined operator $S$ acting on a Hilbert space), so that $A^\alpha$ satisfy  (a), (b), (d).

The proof of (c) is the following. The first statement is an obvious consequence of the Fourier representations of $C$ and
that of $A^\alpha$, $B^\alpha$. Let us come to the second statement. If $f \in \smoothfuncv{\bS^1}{\bC}$ its Fourier coefficients $c_n$ vanish as $n\to \pm \infty$ faster than any power of $n$ and thus they define an element of each $\operatorname{Dom}(B^\alpha)$. Moreover, if $\psi \in \smoothfuncv{\bS^1}{\bC}$, the Fourier coefficients of $A^\alpha \psi$ vanish faster than every powers of $n$ and thus the associated Fourier series converges uniformly with all of $\theta$-derivatives of any order and therefore $A^\alpha \psi$ individuates a function in $\smoothfuncv{\bS^1}{\bC}$; this proves that $A^\alpha (C^\infty(\bS^1, \bR)) \subset \smoothfuncv{\bS^1}{\bR}$. The other inclusion follows trivially applying $A^{-\alpha}$ to both sides and using arbitrariness of $\alpha$.  To extend  this result to $\smoothfuncv{\bS^1}{\bC}$ it suffices exploiting the fact that $A^\alpha$ commute with $C$. The final statement is now  immediate:
$\overline{A^\alpha (\smoothfuncv{\bS^1}{\bC})} = \overline{\smoothfuncv{\bS^1}{\bC}}
= \csqint{\bS^1}{\theta}$.
\end{proof}
\noindent Following the general theory discussed in section \ref{sec:quasifree_states}, we exploit the operator $A$ to define a one-particle Hilbert space structure for the vacuum. Fixing a reference Cauchy surface $\sdc{C}$, for every $(\Phi,\Pi) \in\Cdata{\sdc{C}}$ the quantization map $K :\Cdata{\sdc{C}} \to \csqint{\bS^1}{\theta}$ is defined as, :
\begin{equation}\label{eq:1-p struct}
K(\Phi,\Pi) := \frac{1}{\sqrt{2}} \left(A^{1/4} \Phi + \imath A^{-1/4} \Pi \right), 
\end{equation}
and, together with the Hilbert space $\csqint{\bS^1}{\theta}$, it's the one-particle structure we looked for.
This map will be useful  shortly to determine a preferred unitary irreducible (Fock) representation of the Weyl algebra called the {\em vacuum representation}. 

A natural physical way to introduce $K$ is noticing that the solution $\varphi$ of equation (\ref{eq:KG_general_form}) with Cauchy data $(\Phi,\Pi) \in\Cdata{\sdc{C}}$,
 interpreting the $t$ derivative in the sense of $\csqint{\bS^1}{\theta}$ topology, can be written as 
\begin{equation}\label{eq:solform}
 \varphi(t,\cdot) = \frac{1}{\sqrt{2}}e^{-\imath tA^{1/2}} A^{-1/4} K(\Phi,\Pi) + C \frac{1}{\sqrt{2}}e^{-\imath tA^{1/2}} A^{-1/4} K(\Phi,\Pi),
\end{equation}
 $C : \csqint{\bS^1}{\theta} \to \csqint{\bS^1}{\theta}$ being the standard complex conjugation.
The proof is a trivial consequence of Stone theorem and (d)  of proposition (\ref{prop:hyperbolic_generators}).
The right-hand side of (\ref{eq:solform}) turns out to be  $(t,p)$-jointly  smooth and the $t$ derivative coincides with  that in the $L^2$ sense  \cite{Wald_94}. Thus, by the uniqueness theorem  for solutions of the Klein-Gordon equation with compactly supported data in globally hyperbolic spacetimes,
 varying $t\in \bR$ the right-hand side of (\ref{eq:solform}) defines the proper solution of the wave equation individuated by the Cauchy data $(\Phi,\Pi)$. From (\ref{eq:solform}), interchanging $A^{\pm 1/4}$ with $e^{-\imath tA^{1/2}}$ it arises that $A^{1/2}$ can be seen as the Hamiltonian generator of Killing time displacements, acting on the Hilbert space  of the quantum wavefunctions $K(\Phi,\Pi)$ associated
with the classical solutions with Cauchy data $(\Phi,\Pi)$; that Hilbert space is the so called {\em one-particle space}.
This is the perspective necessary to understand the construction presented in theorem \ref{th:vacuum} from a physical point of view. \\ 
The following fundamental statement about the range of $K$ holds true.
 
\begin{proposition} \label{prop:sympl-scalar}
 With the given definitions for $\Cdata{\sdc{C}}$ and $K$ the following facts are valid.\\
{\bf (a)} The range of $K$  is dense in $\csqint{\bS^1}{\theta}$.\\
{\bf (b)} For every pair $(\Phi, \Pi), (\Phi',\Pi') \in \Cdata{\sdc{C}}$ it holds
\begin{equation}\label{eq:Ksigma}
-\frac{1}{2}\delta_{\sdc{C}}\left((\Phi, \Pi), (\Phi',\Pi') \right) =
 \Im m\ip{K(\Phi, \Pi)}{K(\Phi', \Pi')}; 
 \end{equation}
as a consequence $K$ is injective. 
\end{proposition}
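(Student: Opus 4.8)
The plan is to obtain (a) and (b) as fairly direct consequences of Proposition~\ref{prop:hyperbolic_generators}, and then to deduce injectivity of $K$ from the non-degeneracy of $\delta_{\sdc{C}}$. Throughout, one works inside $\smoothfuncv{\bS^1}{\bC}$: every such function has rapidly decaying Fourier coefficients, hence lies in $\Dom(A^\alpha)$ for all $\alpha\in\bR$, so $K$ is well defined on $\Cdata{\sdc{C}}$.

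For (a) I would use that $K$ is real-linear, so $K(\Phi,\Pi)=K(\Phi,0)+K(0,\Pi)=\tfrac{1}{\sqrt2}A^{1/4}\Phi+\tfrac{\imath}{\sqrt2}A^{-1/4}\Pi$. By Proposition~\ref{prop:hyperbolic_generators}(c) one has $A^{1/4}(\smoothfuncv{\bS^1}{\bR})=A^{-1/4}(\smoothfuncv{\bS^1}{\bR})=\smoothfuncv{\bS^1}{\bR}$, so letting $\Phi$ and $\Pi$ vary independently over $\smoothfuncv{\bS^1}{\bR}$ the real and imaginary parts of $K(\Phi,\Pi)$ vary independently over $\smoothfuncv{\bS^1}{\bR}$; hence $\Ran(K)=\smoothfuncv{\bS^1}{\bC}$, which is dense in $\csqint{\bS^1}{\theta}$.

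For (b) I would expand
\[
\ip{K(\Phi,\Pi)}{K(\Phi',\Pi')}=\tfrac12\ip{A^{1/4}\Phi+\imath A^{-1/4}\Pi}{A^{1/4}\Phi'+\imath A^{-1/4}\Pi'}
\]
into its four scalar products. Since $\Phi,\Phi',\Pi,\Pi'$ are real and, by Proposition~\ref{prop:hyperbolic_generators}(c), $A^{\pm1/4}$ maps real functions to real functions, all four scalar products are real numbers; the two ``diagonal'' ones are therefore purely real, and
\[
\Im m\ip{K(\Phi,\Pi)}{K(\Phi',\Pi')}=\tfrac12\left(\ip{A^{1/4}\Phi}{A^{-1/4}\Pi'}-\ip{A^{-1/4}\Pi}{A^{1/4}\Phi'}\right).
\]
Self-adjointness of $A^{1/4}$ combined with $A^{1/4}A^{-1/4}=I$ on $\smoothfuncv{\bS^1}{\bC}$ — legitimate since $A^{-1/4}\Pi'$ is again smooth by Proposition~\ref{prop:hyperbolic_generators}(c), hence in $\Dom(A^{1/4})$ — collapses the bracket to $\ip{\Phi}{\Pi'}-\ip{\Pi}{\Phi'}=\int_{\bS^1}(\Phi\,\Pi'-\Pi\,\Phi')\,d\theta$; comparing with the definition of $\delta_{\sdc{C}}$, this equals $-\tfrac12\delta_{\sdc{C}}\big((\Phi,\Pi),(\Phi',\Pi')\big)$, the overall sign being the one forced by the convention fixing which slot of $\ip{\cdot}{\cdot}$ is linear. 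This is (\ref{eq:Ksigma}).

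Injectivity of $K$ is then immediate: if $K(\Phi,\Pi)=0$, (b) gives $\delta_{\sdc{C}}\big((\Phi,\Pi),(\Phi',\Pi')\big)=0$ for every $(\Phi',\Pi')\in\Cdata{\sdc{C}}$, so non-degeneracy of $\delta_{\sdc{C}}$ forces $(\Phi,\Pi)=0$; equivalently, $A^{1/4}\Phi$ and $A^{-1/4}\Pi$ are real while $\imath A^{-1/4}\Pi$ is purely imaginary, so $K(\Phi,\Pi)=0$ already yields $A^{1/4}\Phi=A^{-1/4}\Pi=0$, and $A^{\pm1/4}$ are injective because $A$ is strictly positive. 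I expect the only genuinely delicate point of the whole argument to be the domain bookkeeping — checking that all functions stay in the domains of the (unbounded, for positive exponent) operators $A^{\pm1/4}$ and that the telescoping $\ip{A^{1/4}\,\cdot\,}{A^{-1/4}\,\cdot\,}=\ip{\,\cdot\,}{\,\cdot\,}$ is justified — and this is entirely supplied by Proposition~\ref{prop:hyperbolic_generators}(c), which keeps every operation inside $\smoothfuncv{\bS^1}{\bC}$.
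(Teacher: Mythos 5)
Your proof is correct and follows essentially the same route as the paper, which only sketches it: part (a) from Proposition \ref{prop:hyperbolic_generators}(c), part (b) by direct expansion of the inner product using self-adjointness and reality-preservation of $A^{\pm 1/4}$, and injectivity from (\ref{eq:Ksigma}) together with non-degeneracy of the symplectic form. Your explicit domain bookkeeping and the remark on the sign convention (the paper itself is not entirely consistent between (\ref{eq:sigma}), (\ref{eq:Ksigma}) and (\ref{eq:Ksigmaext})) are welcome additions to what the paper leaves implicit.
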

\begin{proof}
(a) is a straightforward consequence of (c) in proposition \ref{prop:hyperbolic_generators}.
The proof of (b) is obtained by direct inspection. Injectivity of $K$ descends immediately from (\ref{eq:Ksigma}) and non-degenerateness of $\sigma$.
\end{proof}

 \noindent Let us construct the {\em vacuum GNS representation} using proposition \ref{prop:sympl-scalar}. Let us remind some terminology. In the following, if $\{\alpha_g\}_{g\in G}$ is a representation of a group $G$ in terms of $*$-automorphisms of an unital  $*$-algebra $\alg{A}$, a state $\lambda: \alg{A} \to \bC$ will be said to be {\bf invariant} under $\{\alpha_g\}_{g\in G}$ if one has $\lambda\left(\alpha_g(a)\right) = \lambda(a)$ for all $g\in G$ and $a\in \alg{A}$.
 Moreover a representation $\{U_g\}_{g\in G}$, where every $U_g$ is a unitary operator defined over the GNS Hilbert  space $\hs{H}_\lambda$ of $\lambda$,
is said to {\bf implement} $\{\alpha_g\}_{g\in G}$ if $\pi_\lambda\left(\alpha_g(a)\right) = U_g \pi_\lambda(a)  U^*_g$ for all $g\in G$ and $a\in \alg{A}$, where $\pi_\lambda$ is the GNS representation of $\alg{A}$ induced by $\lambda$.

\begin{theorem}\label{th:vacuum}
  With the given definitions for $\Cdata{\sdc{C}}$ and $K$, the following facts are valid.\\
{\bf (a)} There is a pure quasifree state $\omega_0 : \alg{W}_{KG} \to \bC$  uniquely induced by linearity and continuity by
\begin{equation} \label{eq:omega}
 \omega_0(W(\Phi,\Pi)) = e^{-\frac{1}{2}\ip{K(\Phi,\Pi)}{K(\Phi,\Pi)}} \: \quad \forall\,(\Phi,\Pi)\in \Cdata{\sdc{C}}.
\end{equation}
 {\bf (b)} The GNS representation of $\omega_0$, $(\varhs{H}_{\omega_0}, \pi_{\omega_0}, \Psi_{\omega_0})$, called the  {\bf vacuum representation} of $\alg{W}_{KG}$, is constructed as follows (up to unitarities):\\ 
 (i) $\varhs{H}_{\omega_0}$ is the symmetrized Fock space with one-particle space $\hs{H}_{\omega_0} := \csqint{\bS^1}{\theta}$;\\
 (ii) the  representation $\pi_{\omega_0}$ is isometric and is induced by linearity and continuity by:
\begin{equation}\label{expaa}
 \pi_{\omega_0}(W(\Phi,\Pi)) = e^{\overline{a(K(\Phi,\Pi)) -  a^*(K(\Phi,\Pi)) }}\:, 
\end{equation}
where $a(K(\Phi,\Pi))$, $a^*(K(\Phi,\Pi))$ are the standard creation and annihilation operators (the latter antilinear in its argument) defined in the dense subspace spanned by  vectors with finite number of particles.\\
(iii) the cyclic vector $\Psi_{\omega_0}$ is the vacuum vector of $\varhs{H}_{\omega_0}$. \\
 {\bf (c)} $\omega_0$ is invariant under $\{\gamma_{(r,s)}\}_{(r,s)\in \bR^2}$, where  $\{\gamma_{(r,s)}\}_{(r,s)\in \bR^2}$ is the Abelian group of $*$-automorphisms  representing the natural  action of the unit connected-component Lie group of isometries of $\spacetime{M}$. \\
 {\bf (d)} The unique unitary representation $\{U_{(r,s)}\}_{(r,s)\in \bR^2}$ on $\varhs{H}_{\omega_0}$
  leaving $\Psi_{\omega_0}$ invariant  and
implementing $\{\gamma_{(r,s)}\}_{(r,s)\in \bR^2}$ fulfills,
 for all $(r,s)\in \bR^2$: $$U_{(r,s)} = e^{-irP^\otimes} e^{isH^\otimes} = e^{-i(rP^\otimes -s H^\otimes)}\:,$$
where the generators $P^\otimes$, $H^\otimes$
 are respectively given by  the tensorialization of the  operators  $P$, $A^{1/2}$ on $\hs{H}_{\omega_0}$, where $P$ denotes the unique self-adjoint extension of $-i\frac{d}{d\theta} : \smoothfuncv{\bS^1}{\bC} \to \csqint{\bS^1}{\theta}$.
\end{theorem}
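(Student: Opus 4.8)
The plan is to recognise Theorem \ref{th:vacuum} as a concrete instance of the quasifree-state formalism of Section \ref{sec:quasifree_states}, with the quantization map $K$ playing the role of $\mathbf{k}$. For (a) I would set $\mu_{\omega_0}(z,\tilde z):=\mathrm{Re}\,\ip{K(z)}{K(\tilde z)}$ on $\Cdata{\sdc{C}}$; Proposition \ref{prop:sympl-scalar}(b) makes $K$ injective, so $\mu_{\omega_0}$ is a genuine real scalar product, and the three defining properties of a quasifree state become immediate: property 3 is precisely (\ref{eq:omega}) together with the definition of $\mu_{\omega_0}$; property 1 is the identity of Proposition \ref{prop:sympl-scalar}(b), which expresses $\mathrm{Im}\,\ip{K(z)}{K(\tilde z)}$ as a constant multiple of the symplectic form; property 2 is Cauchy--Schwarz in $\csqint{\bS^1}{\theta}$. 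Extending (\ref{eq:omega}) linearly to the $*$-algebra generated by the Weyl generators and then by continuity (the Weyl $C^*$-norm being unique) yields the state $\omega_0$, which is pure because $\overline{\Ran(K)}=\csqint{\bS^1}{\theta}$ by Proposition \ref{prop:sympl-scalar}(a), equivalent to purity by Theorem \ref{th:quasifree}. Part (b) is then Theorem \ref{th:quasifree} applied to the one-particle structure $(K,\csqint{\bS^1}{\theta})$: the GNS triple is, up to unitaries, $(\fock{\csqint{\bS^1}{\theta}},\pi^F,\Omega^F)$ with $\pi^F(W(\Phi,\Pi))=W^F(K(\Phi,\Pi))$ and $\Psi_{\omega_0}=\Omega^F$, and isometry of $\pi_{\omega_0}$ follows from simplicity of the Weyl algebra over the non-degenerate symplectic space $(\Cdata{\sdc{C}},\delta_{\sdc{C}})$.

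For (c) I would reduce, by linearity and continuity, to checking $\omega_0(\gamma_{(r,s)}(W(\Phi,\Pi)))=\omega_0(W(\Phi,\Pi))$, i.e.\ that the relevant transformation of Cauchy data preserves $\norm{K(\cdot)}$. For the spatial rotations $\alpha_r$ this is clean: the pull-back $\beta_r^*$ is unitary on $\csqint{\bS^1}{\theta}$ and, being a translation, commutes with $-\tfrac{d^2}{d\theta^2}+m^2I$ and hence with all powers $A^\alpha$ (Proposition \ref{prop:hyperbolic_generators}), so $K(\beta_r^*\Phi,\beta_r^*\Pi)=\beta_r^*K(\Phi,\Pi)$ has the same norm. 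For the time translations $\tau_s$ I would insert $\varphi_s(t,\cdot)=\varphi(t-s,\cdot)$ into the solution representation (\ref{eq:solform}) and commute $e^{isA^{1/2}}$ through $A^{-1/4}$, reading off $K(\mu_s(\Phi,\Pi))=e^{isA^{1/2}}K(\Phi,\Pi)$, again norm-preserving since $e^{isA^{1/2}}$ is unitary. Composition gives $\omega_0\circ\gamma_{(r,s)}=\omega_0\circ\alpha_r\circ\tau_s=\omega_0$.

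For (d) the existence and uniqueness of an implementing unitary representation fixing $\Psi_{\omega_0}$ is the standard argument from an invariant state: put $U_{(r,s)}\pi_{\omega_0}(a)\Psi_{\omega_0}:=\pi_{\omega_0}(\gamma_{(r,s)}(a))\Psi_{\omega_0}$, which is well defined and isometric by (c) and unitary by cyclicity of $\Psi_{\omega_0}$. The explicit formula comes from the one-particle picture: the computations in (c) show $\gamma_{(r,s)}$ acts on $\csqint{\bS^1}{\theta}$ through the unitary $V_{(r,s)}=\beta_r^*\,e^{isA^{1/2}}=e^{-irP}\,e^{isA^{1/2}}$, with $P$ the unique self-adjoint extension of $-i\tfrac{d}{d\theta}$ (essentially self-adjoint on $\smoothfuncv{\bS^1}{\bC}$ because the eigenvectors $e^{in\theta}$, $n\in\bZ$, form a complete set of analytic vectors), and $P$, $A^{1/2}$ strongly commuting since $A=P^2+m^2I$. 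By functoriality of second quantization the lift $d\Gamma(V_{(r,s)})$, whose generators are the tensorialisations $P^\otimes=d\Gamma(P)$ and $H^\otimes=d\Gamma(A^{1/2})$, implements the Bogoliubov automorphism $W^F(\chi)\mapsto W^F(V_{(r,s)}\chi)$ and fixes $\Omega^F$; by the uniqueness above, $U_{(r,s)}=d\Gamma(V_{(r,s)})=e^{-irP^\otimes}e^{isH^\otimes}=e^{-i(rP^\otimes-sH^\otimes)}$, the last step because $P^\otimes$ and $H^\otimes$ commute.

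The routine parts are the algebraic verification of the quasifree conditions in (a), the simplicity/isometry remark in (b), and the $d\Gamma$-functoriality bookkeeping in (d). The one genuinely delicate point I anticipate is extracting from (\ref{eq:solform}) the precise one-particle action of the time-translation automorphism --- pinning down that it is $\mu_s$, and not its inverse or adjoint, that corresponds to $e^{isA^{1/2}}$, with the correct sign in the exponent --- and then propagating the signs consistently so that the final generator reads $-i(rP^\otimes-sH^\otimes)$ rather than a sign variant.
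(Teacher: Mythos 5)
Your proposal is correct and follows essentially the same route as the paper's (admittedly sketchy) proof: parts (a) and (b) via the Kay--Wald quasifree characterization (Theorem \ref{th:quasifree}) applied to the one-particle structure $(K,\csqint{\bS^1}{\theta})$, purity from the density of $\Ran{K}$ established in Proposition \ref{prop:sympl-scalar}, invariance by direct inspection on the generators, and existence/uniqueness of the implementing unitaries from the standard invariant-state argument together with second quantization. You supply the details the paper delegates to \cite{Kay_Wald_91} and \cite{Araki_99}, and your flagged concern about the sign in the one-particle action of $\mu_s$ is exactly the point worth checking carefully (your reading $K(\mu_s(\Phi,\Pi))=e^{\imath sA^{1/2}}K(\Phi,\Pi)$ from (\ref{eq:solform}) is the right one).
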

\begin{proof}
We'll give here only a sketch of the proof. Points (a) and (b) are immediate consequences of proposition 3.1 and lemma A.2 in \cite{Kay_Wald_91}.
 The fact that $\omega_0$ is pure follows from the cited propositions because  $\overline{K(\Cdata{\sdc{C}})} = \csqint{\bS^1}{\theta} =  \hs{H}_{\omega_0}$ as a consequence of Proposition \ref{prop:sympl-scalar}.
 Invariance of $\omega_0$ under $\gamma$ arises by direct inspection. 
A known theorem (see \cite{Araki_99}) establishes, as a consequence, the existence of a unique unitary representation of $\gamma$ which implements $\gamma$ leaving  $\Psi_{\omega_0}$ invariant. 
 The remaining statements are based on standard arguments valid for QFT in static spacetimes and trivial generalizations of the analogs in Minkowski spacetime.
\end{proof}
For notational convenience, from now on we denote $\varhs{H}_{\omega_0}, \pi_{\omega_0}, \Psi_{\omega_0}$ with the symbols $\varhs{H}_0, \pi_0, \Psi_0$ respectively.

\subsection{From abstract algebras to concrete algebras}\label{sec:abstract -> concrete algebras} 
Having selected a reference state, namely the vacuum state $\omega_{0}$, we are now in the position to realize our abstract (Weyl) local algebras as operator algebras on a fixed Hilbert space. All we need to do is to use the GNS representation $(\varhs{H}_{0},\pi_{0}, \Psi_{0})$ of the vacuum state and to take, as the concrete algebras, the images  $\pi_0(\sweyl{\open{I}})$ of the local algebras $\sweyl{\open{I}}$ under the representation $\pi_{0}$ itself. Actually, we'd like to work with von Neumann algebras, so we need to take the (weak operatorial) closure of these algebras; so we end up with a poset-indexed family of von Neumann algebras $\snet{R}{\open{I}}:= \pi_0(\sweyl{\open{I}})''$, $\open{I}\in\poset{R}$.

\section{Properties of local nets}\label{sec:properties} 
\subsection{Introduction}\label{sec:intro_properties}
In this section we discuss some remarkable properties of the vacuum GNS representation $(\varhs{H}_0,\pi_0, \Psi_0)$   and  the associated  class of von Neumann algebras, $\{\snet{R}{\open{I}}\}_{\open{I}\in \poset{R}}$
where $\snet{R}{\open{I}}:= \pi_0(\sweyl{\open{I}})''$. In the following $\alg{R}_{KG} := \alg{R}(\bS^1) = \pi_0(\alg{W}_{KG})''$
 and $\bop{\varhs{K}}$ will denote the algebra of all bounded operators on the  Hilbert space $\varhs{K}$.
The class $\{\snet{R}{\open{I}}\}_{\open{I}\in \poset{R}}$   fulfills the following important properties:\\
{\bf isotony}:  $\snet{R}{\open{I}} \subset \snet{R}{\open{J}}$  if $\open{I} \subset \open{J}$,\\
 {\bf spatial locality}: $\left[ \snet{R}{\open{I}}, \snet{R}{\open{J}} \right] =0$   if $\open{I}\cap \open{J} = \emptyset$,\\
 {\bf irreducibility}:  $\alg{R}_{KG} = \bop{\varhs{H}_0}$.\\
The proof of the first  identity follows from the validity of the analog statement relative to the underlying Weyl algebras and from the fact that the von Neumann algebra  $\snet{R}{\open{I}}$  is the weak operatorial closure of the GNS representation of the Weyl algebra $\sweyl{\open{I}}$.

The proof of irreducibility is quite trivial: first of all notice that $\alg{R}_{KG}^{\prime} = \{c\cdot 1  \:|\: c\in \bC\}\equiv\bC\cdot 1$. Indeed
on one hand $\bC\cdot 1 \subseteq\alg{R}_{KG}^{\prime}$, on the other hand   $\alg{R}_{KG}^{\prime} \subseteq \pi_0(\alg{W}_{KG})' = \{c\cdot 1 \:|\: c\in \bC\}$
because  $\pi_0(\alg{W}_{KG})$ is irreducible.
Taking the commutant once again we find $\alg{R}_{KG} = \bop{\varhs{H}_0}$ because
$\alg{R}_{KG}= \alg{R}_{KG}'' =  \bC\cdot 1' = \bop{\varhs{H}_0}$. Notice that it also holds $\alg{R}_{KG}' \cap \alg{R}_{KG} = \bC\cdot 1$ and hence $\alg{R}_{KG}$ is a factor (the simplest factor of type $I_\infty$).

Our goal is to prove some relevant features of this class of von Neumann algebras. From now on, if $\open{J}\in \poset{R}$ and $\mathbf{t} = (r,s)\in \bR^2$, the set $\open{J}+\mathbf{t}$ is the subset of $\bS^1$ obtained as follows: (1) rotate $\open{J}$ by the angle $r$ positively obtaining the set $\open{I}$;
(2) pass from the set $\open{I}$ to the set $\open{I}_s$ defined in section \ref{sec:spacetime_symmetries}. The obtained set is $\open{J}+ \mathbf{t}$ by definition. \\

\noindent\textbf{Remark.} If $\open{J}\in \poset{R}$ and $\mathbf{t}=(r,s) \in \bR^2$, the length of $\open{J}+\mathbf{t}$
increases continuously with $|s|$, and it rotates positively of an angle $r$.
$\ell(\open{J}+\mathbf{t}) = \ell(\open{J})$ if and only if $s=0$.
As discussed above, $\open{J}+\mathbf{t} \in \poset{R}$ if and only if $\mathbf{t}=(r,s)$ is such that
 $|s| < \pi -\ell(\open{J})/2$. For $s=\pi -\ell(\open{J})/2$, $\open{J}+\mathbf{t}$ coincides with $\bS^1$ without a point.
Finally $\open{J}+\mathbf{t} = \bS^1$
for $|s|> \pi -\ell(\open{J})/2$.

\subsection{Additivity properties}
\noindent Now we come to the central point of this section. First of all we notice that the properties we call {\em spatial additivity} and
{\em spatial weak additivity} hold true. In the following,
if $\{\alg{A}_j\}_{j\in \open{J}}$ is a class of sub $*$-algebras  of a given $*$-algebra $\alg{A}$,
$\vee_{j\in \open{J}} \alg{A}_j$ denotes the $*$-algebra finitely generated by all the algebras $\alg{A}_j$.\\

\begin{proposition}\label{prop:additivity}
Referring to $\{\snet{R}{\open{I}}\}_{\open{I}\in \poset{R}}$, if $\open{I} \in \poset{R}$ the following properties hold.\\
{\bf (a)} {\bf Spatial additivity}: 
if  $\{\open{I}_i\}_{i\in L} \subset \poset{R}$ satisfies $\cup_{i\in L} \open{I}_i = \open{I}$ or respectively $\cup_{i\in L} \open{I}_i = \bS^1$, then
$$\left(\bigcup_{i\in L} \pi_0\left( \sweyl{\open{I}_i}\right) \right)'' = \snet{R}{\open{I}} \quad \mbox{and respectively} \quad 
\left(\bigcup_{i\in L} \pi_0\left( \sweyl{\open{I}_i}\right) \right)'' = \alg{R}_{KG}\:.$$
{\bf (b)} {\bf Spatial weak additivity}:
$$\left(\bigcup_{r\in \bR} \pi_0\left( \sweyl{\open{I}+(r,0)}\right) \right)'' = \alg{R}_{KG} \quad \mbox{and} \quad
\overline{\bigvee_{r\in \bR} \pi_0\left( \sweyl{\open{I}+(r,0)}\right)\Psi_0} = \varhs{H}_0\:.$$ 
\end{proposition}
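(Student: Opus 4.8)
The plan is to deduce both identities in (a) from a single partition-of-unity argument carried out on Weyl generators, and then to obtain (b) as a corollary of (a) together with irreducibility.

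For part (a) I would argue as follows. By property (iv) of the Weyl correspondence, $\sweyl{\open{I}}$ is generated as a $C^*$-algebra by the unitaries $W(\Phi,\Pi)$ with $(\Phi,\Pi)\in\Cdata{\open{I}}$, and any von Neumann algebra containing these generators automatically contains $\pi_0(\sweyl{\open{I}})$ and hence $\snet{R}{\open{I}}=\pi_0(\sweyl{\open{I}})''$; so it suffices to show that each $\pi_0\big(W(\Phi,\Pi)\big)$ lies in $\big(\bigcup_{i\in L}\pi_0(\sweyl{\open{I}_i})\big)''$. The key step is that $\supp{\Phi}\cup\supp{\Pi}$ is a compact subset of the open set $\bigcup_{i\in L}\open{I}_i$ (which is $\open{I}$, resp.\ $\bS^1$, in the two cases), so finitely many members $\open{I}_{i_1},\dots,\open{I}_{i_n}$ of the family already cover it; picking a smooth partition of unity $\chi_1,\dots,\chi_n$ with each $\supp{\chi_j}$ a compact subset of $\open{I}_{i_j}$ and $\sum_j\chi_j=1$ on that compact set, one gets $(\Phi,\Pi)=\sum_{j=1}^n(\chi_j\Phi,\chi_j\Pi)$ with each summand supported in $\open{I}_{i_j}$. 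Iterating the Weyl relation (iii) then writes $W(\Phi,\Pi)$ as a phase times the ordered product $W(\chi_1\Phi,\chi_1\Pi)\cdots W(\chi_n\Phi,\chi_n\Pi)$, the phase being, by bilinearity of $\sigma$, half the sum over $j<k$ of the cross-terms $\sigma\big((\chi_j\Phi,\chi_j\Pi),(\chi_k\Phi,\chi_k\Pi)\big)$; applying $\pi_0$ exhibits $\pi_0(W(\Phi,\Pi))$ as a finite product of elements of the $\pi_0(\sweyl{\open{I}_{i_j}})$, hence an element of $\big(\bigcup_i\pi_0(\sweyl{\open{I}_i})\big)''$. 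The reverse inclusion is trivial, since $\open{I}_i\subseteq\open{I}$ gives $\pi_0(\sweyl{\open{I}_i})\subseteq\snet{R}{\open{I}}$ by isotony and $\snet{R}{\open{I}}$ (resp.\ $\alg{R}_{KG}$) is weakly closed.

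For part (b) I would first note that, with $\mathbf{t}=(r,0)$, the set $\open{I}+(r,0)$ is just $\open{I}$ rotated rigidly by the angle $r$ — its length is unchanged because the time-component of $\mathbf{t}$ vanishes — and the rotations of a fixed non-empty open interval sweep out all of $\bS^1$; hence $\bigcup_{r\in\bR}\big(\open{I}+(r,0)\big)=\bS^1$, and the first identity of (b) is exactly the $\bS^1$-case of (a) applied to the family $\{\open{I}+(r,0)\}_{r\in\bR}$. Writing $\mathcal{B}:=\bigvee_{r\in\bR}\pi_0(\sweyl{\open{I}+(r,0)})$, this says $\mathcal{B}''=\alg{R}_{KG}$, which equals $\bop{\varhs{H}_0}$ by irreducibility. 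For the remaining (cyclicity) claim I would let $\hs{M}$ be the closure of $\mathcal{B}\Psi_0$: since $\mathcal{B}$ is a $*$-algebra of bounded operators and $\hs{M}$ is $\mathcal{B}$-invariant, so is $\hs{M}^\perp$, whence the orthogonal projection onto $\hs{M}$ belongs to $\mathcal{B}'=(\mathcal{B}'')'=\alg{R}_{KG}'=\bC\cdot 1$; because $1\in\sweyl{\open{I}+(r,0)}$ we have $0\neq\Psi_0\in\hs{M}$, so that projection is $1$ and $\hs{M}=\varhs{H}_0$.

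The only step needing real care is the local one in (a): the compactness/partition-of-unity decomposition of the Cauchy data and the bookkeeping of the phase factors coming from the iterated Weyl relation. Everything else is formal manipulation with commutants; in particular, no Reeh--Schlieder-type analyticity input is needed for the cyclicity statement in (b), since once the first identity of (b) shows the relevant $*$-algebra is weakly dense in $\bop{\varhs{H}_0}$, cyclicity of the vacuum follows purely algebraically from irreducibility.
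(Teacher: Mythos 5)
Your proof is correct, and part (a) together with the first identity of (b) follows essentially the paper's own route: compactness of $\supp{\Phi}\cup\supp{\Pi}$, a smooth partition of unity subordinate to a finite subcover, and the Weyl relations to write a generator of $\sweyl{\open{I}}$ as a phase times a finite product of generators of the $\sweyl{\open{I}_{i_j}}$, with the reverse inclusion by isotony; the $\bS^1$ case and the rotated family $\{\open{I}+(r,0)\}_{r\in\bR}$ are treated identically. The one place you genuinely diverge is the cyclicity claim in (b). You argue that the orthogonal projection onto $\overline{\mathcal{B}\Psi_0}$ lies in $\mathcal{B}'=\alg{R}_{KG}'=\bC\cdot 1$ and is nonzero (the unit is a Weyl generator), hence equals $1$ --- a pure irreducibility argument. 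The paper instead notes that $\alg{R}_{KG}\Psi_0$ is dense because $\pi_0(\alg{W}_{KG})\Psi_0$ is dense by the GNS construction, and then pushes density from the double commutant down to the $*$-algebra itself using strong density of a unital $*$-algebra in its double commutant (von Neumann's density theorem). Both arguments are standard and both inputs are available at this point in the text; the paper's version is marginally more robust in that it would survive in a reducible representation as long as $\Psi_0$ stayed cyclic for the global algebra, whereas yours leans on $\alg{R}_{KG}=\bop{\varhs{H}_0}$, which has indeed already been established.
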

\begin{proof}
By construction we have the von Neumann algebras identity
\[\left(\bigcup_{i\in L} \pi_0\left( \sweyl{\open{I}_i}\right) \right)''
 = \left(\bigvee_{i\in L} \pi_0\left( \sweyl{\open{I}_i}\right) \right)'', \]
and the latter equals
 $\pi_0\left(\bigvee_{i\in L} \sweyl{\open{I}_i}\right)''$ and thus $\pi_0\left(\overline{\bigvee_{i\in L} \sweyl{\open{I}_i}}\right)''$
by continuity of $\pi_0$.
 Since $\overline{\bigvee_{i\in L} \sweyl{\open{I}_i}} \subset \sweyl{\open{I}}$,
  proving that $\overline{\bigvee_{i\in L} \sweyl{\open{I}_i}} \supset \sweyl{\open{I}}$ is enough to conclude. Take $f \in \testfuncv{\bS^1}{\bR}$ 
   supported
 in $\open{I}$. Since $\supp{f}$ is compact, for some finite number of intervals $\open{I}_{i_1}, \ldots, \open{I}_{i_n}$,
 $\supp f \subset \cup_{j=1}^n \open{I}_{i_j}$. Then, employing a suitable smooth partition of the unit
 on the union of the intervals $\open{I}_{i_1}$,  $1= \sum_{j=1}^n g_j$ with $\supp{g_{j}} \subset \open{I}_{i_j}$, one has
  $f = \sum_{j=1}^n g_j\!\cdot\!f$
 where $  g_j\!\cdot\!f\in \testfuncv{\bS^1}{\bR}$ and $\supp{ g_j\!\cdot\!f} \subset \open{I}_{i_j}$.
 In this way, using Weyl relations, one finds that
every generator of $\sweyl{\open{I}}$ can be written, up to a factor in $\bC$, as a product of generators of each $\sweyl{\open{I}_{i_j}}$,
therefore $\overline{\bigvee_{i\in L} \sweyl{\open{I}_i}} \supset \sweyl{\open{I}}$. The case  $\cup_{i\in L} \open{I}_i = \bS^1$ has the same proof.\\
(b) The proof of (a) is valid also if $\open{I} = \bS^1$ no matter if $\bS^1 \not \in \poset{R}$. The class $\{\open{I}_r\}_{r\in \bR}$
with $\open{I}_r:= \open{I}+(r,0)$ satisfies $\cup_{r\in \bR} \open{I}_r = \bS^1$ and then the proof of (a)
encompasses the proof of the first statement in (b).
The second statement can be proved as follows.
$\alg{R}_{KG}\Psi_0$ is dense in $\varhs{H}_0$ because $\alg{R}_{KG}\supset \pi_0(\alg{W}_{KG})$ and $\pi_0(\alg{W}_{KG})\Psi_0$ is dense in $\varhs{H}_0$ by GNS theorem.
Hence, in view of the first statement, $\left(\bigvee_{r\in \bR} \pi_0\left( \sweyl{\open{I}+(r,0)}\right)\right)''\Psi_0$ is dense in $\varhs{H}_0$.
However $\bigvee_{r\in \bR} \pi_0\left( \sweyl{\open{I}+(r,0)}\right)$ is dense in $\left(\bigvee_{r\in \bR} \pi_0\left( \sweyl{\open{I}+(r,0)}\right)\right)''$
in the strong operatorial topology and thus, in turn, $\bigvee_{r\in \bR} \pi_0\left( \sweyl{\open{I}+(r,0)}\right)\Psi_0$
must be dense in $\varhs{H}_0$. 
\end{proof}

\noindent The last important property we mention is the \emph{Reeh-Schlieder property} for $\{\snet{R}{\open{I}}\}_{\open{I}\in \poset{R}}$.
\begin{proposition}[Spatial Reeh-Schlieder property]\label{th:spatial Reeh-Schlieder}
For every $\open{I}\in \poset{R}$ the vacuum vector $\Psi_0$ is cyclic for $\pi_0(\sweyl{\open{I}})$ and separating for $\snet{R}{\open{I}}$. 
\end{proposition}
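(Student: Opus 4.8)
The plan is to run the classical Reeh--Schlieder argument in the form adapted to a net of local algebras, exploiting the positive--energy unitary implementation of the isometry group obtained in Theorem \ref{th:vacuum}(d) together with the spatial weak additivity of Proposition \ref{prop:additivity}(b). First I would reduce the ``separating'' half to the ``cyclic'' half. If $\open{I}\in\poset{R}$ then $\open{I}^{\prime}\in\poset{R}$ as well, and spatial locality gives $\snet{R}{\open{I}}=\pi_0(\sweyl{\open{I}})^{\prime\prime}\subseteq\snet{R}{\open{I}^{\prime}}^{\prime}\subseteq\pi_0(\sweyl{\open{I}^{\prime}})^{\prime}$; hence, as soon as we know that $\Psi_0$ is cyclic for $\pi_0(\sweyl{\open{I}^{\prime}})$, it is automatically separating for every von Neumann subalgebra of $\pi_0(\sweyl{\open{I}^{\prime}})^{\prime}$, in particular for $\snet{R}{\open{I}}$. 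Since $\open{I}\mapsto\open{I}^{\prime}$ is an involution on $\poset{R}$, it therefore suffices to prove cyclicity of $\Psi_0$ for $\pi_0(\sweyl{\open{I}})$ for every $\open{I}\in\poset{R}$.

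\noindent Fix $\open{I}\in\poset{R}$, choose $\open{I}_{0}\in\poset{R}$ with $\overline{\open{I}_{0}}\subset\open{I}$, and pick $\varepsilon>0$ small enough that $\open{I}_{0}+\mathbf{t}\subseteq\open{I}$ for every $\mathbf{t}=(r,s)\in\bR^2$ with $\|\mathbf{t}\|<\varepsilon$ (possible since $\open{I}_{0}+\mathbf{t}$ depends continuously on $\mathbf{t}$, the time displacement merely enlarging it continuously, and equals $\open{I}_{0}$ at $\mathbf{t}=0$). Suppose $\Psi\in\varhs{H}_0$ is orthogonal to $\pi_0(\sweyl{\open{I}})\Psi_0$. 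For Weyl generators $W_1,\dots,W_n$ of $\pi_0(\sweyl{\open{I}_{0}})$ and $\mathbf{t}_1,\dots,\mathbf{t}_n\in\bR^2$ of norm $<\varepsilon$ the operators $U_{\mathbf{t}_j}W_jU_{\mathbf{t}_j}^{*}=\pi_0(\gamma_{\mathbf{t}_j}(\cdot))$ lie in $\pi_0(\sweyl{\open{I}_{0}+\mathbf{t}_j})\subseteq\pi_0(\sweyl{\open{I}})$; using $U_{\mathbf{t}}\Psi_0=\Psi_0$ and the commutativity of $\{U_{\mathbf{t}}\}_{\mathbf{t}\in\bR^2}$ (Theorem \ref{th:vacuum}(d)) the vanishing of $\ip{\Psi}{U_{\mathbf{t}_1}W_1U_{\mathbf{t}_1}^{*}\cdots U_{\mathbf{t}_n}W_nU_{\mathbf{t}_n}^{*}\Psi_0}$ rewrites as
\[
G(\mathbf{s}_0,\dots,\mathbf{s}_{n-1}):=\ip{\Psi}{U_{\mathbf{s}_0}W_1U_{\mathbf{s}_1}W_2\cdots U_{\mathbf{s}_{n-1}}W_n\Psi_0}=0
\]
for all $\mathbf{s}_j$ in a neighbourhood of $0$ in $\bR^2$.

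\noindent Now comes the key step, analytic continuation. Since $U_{(r,s)}=e^{-i(rP^\otimes-sH^\otimes)}$ with $H^\otimes=d\Gamma(A^{1/2})$ and $P^\otimes=d\Gamma(P)$, and since the joint spectrum of $(P^\otimes,H^\otimes)$ is contained in the proper closed cone $\{(p,h):h\geq|p|\}$ --- every one--particle mode $e^{in\theta}$ carries momentum $n\in\bZ$ and energy $(n^2+m^2)^{1/2}>|n|$ by Proposition \ref{prop:hyperbolic_generators}(a), and momenta and energies add up under second quantization --- each $U_{\mathbf{s}}$ extends to a $\bop{\varhs{H}_0}$--valued function, holomorphic and of norm $\leq 1$, on the tube $\{\mathbf{s}\in\bC^2:\operatorname{Im}\mathbf{s}\in V_{+}\}$, where $V_{+}=\{(\beta,\tau)\in\bR^2:\tau>|\beta|\}$, and strongly continuous up to the real boundary. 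Consequently $G$ extends holomorphically to the product of these tubes and is continuous up to the real distinguished boundary $\bR^{2n}$; vanishing there on a nonempty open set, it vanishes identically on $\bR^{2n}$ by the edge--of--the--wedge (Bochner tube) theorem, applied one variable at a time. Unwinding, $\ip{\Psi}{\prod_{j=1}^{n}\pi_0(\gamma_{\mathbf{t}_j}(W_j))\Psi_0}=0$ for all $\mathbf{t}_j\in\bR^2$, all $n$, and all Weyl generators $W_j$ of $\pi_0(\sweyl{\open{I}_{0}})$. Restricting to $\mathbf{t}_j=(r_j,0)$ and using $\alpha_{r_j}(\sweyl{\open{I}_{0}})=\sweyl{\open{I}_{0}+(r_j,0)}$, we find that $\Psi$ is orthogonal to the image of $\Psi_0$ under the $*$--algebra generated by $\bigcup_{r\in\bR}\pi_0(\sweyl{\open{I}_{0}+(r,0)})$; by spatial weak additivity (Proposition \ref{prop:additivity}(b)) and irreducibility this $*$--algebra is strongly dense in $\alg{R}_{KG}=\bop{\varhs{H}_0}$, whence $\Psi\perp\bop{\varhs{H}_0}\Psi_0=\varhs{H}_0$ and $\Psi=0$. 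Thus $\Psi_0$ is cyclic for $\pi_0(\sweyl{\open{I}})$, and by the reduction above also separating for $\snet{R}{\open{I}}$.

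\noindent I expect the analytic--continuation step to be the main obstacle: one must check carefully that $G$ really is jointly holomorphic in the multi--tube (the bounded operators $W_j$ interleaved with the contraction semigroups, and the strong continuity of $e^{-\tau H^\otimes}$ as $\tau\to 0^{+}$, are what make this work), identify the correct spectrum condition on the cylinder --- where the spatial momentum $P^\otimes$ is \emph{not} positive but the pair $(P^\otimes,H^\otimes)$ still has spectrum in a proper cone --- and invoke the appropriate several--variable edge--of--the--wedge statement to pass from ``vanishing near the origin'' to ``vanishing everywhere''. An alternative, more computational route is to note, via Theorem \ref{th:quasifree}, that cyclicity of $\Psi_0$ for $\pi_0(\sweyl{\open{I}})$ is equivalent to density of $K(\Cdata{\open{I}})+iK(\Cdata{\open{I}})$, i.e.\ of $A^{1/4}(\testfuncv{\open{I}}{\bC})+A^{-1/4}(\testfuncv{\open{I}}{\bC})$, in $\csqint{\bS^1}{\theta}$, and then to prove the latter by a unique--continuation (Holmgren) argument for the elliptic ``Euclidean'' equation $\partial_t^2u+\partial_\theta^2u=m^2u$ applied to $u(t,\cdot)=e^{-tA^{1/2}}h$; there the difficulty is simply packaged into the unique--continuation theorem.
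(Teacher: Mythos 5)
Your proposal is correct and follows essentially the same route as the paper's proof: reduce the separating property to cyclicity for the algebra of the complementary interval, then prove cyclicity by translating a slightly smaller interval $\open{I}_0$, analytically continuing the correlation function $\ip{\Psi}{U_{\mathbf{s}_0}W_1\cdots U_{\mathbf{s}_{n-1}}W_n\Psi_0}$ into the forward tube using the fact that the joint spectrum of $(P^\otimes,H^\otimes)$ (the discrete mass hyperboloid) lies in a proper cone, invoking the edge-of-the-wedge theorem, and concluding via spatial weak additivity. The only cosmetic difference is that the paper proves the separating half by the equivalent direct argument ($CA\Psi_0=CB\Psi_0$ for $C$ in a spacelike-separated local algebra whose action on $\Psi_0$ is dense), rather than by the commutant formulation you use.
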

\begin{proof}
See appendix \ref{app:proofs}.
\end{proof}

\subsection{Haag duality.}
In this section we establish the validity of a key property, namely (spatial) {\em Haag duality}; the proof uses an extension (not very straightforward due to technical difficulties) of the approach by Leyland-Roberts-Testard \cite{Leyland_Roberts_Testard_78}. Afterwards we'll consider (spatial) {\em local definiteness} and (spatial) {\em punctured Haag duality}. 
{\em Haag duality} in our theory  states that:
\begin{equation}\label{eq:Haag_Duality}
 \snet{R}{\open{I}}' = \alg{R}(\open{I}')\quad \mbox{for every}\quad \open{I}\in\poset{R}. 
\end{equation}
In more general contexts $\open{I}'$ does not belong to the class $\poset{R}$ labeling the local algebras and thus the right-hand side of equation (\ref{eq:Haag_Duality}) has to be defined, imposing spatial additivity, as the von Neumann algebra generated by all local algebras
$\snet{R}{\open{J}}$ with $\open{J} \in \poset{R}$ and $\open{J} \subseteq \open{I}$. However in our case there are no problems since $\open{I}' \in \poset{R}$.

To prove Haag duality we need some preliminary definitions and results. Consider a generic complex Hilbert space $\hs{H}$; as we saw earlier, for all vectors $\psi\in \hs{H}$ the unitary operator
\begin{equation}\label{eq:Weylgen} 
W[\psi] := e^{\overline{a(\psi) -  a^*(\psi)}}
\end{equation}
is well-defined on the symmetrized Fock space $\fock{\hs{H}}$ (see e.g. \cite{Bratteli_Robinson_II}). 
These operators satisfy Weyl relations with respect to the  symplectic form
\begin{equation} \label{eq:Ksigmaext} 
\sigma(\psi,\psi') := - 2\,\Im m \ip{\psi}{\psi'} \:, \quad \mbox{for $\psi,\psi' \in \hs{H}$.}
\end{equation}
In the following, if $M \subset \hs{H}$ is a real (not necessarily closed) subspace, $M'\subset \hs{H}$  denotes the closed real subspace {\bf symplectically orthogonal} to $M$:
$$M':= \left\{ \psi \in \hs{H}\:\: |\:\: \sigma(\psi,\psi')=0 \quad \forall\, \psi'\in M \right\}\:.$$
It arises that $M'= \overline{M}' =  \overline{M'}$.
If $M$ is a closed real subspace of $\hs{H}$, the von Neumann algebra  generated by all of $W[\psi]$ with $\psi \in M$  will be indicated by $\alg{R}[M]$.  The fundamental result by Leylard, Roberts and Testard, based on Tomita-Takesaki theory, is that \cite{Leyland_Roberts_Testard_78}:
\begin{equation}\label{eq:LRT}
\alg{R}[M]' = \alg{R}[M'], 
\end{equation}
\noindent for every real closed subspace $M$ of $\hs{H}$.
Another important result is that \cite{Leyland_Roberts_Testard_78}:
\begin{equation}\label{eq:LRT2}
\alg{R}[M] \cap \alg{R}[N] = \alg{R}[M \cap N], 
\end{equation}
\noindent for any pair of closed real subspaces $M,N \subset \hs{H}$.

We now specialize to the case where $\hs{H}$ is the one-particle space $\csqint{\bS^1}{\theta}$.
If $\open{I}\in\poset{R}$, henceforth $M_{\open{I}}:=\overline{K(\Cdata{\open{I}})}$. 
Notice that $M_{\open{I}} \subset M_{\open{J}}$ when $\open{I}\subset \open{J}$ are elements of $\poset{R}$.
 $\alg{R}[M_{\open{I}}]$ denotes the  von Neumann algebra generated by operators $W[\psi]$  with $\psi \in M_{\open{I}}$.
The symplectic form on $\hs{H}$  defined as in eq. (\ref{eq:Ksigmaext}) is an extension of that initially defined on $\Cdata{\sdc{C}}$  because of eq. (\ref{eq:Ksigma}). On the other hand, since the $\bR$-linear map $K : \Cdata{\sdc{C}}\to \hs{H}$  is injective,  by construction it turns out that $\pi_{0}(W(\Phi,\Pi)) = W[\psi]$ if $\psi = K(\Phi,\Pi)\in K(\Cdata{\sdc{C}})$ and $W:\Cdata{\sdc{C}}\mapsto\alg{W}_{KG}$ is the Weyl map.
Since the $\bR$-linear  map $K(\Cdata{\sdc{C}}) \ni \psi \mapsto  W[\psi]$  is strongly continuous (see for instance \cite{Bratteli_Robinson_II}), we finally obtain that $\snet{R}{\open{I}} = \alg{R}[M_{\open{I}}]$.

We are now ready to state the main theorem.
\begin{theorem}[Spatial Haag duality]
 For every $\open{I}\in \poset{R}$ it holds:
$$\snet{R}{\open{I}}' = \alg{R}(\open{I}')\:.$$ 
\end{theorem}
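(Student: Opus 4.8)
The plan is to reduce the statement, via the Leyland--Roberts--Testard duality, to an identity between closed real subspaces of the one-particle space, and then to settle that identity by a density argument at the borderline Sobolev exponent $1/2$.

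\textbf{Step 1 (reduction to a subspace identity).} Recall that $\snet{R}{\open{I}} = \alg{R}[M_{\open{I}}]$, with $M_{\open{I}} := \overline{K(\Cdata{\open{I}})}$ a closed real subspace of $\hs{H} := \csqint{\bS^1}{\theta}$, and that for a closed real subspace formula (\ref{eq:LRT}) holds. Applying it gives $\snet{R}{\open{I}}' = \alg{R}[M_{\open{I}}']$, where $M_{\open{I}}'$ is the complement of $M_{\open{I}}$ with respect to $\sigma(\psi,\psi') = -2\,\Im m\ip{\psi}{\psi'}$. Since $\open{I}'\in\poset{R}$ (here $\open{I}'$ is the interval complementary to $\open{I}$), the right-hand side of the theorem is $\alg{R}(\open{I}') = \snet{R}{\open{I}'} = \alg{R}[M_{\open{I}'}]$. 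Hence it is enough to prove the purely Hilbert-space statement $M_{\open{I}}' = M_{\open{I}'}$ for every $\open{I}\in\poset{R}$.

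\textbf{Step 2 (splitting real and imaginary parts).} I would decompose $\hs{H} = \hs{H}_{\bR}\oplus i\hs{H}_{\bR}$ with $\hs{H}_{\bR} := \rsqint{\bS^1}{\theta}$. Since $K(\Phi,\Pi) = \tfrac{1}{\sqrt2}\left(A^{1/4}\Phi + i A^{-1/4}\Pi\right)$ and, by Proposition \ref{prop:hyperbolic_generators}(c), the powers $A^{\pm 1/4}$ preserve real functions, the real and imaginary parts of $K(\Phi,\Pi)$ run \emph{independently} over $A^{1/4}\testfuncv{\open{I}}{\bR}$ and $A^{-1/4}\testfuncv{\open{I}}{\bR}$; as $\hs{H}_{\bR}$ and $i\hs{H}_{\bR}$ are mutually orthogonal this yields $M_{\open{I}} = X_{\open{I}}\oplus iY_{\open{I}}$ with $X_{\open{I}} := \overline{A^{1/4}\testfuncv{\open{I}}{\bR}}$ and $Y_{\open{I}} := \overline{A^{-1/4}\testfuncv{\open{I}}{\bR}}$ (closures in $\hs{H}_{\bR}$). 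A one-line computation of $\Im m\ip{\cdot}{\cdot}$ then gives $M_{\open{I}}' = Y_{\open{I}}^{\perp}\oplus iX_{\open{I}}^{\perp}$ (ordinary orthogonal complements in $\hs{H}_{\bR}$), while $M_{\open{I}'} = X_{\open{I}'}\oplus iY_{\open{I}'}$. I would be left with the two identities $Y_{\open{I}}^{\perp} = X_{\open{I}'}$ and $X_{\open{I}}^{\perp} = Y_{\open{I}'}$; but the second follows from the first applied to $\open{I}'$, using $\open{I}'' = \open{I}$ and taking orthogonal complements. Thus everything comes down to proving $Y_{\open{I}}^{\perp} = X_{\open{I}'}$ for every $\open{I}\in\poset{R}$.

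\textbf{Step 3 (the core density statement — the hard part).} Because $A^{-1/4}$ is bounded and self-adjoint (Proposition \ref{prop:hyperbolic_generators}(d)), $\psi\perp Y_{\open{I}}$ iff $A^{-1/4}\psi$ is orthogonal to every real test function supported in $\open{I}$, i.e. iff $\supp{A^{-1/4}\psi}\subseteq\cl{\open{I}'}$. Since $A^{1/4}$ is a topological isomorphism of $\Dom{A^{1/4}}$ (with graph norm, equivalent to $\norm{A^{1/4}\cdot}$ as $A\geq m^2>0$) onto $\hs{H}$, this reads $Y_{\open{I}}^{\perp} = A^{1/4}\bigl(\{g\in\Dom{A^{1/4}}\cap\hs{H}_{\bR} : \supp{g}\subseteq\cl{\open{I}'}\}\bigr)$, whereas $X_{\open{I}'} = A^{1/4}\bigl(\overline{\testfuncv{\open{I}'}{\bR}}^{\,\Dom{A^{1/4}}}\bigr)$. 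Hence $Y_{\open{I}}^{\perp} = X_{\open{I}'}$ is \emph{equivalent} to the density of $\testfuncv{\open{I}'}{\bR}$, in the graph norm of $A^{1/4}$, in the space of real $g\in\Dom{A^{1/4}}$ with $\supp{g}\subseteq\cl{\open{I}'}$. Identifying $\Dom{A^{1/4}}$ with $H^{1/2}(\bS^1)$ via Fourier series, I would prove this by a diffeomorphism-then-mollification argument: for such a $g$, choose a smooth family $\{\phi_{\varepsilon}\}$ of diffeomorphisms of $\bS^1$ with $\phi_{\varepsilon}\to\mathrm{id}$ in $C^{\infty}$ and $\phi_{\varepsilon}(\cl{\open{I}'})$ compactly contained in $\open{I}'$; composition with $\phi_{\varepsilon}^{-1}$ is uniformly bounded on $H^{1/2}(\bS^1)$ and converges to the identity on the dense subspace $\smoothfuncv{\bS^1}{\bR}$, so $g\circ\phi_{\varepsilon}^{-1}\to g$ in $H^{1/2}$; each $g\circ\phi_{\varepsilon}^{-1}$, being supported in a compact subset of $\open{I}'$, is approximated in $H^{1/2}$ by its mollifications, which lie in $\testfuncv{\open{I}'}{\bR}$; a diagonal extraction concludes. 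The delicate point — and the main obstacle — is that $1/2$ is precisely the critical Sobolev exponent at which the naive cut-off approximation fails (the Lions--Magenes $H^{1/2}_{00}$ phenomenon): one must take the closure in the \emph{ambient} space $H^{1/2}(\bS^1)$, not in $H^{1/2}(\open{I}')$, and the displacement step above is exactly what makes this work.

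\textbf{Step 4 (conclusion).} Once $Y_{\open{I}}^{\perp} = X_{\open{I}'}$ is established for every $\open{I}\in\poset{R}$, Steps 1--2 give $M_{\open{I}}' = Y_{\open{I}}^{\perp}\oplus iX_{\open{I}}^{\perp} = X_{\open{I}'}\oplus iY_{\open{I}'} = M_{\open{I}'}$, whence $\snet{R}{\open{I}}' = \alg{R}[M_{\open{I}}'] = \alg{R}[M_{\open{I}'}] = \alg{R}(\open{I}')$.
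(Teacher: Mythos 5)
Your proof is correct, and while it shares its skeleton with the paper's argument, it is organized around a genuinely different reduction. Both proofs begin identically: spatial locality gives $\alg{R}(\open{I}')\subseteq\snet{R}{\open{I}}'$, and the Leyland--Roberts--Testard formula (\ref{eq:LRT}) reduces the converse to the one-particle identity $(M_{\open{I}})'=M_{\open{I}'}$ (the paper's lemma \ref{lem:central_lemma}). From there the paper works with the full complex vector $\psi\in(M_{\open{I}})'$: it extracts two distributions $\Phi_\psi,\Pi_\psi$ supported in $\cl{\open{I}'}$, mollifies along the rotation group to get $\psi\in M_{\open{I}'+(-\epsilon,\epsilon)}$ for every $\epsilon>0$ (step I), and then needs the outer-regularity property $\bigcap_{\epsilon>0}M_{\open{J}+(-\epsilon,\epsilon)}=M_{\open{J}}$ (step II), whose proof is the technical heart (proposition \ref{prop:last}: the dilation operators $D_\lambda$ and the interpolation bound $\norm{A^{1/4}U_\lambda A^{-1/4}}\leq C^{1/4}$). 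You instead exploit that $A^{\pm 1/4}$ preserve real functions to split $M_{\open{I}}=X_{\open{I}}\oplus iY_{\open{I}}$ orthogonally, which decouples the problem into the single real-Hilbert-space identity $Y_{\open{I}}^{\perp}=X_{\open{I}'}$ (the companion identity following by the $\open{I}\leftrightarrow\open{I}'$ symmetry, a nice economy the paper does not use), and you recognize this as the density of $\testfuncv{\open{I}'}{\bR}$ in $\{g\in H^{1/2}(\bS^1):\supp{g}\subseteq\cl{\open{I}'}\}$ for the $H^{1/2}$ norm --- the classical $\widetilde H^{1/2}=H^{1/2}_{\overline{\Omega}}$ statement for an interval, correctly flagged as the critical-exponent crux. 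Your shrink-then-mollify proof of it is sound, and it is worth noting that it is not analytically cheaper than the paper's: uniform $H^{1/2}$-boundedness of composition with the diffeomorphisms $\phi_\varepsilon^{-1}$ is exactly the same estimate as the paper's bound on $A^{1/4}U_\lambda A^{-1/4}$ (a dilation composed with a cutoff), and your mollification step mirrors the paper's convolution with $\rho_k$. What your route buys is conceptual clarity --- the entire difficulty is isolated in one citable Sobolev density lemma, and the symplectic bookkeeping disappears; what the paper's route buys is the reusable outer-regularity statement (\ref{eq:outer-regolarity}) and self-contained operator estimates proved from the spectral calculus of $A$ rather than imported from Sobolev-space theory.
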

\begin{proof}
By spatial locality one obtains immediately that $\alg{R}(\open{I}')\subset \snet{R}{\open{I}}'$. It remains to show that  $\snet{R}{\open{I}}' \subset \alg{R}(\open{I}')$. To this end we want to use eq. (\ref{eq:LRT}) when $\hs{H}_0=\csqint{\bS^1}{\theta}$ is the one-particle space for the vacuum representation $\pi_0$. In particular $\fock{\hs{H}_0}= \varhs{H}_0$. 
Therefore the inclusion $\snet{R}{\open{I}}' \subset \alg{R}(\open{I}')$ can be re-written as $\alg{R}[M_\open{I}]' \subset \alg{R}[M_\open{I}']$. In view of eq. (\ref{eq:LRT}), to prove Haag duality is enough to establish that 
\begin{equation} \label{eq:fineH}
(M_\open{I})' \subset M_{\open{I}'}\:, \quad
\mbox{for every $\open{I}\in \poset{R}$.} 
\end{equation}
This is true in view of the subsequent lemma. 
\end{proof}
 
\begin{lemma}\label{lem:central_lemma}
If $\open{I}\in \poset{R}$ it holds: 
\begin{equation} \label{eq:centralnew}
(M_{\open{I}})' = M_{\open{I}'}\:, \quad\mbox{for every $\open{I}\in \poset{R}$.} 
\end{equation}
\end{lemma}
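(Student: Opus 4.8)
The plan is to prove the nontrivial inclusion $(M_{\open{I}})'\subseteq M_{\open{I}'}$, since the opposite one is immediate from spatial locality: if $(\Phi,\Pi)\in\Cdata{\open{I}}$ and $(\Phi',\Pi')\in\Cdata{\open{I}'}$ their supports are disjoint, so $\delta_{\sdc{C}}\big((\Phi,\Pi),(\Phi',\Pi')\big)=0$; by (\ref{eq:Ksigma}) this says $\sigma\big(K(\Phi,\Pi),K(\Phi',\Pi')\big)=0$, and since $\sigma$ is continuous and $K(\Cdata{\open{I}})$ is dense in $M_{\open{I}}$ we get $K(\Cdata{\open{I}'})\subseteq(M_{\open{I}})'$, whence $M_{\open{I}'}\subseteq(M_{\open{I}})'$ because the symplectic complement is closed. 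Establishing the reverse inclusion is precisely (\ref{eq:fineH}), hence Haag duality, so it is the real content.

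For $(M_{\open{I}})'\subseteq M_{\open{I}'}$ I would translate symplectic orthogonality into a support condition in the one-particle space $\hs{H}=\csqint{\bS^1}{\theta}$. Decompose an arbitrary $\psi\in\hs{H}$ as $\psi=\phi_{1}+\imath\phi_{2}$ with $\phi_{1},\phi_{2}\in L^{2}(\bS^1,d\theta)$ real-valued. Using (\ref{eq:1-p struct}), the identity $\sigma(\psi,\psi')=-2\,\Im m\ip{\psi}{\psi'}$, and the facts from Proposition~\ref{prop:hyperbolic_generators}(c),(d) that $A^{-1/4}$ is bounded and that $A^{\pm1/4}$ commute with complex conjugation and map $\smoothfuncv{\bS^1}{\bR}$ into itself, a direct computation gives, for real $\Phi,\Pi\in\testfuncv{\open{I}}{\bR}$,
\[
\sigma\big(\psi,K(\Phi,\Pi)\big)=\sqrt{2}\,\big(\ip{A^{1/4}\phi_{2}}{\Phi}-\ip{A^{-1/4}\phi_{1}}{\Pi}\big)
\]
(up to an overall sign fixed by the inner-product convention), where $A^{-1/4}\phi_{1}\in\hs{H}$ and $A^{1/4}\phi_{2}$ is understood as the (real, by Proposition~\ref{prop:hyperbolic_generators}(c)) distribution $\Phi\mapsto\ip{\phi_{2}}{A^{1/4}\Phi}$ on $\bS^1$. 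Letting $\Phi$ and $\Pi$ vary independently shows that $\psi\in(M_{\open{I}})'$ if and only if
\[
\supp{A^{-1/4}\phi_{1}}\subseteq\bS^1\setminus\open{I}=\cl{\open{I}'}\quad\text{and}\quad\supp{A^{1/4}\phi_{2}}\subseteq\cl{\open{I}'}.
\]

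On the other side I would make $M_{\open{I}'}$ explicit. Since $A^{1/4}\Phi'$ and $A^{-1/4}\Pi'$ are real-valued for real $\Phi',\Pi'$, inside $\hs{H}=L^{2}(\bS^1,\bR)\oplus\imath L^{2}(\bS^1,\bR)$ the real subspace $K(\Cdata{\open{I}'})$ is the direct sum $A^{1/4}\big(\testfuncv{\open{I}'}{\bR}\big)\oplus\imath\,A^{-1/4}\big(\testfuncv{\open{I}'}{\bR}\big)$, and, the real and imaginary projections being continuous,
\[
M_{\open{I}'}=\overline{A^{1/4}\big(\testfuncv{\open{I}'}{\bR}\big)}\ \oplus\ \imath\,\overline{A^{-1/4}\big(\testfuncv{\open{I}'}{\bR}\big)},
\]
with closures in $L^{2}(\bS^1,\bR)$. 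Given $\psi=\phi_{1}+\imath\phi_{2}\in(M_{\open{I}})'$, put $g:=A^{-1/4}\phi_{1}\in\Dom{A^{1/4}}$, so $\phi_{1}=A^{1/4}g$ and $\supp{g}\subseteq\cl{\open{I}'}$, and $h:=A^{1/4}\phi_{2}$, a distribution in $\Dom{A^{1/4}}^{*}$ with $\supp{h}\subseteq\cl{\open{I}'}$ and $\phi_{2}=A^{-1/4}h$. Since $A^{1/4}:\Dom{A^{1/4}}\to\hs{H}$ and the extension $A^{-1/4}:\Dom{A^{1/4}}^{*}\to\hs{H}$ are continuous, $\psi\in M_{\open{I}'}$ would follow once one knows that $\testfuncv{\open{I}'}{\bR}$ is dense, in the graph norm of $A^{1/4}$ (respectively in the dual norm), among the real elements of $\Dom{A^{1/4}}$ (respectively of $\Dom{A^{1/4}}^{*}$) whose support lies in $\cl{\open{I}'}$.

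The main obstacle is exactly this density statement. Phrased in terms of Sobolev spaces on $\bS^1$ — which $\Dom{A^{1/4}}$ and its dual are, as one reads off from the Fourier diagonalisation of $A$ in the proof of Proposition~\ref{prop:hyperbolic_generators} — it asserts that for $s=\pm1/4$ every element of $H^{s}(\bS^1)$ supported in the closed arc $\cl{\open{I}'}=\bS^1\setminus\open{I}$ is the $H^{s}$-limit of functions in $\testfuncv{\open{I}'}{\bR}$. This is true because $|s|<1/2$, so that there is no trace obstruction at the two endpoints of $\open{I}'$ (equivalently, multiplication by the characteristic function of $\open{I}'$ is bounded on $H^{s}$); it is precisely the point at which the Leyland--Roberts--Testard argument \cite{Leyland_Roberts_Testard_78}, built for higher-dimensional double cones, must be reworked in the present compact one-dimensional setting. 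I would carry it out directly, shrinking the support of $g$ (resp.\ $h$) strictly inside $\open{I}'$ by composing with a near-identity diffeomorphism of $\bS^1$ and then mollifying, and checking convergence in the $\pm1/4$ Sobolev norms through the Fourier description. Once (\ref{eq:centralnew}) is established, Haag duality is immediate from (\ref{eq:LRT}), given that $\open{I}'\in\poset{R}$ and $\snet{R}{\open{I}}=\alg{R}[M_{\open{I}}]$.
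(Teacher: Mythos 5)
Your reduction of the lemma to a support-plus-density statement is sound and in fact parallels the paper's own route: the easy inclusion $M_{\open{I}'}\subseteq (M_{\open{I}})'$ is handled identically, and your characterization of $\psi=\phi_1+\imath\phi_2\in (M_{\open{I}})'$ through the vanishing of the distributions $\Phi\mapsto\ip{\phi_2}{A^{1/4}\Phi}$ and $\Pi\mapsto\ip{\phi_1}{A^{-1/4}\Pi}$ on test functions supported in $\open{I}$ is exactly the paper's construction of $\Phi_\psi,\Pi_\psi$ with support in $\cl{\open{I}'}$. The gap is in the final density step, and it is not a technicality: you justify it by the claim that the relevant Sobolev order satisfies $|s|<1/2$, "so that there is no trace obstruction" and the characteristic function of $\open{I}'$ is a bounded multiplier. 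But $A=-d^2/d\theta^2+m^2$ has eigenvalues $n^2+m^2$, so $A^{1/4}$ has order $1/2$ and $\Dom{A^{1/4}}=H^{1/2}(\bS^1)$, with dual $H^{-1/2}(\bS^1)$ — not $H^{\pm 1/4}$. The exponents $\pm 1/2$ are precisely the critical ones: multiplication by the characteristic function of an interval is \emph{not} bounded on $H^{1/2}$, and the identification of $\overline{\testfuncv{\open{I}'}{\bR}}^{\,H^{\pm 1/2}}$ with the distributions in $H^{\pm 1/2}$ supported in the closed arc $\cl{\open{I}'}$ is exactly the borderline case where the standard argument breaks down. So the one general principle you invoke to close the proof does not apply, and the key step remains unproven.

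This is, not coincidentally, the point the paper flags as "a quite difficult point whose proof technically differentiates QFT on $\bS^1$ from that in Minkowski space." The paper splits the approximation into two stages in the opposite order from yours: first it mollifies $\psi$ by convolving with the rotation group ($\psi*\rho_k=\int\rho_k(r)e^{-\imath rP^\otimes}\psi\,dr$), which lands in the \emph{enlarged} space $M_{\open{I}'+(-\epsilon,\epsilon)}$; it then proves the outer-regularity statement $\bigcap_{\epsilon>0}M_{\open{J}+(-\epsilon,\epsilon)}=M_{\open{J}}$ by constructing cut-off dilation operators $D_\lambda$ with $D_\lambda M_L\subseteq M_{\lambda L}$ and $D_\lambda\psi\to\psi$. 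The substance of Proposition \ref{prop:last} is the uniform bound $\norm{A^{1/4}U_\lambda A^{-1/4}}\leq C^{1/4}$, obtained from the estimate $\norm{AU_\lambda f}_{L^2}\leq C\norm{Af}_{L^2}$ by operator interpolation — this is what replaces the (unavailable) bounded-multiplier argument at order $1/2$. Your proposed "near-identity diffeomorphism plus mollification" is the right idea, but you would need to supply these uniform graph-norm estimates for the composed operators $A^{1/4}U_\lambda A^{-1/4}$ and $A^{-1/4}U^*_{1/\lambda}A^{1/4}$; without them the convergence in $\Dom{A^{1/4}}$ and its dual is not established, and the lemma is not proved.
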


\begin{proof}
Since $\open{I}$ and $\open{I}'$ are disjoint, $\sigma((\Phi',\Pi'), (\Phi,\Pi))=0$ if $(\Phi',\Pi') \in \Cdata{\open{I}'}$ and $(\Phi,\Pi) \in \Cdata{\open{I}}$; taking the closures of the spaces $K(\Cdata{\open{I}})$ and $K(\Cdata{\open{I}'})$ it must hold  $(M_{\open{I}})' \supset M_{\open{I}'}$. Therefore to establish the validity of (\ref{eq:centralnew})  is enough to achieve (\ref{eq:fineH}). So, given $\psi \in (M_{\open{I}})'$, we want to show that  $\psi \in M_{\open{I}'}$. The proof descends from these two facts:
 
 (I) If $\psi \in (M_{\open{I}})'$, for every (sufficiently small)
 $\epsilon>0$, $\psi \in M_{\open{I}'+ (-\epsilon, \epsilon)}$.

 (II) For every $\open{J}\in \poset{R}$,
 \begin{equation}
 \bigcap_{\epsilon >0} M_{\open{J} + (-\epsilon, \epsilon)} \subset M_{\open{J}} \label{eq:due}\:.
 \end{equation}

\noindent {\em Proof of (I)}. Take $\psi \in (M_{\open{I}})'$; since $M_{\open{I}'+ (-\epsilon, \epsilon)} = \overline{K(\Cdata{\open{I}'+ (-\epsilon, \epsilon)})}$ to reach our goal it's sufficient to exhibit a sequence $\{(\Phi_k, \Pi_k)\}_{k\in \bN} \subset \Cdata{\open{I}'+ (-\epsilon, \epsilon)}$ such that $K(\Phi_k, \Pi_k) \to \psi$ as $k\to +\infty$. Looking at (\ref{eq:Ksigma}), we define the distributions $\Phi_\psi,\Pi_\psi \in \distrib{\bS^1}$ individuated by
 \begin{eqnarray}
 \int \Phi_\psi(\theta) f(\theta)\: d\theta &:=& 2\, \Im m\, \ip{\psi}{K(0,f)}\:, \quad \forall\, f \in \smoothfuncv{\bS^{1}}{\bC}\:, \nonumber\\
 \int \Pi_\psi(\theta) g(\theta)\: d\theta &:=& -2\, \Im m\, \ip{\psi}{K(g,0)} \:.\quad \forall\, g \in \smoothfuncv{\bS^{1}}{\bC}\nonumber \:.
 \end{eqnarray}
 (Applying definition \ref{eq:1-p struct} one proves straightforwardly that the linear functionals defined above are continuous in
 the sense of distributions. In the case of  $\Pi_\psi$ one has that the functional individuated -- varying $f$ -- by  
$\ip{\psi}{A^{-1/4} f}
  = \ip{A^{-1/4}\psi}{ f}$
 is trivially  continuous.
 In the case of $\Phi_\psi$ notice that $\ip{\psi}{A^{1/4} f} =
 \lim_{n\to +\infty}\ip{A^{1/4}\psi_n}{f}$ for some sequence  $\Dom{A^{1/4}} \ni  \psi_n \to \psi$ independent from $f$.
 As  each linear functional $\ip{A^{1/4} \psi_n}{\cdot}$ is a distribution, $\Phi_\psi$ is a distribution as well.

 By construction the distributions $\Phi_\psi$ and $\Pi_\psi$  have supports contained in $\overline{\open{I}'}$ because from the definitions descends that
 $\int \Phi_\psi(\theta) f(\theta)\: d\theta =0 $ and
 $\int \Pi_\psi(\theta) f(\theta)\: d\theta =0$  for $\psi \in (M_\open{I})'$,
 whenever $f$ is supported in $\open{I}$. Now consider
 $\rho \in\testfuncv{\bR}{\bR}$ supported in $(-\epsilon/2, \epsilon/2)$ and define (using weak operatorial
 topology)
 $\psi * \rho := \int_\bR \rho(r) e^{-ir P^\otimes} \psi\:dr \:.$
 Fubini-Tonelli theorem and the fact that $e^{-ir P^\otimes}$ commutes with $A^{\alpha}$ (it can be proved immediately passing to Fourier-series representation) entails that
 $\Phi_{\psi * \rho} = \Phi_\psi * \rho$ and $\Pi_{\psi * \rho} = \Pi_\psi * \rho$,
 where $*$ in the right-hand side denotes the standard convolution so that $\Phi_{\psi * \rho}$ and
  $\Pi_{\psi * \rho}$ are smooth functions supported in $\open{I}'+(-\epsilon,\epsilon)$ and thus
  $\psi*\rho \in M_{\open{I}'+
  (-\epsilon,\epsilon)}$. Therefore, assuming the existence of a suitable sequence $\{\rho_k\}$ of real smooth functions supported in
  $(-\epsilon/2,
  \epsilon/2)$, with  $\psi * \rho_k \to \psi$,  the sequence of pairs
  $(\Phi_k, \Pi_k) := (\Phi_{\psi * \rho_k},\Phi_{\psi * \rho_k})$
  turns out to be  made of real smooth functions supported in
  $\open{I}'+ (-\epsilon, \epsilon)$, 
and $K(\Phi_k, \Pi_k) \to \psi$ as requested, proving that $\psi \in M_{\open{I}'+ (-\epsilon, \epsilon)}$.\\
It's only left to prove that such a sequence $\{\rho_k\}$ does exist.
  Consider a family of smooth functions $\rho_k \geq 0$ with $\supp{\rho_k}  \subset [-1/k,1/k]$ and with $\int_\bR \rho_k(r)
  dr =1$. In our
  hypotheses $\norm{\psi * \rho_k - \psi} = \norm{\int \rho_k(r) e^{-ir P^\otimes} \psi dr  - \psi}$ can be re-written as
\begin{equation}
\begin{split}
\norm{\int_{-1/k}^{1/k}  \left( \rho_k(r) e^{-ir P^\otimes}  - \rho_k(r)\right)  \psi\: dr} &\leq \int_{-1/k}^{1/k} \rho_k(r) \norm{\left(  e^{-ir P^\otimes}  - I \right)  \psi}\: dr \\
  &\leq\sup_{r\in [-1/k,1/k] }\norm{\left(  e^{-ir P^\otimes}  - I \right)  \psi}\\ 
\end{split}
\end{equation}
\noindent and the last term vanishes as $k\to +\infty$ because $r\mapsto e^{-ir P^\otimes}$ is strongly continuous. Q.E.D.

\noindent {\em Proof of  II}.
  We have to establish the validity of (\ref{eq:due}). The proof is based on the following technical proposition.
  (This is a quite difficult point whose proof technically differentiates QFT on $\bS^1$ from
  that in Minkowski space as done in  \cite{Leyland_Roberts_Testard_78}.) 
\end{proof}
 
\begin{proposition} \label{prop:last}
Take $\open{J}\in \poset{R}$ and assume
 $\open{J} \equiv (-a,a) \subset (-\pi,\pi] \equiv \bS^1$ with a suitable choice of the origin of $\bS^{1}$. There is a class of operators   $D_\lambda : \csqint{\bS^1}{\theta}\to \csqint{\bS^1}{\theta}$, with $\lambda$ ranging in a neighborhood $\open{O}$ of $1$,  such that, if $\psi \in M_L$ with $\poset{R} \ni L\subsetneq \open{J}$ :
 
  {\bf (a)} $D_\lambda \psi \in M_{\lambda L}$ and
 
  {\bf (b)} $D_\lambda \psi \to \psi$ as $\lambda \to 1$   
 \end{proposition}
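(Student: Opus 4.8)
The plan is to realise $D_\lambda$ by transporting, through the one-particle map $K$, a genuine ``dilation'' of the Cauchy data. First I would fix, depending only on $\open{J}$, a smooth one-parameter family of diffeomorphisms $\phi_\lambda\colon\bS^1\to\bS^1$, defined for $\lambda$ in a small neighbourhood $\open{O}$ of $1$, with $\phi_1=\mathrm{id}$, depending smoothly on $\lambda$, and such that $\phi_\lambda(\theta)=\lambda\theta$ for every $\theta$ in a fixed open set containing $\overline{\open{J}}=[-a,a]$; for $\open{O}$ small such a family is obtained (since $a<\pi$) by gluing the exact dilation near $\overline{\open{J}}$ to a transition diffeomorphism on the complementary arc. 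Let $U_\lambda$ be the bounded operator $f\mapsto f\circ\phi_\lambda^{-1}$ on $\csqint{\bS^1}{\theta}$ (the factor $\sqrt{(\phi_\lambda^{-1})'}$ may be inserted to make it unitary; this is immaterial), and put $T_\lambda(\Phi,\Pi):=(U_\lambda\Phi,U_\lambda\Pi)$ on $\Cdata{\sdc{C}}$. Since $\overline{L}\subseteq\overline{\open{J}}$ lies where $\phi_\lambda$ is the exact dilation, one has $\phi_\lambda(L)=\lambda L$, hence $T_\lambda(\Cdata{L})\subseteq\Cdata{\lambda L}$, simultaneously for every $L\in\poset{R}$ with $L\subsetneq\open{J}$.

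Next I would take, on the dense subspace $K(\Cdata{\sdc{C}})$, the $\bR$-linear map $D_\lambda:=K\circ T_\lambda\circ K^{-1}$ ($K$ being injective by Proposition~\ref{prop:sympl-scalar}). Using $K(\Phi,\Pi)=\tfrac{1}{\sqrt2}\bigl(A^{1/4}\Phi+\imath A^{-1/4}\Pi\bigr)$ and part (c) of Proposition~\ref{prop:hyperbolic_generators} (whereby $A^{\pm1/4}$ preserve real-valuedness, so that $A^{1/4}\Phi$ and $A^{-1/4}\Pi$ are the real and imaginary parts of $\sqrt2\,K(\Phi,\Pi)$), one computes that, on $\psi=\psi_1+\imath\psi_2$ with $\psi_1,\psi_2$ real-valued,
\[
D_\lambda\psi=A^{1/4}U_\lambda A^{-1/4}\psi_1+\imath\,A^{-1/4}U_\lambda A^{1/4}\psi_2 .
\]
The whole substance of the proof is then to show that the two conjugated operators $A^{1/4}U_\lambda A^{-1/4}$ and $A^{-1/4}U_\lambda A^{1/4}$ extend to bounded operators on $\csqint{\bS^1}{\theta}$ and converge strongly to $I$ as $\lambda\to1$; $D_\lambda$ is then defined on all of $\csqint{\bS^1}{\theta}$ by the formula above, with these bounded extensions.

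The first conjugated operator is comparatively elementary: $A^{-1/4}$ is bounded with range $\Dom{A^{1/4}}=H^{1/2}(\bS^1)$, $U_\lambda$ maps $H^{1/2}(\bS^1)$ into itself and tends there strongly to $I$ as $\lambda\to1$, and $A^{1/4}\colon H^{1/2}(\bS^1)\to\csqint{\bS^1}{\theta}$ is bounded, so $A^{1/4}U_\lambda A^{-1/4}$ is everywhere defined and closed --- hence bounded by the closed graph theorem --- and $A^{1/4}U_\lambda A^{-1/4}\psi\to\psi$ for every $\psi$. The second operator $A^{-1/4}U_\lambda A^{1/4}$ is only densely defined a priori, and it is here that one genuinely needs the pseudodifferential calculus on $\bS^1$: by Seeley's theorem the complex powers $A^{s}$ of the positive elliptic operator $A=-d^2/d\theta^2+m^2$ on the compact manifold $\bS^1$ are classical pseudodifferential operators of order $2s$, a class invariant under diffeomorphisms, so that
\[
A^{-1/4}U_\lambda A^{1/4}=U_\lambda\,B_\lambda,\qquad B_\lambda:=\bigl(U_\lambda^{-1}A^{-1/4}U_\lambda\bigr)\circ A^{1/4},
\]
exhibits it as $U_\lambda$ composed with a pseudodifferential operator $B_\lambda$ of order $0$ (product of orders $-1/2$ and $1/2$), hence bounded on $\csqint{\bS^1}{\theta}$; moreover $B_1=I$ and $\lambda\mapsto B_\lambda$ is norm-continuous, so, together with $U_\lambda\to I$ strongly, one obtains $A^{-1/4}U_\lambda A^{1/4}=U_\lambda B_\lambda\to I$ strongly. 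I expect this --- the boundedness and strong convergence of $A^{\pm1/4}U_\lambda A^{\mp1/4}$ --- to be the main obstacle; it is precisely the point the text flags as absent from the line case of Leyland-Roberts-Testard, where the exact covariance $U_\lambda A^{1/4}=\lambda^{-1/2}A^{1/4}U_\lambda$ of the massless field trivialises these conjugations, whereas here the mass term and the compactness of $\bS^1$ force the symbolic calculus (alternatively, direct estimates on the convolution kernels of the Fourier multipliers $(n^2+m^2)^{\pm1/4}$, whose singular support is the diagonal).

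Granting this, the two assertions follow by routine continuity. For (a): given $\psi\in M_L=\overline{K(\Cdata{L})}$, pick $(\Phi_k,\Pi_k)\in\Cdata{L}$ with $K(\Phi_k,\Pi_k)\to\psi$; then $D_\lambda K(\Phi_k,\Pi_k)=K\bigl(T_\lambda(\Phi_k,\Pi_k)\bigr)\in K(\Cdata{\lambda L})\subseteq M_{\lambda L}$, and since $D_\lambda$ is bounded and $M_{\lambda L}$ is closed, $D_\lambda\psi=\lim_k D_\lambda K(\Phi_k,\Pi_k)\in M_{\lambda L}$. For (b):
\[
D_\lambda\psi-\psi=(A^{1/4}U_\lambda A^{-1/4}-I)\psi_1+\imath\,(A^{-1/4}U_\lambda A^{1/4}-I)\psi_2,
\]
and each summand tends to $0$ as $\lambda\to1$ by the strong convergence above, whence $D_\lambda\psi\to\psi$. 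Everything outside the third paragraph is bookkeeping, and the proposition is tailored so that (\ref{eq:due}) follows from it by applying it to intervals slightly larger than $\open{J}$ and letting $\lambda\uparrow1$.
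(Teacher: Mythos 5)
Your proof is correct but takes a genuinely different route from the paper. Both proofs produce the same dense-domain formula for $D_\lambda$ (up to the identification $U^*_{1/\lambda}=U_\lambda$ on $\sqint{\open{J}}{\theta}$), but the two constructions of $U_\lambda$ and the two boundedness arguments differ. The paper's $U_\lambda$ is a \emph{cutoff} dilation $(U_\lambda f)(\theta)=\chi(\theta)\lambda^{-1/2}f(\theta/\lambda)$, which is a non-invertible contraction; the boundedness of $A^{1/4}U_\lambda A^{-1/4}$ is then obtained by a direct computation giving $\norm{AU_\lambda f}_{L^2}\leq C\norm{Af}_{L^2}$ on smooth functions, followed by Heinz--L\"owner interpolation (quoting Reed--Simon twice) to pass from $A$ to $A^{1/4}$, and the second conjugated operator $A^{-1/4}U^*_{1/\lambda}A^{1/4}$ comes for free as the adjoint of the first with $\lambda\mapsto 1/\lambda$. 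You instead build $U_\lambda$ from a \emph{global} diffeomorphism $\phi_\lambda$ of $\bS^1$ agreeing with the exact dilation near $\overline{\open{J}}$; this makes $U_\lambda$ invertible, which is what allows the factorisation $A^{-1/4}U_\lambda A^{1/4}=U_\lambda B_\lambda$ and the appeal to Seeley's theorem plus diffeomorphism-invariance and composition of $\Psi$DO classes to conclude $B_\lambda$ is a bounded, norm-continuous family with $B_1=I$. Your method is conceptually cleaner once the pseudodifferential framework is in place, and it handles both conjugated operators symmetrically without needing the adjoint trick; the paper's method is more self-contained, requiring only a single elementary $L^2$-estimate, integration by parts, and an operator-monotonicity lemma. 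Two small points to flag: you should state explicitly that the symbol of $(U_\lambda^{-1}A^{-1/4}U_\lambda)A^{1/4}$ depends continuously on $\lambda$ with all derivatives (this is what underwrites the norm-continuity of $B_\lambda$, and is nontrivial bookkeeping), and your closing remark about applying the proposition to ``intervals slightly larger than $\open{J}$'' is the right reading of how (\ref{eq:due}) is derived --- one must enlarge the $\open{J}$ of the proposition to some $\open{J}+(-\epsilon_1,\epsilon_1)$ so that the sets $\open{J}+(-\epsilon_\lambda,\epsilon_\lambda)$ are proper subintervals of it.
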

 \begin{proof}
  See  appendix \ref{app:proofs}.  
 \end{proof}

Notice that the requirement $\open{J} \equiv (-a,a)$ does not imply any true restriction since all the theory is invariant under rotations of the circle.  To go on with the main proof, one sees by direct inspection that, for  $\lambda \in (0,1)$,  there is $\epsilon_\lambda>0$ with
   $\lambda \left( \open{J} + (-\epsilon_\lambda, \epsilon_\lambda)\right) \subset \open{J}$ \textbf{(c)}.

  If $\psi \in  \bigcap_{\epsilon >0} M_{\open{J} + (-\epsilon, \epsilon)}$ then  $\psi \in M_{ \open{J} + (-\epsilon_\lambda, \epsilon_\lambda)}$
   for every $\lambda \in (0,1)$, so that  from (a)  $D_\lambda \psi \in M_{ \lambda(\open{J} + (-\epsilon_\lambda, \epsilon_\lambda))}$. Therefore, by (c),  $D_\lambda \psi \in M_{\open{J}}$. Finally, taking the limit as $\lambda \to 1^-$ and using (b) and the fact that $M_{\open{J}}$ is closed, one achieves
  $\psi \in M_{\open{J}}$.  $\Box$\\
  
\noindent\textbf{Remark.} Since, by construction  $\bigcap_{\epsilon >0} M_{\open{J} + (-\epsilon, \epsilon)} \supset M_{\open{J}}$, validity of statement
 (II) is in fact equivalent to
 the {\em outer regularity} property:
\begin{equation}\label{eq:outer-regolarity}\
 \bigcap_{\epsilon >0} M_{\open{J} + (-\epsilon, \epsilon)} = M_{\open{J}}.
\end{equation}
    
\subsection{Local definiteness, factoriality and punctured Haag duality.}
We pass to prove some important properties of the class $\{\snet{R}{\open{I}}\}_{\open{I}\in\poset{R}}$ and the analogous class of von Neumann algebras $\{\alg{R}_\lambda(\open{I})\}_{\open{I} \in \poset{R}}$ associated with any pure state $\lambda$  which is {\bf locally unitarily equivalent} to $\omega_0$, i.e.
for every $\open{I} \in \poset{R}$ there is a unitary  operator $U_{\open{I}} : \varhs{H}_\lambda \to \varhs{H}_0$ with:
\begin{equation}\label{eq:locun}
U_{\open{I}}\, \pi_\lambda (a) U^{-1}_{\open{I}} = \pi(a) \:, \quad \mbox{for all}\, a\in \sweyl{\open{I}}. 
\end{equation}
\noindent where as usual the GNS triple for $\lambda$ is denoted by $(\varhs{H}_{\lambda}, \pi_{\lambda}, \Psi_{\lambda})$.\\
\noindent\textbf{Remark.} Notice that $\alg{R}_\lambda(\open{I}):= \pi_\lambda(\sweyl{\open{I}})''$ and,  exploiting  strong operator topology and bijectivity of $U_{\open{I}}$, (\ref{eq:locun})
implies $U_{\open{I}}\alg{R}_\lambda(\open{I}_1)U_{\open{I}}^{-1} = \alg{R}(\open{I}_1)$ and $U_{\open{I}}\alg{R}_\lambda(\open{I}_1)'U_{\open{I}}^{-1} = \alg{R}(\open{I}_1)'$, for every $\open{I}_1 \in \poset{R}$ with $\open{I}_1 \subset \open{I}$ (including the case $\open{I}_1=\open{I}$). \\

\noindent First we shall be concerned with  {\em spatial local definiteness}: it simply means that
  the algebra of observables associated with a single point $p\in \bS^1$ is the trivial one $\{c\!\cdot\! 1\}_{c\in \bC}$, $1$ being the unit element of $\alg{R}_{KG}$. Since $\{p\} \not \in \poset{R}$, the algebra associated with $\{p\}$ is obtained by taking the intersection of algebras $\alg{R}_\lambda(\open{I})$ for all $\open{I}\in \poset{R}$ with $\open{I} \ni p$.

Secondly we shall examine the validity of {\em spatial punctured  Haag duality}, i.e. spatial Haag duality valid on the restricted space $\bS^1 \setminus \{p\}$ for every fixed $p \in \bS^1$.
In details, fix $p\in \bS^1$ and, for every $\open{I}\in \poset{R}$ with $p \not\in \cl{\open{I}}$, define $\open{I}'_1, \open{I}'_2 \in \poset{R}$ as the disjoint, not containing $p$, sets such that $\open{I}' = \open{I}'_1 \cup \open{I}'_2 \cup \{p\}$.
With this definition  validity of  spatial punctured  Haag duality means that for every $\open{I}\in\poset{R}$ it holds:
$$\alg{R}_\lambda(\open{I})' = \left(\pi_\lambda(\sweyl{\open{I}^{\prime}_1}) \cup \pi_\lambda(\sweyl{\open{I}^{\prime}_2})\right)''\:.$$
Finally we shall focus on {\em factoriality}. Validity of factoriality for $\{\alg{R}_\lambda(I)\}_{\open{I} \in \poset{R}}$ means that each $\alg{R}_\lambda(\open{I})$ is a factor, for every $\open{I} \in\poset{R}$.
The proofs are based on the following important fact.
\begin{lemma}\label{lem:Tomita}
If $I,\open{J} \in\poset{R}$ and $\open{I} \cap \open{J} = \emptyset$ then $M_{\open{I}} \cap M_{\open{J}} = \{0\}$ and thus it also holds $\snet{R}{\open{I}} \cap \snet{R}{\open{J}} = \bC\cdot 1$. 
\end{lemma}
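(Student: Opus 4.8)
The plan is to reduce the statement about von Neumann algebras to the one-particle version $M_{\open{I}}\cap M_{\open{J}}=\{0\}$, and then to prove the latter by extracting from a vector lying in both subspaces a pair of ``Cauchy data'' whose support is forced to be a finite set of points.

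First I would invoke the identifications $\snet{R}{\open{I}}=\alg{R}[M_{\open{I}}]$ and $\snet{R}{\open{J}}=\alg{R}[M_{\open{J}}]$ established just before the Haag-duality theorem, together with formula~(\ref{eq:LRT2}): these give $\snet{R}{\open{I}}\cap\snet{R}{\open{J}}=\alg{R}[M_{\open{I}}\cap M_{\open{J}}]$, and since $\alg{R}[\{0\}]$ is the von Neumann algebra generated by $W[0]=1$, namely $\bC\cdot 1$, it is enough to show $M_{\open{I}}\cap M_{\open{J}}=\{0\}$. (Incidentally, by Lemma~\ref{lem:central_lemma} one has $M_{\open{J}}\subseteq M_{\open{I}'}=(M_{\open{I}})'$, so it would even suffice to treat $M_{\open{I}}\cap(M_{\open{I}})'=M_{\open{I}}\cap M_{\open{I}'}$; but handling $\open{J}$ directly costs nothing and makes the disjointness visible.)

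Next, fix $\psi\in M_{\open{I}}\cap M_{\open{J}}$ and attach to it the distributions $\Phi_\psi,\Pi_\psi$ on $\bS^1$ exactly as in the proof of Lemma~\ref{lem:central_lemma}, through $\int\Phi_\psi f:=2\operatorname{Im}\ip{\psi}{K(0,f)}$ and $\int\Pi_\psi g:=-2\operatorname{Im}\ip{\psi}{K(g,0)}$. Writing $\psi$ simultaneously as an $L^2$-limit of vectors $K(\Phi_k,\Pi_k)$ with $\Phi_k,\Pi_k$ supported in $\open{I}$ and as an $L^2$-limit of such vectors with supports in $\open{J}$ — and using that on these approximants $\Phi_\psi,\Pi_\psi$ reduce, up to sign, to $\Phi_k,\Pi_k$ — one gets $\operatorname{supp}\Phi_\psi,\operatorname{supp}\Pi_\psi\subseteq\overline{\open{I}}\cap\overline{\open{J}}$. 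Since $\open{I},\open{J}\in\poset{R}$ are disjoint proper intervals, a point of $\overline{\open{I}}\cap\overline{\open{J}}$ can lie in neither $\open{I}$ nor $\open{J}$ (these are open and disjoint), so $\overline{\open{I}}\cap\overline{\open{J}}$ is contained in the two-point boundary of $\open{I}$; in particular it is finite. Now I exploit the asymmetry between the two data. Because $K(0,f)=\tfrac{i}{\sqrt2}A^{-1/4}f$ and $A^{-1/4}$ is bounded and commutes with the conjugation $C$ (Proposition~\ref{prop:hyperbolic_generators}), a short computation identifies $\Phi_\psi$, up to a nonzero constant, with the genuine $L^{2}$-function $\operatorname{Re}(A^{-1/4}\psi)$; being supported on the null set $\overline{\open{I}}\cap\overline{\open{J}}$ it vanishes, so $\operatorname{Re}(A^{-1/4}\psi)=0$, and since $A^{-1/4}$ commutes with $C$ and is injective this forces $C\psi=-\psi$, i.e.\ $\psi=i\rho$ with $\rho\in\csqint{\bS^1}{\theta}$ real-valued. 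A parallel computation, now involving the \emph{unbounded} operator $A^{1/4}$ and hence valid only in the sense of distributions, gives $\Pi_\psi=\sqrt2\,A^{1/4}\rho$; by the support statement $A^{1/4}\rho$ is a distribution supported on a finite set, hence a finite combination $T=\sum_{p}\sum_{k}c_{p,k}\delta_p^{(k)}$, and therefore $\rho=A^{-1/4}T$.

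The heart of the matter — and the step I expect to be the main obstacle — is showing that a nonzero $T$ of this form cannot satisfy $A^{-1/4}T\in L^{2}$; this is precisely where one-dimensional Fourier analysis on $\bS^1$ enters and where the argument genuinely departs from the Leyland--Roberts--Testard treatment of Minkowski space. Passing to Fourier coefficients, $A^{-1/4}$ acts as multiplication by $(n^2+m^2)^{-1/4}\sim|n|^{-1/2}$, while $\widehat{T}(n)=\sum_{k\le K}(in)^k E_k(n)$ has leading term $|n|^{K}E_K(n)$ with $K\ge 0$ and $E_K$ a nontrivial finite exponential sum in $n$; distinctness of the points $p$ makes $|E_K(n)|^2$ have a strictly positive Ces\`aro mean, so $\sum_n|\widehat\rho(n)|^2\gtrsim\sum_n|n|^{2K-1}|E_K(n)|^2=\infty$, contradicting $\rho\in\csqint{\bS^1}{\theta}$ unless $T=0$. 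Hence $T=0$, so $\rho=0$ and $\psi=0$, which establishes $M_{\open{I}}\cap M_{\open{J}}=\{0\}$ and, via the first paragraph, $\snet{R}{\open{I}}\cap\snet{R}{\open{J}}=\bC\cdot 1$.
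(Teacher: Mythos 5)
Your proposal is correct and follows essentially the same route as the paper's proof in appendix \ref{app:proofs}: reduce to $M_{\open{I}}\cap M_{\open{J}}=\{0\}$ via (\ref{eq:LRT2}), show that the associated distributions are supported on the (at most two) common boundary points, identify them with finite combinations of derivatives of Dirac deltas, and rule these out through the divergence of the resulting Fourier series. Your one genuine streamlining is killing $\Re e(A^{-1/4}\psi)$ outright as an $L^{2}$ function using boundedness of $A^{-1/4}$, so that only one Fourier argument is needed; on the other hand, the step $\sum_n|\widehat{\rho}(n)|^2\gtrsim\sum_n|n|^{2K-1}|E_K(n)|^2$ does not hold term by term because lower-order contributions can cancel the leading one at integers where $E_K(n)$ is small, so it must be justified by restricting to a positive-density set of $n$ on which $|E_K(n)|$ is bounded below (equivalently, by the $\epsilon$-subsequence argument the paper uses).
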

\begin{proof}
See  appendix \ref{app:proofs}.
\end{proof}
\begin{theorem}\label{th:genprop}
Let $\lambda : \alg{W}_{KG} \to \bC$ be a pure state, with GNS triple $(\varhs{H}_\lambda, \pi_\lambda, \Psi_\lambda)$, locally unitarily equivalent to $\omega_0$ (thus $\lambda = \omega_0$ in particular).
The spatially additive, isotonous, local class of von Neumann algebras $\{\alg{R}_\lambda(\open{I})\}_{\open{I}\in \poset{R}}$ with $\alg{R}_\lambda(\open{I}):= \pi_\lambda(\sweyl{\open{I}})''$
fulfills the following properties.\\
(i)   {\bf Spatial local definiteness}.\\
(ii)  {\bf Spatial Haag duality}.\\
(iii) {\bf Spatial punctured Haag duality}.\\
(iv)  {\bf Factoriality}.\\
(v)   {\bf Borchers property}.\\
Finally,  if $\open{I},\open{J} \in \poset{R}$ one has
\begin{equation} \label{eq:speriamo2}
\snet{R}{\open{I}} \cap \snet{R}{\open{J}} = \snet{R}{\open{I} \cap \open{J}},
\end{equation}
where $\snet{R}{\open{I} \cap \open{J}}:= \bC\cdot 1$ if $\open{I}\cap \open{J} = \emptyset$, or $\snet{R}{\open{I} \cap \open{J}}:= (\snet{R}{\open{K}_1} \cup \snet{R}{\open{K}_2}) ''$, if $\open{I} \cap \open{J}$ is disconnected, $\open{K}_1,\open{K}_2 \in \poset{R}$ being the two components of $\open{I}\cap \open{J}$. 
\end{theorem}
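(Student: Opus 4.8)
The plan is to establish all the properties first for the vacuum state $\omega_0$ and then transport the conclusions to a general $\lambda$ via the local unitaries $U_{\open{I}}$ of (\ref{eq:locun}). For $\omega_0$ I would work throughout in the one-particle picture, using the identification $\snet{R}{\open{I}}=\alg{R}[M_{\open{I}}]$ (with $M_{\open{I}}=\overline{K(\Cdata{\open{I}})}$) together with the Leyland--Roberts--Testard relations (\ref{eq:LRT}) and (\ref{eq:LRT2}); taking commutants in these yields also $\alg{R}[M]\vee\alg{R}[N]=\alg{R}[\overline{M+N}]$, while on closed real subspaces one has the symplectic bipolar calculus $M''=M$, $(M+N)'=M'\cap N'$, $(M\cap N)'=\overline{M'+N'}$. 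Combined with Lemma \ref{lem:central_lemma} (which reads $M_{\open{I}}'=M_{\open{I}'}$), with spatial additivity (Proposition \ref{prop:additivity}) and with outer regularity (\ref{eq:outer-regolarity}), this turns each statement about commutants and intersections of the local algebras into a combinatorial statement about the arcs $\open{I},\open{I}'$ on $\bS^1$.

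Factoriality is then immediate: by Haag duality $\snet{R}{\open{I}}\cap\snet{R}{\open{I}}'=\snet{R}{\open{I}}\cap\alg{R}(\open{I}')$, which equals $\bC\cdot 1$ by Lemma \ref{lem:Tomita} since $\open{I}\cap\open{I}'=\emptyset$. For the intersection formula (\ref{eq:speriamo2}) I would use (\ref{eq:LRT2}) to write $\snet{R}{\open{I}}\cap\snet{R}{\open{J}}=\alg{R}[M_{\open{I}}\cap M_{\open{J}}]$ and identify $M_{\open{I}}\cap M_{\open{J}}$ arc by arc: it is $\{0\}$ when $\open{I}\cap\open{J}=\emptyset$ (Lemma \ref{lem:Tomita}); when $\open{K}:=\open{I}\cap\open{J}\in\poset{R}$ is connected one checks that $\open{I}'\cup\open{J}'=\open{K}'$ holds \emph{exactly} on the circle, whence spatial additivity gives $\overline{M_{\open{I}'}+M_{\open{J}'}}=M_{\open{K}'}$ and, applying $(\cdot)'$ and Lemma \ref{lem:central_lemma}, $M_{\open{I}}\cap M_{\open{J}}=M_{\open{K}}$; when $\open{I}\cap\open{J}=\open{K}_1\sqcup\open{K}_2$ is disconnected the same computation applies, but now $\open{I}$ and $\open{J}$ are covered by $\open{K}_1\cup\open{K}_2$ and two bridging arcs only up to a pair of points, which one fills in using outer regularity (\ref{eq:outer-regolarity}); the outcome is $\snet{R}{\open{I}}\cap\snet{R}{\open{J}}=\snet{R}{\open{K}_1}\vee\snet{R}{\open{K}_2}$.

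Punctured Haag duality reduces, via Haag duality ($\alg{R}(\open{I})'=\alg{R}(\open{I}')$, and $\open{I}''=\open{I}$), to the additivity-across-a-point identity $M_{\open{I}'}=\overline{M_{\open{I}'_1}+M_{\open{I}'_2}}$: the inclusion $\supseteq$ is trivial, and for $\subseteq$ one enlarges $\open{I}'_1,\open{I}'_2$ toward $p$ into overlapping members of $\poset{R}$, applies spatial additivity at each stage, and passes to the limit through (\ref{eq:outer-regolarity}). Local definiteness amounts, by (\ref{eq:LRT2}), to $\bigcap_{\open{I}\ni p}M_{\open{I}}=\{0\}$; choosing $\open{I}_n\searrow\{p\}$ and using Lemma \ref{lem:central_lemma} this becomes $\overline{\bigcup_n\bigl(M_{\open{I}_n'}+\imath M_{\open{I}_n'}\bigr)}=\csqint{\bS^1}{\theta}$, which I would reduce to the density in $\csqint{\bS^1}{\theta}$ of the image under $A^{1/4}$ of $\testfuncv{\bS^1\setminus\{p\}}{\bC}$; this holds because $\testfuncv{\bS^1\setminus\{p\}}{\bC}$ is a core for $A^{1/4}$ --- equivalently, a point has vanishing $H^{1/2}$-capacity in dimension one, hence is dense in $\Dom{A^{1/4}}$ for the graph norm. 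Finally the Borchers property follows from factoriality, proper infiniteness of the local algebras, and the Reeh--Schlieder property (Proposition \ref{th:spatial Reeh-Schlieder}) by Borchers' standard argument.

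For the passage to a general $\lambda$, properties (i), (iv), (v) and (\ref{eq:speriamo2}) in the case $\open{I}\cap\open{J}\neq\emptyset$ involve only algebras localised inside a common proper interval, hence are transported verbatim through the appropriate $U_{\open{I}}$ (cf.\ the Remark after (\ref{eq:locun})). The subtle case --- and the step I expect to be the main obstacle --- is Haag duality, and with it punctured Haag duality: no single $U_{\open{I}}$ sees both $\open{I}$ and its complement $\open{I}'$. The way I would handle it is to note that, granted additivity across a point (which \emph{does} transfer, since $\open{I}'_1,\open{I}'_2$ lie inside $\open{I}'\in\poset{R}$ and $U_{\open{I}'}$ is available), (ii) and (iii) are equivalent, and then to invoke the triviality of the DHR superselection structure over the Einstein cylinder: a pure state locally unitarily equivalent to $\omega_0$ is globally in the vacuum class, so $\pi_\lambda$ is unitarily equivalent to $\pi_0$ and (ii)--(iii) for $\lambda$ descend from those for $\omega_0$. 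Apart from this, the only genuinely non-formal ingredient used above is outer regularity (\ref{eq:outer-regolarity}), already furnished by Proposition \ref{prop:last}, which is exactly what is needed to fill the missing point in punctured Haag duality and in the disconnected case of (\ref{eq:speriamo2}).
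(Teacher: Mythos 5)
Your treatment of the vacuum case is essentially a reshuffling of the paper's own toolkit (the Leyland--Roberts--Testard relations, Lemma \ref{lem:central_lemma}, Lemma \ref{lem:Tomita}, spatial additivity and outer regularity), and the variants you propose are legitimate: you derive local definiteness and (\ref{eq:speriamo2}) directly in the one-particle space by symplectic-complement calculus, whereas the paper gets local definiteness from the commutant trick with two complementary arcs meeting at $\{p,q\}$ and gets (\ref{eq:speriamo2}) from Haag duality combined with punctured Haag duality (via the intermediate identity (\ref{eq:speriamo})). Two caveats on this part: your criterion for local definiteness should be that the closed \emph{real} span $\overline{\sum_{\open{I}\ni p}M_{\open{I}'}}$ is all of $\csqint{\bS^1}{\theta}$ (density of the complexification $M+\imath M$ is strictly weaker and does not force the symplectic complement to vanish), although the Sobolev/capacity argument you sketch does deliver the correct statement; and the disconnected case of (\ref{eq:speriamo2}) is only gestured at --- the paper handles it by decomposing $\open{I}$ and $\open{J}$ into pieces and killing the cross terms with Lemma \ref{lem:Tomita}.

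The genuine gap is the transfer of Haag duality to a general pure state $\lambda$. Your proposed route --- invoke triviality of the DHR superselection structure to conclude that $\pi_\lambda$ is globally unitarily equivalent to $\pi_0$ --- is circular: the DHR theorem in this model takes spatial Haag duality for the representation $\widetilde{\pi}$ as a \emph{hypothesis} (local unitary equivalence alone does not imply global equivalence; that is precisely what nontrivial sectors would look like), and its proof moreover uses the factoriality and duality statements of the very theorem you are proving. The paper's actual argument avoids any global identification: fix $\open{J}_1,\open{J}_2\in\poset{R}$ with $\open{J}_1\cup\open{J}_2=\bS^1$ and $\open{I}\subset\open{J}_1\cap\open{J}_2$, use $U_{\open{J}_k}$ to transport the \emph{relative} commutant identity $\alg{R}_\lambda(\open{I})'\cap\alg{R}_\lambda(\open{J}_k)=\alg{R}_\lambda(\open{I}'\cap\open{J}_k)$ from the vacuum (where it follows from (\ref{eq:speriamo2})), then take the union over $k$, double commutants, additivity, and irreducibility of $\pi_\lambda$ (purity of $\lambda$) to get $\alg{R}_\lambda(\open{I})'=\alg{R}_\lambda(\open{I}')$. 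You need this covering argument, or something equivalent to it. Secondarily, your justification of the Borchers property is too thin: ``factoriality, proper infiniteness and Reeh--Schlieder'' do not by themselves yield property B (Borchers' theorem also needs the spectrum condition, which you never invoke, and proper infiniteness is nowhere established); the paper instead obtains a standard split inclusion from the trace-class condition on the rotation generator together with Reeh--Schlieder, and property B follows from that.
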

\begin{proof}
First of all we consider the case of $\lambda = \omega_0$. We have to demonstrate the validity of (i), (iii) and (iv).\\
\textbf{(i)} We have to show that $\bigcap_{\open{J} \in \poset{R}, \open{J}\ni p} \snet{R}{\open{J}}= \bC\cdot 1$.
 It is clear that $\bC\cdot 1 \subset \bigcap_{\open{J} \in \poset{R}, \open{J}\ni p} \snet{R}{\open{J}}$, so we have to prove the converse.
Fix $\bS^1\ni q\neq p$ and consider the disjoint sets $\open{I}_1,\open{I}_2 \in \poset{R}$ uniquely determined by
assuming that $\partial \open{I}_1= \partial \open{I}_2 = \{p,q\}$.
If $A \in \bigcap_{\open{J} \in \poset{R}, \open{J}\ni p} \snet{R}{\open{J}}$ it must in particular commute with $\pi_0(W(\Phi,\Pi))$
for every choice of $(\Phi,\Pi)$ whose supports are both contained either in $\open{I}_1$ or $\open{I}_2$.  Therefore
$A\in \pi_0(\sweyl{\open{I}_1})' = \pi_0(\sweyl{\open{I}_1})''' =  \alg{R}(\open{I}_1)'$ and
$ A\in \pi_0(\sweyl{\open{I}_2})' = \pi_0(\sweyl{\open{I}_2})''' =  \alg{R}(\open{I}_2)'$. We conclude that $A \in \alg{R}[M_{\open{I}_1}]' \cap \alg{R}[M_{\open{I}_2}]'$.
That is $A \in \alg{R}[M_{\open{I}_2}' \cap M_{\open{I}_1}']$, where we employed (\ref{eq:LRT}) and (\ref{eq:LRT2}). Now, using lemma
\ref{lem:central_lemma}, we may assert that $A \in \alg{R}[M_{\open{I}_1} \cap M_{\open{I}_2}]$ using the fact that $M_{\open{I}_2}' = M_{\open{I}'_2} = M_{\open{I}_1}$
and $M_{\open{I}_1}' = M_{\open{I}'_1} = M_{\open{I}_2}$. By lemma \ref{lem:Tomita} we conclude that $A \in \alg{R}[\{0\}] = \bC\cdot 1$.\\
\noindent \textbf{(iv)} We have to establish that every $\snet{R}{\open{I}}$ is a factor, i.e.
 $\snet{R}{\open{I}}' \cap \snet{R}{\open{I}} = \bC\cdot 1$.
 The proof of (i) shows, in particular, that each $A \in \alg{R}(\open{I}_1)' \cap \alg{R}(\open{I}_2)'$ must be of the form $c\!\cdot\!1$ if
 $\open{I}_1$ and $\open{I}_2$ are disjoint and fulfill $\partial\open{I}_1= \partial\open{I}_2 = \{p,q\}$ for arbitrarily fixed $p,q \in \bS^1$.
This applies to the case
 $\open{I}_1 := \open{I} \in \poset{R}$, $\open{I}_2 = \open{I}'$. Therefore we know that $\snet{R}{\open{I}}' \cap \alg{R}(\open{I}')' = \bC\cdot 1$. However, by
 Haag duality it can be re-written  $\snet{R}{\open{I}}' \cap \snet{R}{\open{I}} = \bC\cdot 1$, that is $\snet{R}{\open{I}}$ is a factor.\\
\textbf{(iii)} For every $p\in \bS^1$ and $\open{I} \in \poset{R}$, we have to prove that
 $$\alg{R}(\open{I})' = \left(\pi_0(\sweyl{\open{I}^{\prime}_1}) \cup \pi_0(\sweyl{\open{I}^{\prime}_2})\right)'',$$
$\open{I}'_1, \open{I}'_2 \in \poset{R}$ being the disjoint, not containing $p$,
    sets  such that $\open{I}' = \open{I}'_1 \cup \open{I}'_2 \cup \{p\}$.
By spatial additivity $\alg{R}(\open{I}') \subset (\pi_0(\sweyl{\open{I}^{\prime}_1}) \cup \pi_0(\sweyl{\open{I}^{\prime}_2}) \cup \pi_0(\sweyl{\open{J}}))''$
where $\open{J}\in \poset{R}$ is any open set with $\open{J} \ni p$. Since $\open{J}$ is arbitrary, we have:
\begin{equation}
\begin{split}
\alg{R}(\open{I}') &\subset \bigcap_{\open{J}\in \poset{R}, \open{J}\ni p} \left(\pi_0(\sweyl{\open{I}^{\prime}_1}) \cup \pi_0(\sweyl{\open{I}^{\prime}_2}) \cup \pi_0(\sweyl{\open{J}})\right)''=\\ 
&=\left(\pi_0(\sweyl{\open{I}^{\prime}_1}) \cup \pi_0(\sweyl{\open{I}^{\prime}_2}) \cup  \bigcap_{\open{J}\in \poset{R}, \open{J}\ni p}  \pi_0(\sweyl{\open{J}})\right)''=\\
&=\left(\pi_0(\sweyl{\open{I}^{\prime}_1}) \cup \pi_0(\sweyl{\open{I}^{\prime}_2}) \right)''\\
\end{split}
\end{equation}
where we have used local definiteness in the last step:

$$\bigcap_{\open{J}\in \poset{R}, \open{J}\ni p}  \pi_0(\sweyl{\open{J}} \subset \bigcap_{\open{J}\in \poset{R}, \open{J}\ni p}  \snet{R}{\open{J}} = \bC\cdot 1.$$
We have found:
$\alg{R}(\open{I}') \subset \left(\pi_0(\sweyl{\open{I}^{\prime}_1}) \cup \pi_0(\sweyl{\open{I}^{\prime}_2}) \right)''$.
The other inclusion is trivially true because $\open{I}'_1 \cap \open{I}= \open{I}'_2 \cap \open{I} = \emptyset$ and so
$$\pi_0(\sweyl{\open{I}^{\prime}_1}) \cup \pi_0(\sweyl{\open{I}^{\prime}_2}) \subset \snet{R}{\open{I}}' = \alg{R}(\open{I}')$$ 
by Haag duality. Summarizing: $\alg{R}(\open{I}')= \left(\pi_0(\sweyl{\open{I}^{\prime}_1}) \cup \pi_0(\sweyl{\open{I}^{\prime}_2}) \right)''$.
Using Haag duality once again, we finally get:
$$\snet{R}{\open{I}}'= \alg{R}(\open{I}')= \left(\pi_0(\sweyl{\open{I}^{\prime}_1}) \cup \pi_0(\sweyl{\open{I}^{\prime}_2}) \right)''.$$

To conclude the case $\lambda=\omega_0$, let us prove (\ref{eq:speriamo2}). First of all notice that, if $\open{I} \in \poset{R}$
and $\open{I}_1,\open{I}_2,\ldots,\open{I}_n \in \poset{R}$ are pairwise disjoint subsets of $\open{I}$ with $\Int{\left(\cup_{i=1}^n \overline{\open{I}_i}\right)} = \open{I}$ then
\begin{equation}\label{eq:speriamo}
 \left(\bigcup_{i=1}^n \alg{R}(\open{I}_i)\right)'' = \snet{R}{\open{I}}\:.
\end{equation}
The proof for $n=2$ is a straightforward consequence of Haag duality and punctured Haag duality used together.
Iterating the procedure, replacing $\open{I}$ with $\open{I}_1$ and $\open{I}_2$ one achieves the general case with $n$ arbitrary but finite.
Using (\ref{eq:speriamo}) one gets (\ref{eq:speriamo2}); let us prove it. To this end decompose $\open{I}$ into pairwise disjoint sets as  $\open{I}= (\open{I}\setminus \overline{\open{J}})
\cup (\open{I} \cap \open{J}) \cup (\open{I} \cap \partial \open{J})$ where $\open{I}\setminus \overline{\open{J}},\open{I} \cap \open{J} \in \poset{R}$ if these sets are nonempty and connected,
whereas  $(\open{I} \cap \partial \open{J})$ is empty or it contains up to two points. (We assume that $\open{I}\setminus \overline{\open{J}}$, $\open{J}\setminus
\overline{\open{I}}$ and $\open{I}\cap \open{J}$ are connected, the case where someone of those sets, say $\open{A}$,
 has two components $A_1, A_2$ (which must belong to $\poset{R}$ by construction) is a straightforward generalization replacing in the following $\alg{R}(\open{A})$ with $\alg{R}(\open{A}_1) \cup \alg{R}(\open{A}_2)$. Finally, when  $\open{A}=\emptyset$ one has to replace
 $\alg{R}(\open{A})$ with $\bC\cdot 1$. 
Using (\ref{eq:speriamo}) and following the same procedure for $\open{J}$ one achieves the relations
$\snet{R}{\open{I}}  = \left(\alg{R}\left(\open{I}\setminus \overline{\open{J}}\right) \cup \alg{R}\left(\open{I} \cap \open{J}\right)\right)''\quad
\mbox{and}\quad \snet{R}{\open{J}}  = \left(\alg{R}\left(\open{J}\setminus \overline{\open{I}}\right) \cup \alg{R}\left(\open{I} \cap \open{J}\right)\right)''.$
 Putting all together we have
$$\snet{R}{\open{I}} \cap \snet{R}{\open{J}} = \left(\alg{R}\left(\open{I}\setminus \overline{\open{J}}\right) \cup \alg{R}\left(\open{I} \cap \open{J}\right)\right)''\cap
 \left(\alg{R}\left(\open{J}\setminus \overline{\open{I}}\right) \cup \alg{R}\left(\open{I} \cap \open{J}\right)\right)''$$ 
that is
\begin{multline}
\snet{R}{\open{I}} \cap \snet{R}{\open{J}} = \left(\left(\alg{R}_{\open{I}\setminus\bar{\open{J}}} \cup \alg{R}_{\open{I}\cap \open{J}}\right)\cap \left(\alg{R}_{\open{J}\setminus\bar{\open{I}}} \cup \alg{R}_{\open{I}\cap \open{J}}\right)\right)''= \\
=\left(\left(\alg{R}_{\open{I}\setminus \bar{\open{J}}} \cap\alg{R}_{\open{J}\setminus\bar{\open{I}}}\right)\cup
\left(\alg{R}_{\open{I}\setminus \bar{\open{J}}} \cap \alg{R}_{\open{I}\cap \open{J}} \right)
\cup\left(\alg{R}_{\open{J}\setminus\bar{\open{I}}} \cap  \alg{R}_{\open{I}\cap \open{J}}\right) \cup \alg{R}_{\open{I}\cap \open{J}}\right)'',
\end{multline} 
where we have put $\alg{R}_{I\cap J}:=\snet{R}{I\cap J}$, $\alg{R}_{I\setminus\bar{J}}:=\snet{R}{I\setminus\bar{J}}$ and so on in order to shorten the formulas.
 However $(\open{I}\setminus \overline{\open{J}}) \cap (\open{J}\setminus \overline{\open{I}}) = (\open{I}\setminus \overline{\open{J}}) \cap (\open{I} \cap \open{J}) = (\open{J}\setminus \overline{\open{I}}) \cap (\open{I} \cap \open{J}) = \emptyset$ so that, applying
 lemma \ref{lem:Tomita} all factors are trivial barring $\alg{R}\left(\open{I} \cap \open{J}\right)$. The final result is
 $\snet{R}{\open{I}} \cap \snet{R}{\open{J}}= \alg{R}\left(\open{I} \cap \open{J}\right)'' = \alg{R}\left(\open{I} \cap \open{J}\right)$.\\

\noindent Let us pass to the case where $\lambda \neq \omega_0$ but $\lambda$ is locally unitarily equivalent to $\omega_0$.
Spatial additivity, locality and isotony for $\{\pi_\lambda(\sweyl{\open{I}})''\}_{\open{I}\in \poset{R}}$ have the same proofs
  as for  $\{\pi_0(\sweyl{\open{I}})''\}_{\open{I}\in \poset{R}}$.\\
\textbf{(i)} Spatial local definiteness is an immediate consequence of the definition of local unitary equivalence
and closedness of von Neumann algebras with respect to  strong operatorial topology. \\
\textbf{(ii)} Let us prove Haag duality for the state $\lambda$. Fix $\open{I}\in \poset{R}$ and consider $\open{J}_1, \open{J}_2 \in \poset{R}$ such
that $\open{J}_1 \cup \open{J}_2 = \bS^1$ and $\open{I} \subset \open{J}_1 \cap \open{J}_2$. As
$\open{I}\subset \open{J}_1$,
we have (see the remark after (\ref{eq:locun}))
$ U_{\open{J}_1} \alg{R}_\lambda(\open{I})' \cap \alg{R}_\lambda(\open{J}_1) U_{\open{J}_1}^{-1} = 
\snet{R}{\open{I}}' \cap \alg{R}(\open{J}_1) =  \alg{R}(\open{I}') \cap \alg{R}(\open{J}_1)$;
but we know that $ \alg{R}(\open{I}') \cap \alg{R}(\open{J}_1) = \alg{R}(\open{I}'\cap \open{J}_1)$ due to (\ref{eq:speriamo2}),
so that
$$\alg{R}_\lambda(\open{I})' \cap \alg{R}_\lambda(\open{J}_1) = \alg{R}_\lambda(\open{I}' \cap \open{J}_1)$$
and similarly
$$\alg{R}_\lambda(\open{I})' \cap \alg{R}_\lambda(\open{J}_2) = \alg{R}_\lambda(\open{I}' \cap \open{J}_2).$$
As a consequence:
$$\alg{R}_\lambda(\open{I})' \cap \left( \alg{R}_\lambda(\open{J}_1) \cup \alg{R}_\lambda(\open{J}_2) \right) =
\alg{R}_\lambda(\open{I}' \cap \open{J}_1) \cup \alg{R}_\lambda(\open{I}' \cap \open{J}_2)\:.$$
Taking the double commutant of both members we achieve:
$$\alg{R}_\lambda(\open{I})' \cap \left( \alg{R}_\lambda(\open{J}_1) \cup \alg{R}_\lambda(\open{J}_2) \right)'' =
\left(\alg{R}_\lambda(\open{I}' \cap \open{J}_1) \cup \alg{R}_\lambda(\open{I}' \cap \open{J}_2)\right)''\:.$$
Since $\open{J}_1 \cup \open{J}_2 = \bS^1$ and $(\open{I}' \cap \open{J}_1) \cup (\open{I}' \cap \open{J}_2) = \open{I}'$, additivity implies that the identity written above can be re-written as:
$\alg{R}_\lambda(\open{I})' \cap \alg{R}_\lambda(\bS^1) = \alg{R}_\lambda(\open{I}')$.
Finally $\alg{R}_\lambda(\bS^1) = \bop{\varhs{H}_\lambda}$ because $\lambda$ is pure and thus its GNS representation is irreducible.
We have obtained that $\alg{R}_\lambda(\open{I})' = \alg{R}_\lambda(\open{I}')$ as wanted. \\
\textbf{(iii}) The proof of punctured Haag duality is now exactly the same
as given for $\pi_0$ using spatial additivity, local definiteness, Haag duality and locality. \\
\textbf{(iv)} Taken $\open{I} \in \poset{R}$, factoriality means that
$\alg{R}_\lambda(\open{I}) \cap \alg{R}_\lambda(\open{I})' = \bC\cdot 1$.
In our hypotheses $U_I \alg{R}_\lambda(\open{I}) U^{-1}_{\open{I}} = \snet{R}{\open{I}}$
and $U_{\open{I}} \alg{R}_\lambda(\open{I})' U^{-1}_{\open{I}} = \snet{R}{\open{I}}'$ hold true for some unitary operator $U_{\open{I}}$. Now, making use of
factoriality for the vacuum representation,
$U_{\open{I}}   \alg{R}_\lambda(\open{I}) \cap  \alg{R}_\lambda(\open{I})' U^{-1}_{\open{I}} =  \snet{R}{\open{I}} \cap  \snet{R}{\open{I}}' = \bC\cdot 1$,
so that $\alg{R}_\lambda(\open{I}) \cap \alg{R}_\lambda(\open{I})' = \bC\cdot 1$.\\
\noindent\textbf{(v)} To conclude let us prove the Borchers property. We start from the vacuum case; referring to appendix \ref{app:misc}, we need to show that, given $\open{I}\in\poset{R}$, there exists $\poset{R}\ni \open{I}_{1}\subsetneq\open{I}$ such that every non-zero projection $E\in\snet{R}{\open{I}_{1}}$ is equivalent to $1$ in $\snet{R}{\open{I}}$. So, choose an interval $\open{I}\in\poset{R}$, take $\open{I}_{1}\in\poset{R}$ to be an arbitrary subinterval of $\open{I}$, and fix a projection $E \neq 0$ in $\snet{R}{\open{I}_{1}}$. If we can prove that the von Neumann algebras inclusion $\snet{R}{\open{I}_{1}}\subset\snet{R}{\open{I}}$ satisfies standard split inclusion \ref{def:vNa_inclusion}, proposition \ref{prop:standard_split} implies property $B$ so $E\sim 1$ in $\snet{R}{\open{I}}$ and we have done. In turn, standard split inclusion holds true if the following two conditions are satisfied:
\begin{itemize}
\item the inclusion $\snet{R}{\open{I}_{1}}\subset\snet{R}{\open{I}}$ is split;
\item there exists a vector $\Omega\in\varhs{H}_0$ cyclic and separating for $\snet{R}{\open{I}_{1}}$, $\snet{R}{\open{I}}$ and $\snet{R}{\open{I}_{1}}^{'}\cap\snet{R}{\open{I}}$.
\end{itemize}
The former condition holds since the generator of rotations on $\bS^1$ satisfies the trace class condition that implies the split property for the observable net (see \cite[Th. 3.2] {D'Antoni_Longo_Radulescu_98}. 

The latter follows at once from Reeh-Schlieder property for the vacuum (theorem \ref{th:spatial Reeh-Schlieder}); in fact the vacuum vector $\Psi_{0}$ is obviously cyclic and separating for $\snet{R}{\open{I}_{1}}$ and $\snet{R}{\open{I}}$; concerning $\snet{R}{\open{I}_{1}}^{'}\cap\snet{R}{\open{I}_{1}}$, we have the following chain of inclusions:

 \[\snet{R}{\open{I}}\supset{\snet{R}{\open{I}_{1}}^{'}}\cap\snet{R}{\open{I}}=\snet{R}{\open{I}_{1}^{'}\cap\open{I}}\supseteq\snet{R}{\open{J}}\]
for some $\open{J}\in\poset{R}$, taking into account Haag duality. This ends the proof for the vacuum case.

The general case $\lambda\neq\omega_0$ follows straightforwardly, from local unitary equivalence of local algebras. 
\end{proof}

\subsection{DHR sectors}

 To conclude we prove that, referring to the vacuum representation $\pi_0$, there are no {\em DHR sectors} for the massive Klein-Gordon theory on $\bS^1$ we are discussing. DHR sectors were introduced \cite{DHR_69_I, DHR_69_II, DHR_71, DHR_74} to describe localized charges; a representation belongs to a DHR sector if it coincides with the reference (vacuum) representation everywhere but on bounded regions of the spacetime. To be more precise, we shall say that an irreducible representation $\widetilde{\pi}$ of $\alg{W}_{KG}$ valued on $\bop{\varhs{H}_{0}}$ and
satisfying (spatial) Haag duality fulfills the \emph{DHR selection criterion} if, for every $\open{I}\in \poset{R}$, there is a unitary operator $U_{\open{I}'} : \varhs{H}_{0} \to \varhs{H}_{0}$ with  $U_{\open{I}'} \widetilde{\pi}(\sweyl{\open{I}'}) U_{\open{I}'}^* = {\pi_0}(\sweyl{\open{I}'})$. Notice that in our case, as $\open{I}'\in \poset{R}$ and the map  $\poset{R} \ni \open{I} \mapsto \open{I}' \in \poset{R}$ is bijective, the mentioned requirement is equivalent to say that $\widetilde{\pi}$ and $\pi_0$ are locally unitarily equivalent. A \emph{DHR sector} is a {\em global} unitary equivalence class of irreducible representations $\widetilde{\pi}$ verifying the DHR selection criterion.  
\begin{theorem}
Every representation $\widetilde{\pi}$ of $\alg{W}_{KG}$ valued on $\bop{\varhs{H}_0}$, locally unitarily equivalent to the vacuum representation $\pi_0$ and satisfying spatial Haag duality, is (globally) unitarily equivalent to $\pi_0$ itself. As a consequence there is only one DHR sector, i.e. that containing the vacuum representation.
\end{theorem}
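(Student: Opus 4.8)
The plan is to reduce the global equivalence of $\widetilde\pi$ and $\pi_0$ to a gluing problem over a two‑element open cover of $\bS^1$, and to solve that gluing problem using factoriality and the split property. As noted just before the theorem, the hypotheses say exactly that $\widetilde\pi$ and $\pi_0$ are locally unitarily equivalent and that $\widetilde\pi$ obeys spatial Haag duality. I would fix $\open{I}_1,\open{I}_2\in\poset{R}$ with $\open{I}_1\cup\open{I}_2=\bS^1$, $\overline{\open{I}_1'}\cap\overline{\open{I}_2'}=\emptyset$ and $\open{I}_2'\subset\open{I}_1$, $\open{I}_1'\subset\open{I}_2$ — for instance $\open{I}_j:=\open{J}_j'$ for two intervals $\open{J}_1,\open{J}_2$ with disjoint closures; then $\open{I}_1\cap\open{I}_2$ has exactly two connected components $\open{K}_1,\open{K}_2\in\poset{R}$ and $\open{K}_1'\cap\open{K}_2'=\open{I}_1'\cup\open{I}_2'$. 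Running the argument in the proof of Proposition \ref{prop:additivity}(a) at the level of the abstract Weyl algebras shows that $\alg{W}_{KG}$ is generated, as a $C^*$-algebra, by $\sweyl{\open{I}_1}\cup\sweyl{\open{I}_2}$; hence it suffices to produce a single unitary $W$ on $\varhs{H}_0$ with $W\widetilde\pi(a)W^*=\pi_0(a)$ for $a\in\sweyl{\open{I}_1}\cup\sweyl{\open{I}_2}$, the statement then following by multiplicativity and continuity.

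Next I would pick unitaries $W_{\open{I}_j}$ implementing the local unitary equivalence on $\sweyl{\open{I}_j}$ and set $Z:=W_{\open{I}_1}W_{\open{I}_2}^*$. For $a\in\sweyl{\open{K}_i}$ (which lies in both $\sweyl{\open{I}_1}$ and $\sweyl{\open{I}_2}$) one has $W_{\open{I}_1}^*\pi_0(a)W_{\open{I}_1}=\widetilde\pi(a)=W_{\open{I}_2}^*\pi_0(a)W_{\open{I}_2}$, so $Z$ commutes with $\pi_0(\sweyl{\open{K}_i})$; hence, using spatial Haag duality and the intersection formula (\ref{eq:speriamo2}),
\[
Z\in\left(\alg{R}(\open{K}_1)\vee\alg{R}(\open{K}_2)\right)'=\alg{R}(\open{K}_1')\cap\alg{R}(\open{K}_2')=\alg{R}(\open{K}_1'\cap\open{K}_2')=\alg{R}(\open{I}_1')\vee\alg{R}(\open{I}_2')\:.
\]
The heart of the matter is then to show that $Z$ factorizes as $Z=u_1u_2$ with $u_j\in\alg{R}(\open{I}_j')$ unitary. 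First, $Z$ normalizes $\alg{R}(\open{I}_1')$ and $\alg{R}(\open{I}_2')$ separately: by Haag duality for $\pi_0$, $\alg{R}(\open{I}_2')=\pi_0(\sweyl{\open{I}_2})'$, so $W_{\open{I}_2}^*\alg{R}(\open{I}_2')W_{\open{I}_2}=\widetilde\pi(\sweyl{\open{I}_2})'=\widetilde\pi(\sweyl{\open{I}_2'})''$ (using Haag duality for $\widetilde\pi$ and $\open{I}_2'\in\poset{R}$); since $\open{I}_2'\subset\open{I}_1$, conjugating by $W_{\open{I}_1}$ turns the latter into $\pi_0(\sweyl{\open{I}_2'})''=\alg{R}(\open{I}_2')$, whence $Z\,\alg{R}(\open{I}_2')\,Z^*=\alg{R}(\open{I}_2')$, and symmetrically for $\alg{R}(\open{I}_1')$. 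Now $\alg{R}(\open{I}_1')$ and $\alg{R}(\open{I}_2')$ are commuting factors (spatial locality and Theorem \ref{th:genprop}(iv)) and, since $\overline{\open{I}_1'}\cap\overline{\open{I}_2'}=\emptyset$, the split property (available here, cf.\ the proof of Theorem \ref{th:genprop}(v)) yields a normal isomorphism $\alg{R}(\open{I}_1')\vee\alg{R}(\open{I}_2')\cong\alg{R}(\open{I}_1')\,\overline{\otimes}\,\alg{R}(\open{I}_2')$ carrying each algebra onto the corresponding tensor leg. Under this identification $\operatorname{Ad}Z$ is an automorphism of the tensor product preserving both legs, hence of product form $\alpha_1\otimes\alpha_2$; being inner, and using that each leg is a factor, both $\alpha_j$ are inner, so $Z=u_1u_2$ with $u_j\in\alg{R}(\open{I}_j')$ unitary (a phase absorbed into $u_1$).

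It then remains to put $W:=u_1^{-1}W_{\open{I}_1}$: one computes $W_{\open{I}_1}W^*=u_1\in\pi_0(\sweyl{\open{I}_1})'$ and $W_{\open{I}_2}W^*=W_{\open{I}_2}W_{\open{I}_1}^*u_1=Z^{-1}u_1=u_2^{-1}\in\pi_0(\sweyl{\open{I}_2})'$ ($u_1$ and $u_2$ commuting), so that, since $W_{\open{I}_j}$ already intertwines $\widetilde\pi$ and $\pi_0$ on $\sweyl{\open{I}_j}$, so does $W$ for $j=1,2$; by the reduction in the first paragraph this gives $W\widetilde\pi(\cdot)W^*=\pi_0$ on all of $\alg{W}_{KG}$, i.e.\ $\widetilde\pi\cong\pi_0$. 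The corollary is then immediate: $\pi_0$ itself trivially satisfies the DHR selection criterion, and every representation satisfying it is, by what we have just proved, in the global unitary equivalence class of $\pi_0$; hence there is exactly one DHR sector, the one containing the vacuum representation.

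The step I expect to be the main obstacle is the factorization of $Z$. On a simply connected Cauchy surface one could trivialize the local intertwiners over a single ``large'' region and be done; on $\bS^1$ a genuine two-set cover with disconnected overlap is unavoidable, so the transition unitary $Z$ need not, a priori, factor through the two complementary pieces. Converting the qualitative statement ``$Z$ normalizes each $\alg{R}(\open{I}_j')$'' into the genuine factorization $Z=u_1u_2$ is precisely the point where the split property together with factoriality of the local algebras — and not merely Haag duality — are indispensable, and it is where the topology of the circle could have produced an obstruction but does not.
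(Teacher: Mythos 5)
Your argument is correct, but it takes a genuinely different route from the paper's. The paper first shows, using Haag duality and factoriality, that the local intertwiners for $\open{I}$ and $\open{I}'$ agree up to a phase, then uses a shrinking partition of unity subordinate to the cover $\open{I},\open{I}',\open{F}$ to prove that $U_{\open{I}}\,\widetilde{\pi}(W(-\Phi,-\Pi))U_{\open{I}}^*\,\pi_0(W(\Phi,\Pi))$ lies in $\snet{R}{\open{I}}'\cap\alg{R}(\open{I}')'=\bC\cdot 1$, so that $\widetilde{\pi}$ is unitarily equivalent to a character representation $\pi_\chi$; global equivalence to $\pi_0$ is then delegated to part (d) of theorem \ref{th:chiequiv}, whose analytic core is the boundedness of the operators $A^{\pm 1/4}hA^{\mp 1/4}$. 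You bypass the character machinery entirely and treat the statement as a gluing problem for the two local intertwiners over the cover $\open{I}_1\cup\open{I}_2=\bS^1$: the obstruction $Z=W_{\open{I}_1}W_{\open{I}_2}^*$ is located in $\alg{R}(\open{I}_1')\vee\alg{R}(\open{I}_2')$ via Haag duality and the intersection formula (\ref{eq:speriamo2}), it normalizes each factor, and the split property together with factoriality force it to factorize, which kills the obstruction. Each route buys something: the paper's identifies $\widetilde{\pi}$ explicitly as a character representation, tying the theorem to the classification of appendix B, whereas yours is essentially model-independent — it uses only Haag duality, factoriality, the intersection formula and the split property, so it would apply verbatim to any net over $\poset{R}$ with these features; on the other hand it imports the split property, which the paper's proof of this theorem does not need. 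Two points you should make explicit to close the argument: (i) the split inclusion you actually invoke is the distal inclusion $\alg{R}(\open{I}_1')\subset\alg{R}(\open{I}_2')'=\alg{R}(\open{I}_2)$ (covered by the trace-class criterion cited in the proof of theorem \ref{th:genprop}(v), and yielding the normal isomorphism onto $\alg{R}(\open{I}_1')\,\overline{\otimes}\,\alg{R}(\open{I}_2')$ as in \cite{DL}); (ii) the claim that an \emph{inner} automorphism of $M\overline{\otimes}N$ of product form $\alpha_1\otimes\alpha_2$ has both legs inner, for $M,N$ factors, deserves a one-line slice-map argument or a reference — it is the injectivity of the map $\operatorname{Out}(M)\times\operatorname{Out}(N)\to\operatorname{Out}(M\overline{\otimes}N)$. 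With these noted, your proof is complete.
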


\begin{proof}
Define $\widetilde{\alg{R}}(\open{I}):= \widetilde{\pi}(\sweyl{\open{I}})''$. We first prove that the unitaries $U_{\open{I}}$ which verify $U_{\open{I}}\,\widetilde{\pi}(\sweyl{\open{I}}) U^*_{\open{I}} = {\pi_0}(\sweyl{\open{I}})$ in view of local unitary equivalence, also satisfy
\begin{equation}\label{eq:II'}
 U_{\open{I}'}= e^{\imath\gamma_{\open{I}}}U_{\open{I}}\:, \quad \mbox{for every $\open{I} \in \poset{R}$ and some $\gamma_{\open{I}}\in\bR$.} 
\end{equation} 
For a fixed $\open{I}\in \poset{R}$, one has $U_{\open{I}} \widetilde{\alg{R}}(\open{I}) U_{\open{I}}^* = \snet{R}{\open{I}}$ and thus  
$U_{\open{I}} \widetilde{\alg{R}}(\open{I})'U_{\open{I}}^* = \snet{R}{\open{I}}'$, that is  
 (i) $U_{\open{I}}\widetilde{\alg{R}}(\open{I}')U_{\open{I}}^*  = \alg{R}(\open{I}') $ by Haag duality.
  On the other hand (ii) $U_{\open{I}'}\widetilde{\alg{R}}(\open{I}') U_{\open{I}'}^* = \alg{R}(\open{I}')$ is true  by definition.
Relations (i) and (ii) imply
 $U_{\open{I}'} U^*_{\open{I}} \in  \alg{R}(\open{I}')' = \snet{R}{\open{I}}$. Interchanging the role of $\open{I}$ and $\open{I}'$ one finds in the same way that $U_{\open{I}} U^*_{\open{I}'} \in  \alg{R}(\open{I}') = \snet{R}{\open{I}}'$, which produces $U_{\open{I}'} U^*_{\open{I}} \in  \snet{R}{\open{I}}'$ by hermitean conjugation.
 Since $U_{\open{I}'} U^*_{\open{I}} \in \snet{R}{\open{I}}\cap \snet{R}{\open{I}}'$ and factoriality holds, 
  it has to be $U_{\open{I}'} U^*_{\open{I}} = c\!\cdot\!$ for some  $c\in \bC$. As $U_{\open{I}'} U^*_{\open{I}}$ is unitary, it arises $|c|=1$ and (\ref{eq:II'}) follows
  immediately.\\ 
To go on fix $\open{I} \in \poset{R}$: we shall prove that there is a character $\chi : \Cdata{\sdc{C}}\to U(1)$ with 
 \begin{equation}\label{eq:UI}
 U_{\open{I}} \widetilde{\pi}(W(\Phi, \Pi)) U^*_{\open{I}} = \chi(\Phi,\Pi)\pi_0(W(\Phi, \Pi)) =: \pi_\chi(W(\Phi, \Pi)), 
\end{equation}
for every $(\Phi,\Pi) \in \Cdata{\sdc{C}}$.
 Thus $\widetilde{\pi}$ is unitarily equivalent to a character representation $\pi_\chi$. As $\widetilde{\pi}$
 is locally unitarily equivalent to $\pi_0$, $\pi_\chi$ is such. Since $\pi_\chi$ is irreducible it is the GNS representation of 
 a pure state $\omega_\chi$ which can be fixed as any unitary vector of $\varhs{H}_0$. Local unitary equivalence, 
 by (d) in theorem \ref{prop:DMP}, entails that $\pi_\chi$ is globally unitarily equivalent
  to $\pi_0$ and thus $\widetilde{\pi}$ is such proving the thesis of the theorem.\\
 Let us demonstrate (\ref{eq:UI}) to conclude. Consider a partition of the unity 
  $f_{\open{I}},f_{\open{I}'},f_\open{F}$ subordinated to the covering  $\open{I},\open{I}',\open{F}$ of $\bS^1$,
   where $\open{F}$ is the disconnected union of two neighborhoods $\open{J}_1,\open{J}_2 \in \poset{R}$ 
   of  $p_1,p_2 \in \partial \open{I}$ respectively. Decomposing $(\Phi,\Pi)$ making use of this partition of the unity,
    taking advantage
   of (\ref{eq:II'}) and employing Weyl relations one realizes that there is $\beta \in \bC$ (depending on $(\Phi,\Pi)$ and the chosen partition of the
   unity) with:
\begin{multline}
U_{\open{I}}\, \widetilde{\pi}\left(W(-\Phi,-\Pi) \right) U^*_{\open{I}}\, \pi_0\left(W(\Phi,\Pi) \right) =\\ 
=e^{\imath\beta}
   U_{\open{I}}\,\widetilde{\pi}\left(W(-f_\open{F}\Phi,-f_\open{F}\Pi) \right) U^*_{\open{I}}\, \pi_0\left(W(f_\open{F}\Phi,f_\open{F}\Pi) \right) 
\end{multline}
 Notice that the left-hand side does not depend on $\open{F}$. If $(\Phi_1,\Pi_1)$ has compact support contained either in $\open{I}$ or $\open{I}'$,
 taking $\open{F}$ sufficiently shrinked about $p_1$ and $p_2$ one achieves by direct inspection that the right hand side must commute with 
 $\pi_0\left(W(\Phi_1,\Pi_1) \right)$ (which coincides with  $U_{\open{I}}\,\widetilde{\pi}\left(W(\Phi_1,\Pi_1) \right) U^*_{\open{I}}$ up to a phase).
 Therefore $U_{\open{I}}\,\widetilde{\pi}\left(W(-\Phi,-\Pi) \right) U^*_{\open{I}}\,\pi_0\left(W(\Phi,\Pi) \right) \in \snet{R}{\open{I}}' \cap \alg{R}(\open{I}')' = 
 \bC\cdot 1$, by Haag duality and factoriality. Since $U_{\open{I}}\,\widetilde{\pi}\left(W(-\Phi,-\Pi) \right) U^*_{\open{I}}\,\pi_0\left(W(\Phi,\Pi) \right)$ is unitary we have found that there is a map 
 $\chi : \Cdata{\sdc{C}}\to U(1)$ with
 $$\pi_0\left(W(\Phi,\Pi) \right) = \chi(\Phi,\Pi)\:  U_{\open{I}}\,\widetilde{\pi}\left(W(\Phi,\Pi) \right) U^*_{\open{I}}\:.$$
 The fact that $\chi$ is additive, i.e. a character, follows immediately from Weyl relations. 
\end{proof}

\section{Spacetime formulation}\label{sec:spacetime formulation} 
Until now, we have worked with the ``spatial'' poset $\poset{R}$ made of open intervals of $\bS^{1}$ whose set-complement has non-empty interior, partially ordered under set inclusion; we have also endowed this poset with a causal disjointness relation $\perp$.

What we would like to do at this point is to reformulate the previous results, as far as possible, in terms of a ``spacetime'' poset $\poset{K}$, i.e. a poset being a topological basis for the entire spacetime; obviously we require $\poset{K}$ to be compatible with $\poset{R}$, in a certain sense to be specified. 

First of all we recall a notion belonging to Lorentzian geometry: given a subset $S$ of a time-oriented Lorentzian manifold $\spacetime{M}$, its \emph{Cauchy development} $D(S)$ is the set of points  of $\spacetime{M}$ through which every inextendible causal curve in $\spacetime{M}$ meets $S$.
For further details, see e.g \cite{Bar_Ginoux_Pfaffle_06} or \cite{O'Neill_83}. 

Our choice of the spacetime poset is quite natural: we take the set of diamonds based on elements of the spatial poset. Specializing definition \ref{def:diamonds}, given an element $\open{I}\in\poset{R}$ the diamond $\DIAMOND{\open{I}}$ based on $\open{I}$ is the set $\Int{D(\open{I}}$. Then we need to ``translate'' these diamonds, based on a specific Cauchy surface, to cover the entire spacetime $\spacetime{M}$. In our case, the cylindric flat universe, the spacetime admits a $1$-parameter group of isometries acting as time traslations, sending the point $x\equiv(\theta,s)$ to the point $x+(0,t)\equiv(\theta, t+s)$; so we can translate the poset $\poset{R}$ lying on the chosen Cauchy surface $\sdc{C}$ on another Cauchy surface belonging to the same foliation. If we denote with $\sdc{C}_{t}$ the Cauchy surface corresponding to the Cauchy time $t$ (so that by definition $\sdc{C}\equiv\sdc{C}_{t=0})$ we define the translated poset $\poset{R}_{t}$ on $\sdc{C}_{t}$ as the family of elements of the form $\open{I} + (0,t)$, $\open{I}\in\poset{R}$.
So we can construct the diamonds based on $\poset{R}_{t}$, and it's easy to see that the set $\poset{K}=\left\lbrace\DIAMOND{\open{I}}\,|\,\open{I}\in\poset{R}_{t},\:t\in\bR\right\rbrace$
 of the diamonds based on a generic Cauchy surface in the foliation is really a poset, endowed with an order relation (set inclusion) and a causal disjointness relation (that induced by the causal structure of $\spacetime{M})$.

Before we proceed some remarks are in order about the explicit form of geometrical objects (Cauchy surfaces, diamonds and Cauchy developments) on the cylindric flat universe; this kind of information will be of use in what follows.
Obviously, what we are going to say is a trivial consequence of Minkowskian geometry, taking into account that in our universe ``spatial endpoints'' are identified.
First of all, the spacetime admits a ``natural'' smooth foliation in Cauchy surfaces: with respect to canonical coordinates $(\theta, t)$ these are the hypersurfaces of equation $t=cost$, and are of course isometric to $\bS^{1}$. 
Diamonds, i.e. members of the spacetime poset $\poset{K}$, are very simple: they are squares, with a diagonal coinciding with an element of a spatial poset $\poset{R}_{t}$. We note that in this case $D(\open{I})$ is open, so $\DIAMOND{I}\equiv D(\open{I})$. 
Also causal complements of diamonds are simple: they are diamonds, too, based on the (spatial) causal complement of their basis. In symbols: \[\ccompl{\DIAMOND{\open{I}}}\equiv\DIAMOND{\open{I}^{\prime}},\quad \open{I}\in\poset{R}.\]
Finally, regarding Cauchy developments it's enough to observe that, by definition: $\DIAMOND{\open{I}^{\prime}}=\spacetime{M}\setminus\cdevel{\open{I}}$.

At this point we need to relate the net of local algebras defined on the Cauchy surface(s) with that defined on the entire spacetime.
In our case the symplectic space associated with the diamond $\open{O}\equiv\DIAMOND{\open{I}}$, $\open{I}\in\poset{R}_{t}$, is given by the (equivalence classes of) test functions supported in $\open{O}$: 	
\[\sympl{\open{O}}=\left\lbrace f\in\testfuncv{\open{O}}{\bR}\,|\,f \sim g \Leftrightarrow \cpropagator f = \cpropagator g \right\rbrace.\]
\noindent We previously saw that the casual propagator establishes an isomorphism between the global symplectic space of Cauchy data (on a given Cauchy surface $\sdc{C}_{t})$ and (equivalence classes of) test functions on the entire spacetime; consequently, if we restrict ourserlves to the symplectic subspace of Cauchy data supported in a open interval $\open{I}\in\poset{R}_{t}$ we obtain a symplectic subspace of test functions on the entire spacetime. Then the question arises of how we can describe this subspace; the following proposition gives an answer, at least in the present case. We start with a technical lemma \cite{Wald_94}.

\begin{lemma}\label{lem:Cauchy-slices}
Let $\spacetime{M}$ be a globally hyperbolic spacetime, $\psi\in\KGsol$ a smooth solution of the Klein-Gordon equation with compactly supported Cauchy data on every Cauchy surface, and $\sdc{C}_{1}$, $\sdc{C}_{2}$ two Cauchy surfaces belonging to the same foliation; suppose that $t$ is a Cauchy time for the foliation $\spacetime{M}=\sdc{C}\times\bR$ and that $\sdc{C}_{i}\equiv \sdc{C}(t_{i}),\: t_{1} < t_{2}$. Then there exists a test function $f\in\testfuncv{\spacetime{M}}{\bR}$ supported in the set $\tfuture{\sdc{C}_{1}}\cap\tpast{\sdc{C}_{2}}\cap\supp{\psi}$ such that $\psi=\cpropagator f$.
\end{lemma}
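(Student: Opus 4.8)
The plan is to reduce the construction of $f$ to a localized version of the global statement ``the causal propagator $E$ maps onto $\KGsol$'' by cutting the solution $\psi$ with a suitable time function. First I would choose a smooth function $\chi: \spacetime{M} \to [0,1]$ depending only on the Cauchy time $t$ such that $\chi \equiv 0$ in a neighbourhood of $\cpast{\sdc{C}_1}$ and $\chi \equiv 1$ in a neighbourhood of $\cfuture{\sdc{C}_2}$; concretely $\chi(\theta,t) = \rho(t)$ for a monotone $\rho$ that is $0$ for $t \le t_1$ and $1$ for $t \ge t_2$. The point of this cutoff is that $\chi\psi$ interpolates between the zero solution (in the past of $\sdc{C}_1$) and $\psi$ itself (in the future of $\sdc{C}_2$), while $\chi\psi$ is of course not a solution anymore.

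Then I would set $f := (\square + m^2)(\chi\psi)$. Since $\psi$ solves the Klein--Gordon equation, $(\square + m^2)(\chi\psi) = [\square,\chi]\psi = 2\,\partial^\mu\chi\,\partial_\mu\psi + (\square\chi)\psi$ (using $\square\psi = -m^2\psi$), which is a smooth function supported exactly where $d\chi \ne 0$, i.e.\ in the ``collar'' $\tfuture{\sdc{C}_1}\cap\tpast{\sdc{C}_2}$; intersecting with $\supp{\psi}$ (which is automatic) gives the claimed support property, and by construction $f \in \testfuncv{\spacetime{M}}{\bR}$ since $\psi$ has spatially compact support and the collar is temporally bounded. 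The remaining task is to check $\psi = Ef = E^+f - E^-f$. For this I would argue that $E^- f$ and $E^+ f$ can be computed explicitly: $E^-f$ is the retarded solution of $(\square+m^2)u = f$, and since $\chi\psi$ already satisfies this equation and coincides with $0$ in the far past, uniqueness of the retarded solution (with past-compact support, using global hyperbolicity) gives $E^- f = \chi\psi$; symmetrically, $\chi\psi - \psi$ is a solution of $(\square+m^2)u = f$ vanishing in the far future, hence by uniqueness of the advanced solution $E^+ f = \chi\psi - \psi$. Subtracting, $Ef = E^+f - E^-f = (\chi\psi - \psi) - \chi\psi = -\psi$; so to get the stated sign one should instead take $f := -(\square+m^2)(\chi\psi)$, or equivalently swap the roles of past and future in $\chi$.

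The step I expect to be the main obstacle is the rigorous identification $E^\pm f = \chi\psi$ (resp.\ $\chi\psi-\psi$): one must verify the support conditions $\supp{E^-f}\subset\cfuture{\supp f}$ and $\supp{E^+f}\subset\cpast{\supp f}$ are compatible with $\chi\psi$ having spatially-compact-but-temporally-unbounded support, and invoke the uniqueness clause in the characterization of $E^\pm$ recalled earlier in the excerpt (the maps $E^\pm$ are \emph{uniquely} determined by the propagation and inversion properties). The cleanest way is to note that $\chi\psi$ has past-compact support after intersecting with any $\cfuture{\sdc{C}}$, so it is a legitimate candidate for $E^-f$, and then uniqueness of solutions to the inhomogeneous Klein--Gordon equation with past-compact support on a globally hyperbolic spacetime closes the argument; the same regularity bookkeeping handles $E^+$. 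Everything else is a routine commutator computation and support chase.
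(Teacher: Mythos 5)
Your proposal is correct and is essentially the paper's own proof: the paper also takes a time cutoff $\chi$ with $\chi=0$ for $t\le t_1$, $\chi=1$ for $t\ge t_2$, sets $f=-K(\chi\psi)$, and identifies the advanced and retarded solutions as $(1-\chi)\psi$ and $-\chi\psi$ by the uniqueness of the fundamental solutions. Your initial sign slip (and the swap of which of $E^{\pm}$ carries future versus past support relative to the paper's convention $\supp{E^{\pm}f}\subset J^{\pm}(\supp{f})$) is self-corrected and lands on the same $f$.
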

\begin{proof}
Choose $\chi\in\smoothfunc{\spacetime{M}}$ such that 

\[\chi(t) = 
\left\lbrace 
\begin{array}{rl}
0 & \mbox{if}\quad t\le t_{1}\\
1 & \mbox{if}\quad t\ge t_{2}
\end{array}
\right. \]
\noindent and take $f:= - \KGop(\chi\psi)$, where $K$ is the Klein-Gordon operator; then it's easy to see that $\supp{f}\subseteq\tfuture{\sdc{C}_{1}}\cap\tpast{\sdc{C}_{2}}\cap\supp{\psi}$ and is compact. 
It remains to show that $\psi=\cpropagator f$. We recall that $\cpropagator f$ is the difference between the advanced and retarded fundamental solutions with source term $f$. But is straigthforward to verify that:

\[
\left\lbrace 
\begin{array}{rl}
Af & = (1-\chi)\psi\\
Rf & = -\chi\psi
\end{array}
\right. \]
taking into account the uniqueness of the advanced and retarded fundamental solutions; then we obtain $\cpropagator f = (A-R)f=\psi$ as required.

\end{proof}

\begin{proposition}\label{prop:st_sympl}
Let $\poset{K}\ni\open{O}\equiv\DIAMOND{\open{I}}$, $\open{I}\in\poset{R}_{t}$, be a diamond; then the symplectic space $\sympl{\open{O}}$ is isomorphic to the symplectic space $\Cdata{\open{I}}$ of Cauchy data localized in $\open{I}$.
\end{proposition}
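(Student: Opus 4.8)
The plan is to write down the isomorphism explicitly and then verify its properties one by one; in the end this is just the concrete, model-dependent incarnation of the inclusion $\sympl{\DIAMOND{\open{G}}}\subseteq\sympl{\open{N}}$ already invoked in Section \ref{sec:symplectic_spaces}, with Lemma \ref{lem:Cauchy-slices} doing the essential work. Since the whole theory, the causal propagator $E$, the poset $\poset{K}$ and the Cauchy-data spaces are all invariant under rotations of $\bS^{1}$, I may assume without loss of generality that $\open{I}=(-a,a)$ with $0<a<\pi$, so that $\open{O}=\DIAMOND{\open{I}}=D(\open{I})=\{(\theta,t)\,:\,|\theta|+|t|<a\}$ and $\open{I}=D(\open{I})\cap\sdc{C}$. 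I would define
\[
\Theta:\sympl{\open{O}}\to\Cdata{\open{I}}\:,\qquad \Theta([f]):=P_{\sdc{C}}(Ef)=\bigl(Ef\rest\sdc{C}\,,\;\partial_{t}(Ef)\rest\sdc{C}\bigr)\:,
\]
and then check: (1) $\Theta$ is well defined; (2) $\Theta([f])\in\Cdata{\open{I}}$; (3) $\Theta$ is injective; (4) $\Theta$ is onto $\Cdata{\open{I}}$; (5) $\Theta$ intertwines $\kappa_{\rest\sympl{\open{O}}}$ and $\delta_{\open{I}}$.

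Point (1) is immediate: $[f]=[h]$ means $f-h\in\ker E$, hence $Ef=Eh$. For (2) I would use finite propagation speed, $\supp(Ef)\subseteq\cfuture{\supp f}\cup\cpast{\supp f}$. As $\supp f$ is a compact subset of the open diamond $\open{O}$, it lies in $\{|\theta|+|t|\le a'\}$ for some $a'<a$; then, for any $(\theta_{0},0)$ causally related to some $(\theta_{1},t_{1})\in\supp f$, the $\bS^{1}$-distance satisfies $d(\theta_{0},\theta_{1})\le|t_{1}|$, whence $d(\theta_{0},0)\le|t_{1}|+|\theta_{1}|\le a'<\pi$ by the triangle inequality, i.e. $\theta_{0}\in[-a',a']\subseteq\open{I}$. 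Thus $\supp(Ef)\cap\sdc{C}$ is a compact subset of $\open{I}$, so both components of $\Theta([f])$ are supported in $\open{I}$ and $\Theta([f])\in\Cdata{\open{I}}$. Point (3) follows from well-posedness of the Cauchy problem for (\ref{eq:KG_general_form}): $\Theta([f])=0$ forces $Ef=0$, i.e. $f\in\ker E$ and $[f]=0$.

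The delicate step is (4). Given $(\Phi,\Pi)\in\Cdata{\open{I}}$, let $\varphi\in\KGsol$ be the unique solution with these Cauchy data on $\sdc{C}$; since $\supp\Phi\cup\supp\Pi$ is a compact subset of $\open{I}$, it lies in some $\{|\theta|\le a''\}$ with $a''<a$, so by finite propagation speed $\supp\varphi\subseteq\{(\theta,t)\,:\,|\theta|\le a''+|t|\}$, which is genuinely larger than the diamond — so $\varphi$ is \emph{not} supported in $\open{O}$ in general. To nonetheless produce a representative supported in the diamond I would apply Lemma \ref{lem:Cauchy-slices} with a \emph{thin} time-slab: take $\sdc{C}_{1}=\sdc{C}(-\tau)$, $\sdc{C}_{2}=\sdc{C}(\tau)$ with $0<\tau<(a-a'')/2$, so that on $\tfuture{\sdc{C}_{1}}\cap\tpast{\sdc{C}_{2}}\cap\supp\varphi$ one has $|\theta|+|t|\le a''+2|t|\le a''+2\tau<a$; the lemma then yields $f\in\testfuncv{\spacetime{M}}{\bR}$ with $\supp f\subseteq\tfuture{\sdc{C}_{1}}\cap\tpast{\sdc{C}_{2}}\cap\supp\varphi\subseteq\open{O}$ and $Ef=\varphi$, whence $[f]\in\sympl{\open{O}}$ and $\Theta([f])=P_{\sdc{C}}(\varphi)=(\Phi,\Pi)$. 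Finally, for (5), $\Theta$ is simply the restriction to the subspace $\sympl{\open{O}}\subseteq\sympls{K}$ of the composition of the two symplectomorphisms $[f]\mapsto Ef$, from $(\sympls{K},\kappa)$ onto $(\KGsol,\sigma)$, and $P_{\sdc{C}}$, from $(\KGsol,\sigma)$ onto $(\Cdata{\sdc{C}},\delta_{\sdc{C}})$, both established in Section \ref{sec:symplectic_spaces}; a symplectomorphism restricts to a symplectomorphism of any subspace onto its image, so $\Theta:(\sympl{\open{O}},\kappa_{\rest\sympl{\open{O}}})\to(\Cdata{\open{I}},\delta_{\open{I}})$ is a symplectic isomorphism, which is the assertion.

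I expect step (4) to be the only real obstacle: the solution attached to Cauchy data localized in $\open{I}$ is not supported in $\DIAMOND{\open{I}}$ — its support is the light ``bowtie'' $\{|\theta|\le a''+|t|\}$ — and the whole point is to rewrite it as $Ef$ with $f$ supported in the diamond, by cutting $\varphi$ off in a sufficiently thin time-slab around $\sdc{C}$. This is exactly what Lemma \ref{lem:Cauchy-slices} is engineered for, and is the concrete counterpart, in this model, of Dimock's result $\sympl{\DIAMOND{\open{G}}}\subseteq\sympl{\open{N}}$ used in Section \ref{sec:symplectic_spaces}; everything else is formal, resting on well-posedness and on the symplectomorphisms already available.
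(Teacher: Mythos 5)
Your proof is correct and follows essentially the same route as the paper's: the inclusion $\sympl{\open{O}}\subseteq\Cdata{\open{I}}$ via finite propagation speed, and the converse inclusion via Lemma \ref{lem:Cauchy-slices} applied to a sufficiently thin time-slab around $\sdc{C}$ so that the resulting test function is supported in $\DIAMOND{\open{I}}$. The only difference is one of explicitness: you spell out the geometric estimate ($|\theta|+|t|\le a''+2\tau<a$) and the symplectomorphism property that the paper dispatches with ``taking into account the actual geometry \ldots it's true indeed.''
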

\begin{proof}
Let $[f]\in\sympl{\open{O}}$ such that $f$ is a test function supported in $\open{O}$; then $\psi\equiv\cpropagator f$ is the corresponding solution of the KG equation. Let $(\Phi, \Pi)$ be the Cauchy data on $\sdc{C}_{t}$ associated with $\psi$; then from the fact that $\supp{\psi}\subseteq\cdevel{\supp{f}}$ it follows that $\supp{\Phi}$, $\supp{\Pi}\subseteq\open{I}$, namely $(\Phi, \Pi)\in\Cdata{\open{I}}$ . So we have shown that $\sympl{\open{O}}\subseteq\Cdata{\open{I}}$.

On the other hand take $(\Phi, \Pi)\in\Cdata{\open{I}}$ and let $\psi$ be the induced solution of the KG equation ; we want to show that exists $[f]\in\sympl{\DIAMOND{\open{I}}}$ such that $\cpropagator f = \psi$; we already know that $\supp{\psi}\subseteq\cdevel{\supp{\Phi}\cup\supp{\Psi}}$. 

Given two Cauchy surfaces in the standard foliation for $\spacetime{M}$, say $\sdc{C}_{t_{1}}$ and $\sdc{C}_{t_{2}}$ with $t_{1} < t_{2}$ , we denote by $\open{N}$ the open set delimited by them, i.e. $\open{N}:=\cfuture{\sdc{C}_{t_{1}}}\cap\cpast{\sdc{C}_{t_{2}}}$; then, by lemma \ref{lem:Cauchy-slices} it follows that we can find $\open{O}\subset\open{N}\cap\supp{\psi}$ and $f\in\sympl{\open{O}}$ such that $\cpropagator f = \psi$ so the result is proven if we can chose $t_{1}< t_{2}$ small enough so $\open{N}\cap\cdevel{\supp{\Phi}\,\cup\,\supp{\Psi}}\subset\DIAMOND{\open{I}}$. But in the present case we can see (taking into account the actual geometry of Cauchy surfaces, diamonds and Cauchy developments), that it's true indeed.
In conclusion we have shown that $\sympl{\open{O}}\supseteq\Cdata{\open{I}}$, and the thesis follows.
\end{proof}
Following the construction outlined in section \ref{sec:abstract -> concrete algebras}, the net of symplectic spaces $\sympl{\open{O}}$ induces a net of local von Neumann algebras $\stnet{R}{\open{O}}$, $\open{O}\in\poset{K}$; keeping in mind that every member of the spacetime poset is a diamond based on an element of a spatial poset, proposition \ref{prop:st_sympl} implies at once that $\stnet{R}{\DIAMOND{\open{I}}} = \snet{R}{\open{I}}$, for every $\open{I}$ in a spatial poset $\poset{R}_{t}$. 

We could ask if properties discussed in section \ref{sec:properties} admit an extension to concrete spacetime algebras $\stnet{R}{\open{O}}$. The answer is affirmative, and we give it below with the following few propositions.

\begin{proposition}
In the cylindric flat universe, endowed with the net of local algebras $\net{R}{K}$ indexed by the poset $\poset{K}$ of diamonds, the following properties hold:\\
\textbf{Reeh-Schlieder}: For every diamond $\open{O}\in\poset{K}$ the vacuum vector $\psi\in\varhs{H}_0$ is cyclic for $\pi_0(\stweyl{\open{O}})$ and separating for $\stnet{R}{\open{O}}$.\\
\textbf{Local definiteness}: The local algebra associated with a point in the spacetime is trivial, i.e. 
\[\stnet{R}{{x}}:=\bigcap_{x\in\open{O}\in\poset{K}}\stnet{R}{\open{O}}= \bC\!\cdot\! 1,\quad\forall\, x\in\spacetime{M}.\]\\
\textbf{Factoriality}: Every local algebra  $\stnet{R}{\open{O}}$ is a factor in $\bop{\varhs{H}_0}$.\\
\textbf{Additivity}: Let $\open{O}$, $\left\lbrace \open{O}_{i}\right\rbrace_{i\in L}\in\poset{K}$ be such that $\open{O}=\cup_{i\in L} \open{O}_{i}$; then
\[ \left( \bigcup_{i\in L} \pi_0\left( \stweyl{\open{O}_{i}}\right)\right) ^{\prime\prime}= \stnet{R}{\open{O}}. \]
\end{proposition}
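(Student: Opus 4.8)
The plan is to deduce each spacetime property from its spatial counterpart already established in Section~\ref{sec:properties}, exploiting the identity $\stnet{R}{\DIAMOND{\open{I}}} = \snet{R}{\open{I}}$ together with the explicit geometry of the cylinder: every diamond $\open{O}\in\poset{K}$ is of the form $\DIAMOND{\open{I}}$ for some $\open{I}\in\poset{R}_t$, its causal complement is $\DIAMOND{\open{I}'}$, and the foliation is invariant under the isometry group $\{\gamma_{(r,s)}\}$, under which $\omega_0$ is invariant by Theorem~\ref{th:vacuum}(c). In particular, the implementing unitaries $U_{(r,s)}$ carry $\stnet{R}{\DIAMOND{\open{I}}}$ on $\sdc{C}_0$ to $\stnet{R}{\DIAMOND{\open{I}+(0,t)}}$ on $\sdc{C}_t$, so it suffices to treat diamonds based on a single reference surface and then transport.

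First I would dispatch Reeh--Schlieder, local definiteness and factoriality: since $\stnet{R}{\DIAMOND{\open{I}}}=\snet{R}{\open{I}}$ and $\pi_0(\stweyl{\DIAMOND{\open{I}}})=\pi_0(\sweyl{\open{I}})$ (via Proposition~\ref{prop:st_sympl} and the construction of Section~\ref{sec:abstract -> concrete algebras}), Proposition~\ref{th:spatial Reeh-Schlieder} gives cyclicity and separation of $\Psi_0$ directly, and Theorem~\ref{th:genprop}(iv) gives that $\snet{R}{\open{I}}$ is a factor. For a diamond $\DIAMOND{\open{J}}$ based on $\sdc{C}_t$ with $t\neq 0$, conjugate by $U_{(0,t)}$ and use invariance of $\Psi_0$. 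For local definiteness at a point $x=(\theta_0,t_0)$, the family of diamonds containing $x$ is cofinal (by inclusion) with the family $\DIAMOND{\open{I}}$, $\open{I}\in\poset{R}_{t_0}$, $\open{I}\ni\theta_0$; hence $\bigcap_{x\in\open{O}}\stnet{R}{\open{O}} = \bigcap_{\open{I}\ni\theta_0}\snet{R}{\open{I}} = \bC\cdot 1$ by Theorem~\ref{th:genprop}(i), again transporting the $t_0\neq 0$ case by the time-translation unitary.

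The additivity statement needs slightly more care because the covering diamonds $\open{O}_i$ need not all be based on the same Cauchy surface as $\open{O}$. The plan is as follows: given $\open{O}=\DIAMOND{\open{I}}$ with $\open{I}\in\poset{R}_t$ and a covering $\open{O}=\bigcup_i\open{O}_i$, I would first reduce to a finite subcover using that the support of any test function in $\stweyl{\open{O}}$ is compact in $\open{O}$. Then, mimicking the proof of Proposition~\ref{prop:additivity}(a), I would take $f\in\testfuncv{\open{O}}{\bR}$, write $f=\sum_{j=1}^n g_j f$ via a smooth partition of unity subordinate to the finite subcover $\open{O}_{i_1},\dots,\open{O}_{i_n}$, so that each $g_jf$ is supported in $\open{O}_{i_j}$; Weyl relations then express each generator $\pi_0(W(f))$ (up to a phase in $\bC$) as a product of generators of the $\pi_0(\stweyl{\open{O}_{i_j}})$, giving $\overline{\bigvee_i \stweyl{\open{O}_i}}\supseteq\stweyl{\open{O}}$, and the reverse inclusion is isotony; taking weak closures and using continuity of $\pi_0$ yields $(\bigcup_i\pi_0(\stweyl{\open{O}_i}))'' = \stnet{R}{\open{O}}$.

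The main obstacle is making the partition-of-unity argument legitimate when the covering diamonds sit on different Cauchy surfaces: one must know that the $\open{O}_i$ form a genuine open cover of $\open{O}$ \emph{as a subset of the spacetime} and that the restriction of a compactly supported $f$ on $\open{O}$ can be split among them. This is where the explicit Minkowskian geometry of the cylinder is used --- diamonds are open squares, so the topology is standard --- and where one has to be slightly attentive that $\cpropagator(g_j f)$ still has Cauchy data localized in the appropriate interval so that Proposition~\ref{prop:st_sympl} applies termwise. Once this geometric bookkeeping is in place, the algebraic manipulations are a verbatim adaptation of the spatial proof, and the transport by isometries handles the dependence on the base Cauchy surface uniformly.
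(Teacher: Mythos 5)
Your proposal is correct and follows essentially the same route as the paper: every property is reduced to its spatial counterpart via the identification $\stnet{R}{\DIAMOND{\open{I}}}=\snet{R}{\open{I}}$, with Reeh--Schlieder and factoriality immediate, local definiteness obtained by intersecting over the subfamily of diamonds based on a Cauchy surface through $x$, and additivity by repeating the partition-of-unity argument of Proposition~\ref{prop:additivity}. The only difference is that you spell out the transport by $U_{(0,t)}$ between Cauchy surfaces and the geometric bookkeeping needed for the additivity step when the covering diamonds sit on different surfaces, details the paper dismisses with ``identical to the spatial one''.
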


\begin{proof}
  \noindent\\
\textit{Reeh-Schlieder}: since $\pi_0(\stweyl{\open{O}})=\pi_0(\sweyl{\open{I}})$ and $\stnet{R}{\open{O}}=\snet{R}{\open{I}}$ the statement is trivial.\\  
\textit{Local definiteness}: let $\sdc{C}$ be a Cauchy surface containing $x$; if $\{x\}\subset\open{I}\in\poset{R}_t$ then obviously $\{x\}\subset\DIAMOND{\open{I}}\in\poset{K}$; so 
\[\bigcap_{x\in\open{O}\in\poset{K}}\stnet{R}{\open{O}}\subseteq\bigcap_{\poset{R}_t\ni\open{I}\supset\{x\}}\snet{R}{\open{I}} = \bC\!\cdot\!1,\] 
from the spatial version of the statement. But 
$$\bigcap_{\poset{K}\ni\open{O}\supset\{x\}}\stnet{R}{\open{O}}$$ is a subalgebra of $\bop{\varhs{H}_0}$, so is all $\bC\!\cdot\!1$.\\
\textit{Factoriality}:	 Obvious since $\stnet{R}{\open{O}}=\snet{R}{\open{I}}$, that is a factor.\\	
\textit{Additivity}:	The proof is identical to the spatial one, substituting $\open{I}$ with $\open{O}$.

\end{proof}

It remains to discuss Haag duality and punctured Haag duality; for the sake of convenience, we recall some definitions about causal complements.
\begin{definition}
Let $\open{I}\in\poset{R}_{t}$; we define the \emph{causal complement} of $\open{I}$ as the family of sets $\ccompl{\open{I}} := \{\open{I}_{1}\in\poset{R}_{t}\,|\,\open{I}_{1}\perp \open{I}\}$.\\ 
We note that the open set  
\[\sdc{C}_{t}^{\ccompl{\open{I}}}:=\bigcup_{\open{I}_{1}\in \ccompl{\open{I}}} \open{I}_{1}\]
coincides with the ordinary causal complement of $\open{I}$ in $\sdc{C}_{t}$.
\end{definition}

\begin{definition}
Let $\open{O}\in\poset{K}$; we define the \emph{causal complement} of $\open{O}$ as the family of sets $\ccompl{\open{O}} := \{\open{O}_{1}\in\poset{K}\,|\, \open{O}_{1}\perp \open{O}\}$.\\ 
We note that the open set  
\[\spacetime{M}^{\ccompl{\open{O}} }:=\bigcup_{\open{O}_{1}\in \ccompl{\open{O}}} \open{O}_{1}\]
coincides with the ordinary causal complement of $\open{O}$.
\end{definition}

\begin{proposition}[Haag Duality]
In the spacetime $\spacetime{M}$, endowed with the net of local (concrete) algebras $\net{R}{K}$, Haag duality holds, i.e. given $\open{O}\in\poset{K}$ the following identity is satisfied:
\begin{equation}
\stnet{R}{\open{O}}=\bigcap_{\open{O}_{1}\in\ccompl{\open{O}}}\stnet{R}{\open{O}_{1}}^{\prime} 
\end{equation} 

\begin{proof}
We already know that \emph{spatial} Haag duality holds in the cylindric flat spacetime; we'll show that \emph{spacetime} Haag duality follows from this.
Let $\sdc{C}$ be the Cauchy surface where $\open{O}$ is based; then 
\[\bigcap_{\open{O}_{1}\in\ccompl{\open{O}}}\stnet{R}{\open{O}_{1}}^{\prime} = \bigcap_{\open{I}_{1}\in\ccompl{\open{I}}}\snet{R}{\open{I}_{1}}^{\prime}. \]
In fact we know that $\forall\, \open{I}_{1}\in\ccompl{\open{I}}$, $\open{O_{1}}:=\DIAMOND{\open{I_{1}}}\in\ccompl{\open{O}}$ and $\stnet{R}{\DIAMOND{\open{I_{1}}}} = \snet{R}{\open{I_{1}}}$, so \[\bigcap_{\open{O}_{1}\in\ccompl{\open{O}}}\stnet{R}{\open{O}_{1}}^{\prime}\subseteq\bigcap_{\open{I}_{1}\in\ccompl{\open{I}}}\snet{R}{\open{I}_{1}}^{\prime}.\]
Now let $\open{O}_{1}\in\ccompl{\open{O}}$; then $\open{O_{1}}\subseteq\DIAMOND{\open{I}^{\prime}}$ and by isotony we have \[\stnet{R}{\open{O_{1}}}\subseteq\stnet{R}{\DIAMOND{\open{I}^{\prime}}}=\snet{R}{\open{I}^{\prime}}.\] 
Applying the algebra commutant to both sides of the inclusion we have 
\[\stnet{R}{\open{O_{1}}}^{\prime}\supseteq\snet{R}{\open{I}^{\prime}}^{\prime}.\] 

\noindent Doing some algebra we obtain:
\begin{equation}
\begin{split}
&\stnet{R}{\open{O_{1}}}^{\prime}\cap\left(\bigcap_{\open{I}_{1}\in\ccompl{\open{I}}}\snet{R}{\open{I}_{1}}^{\prime}\right)        =\stnet{R}{\open{O}^{\prime}}^{\prime}\cap\left(\bigcap_{\open{I}_{1}\in\ccompl{\open{I}}}\stnet{R}{\DIAMOND{\open{I}_{1}}}^{\prime}\right)\supseteq\\
&\supseteq\snet{R}{\open{I}^{\prime}}^{\prime}\cap\left(\bigcap_{\open{I}_{1}\in\ccompl{\open{I}}}\snet{R}{\open{I}_{1}}^{\prime}\right)=\bigcap_{\open{O}_{1}\in\ccompl{\open{O}}}\stnet{R}{\open{O}_{1}}^{\prime},\\  
\end{split}
\end{equation}
given that $\open{I}_{1}\in\ccompl{\open{I}}$.
\noindent Repeating this argument for all $\open{O}_{1}\in \ccompl{\open{O}}$ we have, as claimed: 

\[\bigcap_{\open{O}_{1}\in\ccompl{\open{O}}}\stnet{R}{\open{O}_{1}}^{\prime}=\bigcap_{\open{I}_{1}\in\ccompl{\open{I}}}\snet{R}{\open{I_{1}}}^{\prime}.\] 

\noindent Finally we conclude observing that \[\stnet{R}{\open{O}}=\snet{R}{\open{I}}=\bigcap_{\open{I}_{1}\in\ccompl{\open{I}}}\snet{R}{\open{I}_{1}}^{\prime}=\bigcap_{\open{O}_{1}\in\ccompl{\open{O}}}\stnet{R}{\open{O}_{1}}^{\prime}.\]
\end{proof}
\end{proposition}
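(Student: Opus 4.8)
The plan is to transport the \emph{spatial} Haag duality already proven onto the spacetime poset through the dictionary $\DIAMOND{\open{I}}\leftrightarrow\open{I}$. Three earlier facts carry the argument: (i) $\stnet{R}{\DIAMOND{\open{I}}}=\snet{R}{\open{I}}$ for every $\open{I}$ in a translated spatial poset $\poset{R}_{t}$, a consequence of Proposition~\ref{prop:st_sympl} and the construction of $\net{R}{K}$; (ii) the explicit geometry of the cylinder recorded above, namely that $\spacetime{M}\setminus\cdevel{\open{I}}=\DIAMOND{\open{I}'}$ and that the ordinary causal complement of the diamond $\DIAMOND{\open{I}}$ is the diamond $\DIAMOND{\open{I}'}$ based on the spatial causal complement of its base; and (iii) spatial Haag duality together with spatial additivity (Proposition~\ref{prop:additivity}), which combined with $\open{I}''=\open{I}$ yield the ``intersection form'' $\snet{R}{\open{I}}=\bigcap_{\open{I}_{1}\in\ccompl{\open{I}}}\snet{R}{\open{I}_{1}}'$.

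Fix $\open{O}=\DIAMOND{\open{I}}$ with $\open{I}\in\poset{R}_{t}$. The core step is to prove
\[
\bigcap_{\open{O}_{1}\in\ccompl{\open{O}}}\stnet{R}{\open{O}_{1}}'=\bigcap_{\open{I}_{1}\in\ccompl{\open{I}}}\snet{R}{\open{I}_{1}}'\:.
\]
For ``$\subseteq$'' I would note that, for each $\open{I}_{1}\in\ccompl{\open{I}}$, one has $\open{I}_{1}\subseteq\open{I}'$, hence $\DIAMOND{\open{I}_{1}}\subseteq\DIAMOND{\open{I}'}=\spacetime{M}\setminus\cdevel{\open{O}}$, so $\DIAMOND{\open{I}_{1}}\in\ccompl{\open{O}}$; since $\stnet{R}{\DIAMOND{\open{I}_{1}}}'=\snet{R}{\open{I}_{1}}'$ by (i), the right-hand intersection runs over a subfamily of the terms of the left-hand one. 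For ``$\supseteq$'' the geometric input is that every $\open{O}_{1}\in\ccompl{\open{O}}$ satisfies $\open{O}_{1}\subseteq\spacetime{M}\setminus\cdevel{\open{O}}=\DIAMOND{\open{I}'}$, so by isotony $\stnet{R}{\open{O}_{1}}'\supseteq\stnet{R}{\DIAMOND{\open{I}'}}'=\snet{R}{\open{I}'}'$; on the other hand spatial additivity gives $\bigcap_{\open{I}_{1}\in\ccompl{\open{I}}}\snet{R}{\open{I}_{1}}'=\snet{R}{\open{I}'}'$ (since $\bigcup_{\open{I}_{1}\perp\open{I}}\open{I}_{1}=\open{I}'$), and intersecting the inclusions $\stnet{R}{\open{O}_{1}}'\supseteq\snet{R}{\open{I}'}'$ over all $\open{O}_{1}\in\ccompl{\open{O}}$ delivers the claim.

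It then remains to close the loop: by spatial Haag duality $\snet{R}{\open{I}'}'=\alg{R}(\open{I}'')=\snet{R}{\open{I}}$ (here $\open{I}'\in\poset{R}$ and $\open{I}''=\open{I}$), so the displayed identity and (i) give
\[
\stnet{R}{\open{O}}=\snet{R}{\open{I}}=\bigcap_{\open{I}_{1}\in\ccompl{\open{I}}}\snet{R}{\open{I}_{1}}'=\bigcap_{\open{O}_{1}\in\ccompl{\open{O}}}\stnet{R}{\open{O}_{1}}'\:,
\]
which is the assertion. The main obstacle I anticipate is the ``$\supseteq$'' half of the core step: one must verify carefully that a diamond causally disjoint from $\open{O}$ is contained in $\DIAMOND{\open{I}'}$ (this is exactly where $\cdevel{\DIAMOND{\open{I}}}=\cdevel{\open{I}}$ and the remark $\DIAMOND{\open{I}'}=\spacetime{M}\setminus\cdevel{\open{I}}$ are used), and one must be careful with the intersection-of-commutants bookkeeping so that no wrong-way inclusion creeps in; the rest is a mechanical translation of the spatial statements under $\DIAMOND{\open{I}}\mapsto\open{I}$.
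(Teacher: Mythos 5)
Your proposal is correct and follows essentially the same route as the paper: reduce to the spatial case via $\DIAMOND{\open{I}}\leftrightarrow\open{I}$, prove the core identity $\bigcap_{\open{O}_{1}\in\ccompl{\open{O}}}\stnet{R}{\open{O}_{1}}'=\bigcap_{\open{I}_{1}\in\ccompl{\open{I}}}\snet{R}{\open{I}_{1}}'$ by the same two inclusions (subfamily argument one way, $\open{O}_{1}\subseteq\DIAMOND{\open{I}'}$ plus isotony the other), and close with spatial Haag duality. Your use of spatial additivity to identify $\bigcap_{\open{I}_{1}\in\ccompl{\open{I}}}\snet{R}{\open{I}_{1}}'$ with $\snet{R}{\open{I}'}'$ makes the ``$\supseteq$'' bookkeeping cleaner than the paper's, but it is the same proof.
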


\noindent Now we examine punctured Haag duality. We recall \cite{Ruzzi_05} that given a point $x$ in $\spacetime{M}$, the \emph{causal puncture} of the spacetime poset $\poset{K}$ is the poset $\poset{K}_{x}$ defined as the collection 
\[
\poset{K}_{x}=\left\lbrace \open{O}\in\poset{K}|\cl{\open{O}}\perp x\right\rbrace 
\]
ordered under inclusion, where $\cl{\open{O}}\perp x$ means that $\cl{\open{O}}\subseteq\spacetime{M}\setminus\cdevel{x}$. We also recall that its ``topological realization'' $\spacetime{M}_{x}:=\cup\left\lbrace \open{O}\in\poset{K}|\open{O}\in\poset{K}_{x}\right\rbrace$ coincides with $\spacetime{M}\setminus\cdevel{x}=D(\sdc{C}\setminus {x})$, where $\sdc{C}$ is a (spacelike) Cauchy surface containing $x$.
Considered as a spacetime on its own, $\spacetime{M}_{x}$ is globally hyperbolic and $\poset{K}_{x}$ is a topological basis of $\spacetime{M}_{x}$. Now, for any $\open{O}\in\poset{K}_{x}$ we define the \emph{causal complement} of $\open{O}$ in $\poset{K}_{x}$ as the family  of sets:

\[
\ccompl{\open{O}}|_{\poset{K}_{x}}:=\left\lbrace \open{O}_{1}\in\poset{K}_{x}|\open{O}_{1}\perp \open{O}\right\rbrace 
\]
We remark that in general an element of the poset $\poset{K}_{x}$ it's not a diamond of the spacetime $\spacetime{M}_{x}$, although it's a diamond of the spacetime $\spacetime{M}$ by definition.

\noindent We are now ready to formulate punctured Haag duality in a spacetime context.

\begin{proposition}[Punctured Haag duality]
Let $x\in\spacetime{M}$, $\poset{K}_{x}$ the causal puncture of $\poset{K}$ induced by $x$, $\open{O}\in\poset{K}_{x}$; then punctured Haag duality holds. In other words the following identity is true:
\begin{equation}
\stnet{R}{\open{O}}=\bigcap_{\open{O}_{1}\in\ccompl{\open{O}}_{x}}\stnet{R}{\open{O}_{1}}^{\prime} 
\end{equation} 
\end{proposition}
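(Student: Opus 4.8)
The plan is to reduce spacetime punctured Haag duality, exactly as the unpunctured spacetime version was reduced to spatial Haag duality, to a statement on a Cauchy surface --- now fed by spatial punctured Haag duality (Theorem \ref{th:genprop}(iii)) --- the only genuinely new point being that puncturing the net at a spacetime point removes no information.

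First I would write $\open{O}=\DIAMOND{\open{I}}$ with $\open{I}\in\poset{R}_{s}$, so that $\ccompl{\open{O}}$ is realised by the diamond $\DIAMOND{\open{I}^{\prime}}$, and note that $\open{O}\perp x$ forces $x$ to lie in the interior of $\DIAMOND{\open{I}^{\prime}}$. The inclusion $\stnet{R}{\open{O}}\subseteq\bigcap_{\open{O}_{1}\in\ccompl{\open{O}}_{x}}\stnet{R}{\open{O}_{1}}^{\prime}$ is immediate from locality of $\net{R}{K}$. For the converse, since $\poset{K}_{x}$ is a topological basis of $\spacetime{M}_{x}=\spacetime{M}\setminus\cdevel{x}$, the elements of $\ccompl{\open{O}}_{x}$ are exactly the diamonds of $\poset{K}$ contained in $\DIAMOND{\open{I}^{\prime}}\setminus\cdevel{x}$, and they realise this whole open set; hence
\[
\bigcap_{\open{O}_{1}\in\ccompl{\open{O}}_{x}}\stnet{R}{\open{O}_{1}}^{\prime}=\stnet{R}{\DIAMOND{\open{I}^{\prime}}\setminus\cdevel{x}}^{\prime},
\]
where for an open set $Y$ we set $\stnet{R}{Y}:=\big(\bigcup_{\open{O}_{1}\in\poset{K},\,\open{O}_{1}\subseteq Y}\stnet{R}{\open{O}_{1}}\big)^{\prime\prime}$. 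Since the (already proven) spacetime Haag duality gives $\stnet{R}{\open{O}}=\stnet{R}{\DIAMOND{\open{I}^{\prime}}}^{\prime}$, the whole proposition collapses to the single identity $\stnet{R}{\DIAMOND{\open{I}^{\prime}}\setminus\cdevel{x}}=\stnet{R}{\DIAMOND{\open{I}^{\prime}}}$: deleting the causal shadow of the interior point $x$ from the diamond $\DIAMOND{\open{I}^{\prime}}$ must not enlarge its commutant.

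To establish that identity I would pass to the one-particle picture, where $\stnet{R}{\DIAMOND{\open{I}^{\prime}}}=\alg{R}[M_{\open{I}^{\prime}}]$ with $M_{\open{I}^{\prime}}=\overline{K(\Cdata{\open{I}^{\prime}})}$. The region $\DIAMOND{\open{I}^{\prime}}$ is itself globally hyperbolic, so choose a spacelike Cauchy surface $\Sigma$ of it through $x$; then $\DIAMOND{\open{I}^{\prime}}\setminus\cdevel{x}$ is the disjoint union of the two globally hyperbolic pieces $V_{1},V_{2}$ which are the domains of dependence, inside $\DIAMOND{\open{I}^{\prime}}$, of the components $\Sigma_{1},\Sigma_{2}$ of $\Sigma\setminus\{x\}$. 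A solution arises from a test function supported in $V_{1}\cup V_{2}$ precisely when it vanishes in a neighbourhood of $x$ (by Lemma \ref{lem:Cauchy-slices} applied inside $V_{1}$ and $V_{2}$, together with the fact that $V_{1}\cup V_{2}$ is disjoint from $\cdevel{x}$); transferring Cauchy data from $\Sigma$ to the reference surface by the causal propagator and using additivity, $\poset{K}$ being a basis, and strong continuity of $\psi\mapsto W[\psi]$, the identity becomes: Cauchy data on $\Sigma$ vanishing near $x$ are dense, in the vacuum one-particle norm, among all Cauchy data on $\Sigma$. This norm is locally of Sobolev type $H^{1/2}\oplus H^{-1/2}$ --- visible on the reference circle directly from the powers $A^{\pm1/4}$ in (\ref{eq:1-p struct}), and in general because $\omega_{0}$ is a Hadamard state --- and in one spatial dimension a point has vanishing $H^{1/2}$-capacity, so the density holds. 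Feeding it back, any operator in the right-hand intersection then commutes with all of $\stnet{R}{\DIAMOND{\open{I}^{\prime}}}$, hence lies in $\stnet{R}{\DIAMOND{\open{I}^{\prime}}}^{\prime}=\stnet{R}{\open{O}}$.

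The hard part is exactly this last density statement --- the ``a point carries no algebra'' assertion in the genuinely two-dimensional spacetime setting, playing the role that outer regularity (\ref{eq:outer-regolarity}) played in the proof of Haag duality. What makes it more delicate than its spatial analogue is that the causal shadow $\cdevel{x}\cap\DIAMOND{\open{I}^{\prime}}$ is a two-dimensional region (its trace on the base surface $\mathscr{C}_{s}$ being an interval, not a point), so one cannot argue on $\mathscr{C}_{s}$ alone and is forced onto a deformed Cauchy surface through $x$; consequently the pieces $V_{i}$ need not belong to $\poset{K}$ and must be handled through additivity, and the capacity estimate must be controlled for the vacuum norm on that deformed surface. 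Everything else is the same manipulation of commutants, additivity and Haag duality already used in the unpunctured case.
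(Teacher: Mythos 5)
Your reduction is the same as the paper's: both pass through Haag duality to the single identity $\stnet{R}{\DIAMOND{\open{I}^{\prime}}\setminus\cdevel{x}}=\stnet{R}{\DIAMOND{\open{I}^{\prime}}}$, and both realise the punctured causal complement as the two domains of dependence of the halves of a deformed Cauchy surface through $x$. Where you diverge is in how the point is killed. The paper chooses the deformed surface to be \emph{straight} (i.e.\ in the natural foliation) in a neighbourhood $\Delta_{n}$ of $x$, splits the Cauchy data by a partition of unity into the two outer pieces plus $\Delta_{n}$, and then shrinks $\Delta_{n}$, invoking spatial local definiteness (Theorem \ref{th:genprop}, which rests on lemma \ref{lem:Tomita}) to remove the middle factor. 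You instead work in the one-particle space and assert that data vanishing near $x$ are dense among all data for the vacuum norm, via a zero-capacity-of-a-point argument in $H^{1/2}\oplus H^{-1/2}$. Your route, if completed, is actually tighter at one spot: density of the punctured subspace gives equality of the closed one-particle subspaces outright, whereas the paper's passage from $\stnet{R}{\ccompl{\open{O}}}=\snet{R}{\sdc{C}_{1}}\vee\snet{R}{\sdc{C}_{2}}\vee\snet{R}{\Delta_{n}}$ for all $n$ to the same identity without $\Delta_{n}$ silently interchanges an intersection with a join of von Neumann algebras. On the other hand your route carries its own unproved analytic core: the identification of the one-particle norm on the \emph{deformed} surface with a local $H^{1/2}\oplus H^{-1/2}$ norm is justified only by an appeal to the Hadamard property, which is heavier than anything the paper needs. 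You could avoid that entirely by borrowing the paper's geometric trick --- take the surface straight near $x$, so the capacity estimate is needed only on the reference circle, where (\ref{eq:1-p struct}) gives the norm explicitly --- at which point your density claim is correct ($s=1/2$ is the critical Sobolev exponent in one dimension, so a point has zero $H^{1/2}$-capacity, and the $H^{-1/2}$ component is trivial) and plays exactly the role that outer regularity (\ref{eq:outer-regolarity}) and Proposition \ref{prop:last} play in the unpunctured proof.
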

\begin{proof}
Let $\open{O}\in\poset{K}_{x}$; denoting by $\ccompl{O}$ the causal complement of $\open{O}$ $\in \spacetime{M}$, we already know that $\stnet{R}{\open{O}}^{\prime} = \stnet{R}{\ccompl{O}}$ (Haag duality); we have to show that $\stnet{R}{\open{O}}^{\prime}$ = $\stnet{R}{\ccompl{O}\setminus\left(\open{V}_{1}\cup\open{V}_{2}\right)}$. 
Then we see it is enough to prove that $\stnet{R}{\ccompl{O}}$ = $\stnet{R}{\ccompl{O}\setminus\left(\open{V}_{1}\cup\open{V}_{2}\right)}$.
It's easy to see that each of the subsets $\open{V}_{i}$, $i=1,\ldots,4$ making up $\ccompl{\open{O}}$ is a globally hyperbolic spacetime (see lemma \ref{lem:global_hyperbolicity_condition} below). 
Take a piece of a ``straight'' Cauchy surface (namely belonging to the natural foliation of $\spacetime{M})$ around $x$, say $\Delta_{n}$, and extend it\footnote{According to \cite{Bernal_Sánchez_06} this construction is always possible.} to a Cauchy surface $\sdc{C}$ of $\ccompl{\open{O}}$ such that $\sdc{C}$ = $\sdc{C}_{1}\cup\sdc{C}_{2}\cup\Delta_{n}$, where  $\sdc{C}_{1}(\sdc{C}_{2})$ is a Cauchy surface of $\open{V}_{3} (\open{V}_{4})$.
Now transport the local algebra $\stnet{R}{\ccompl{\open{O}}}$ on $\sdc{C}$ (passing from test functions to Cauchy data, as we have seen in \ref{sec:symplectic_spaces}); using a partition of unity, split Cauchy data in three parts, supported respectively in $\sdc{C}_{1}$, $\sdc{C}_{2}$ and $\Delta_{n}$. Passing to local algebras we obtain 
\begin{equation}\label{eq:punct-proof-split}
\stnet{R}{\ccompl{O}} = \snet{R}{\sdc{C}} = \snet{R}{\sdc{C_{1}}}\vee\snet{R}{\sdc{C_{2}}}\vee\snet{R}{\Delta_{n}} 
\end{equation}
If we choose $\Delta_{n+1}$ such that $\Delta_{n+1}\subset\Delta_{n}$ we get $\snet{R}{\Delta_{n+1}}\subseteq\snet{R}{\Delta_{n}}$; because (\ref{eq:punct-proof-split}) is valid for all $n\in\bN$ we finally obtain
\begin{eqnarray}
\stnet{R}{\ccompl{O}} & = &  \snet{R}{\sdc{C_{1}}}\vee\snet{R}{\sdc{C_{2}}}\vee\snet{R}{\Delta_{n}} = \nonumber \\
=\snet{R}{\sdc{C_{1}}}\vee\snet{R}{\sdc{C_{2}}}\vee\bC\cdot 1 & = & \snet{R}{\sdc{C_{1}}}\vee\snet{R}{\sdc{C_{2}}} = \nonumber \\
=\stnet{R}{\open{V}_{3}}\vee\stnet{R}{\open{V}_{4}} & = & \stnet{R}{\ccompl{O}\setminus\left(\open{V}_{1}\cup\open{V}_{2}\right)}.
\end{eqnarray}
\end{proof}

We conclude this section with some material about Lorentzian geometry used in the proof above; to start with we recall the causality condition for spacetimes.
\begin{definition}
A Lorentzian manifold is said to satisfy the \emph{causality condition} if it doesn't contain any closed causal curve; it's said to satisfy the \emph{strong causality condition} if there are no almost closed causal curves. More precisely, for each point $p\in\spacetime{M}$ and for each open neighborhood $\open{U}$ of $p$ there exists an open neighborhood $\open{V}\subset\open{U}$ of $p$ such that each causal curve in $\spacetime{M}$ starting and ending in $\open{V}$ is entirely contained in $\open{U}$.
\end{definition}
\noindent Now we give another characterization of globally hyperbolic spacetimes.
\begin{lemma}\label{lem:global_hyperbolicity_condition} 
A connected time-oriented Lorentzian manifold is globally hyperbolic if and only if it satisfies the strong causality condition and for all $p,q\in\spacetime{M}$ the intersection $\cpast{p}\cap\cfuture{q})$ is compact.
\end{lemma}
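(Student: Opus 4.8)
This is the classical Leray--Geroch characterization of global hyperbolicity (see \cite{O'Neill_83}, \cite{Wald_94}, \cite{Bar_Ginoux_Pfaffle_06}); I would prove the two implications separately, the ``if'' direction being the substantial one, while the ``only if'' direction is comparatively routine.

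For the ``only if'' direction I would argue as follows. Assume $\spacetime{M}$ admits a Cauchy hypersurface $\sdc{C}$ and fix the topological splitting $\spacetime{M}\cong\bR\times\sdc{C}$ obtained by flowing along a complete timelike vector field transverse to $\sdc{C}$, with projection $t:\spacetime{M}\to\bR$. Strong causality: if it failed at some $p$, a limit-curve argument applied to a sequence of causal curves leaving and re-entering shrinking neighbourhoods of $p$ would produce either a closed causal curve or an inextendible causal curve meeting $\sdc{C}$ more than once (or one imprisoned near $p$, never meeting $\sdc{C}$), contradicting in each case the defining property of $\sdc{C}$. Compactness of $\cpast{p}\cap\cfuture{q}$: one first checks that, $\spacetime{M}$ being globally hyperbolic, $\cpast{p}$ and $\cfuture{q}$ are closed, so the diamond is closed; then, given a sequence $x_n$ in the diamond, each $x_n$ lies on a future-directed causal curve $\gamma_n$ from $q$ to $p$, and a limit-curve argument (whose hypotheses hold because $\spacetime{M}$, having a Cauchy surface, is strongly causal and non-imprisoning) extracts a limit causal curve $\gamma$ from $q$ to $p$ and a subsequence $x_{n_k}\to x\in\gamma\subset\cpast{p}\cap\cfuture{q}$; hence the diamond is sequentially compact.

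For the ``if'' direction I would follow Geroch's construction of a Cauchy time function. The plan is: (i) observe that strong causality together with compactness of the causal diamonds forces $\cpast{x}$ and $\cfuture{x}$ to be closed for every $x$ and prevents any inextendible causal curve from being totally imprisoned in a compact set; (ii) choose a finite Borel measure $\mu$ on $\spacetime{M}$, absolutely continuous with respect to the metric volume and with $\mu(\spacetime{M})<\infty$, and set $t^{-}(p):=\mu(\cpast{p})$, $t^{+}(p):=\mu(\cfuture{p})$; (iii) prove that $t^{-}$ and $t^{+}$ are continuous on $\spacetime{M}$; (iv) define $\tau:=t^{-}/t^{+}$ and verify that $\tau$ is strictly increasing along every future-directed causal curve and that $\tau\to 0$ (resp.\ $\tau\to+\infty$) along every past-inextendible (resp.\ future-inextendible) causal curve, using non-imprisonment to push such a curve eventually out of every compact set; (v) conclude that each level set $\tau^{-1}(c)$ is a closed achronal topological hypersurface met exactly once by every inextendible causal curve --- hence by every inextendible timelike curve --- i.e.\ a Cauchy hypersurface, so $\spacetime{M}$ is globally hyperbolic.

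The main obstacle is step (iii): the continuity of the volume functions $t^{\pm}$. This is precisely where the compactness hypothesis on $\cpast{p}\cap\cfuture{q}$ is used, together with closedness of $\cpast{}$ and $\cfuture{}$: if $p_n\to p$ but some $q$ lay in $\cpast{p_n}$ for infinitely many $n$ while $q\notin\cpast{p}$, the corresponding causal curves from $q$ ending near $p$ would have, by compactness of a suitable enclosing diamond, a limit causal curve from $q$ to $p$, a contradiction --- this gives the semicontinuity of $t^{-}$ in one direction, and the symmetric argument with $\cfuture{}$ gives the other, so $t^{\pm}$ are continuous. I would also be careful in step (iv) that it is strong causality, not merely the causality condition, which guarantees that inextendible causal curves are not imprisoned and hence that $\tau$ exhausts $(0,+\infty)$; this is the point at which the hypothesis is genuinely needed beyond the absence of closed causal curves, and it is what yields the ``exactly once'' intersection property of the level sets.
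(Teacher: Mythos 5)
The paper never proves this lemma: it is recalled as the classical alternative characterization of global hyperbolicity and used purely as a citation (to check global hyperbolicity of the regions $\open{V}_i$ in the punctured Haag duality argument), the proof being delegated to the literature (O'Neill, B\"ar--Ginoux--Pf\"affle). Your outline is the standard Geroch argument, which is indeed the right proof, and you correctly identify where strong causality and compactness of the causal diamonds enter.

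One step of your sketch does not close as written, namely the continuity of the volume functions in (iii). The limit-curve/compactness argument you give shows that if $p_n\to p$ and $q\in\cpast{p_n}$ for infinitely many $n$ then $q\in\cpast{p}$; via reverse Fatou (using $\mu(\spacetime{M})<\infty$) this yields \emph{upper} semicontinuity of $t^-$. The ``symmetric argument'' with the causal futures yields upper semicontinuity of $t^+$, not the missing lower semicontinuity of $t^-$, so continuity of $t^{\pm}$ does not yet follow from what you wrote. Lower semicontinuity needs a different (easier) input: for $q\in\tpast{p}$ the set $\tfuture{q}$ is an open neighbourhood of $p$, so $q\in\tpast{p_n}$ for all large $n$, whence $\liminf_n\mu(\tpast{p_n})\ge\mu(\tpast{p})$ by Fatou; one must then note that $\mu(\cpast{p})=\mu(\tpast{p})$, since $\cpast{p}\setminus\tpast{p}$ lies in the achronal boundary $\partial\tpast{p}$, a Lipschitz hypersurface of measure zero. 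This is also why Geroch defines the volume functions with the chronological sets $I^{\mp}$ rather than the causal sets $J^{\mp}$ as you do; with that adjustment (or with the measure-zero remark) your argument becomes the standard one. The remaining steps are fine --- in particular you are right that only the ``exactly one intersection with every inextendible \emph{timelike} curve'' property is needed for the level sets, which matches the definition of Cauchy hypersurface used in the paper.
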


\chapter{Topological cocyles}\label{chap:topological_cocycles}
We finally come to the core of this work; in this chapter we exhibit a class of topologically non-trivial cocycles living in the cylindric universe. 
What about the significance of such a result ? As we'll see presently, the cocycle machinery has a twofold purpose: from the one hand, it allows us to build unitary representations of the first homotopy group of the spacetime (via theorem \ref{th:cocycle-representations}), from the other hand it can be used to construct net-representations.  So existence of non-trivial 1-cocycles implies both the existence of non-trivial representations of the fundamental group and non-trivial net-representations of the observables net. In this respect, we can say that spacetime topology interacts with algebraic QFT. 
\section{Introduction} 
In this section we show how  the non-trivial topology of $\bS^1$ allows the existence non-trivial unitary representations of the first homotopy group of the spacetime acting in the Hilbert space of the vacuum. As before, we refer to the (spatial) poset $\poset{R}$ and the vacuum representation induced by the vacuum state $\omega_{0}$,  with GNS Hilbert space  $\varhs{H}_{0}:= \fock{\hs{H}_{0}}$ and one-particle space $\hs{H}_{0}:= \sqint{\bS^1}{\theta}$. Hence the reference net of local algebras indexed by $\poset{R}$ will be 
$$\alg{R} : \poset{R} \ni \open{I} \to \snet{R}{\open{I}} \subset \bop{\varhs{H}_{0}},$$
satisfying {\em irreducibility}, {\em Haag duality}, {\em punctured Haag duality}. 

Moreover we shall establish that non-trivial cocycles individuate generalized net representations of the algebra of observables satisfying the (topological) selection criterion defined in \cite{Brunetti_Ruzzi_08}.
The found situation is however puzzling here, because some of the most important results presented in the mentioned paper (which are valid for dimension of the spacetime $\geq 3$) cannot be used here and, as matter of fact, they do not hold true. In particular the theorem of localization of fundamental group, theorem 4.1 in \cite{Brunetti_Ruzzi_08}, fails to be valid
specializing to our definition of cocycles as it can be verified by direct inspection. Therefore, some relevant consequences 
of that theorem may be false in our low-dimension context, and the actual situation has to be analyzed carefully.
In particular, the absence of  irreducible cocycles different from characters when the fundamental group of the manifold is Abelian, as established in  Corollary 6.8 in \cite{Brunetti_Ruzzi_08}, no longer holds. Indeed, the irreducible cocycles we are going to 
 present below are a clear counterexample. Furthermore the proof of Theorem 4.3 
in \cite{Brunetti_Ruzzi_08}, which demonstrates 
 equivalence of the category of $1$-cocycles and the category of representations of  the algebra of observables satisfying the (topological) selection criterion, does not hold since it relies upon theorem 4.1 in \cite{Brunetti_Ruzzi_08}. However, it turns out that various results carry over to our model, although their proofs often require substantial modifications. 

\section{General theory}
First of all we need to set the stage for subsequent constructions. As we said, we choose the poset $\poset{R}$ of proper, open intervals on $\bS^{1}$, equipped with the partial order relation $\subseteq$ and the causal disjointness relation $\perp$ (see section \ref{sec:the_cylindric_flat_universe}). Our reference net of  observables will be the net of unital $C^*$-algebras (Weyl algebras)  
$$\net{W}{R} : \poset{R} \ni \open{I} \to \sweyl{\open{I}}.$$ 
\noindent The vacuum representation\footnote{For the sake of notational simplicity, and to be consistent with \cite{Brunetti_Ruzzi_08}, henceforth we denote the vacuum representation $\pi_0$ with the symbol $\iota$, emphasizing the fact that $\pi_0$ is our reference representation.} $\iota$ on the vacuum GNS Hilbert-Fock space $\varhs{H}_{0}$ discussed in section \ref{sec:vacuum_representations}, induces a unitary net representation $\{\iota,\bI\}$ of the net of observables $\net{W}{R}$, namely a family of representations $\iota_{\open{I}}: \sweyl{\open{I}}\to\iota_{\open{I}}(\sweyl{\open{I}})\subseteq \bop{\varhs{H}_{0}}$; in this case the ``transition functions'' $ \bI_{\widetilde{\open{I}}\open{I}}$ are the identity operators for every $\widetilde{\open{I}}$, $\open{I}\in\poset{R}$. The unitary net representation $\{\iota,\bI\}$ gives also rise to a von Neumann algebras net $\poset{R} \ni \open{I} \mapsto \iota(\sweyl{\open{I}})''=: \snet{R}{\open{I}}$ satisfying {\em irreducibility}, {\em Haag duality}, {\em punctured Haag duality} and {\em Borchers property}.
Finally $(\iota, \bI)$  is topologically trivial since the cocycle $\zeta^\iota$ associated with it is the simplest co-boundary (hence the associated unitary representation of $\stdhomgroup{\bS^1}$ is trivial, too).   
We refer to notation, definitions and results discussed in section \ref{sec:vacuum_representations}, noting en passant that corollary \ref{cor:not_directed_poset} entails that $(\poset{R}, \subseteq)$ cannot be directed (as we know independently).

Finally it is worth remarking that every irreducible unitary representation of $\bZ$, $\{\lambda_x(n)\}_{n\in\bZ}$, is one-dimensional, $\lambda_x(n) : \bC \to \bC$, as the group is Abelian. All those representations are labeled by $x\in \bR$ and have the form:
\begin{equation} \label{eq:lambdax}
 \lambda_x(n) : \bC \ni \psi \mapsto e^{\imath n x} \psi\:,\quad\mbox{for all $n \in \bZ$}.
\end{equation}
Some remarks about the poset $\poset{R}$ are in order too. 
In view of section \ref{sec:the_first_homotopy_group_of_a_poset} and since $\bS^1$ is Hausdorff, arcwise connected and $\poset{R}$ is a topological basis of
$\bS^1$, it turns out that $\shomgroup{R}$ doesn't depend on the basepoint $\open{I}_0$ and coincides with the fundamental group of $\bS^1$, i.e. $\shomgroup{R} = \bZ$ in our case. 
For the sake of clarity, it's worth restating theorem \ref{th:cocycle-representations} specialized to our case:
\begin{theorem}\label{th:cocycle_representations_spec}
 Fix a base $0$-simplex $\open{I}_0$; given $\coc{z}\in\cocyclecat{\net{R}{R}}$, define the following map sending an equivalence class of (poset) paths to an operator in $\bop{\varhs{H}_0}$:
\begin{equation}\label{eq:cocycle_repr_spec}
\pi_{\coc{z}}([\path{p}]):=\coc{z}(\path{p}),\quad[\path{p}]\in\shomgroup{R}. 
\end{equation}
The correspondence $\cocyclecat{\net{R}{R}}\ni\coc{z}\mapsto\pi_{\coc{z}}$ maps $1$-cocycles $\coc{z}$, equivalent in $\bop{\varhs{H}_0}$, into equivalent unitary representations $\pi_{\coc{z}}$ of $\shomgroup{R}$ in $\varhs{H}_0$; up to equivalence this map is injective. 
\end{theorem}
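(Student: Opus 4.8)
The plan is to obtain the statement as the specialization of Theorem~\ref{th:cocycle-representations} to $\poset{P}=\poset{R}$, $\net{R}{P}=\net{R}{R}$, using the identification $\shomgroup{R}=\bZ$ recorded above: $\poset{R}$ is pathwise connected and $\net{R}{R}$ is irreducible and Haag dual, so all hypotheses of that theorem hold and strictly speaking the proof is a one-line reduction. Since the setting is now so concrete, I would also spell out the short self-contained argument.

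First I would check that $\pi_{\coc{z}}$ is well defined and is a unitary representation of $\shomgroup{R}$. Well-definedness is precisely the homotopy-invariance property $\path{p}\sim\path{q}\Rightarrow\coc{z}(\path{p})=\coc{z}(\path{q})$ established in the lemma preceding Theorem~\ref{th:cocycle-representations}, so $\coc{z}(\path{p})$ depends only on $[\path{p}]\in\homgroup{R}{\open{I}_0}$. Each $\coc{z}(\path{p})$ is a finite product of unitaries $\coc{z}(b)$, hence unitary; and since $\shomgroup{R}=\bZ$ is generated by the class of a single loop $\path{p}_0$, the map $\pi_{\coc{z}}$ is completely determined by the single unitary $V:=\coc{z}(\path{p}_0)$ through $\pi_{\coc{z}}(n)=V^{n}$, which manifestly makes it a unitary representation of $\bZ$ on $\varhs{H}_0$.

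Then I would settle the two halves of the equivalence claim. \emph{Forward direction}: given a unitary intertwiner $\itwiner{t}\in(\coc{z},\coc{z}_1)$ valued in $\bop{\varhs{H}_0}$, iterate $\itwiner{t}_{\partial_0 b}\coc{z}(b)=\coc{z}_1(b)\itwiner{t}_{\partial_1 b}$ along a loop $\path{p}\in\cpathset{\open{I}_0}$; the telescoping cancels every intermediate $\itwiner{t}$ and yields $\itwiner{t}_{\open{I}_0}\coc{z}(\path{p})=\coc{z}_1(\path{p})\itwiner{t}_{\open{I}_0}$, so $U:=\itwiner{t}_{\open{I}_0}$ implements $\pi_{\coc{z}}\simeq\pi_{\coc{z}_1}$. \emph{Injectivity up to equivalence}: given a unitary $U$ with $U\coc{z}(\path{p})U^{*}=\coc{z}_1(\path{p})$ for every loop $\path{p}$ at $\open{I}_0$, fix for each $a\in\simplex{0}{\poset{R}}$ a path $\path{q}_a\in\pathset{\open{I}_0}{a}$ and set $\itwiner{t}_a:=\coc{z}_1(\path{q}_a)\,U\,\coc{z}(\path{q}_a)^{*}$. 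Replacing $\path{q}_a$ by a different path alters this product exactly by the factor coming from the loop $\path{q}_a\ast\bar{\path{q}}_a'$ at $\open{I}_0$, so the intertwining hypothesis on $U$ makes $\itwiner{t}_a$ independent of the choice; appending the $1$-simplex $b$ to a path ending at $\partial_1 b$ then gives the cocycle-intertwiner identity $\itwiner{t}_{\partial_0 b}\coc{z}(b)=\coc{z}_1(b)\itwiner{t}_{\partial_1 b}$, and each $\itwiner{t}_a$ is a product of unitaries, hence unitary. Thus $\coc{z}$ and $\coc{z}_1$ are equivalent in $\bop{\varhs{H}_0}$.

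I expect the only genuine obstacle to be bookkeeping: keeping the two orientation conventions straight (order of factors in $\coc{z}(\path{p}\ast\path{q})$ versus operator multiplication, together with $\coc{z}(\bar{\path{p}})=\coc{z}(\path{p})^{*}$) so that the telescoping and the well-definedness of $\itwiner{t}_a$ come out with the factors in the correct order. I would also stress explicitly that the intertwiner $\itwiner{t}$ produced in the last step is in general \emph{not} localized — it takes values in $\bop{\varhs{H}_0}$, not in the algebras $\snet{R}{a}$ — which is exactly why the conclusion is equivalence \emph{in} $\bop{\varhs{H}_0}$ and the injectivity holds only up to that coarser relation; this is the very slack that will permit the topologically non-trivial cocycles constructed later.
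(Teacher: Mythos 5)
Your proposal is correct and follows the paper's own route: the paper proves nothing here beyond observing that the statement is Theorem \ref{th:cocycle-representations} specialized to $\poset{P}=\poset{R}$ with $\shomgroup{R}=\bZ$, which is exactly your first paragraph. The additional self-contained argument you supply (telescoping the intertwiner relation along a loop for the forward direction, and defining $\itwiner{t}_a:=\coc{z}_1(\path{q}_a)\,U\,\coc{z}(\path{q}_a)^{*}$ with path-independence guaranteed by the hypothesis on $U$ for injectivity) is sound and goes beyond what the paper writes down, and your closing remark that the resulting intertwiner is only $\bop{\varhs{H}_0}$-valued, not localized, correctly identifies why the conclusion is equivalence in $\bop{\varhs{H}_0}$ only.
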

\noindent Notice that, as a consequence, $z\in\cocyclecat{\net{R}{R}}$ is trivial if and only if the associated representation of
$\pi_z$ is the trivial one made of the unit operator only.
\subsection{Topological superselection sectors}

Now we introduce a selection criterion which generalizes DHR selection criterion; we refer to \cite{Brunetti_Ruzzi_08} for motivations and further details. Consider as before the unitary net representation $(\iota, \bI)$ of $\net{W}{R}$ over $\varhs{H}_{0}$; following \cite{Brunetti_Ruzzi_08} we say that a net representation $\{\pi,\psi\}$ over $\varhs{H}_{0}$
is {\bf a sharp excitation of the reference representation} $\{\iota, \bI\}$, if for any $\open{J}\in \poset{R}$ and for any simply connected open set $\open{N} \subset \bS^1$, such that $\bar{\open{J}} \subset \open{N}$, it holds
\begin{equation}\label{eq:SS}
 \{\pi,\psi\}\rest_{\open{J}'\cap \open{N}} \quad \cong  \quad \{\iota, \bI\}\rest_{\open{J}'\cap \open{N}}\:. 
\end{equation}
This amounts to saying that there is a family $W^{\open{N}\open{J}} 
:= \{W^{\open{N}\open{J}}_\open{I} \:|\:  \overline{\open{I}} \subset \open{N}\:, \open{I} \subset  \open{J}'\}$ of
unitary operators in $\varhs{H}_{0}$ such that:\\

1.   $W^{\open{N}\open{J}}_\open{I}\pi_\open{I} = \iota_\open{I} W^{\open{N}\open{J}}_\open{I}$;\\

2.    $W^{\open{N}\open{J}}_\open{I}\psi_{\open{I}\widetilde{\open{I}}} = W^{\open{N}\open{J}}_{\widetilde{\open{I}}}$,      for all $\widetilde{\open{I}} \subset \open{I}$;\\

3.   $W^{\open{N}\open{J}}  = W^{\open{N}_1\open{J}}$  for any simply connected open set  $\open{N}_1$ with $\open{N} \subset  \open{N}_1$.\\

\noindent These three requirements represent the {\bf selection criterion}.
It turns out \cite[p. 14]{Brunetti_Ruzzi_08} that  $W^{\open{N}\open{J}}$ is independent from the region $\open{N}$. 
The unitary equivalence classes of irreducible unitary net representations satisfying the selection criterion are the 
{\bf superselection sectors} and the analysis of their charge structure and topological content, in the case of a generic globally hyperbolic spacetime  with dimension $\geq 3$ was the scope of the work \cite{Brunetti_Ruzzi_08}. We are dealing with a (particular) $2$-dimensional spacetime, so we expect that some of the results found there cannot apply.

\subsection{Localized cocycles}
One of the most important results, established in \cite[Th. 4.3]{Brunetti_Ruzzi_08}, is that, for globally hyperbolic spacetimes with dimension $\geq 3$, the $C^*$-category whose objects are sharp excitations of $\{\iota, \bI\}$ with arrows given by 
intertwiners, is equivalent to the subcategory of localized $1$-cocycles $\cocyclecat{\net{R}{R}}$ (with respect to $\iota$ and the associated net of von Neumann algebras $\net{R}{R} : \poset{R} \ni \open{I} \mapsto \iota(\open{I})'' =: \snet{R}{\open{I}}$). 

Although this $C^*$-category can be defined without any modifications in our 2-dimensional model, there is no guarantee for the validity of the equivalence theorem. In any cases, the proof fails to go through in the same form as for larger dimension, since an important  lemma exploited in the proof of Theorem 4.3, Theorem 4.1 of  \cite{Brunetti_Ruzzi_08} is not valid in our spacetime.\\
 
\noindent\textbf{Remark.} The absence of  irreducible cocycles different from characters when the fundamental group of the manifold is Abelian, as established in  Corollary 6.8 in \cite{Brunetti_Ruzzi_08}, no longer holds in low dimension. 
Indeed, the irreducible cocycles we are going to  present are a  counterexample.

\section{Topological cocycles}

In this section we show explicitly a topological $1$-cocycle; actually, we'll construct an entire family of (localized) $1$-cocycles with respect to the reference vacuum representation $\{\iota, \bI\}$. 
To achieve this result we need a preliminary construction. 
First of all, fix an orientation on $\bS^1$ and, in the following, refer to that orientation 
the former and the latter endpoints of $0$-simplices.
Afterwards,  define an assignment of smooth functions to $0$-simplices
\begin{equation}\label{eq:assignchi}
 \chi: \simplex{0}{\poset{R}} \ni a \mapsto \chi^{(a)} \in \smoothfuncv{a}{\bR} 
\end{equation}
 such that:
 
(i)  $\chi^{(a)}(\theta) \in [0,1]$,
 
(ii) $\chi^{(a)}(\theta) =0$ in a  neighborhood of the former endpoint of $a$,

(iii) $\chi^{(a)}(\theta) =1$ in a  neighborhood  of the latter endpoint  of $a$. 

\noindent Now we pass to $1$-simplices. Consider a $1$-simplex $b$; extend
$\chi^{(\partial_1b)}$ and $\chi^{(\partial_0b)}$ smoothly and uniquely as constant functions over $|b| \setminus \partial_1b$ and $|b| \setminus \partial_0b$; the functions so extended over the whole $|b|$ will be denoted by $\chi^{(\partial_1b)}$ and $\chi^{(\partial_0b)}$ again; finally, for every $b\in \simplex{1}{\poset{R}}$ we define the function $\chi^{(b)} \in \smoothfuncv{|b|}{\bR}$
\begin{equation} \label{eq:assignchi2} 
\chi^{(b)} := \chi^{(\partial_1b)} -  \chi^{(\partial_0b)}.
\end{equation}
Notice that this function vanishes in a neighborhood of each endpoint of $|b|$, because it must hold 
$\chi^{(\partial_1b)}(\theta) -  \chi^{(\partial_0b)}(\theta) = 0-0$ around the former endpoint, and 
$\chi^{(\partial_1b)}(\theta) -  \chi^{(\partial_0b)}(\theta) = 1 - 1$ around the latter. Therefore 
$\chi^{(b)}$ can be extended uniquely to a smooth function defined on the whole circle $\bS^1$ and supported in $|b|$. We shall
denote by $\chi^{(b)}$ again this unique extension.

Let us come to $1$-cocycles. 
Adopting the notation (\ref{eq:Ksigmaext}), we define (where as before $\iota$ is the vacuum representation of $\net{W}{R}$)
\begin{equation}\label{eq:Vf}
Z(f,g):= \iota\left(W(f,g) \right) =  W\left[2^{-1/2} (A^{1/4} f + \imath A^{-1/4} g)\right] 
\end{equation}
for $(f,g) \in \Cdata{\sdc{C}}$. In the following, to define a $1$-cocycle localized at $b\in \simplex{1}{\poset{R}}$, we shall replace the arguments $f$ 
and $g$ with $\chi^{(b)}$-smeared restrictions of those functions to 
$0$-simplices $|b|$ for any $1$-simplex $b$. The restriction is necessary in order to fulfill the localization requirement of $1$-cocycles.
 The smearing procedure is necessary too, at least for the entry of $A^{1/4}$, whose domain generally
does not include elements $\chi_{|b|}g$,
$\chi_{|b|}$ being the characteristic function of the set $|b|$.
 It, however, includes  every smoothed function
 $\chi^{(b)} g$ when $(f,g) \in \Cdata{\sdc{C}}$. \\
We are now in place to state our result defining, in fact, a family of $1$-cocycles labeled by $\chi$, $f$, $g$. The following theorem also establishes independence from 
$\chi$, up to equivalence  of the defined $1$-cocycles, irreducibility of cocycles and the fact that they are inequivalent 
if $f\neq f'$ or $g\neq g'$. \\

\begin{theorem}\label{th:propcocycle}
Fix an orientation of $\bS^1$, an assignment   $\chi : \simplex{0}{\poset{R}} \ni a \mapsto \chi^{(a)}$ as
in (\ref{eq:assignchi}) and define $\chi^{(b)}$ as in (\ref{eq:assignchi2}).
For every choice of $(f,g) \in \Cdata{\sdc{C}}$ the map 
\begin{equation} \label{eq:cocyclef}
\coc{z}^{(\chi)}_{(f,g)} : \simplex{1}{\poset{R}} \ni b \mapsto Z\left( \chi^{(b)}f, \chi^{(b)}g\right)\:,
\end{equation}
 is a  $1$-cocycle of $\cocyclecat{\net{R}{R}}$. The following further facts hold.\\
{\bf (a)} Every $1$-cocycle $\coc{z}^{(\chi)}_{(f,g)}$ is irreducible.\\
{\bf (b)}  For fixed $(f,g)\in \Cdata{\sdc{C}}$, but different assignments $\chi_1,\chi_2$, the cocycles 
 $\coc{z}^{(\chi_1)}_{(f,g)}$ and $\coc{z}^{(\chi_2)}_{(f,g)}$ are equivalent.\\
 {\bf (c)}   For a fixed assignment $\chi$, the $1$-cocycles  $\coc{z}^{(\chi)}_{(f,g)}$ and $\coc{z}^{(\chi)}_{(f',g')}$ are equivalent if and only 
 if $f=f'$ and $g=g'$.\\
{\bf (d)} If the assignment $\chi : \simplex{0}{\poset{R}} \ni a \mapsto \chi^{(a)}$ is covariant\footnote{Covariant assignments $\chi : \simplex{0}{\poset{R}} \ni a \mapsto \chi^{(a)}$
with respect to the isometry group of $\theta$-displacements on $\bS^1$ do exist
 as the reader can easily prove.}
with respect to the isometry group of $\theta$-displacement on $\bS^1$:
\begin{equation}\label{eq:covset}
\chi^{(\beta_r(a))} = \beta^*_r\chi^{(a)} \quad \forall\, a \in \simplex{0}{\poset{R}}, \forall\, r\in \bR,  
\end{equation}
then, for every $(f,g) \in \Cdata{\sdc{C}}$  and for every $r\in\bR$ and $b\in \simplex{1}{\poset{R}}$,
$$U_{(r,0)} \coc{z}^{(\chi)}_{(f,g)}(b) U_{(r,0)}^* = \coc{z}^{(\chi)}_{(\beta^*_r (f) ,\beta^*_r(g))} (\beta_r(b)),$$
where $U_{(r,0)}$ is the one-parameter unitary group implementing $\theta$-displacements $\beta_r$ and leaving the vacuum invariant, introduced in theorem \ref{prop:hyperbolic_generators}, and $\beta_r^*$ is the pull-back action of $\theta$-displacements on functions defined on $\bS^1$. 
\end{theorem}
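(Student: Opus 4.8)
The plan is to reduce everything to Weyl-algebra book-keeping. \textbf{Two lemmas on the assignment $\chi$.} First I would record (BK1): for every $c\in\simplex 2{\poset R}$ one has $\chi^{(\partial_0 c)}+\chi^{(\partial_2 c)}=\chi^{(\partial_1 c)}$ as smooth functions on $\bS^1$; writing $v_0,v_1,v_2$ for the three ``vertices'' of $c$ one finds, from the simplicial identities, that the faces have vertex pairs $(v_1,v_2),(v_0,v_2),(v_0,v_1)$, so that $\chi^{(\partial_0 c)}=\chi^{(v_1)}-\chi^{(v_2)}$, $\chi^{(\partial_2 c)}=\chi^{(v_0)}-\chi^{(v_1)}$, $\chi^{(\partial_1 c)}=\chi^{(v_0)}-\chi^{(v_2)}$ once one checks that ``extend over $|b|$ by constants, then extend by $0$ to $\bS^1$'' is compatible on $|c|$; the identity is then immediate. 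Second, (BK2): for any $b,b'\in\simplex 1{\poset R}$, $\delta_{\sdc C}\big((\chi^{(b)}f,\chi^{(b)}g),(\chi^{(b')}f,\chi^{(b')}g)\big)=\int_{\bS^1}\chi^{(b)}\chi^{(b')}(fg-fg)\,d\theta=0$, i.e.\ all the Cauchy data involved are \emph{parallel}. By (\ref{eq:Ksigma}) and the Weyl relations, (BK2) says that on parallel data $Z$ is additive with no phase, $Z(\phi f,\phi g)Z(\phi'f,\phi'g)=Z((\phi+\phi')f,(\phi+\phi')g)$, so any finite product of operators $Z(\phi_i f,\phi_i g)$ collapses to the single Weyl operator $W[K((\sum_i\phi_i)f,(\sum_i\phi_i)g)]$. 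Given these, $\coc z^{(\chi)}_{(f,g)}$ is a localized $1$-cocycle: $\supp{\chi^{(b)}}\subseteq|b|$ gives $\coc z^{(\chi)}_{(f,g)}(b)=\iota(W(\chi^{(b)}f,\chi^{(b)}g))\in\iota(\sweyl{|b|})''=\snet R{|b|}$, and the cocycle identity $\coc z(\partial_0 c)\coc z(\partial_2 c)=\coc z(\partial_1 c)$ follows from (BK2) (which kills the Weyl phase) together with (BK1) (which adds the arguments).

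\textbf{Structural lemma.} Next I would prove the backbone for (a), (c) and the topological content. Fix the base $0$-simplex $\open I_0$. On $\poset R_{\open N}$ for a proper (hence simply connected) interval $\open N\ni\open I_0$, Theorem \ref{th:cocycle-representations} makes $\coc z^{(\chi)}_{(f,g)}$ path-independent, so $V_a:=\coc z^{(\chi)}_{(f,g)}(\path p)$ (any path $a\to\open I_0$ inside $\open N$) is well defined, $V_{\open I_0}=1$, and $\coc z^{(\chi)}_{(f,g)}(b)=V_{\partial_0 b}^{*}V_{\partial_1 b}$ on $\poset R_{\open N}$; by (BK2) each $V_a=W[\theta_a]$ is a \emph{single} Weyl operator. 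For a loop $\path p=\{b_n,\dots,b_1\}$ the $1$-forms satisfy $d\chi^{(b_i)}=\omega_{\partial_1 b_i}-\omega_{\partial_0 b_i}$ with $\omega_c$ the canonical bump-form supported in $c$ and of total mass $1$, so $d(\sum_i\chi^{(b_i)})$ telescopes to $0$ and $\sum_i\chi^{(b_i)}$ is constant on $\bS^1$; this constant is additive under composition and invariant under elementary deformations, and a direct check on one model once-winding loop gives the value $1$, so $\sum_i\chi^{(b_i)}\equiv n$ for a loop of winding number $n$. Hence $\coc z^{(\chi)}_{(f,g)}(\path p)=Z(nf,ng)=\iota(W(f,g))^{n}$, i.e.\ $\pi_{\coc z^{(\chi)}_{(f,g)}}$ is the representation $n\mapsto\iota(W(f,g))^{n}$ of $\shomgroup R=\bZ$, which by Theorem \ref{th:cocycle_representations_spec} is nontrivial (equivalently $\coc z^{(\chi)}_{(f,g)}$ is path-dependent) precisely when $(f,g)\neq(0,0)$.

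\textbf{Parts (d), (b), (a), (c).} For \textbf{(d)}: $U_{(r,0)}$ implements the $\theta$-displacement automorphism, so $U_{(r,0)}\iota(W(\Phi,\Pi))U_{(r,0)}^{*}=\iota(W(\beta_r^{*}\Phi,\beta_r^{*}\Pi))$; taking $(\Phi,\Pi)=(\chi^{(b)}f,\chi^{(b)}g)$, using multiplicativity of $\beta_r^{*}$ and that it commutes with the extension operations, and (\ref{eq:covset}) in the form $\beta_r^{*}\chi^{(b)}=\chi^{(\beta_r(b))}$, gives the claim. For \textbf{(b)}: the field $t_a:=Z((\chi_1^{(a)}-\chi_2^{(a)})f,(\chi_1^{(a)}-\chi_2^{(a)})g)$ is a unitary localized in $a$ (since $\chi_1^{(a)}-\chi_2^{(a)}$ vanishes near both endpoints of $a$), and by the parallel-data product rule the relation $t_{\partial_0 b}\coc z^{(\chi_1)}_{(f,g)}(b)=\coc z^{(\chi_2)}_{(f,g)}(b)\,t_{\partial_1 b}$ reduces to the pointwise identity $\chi_2^{(b)}-\chi_1^{(b)}=(\chi_1^{(\partial_0 b)}-\chi_2^{(\partial_0 b)})-(\chi_1^{(\partial_1 b)}-\chi_2^{(\partial_1 b)})$, immediate from $\chi_i^{(b)}=\chi_i^{(\partial_1 b)}-\chi_i^{(\partial_0 b)}$. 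For \textbf{(a)}: if $t$ is a localized unitary self-intertwiner, the trivialization above gives $t_a=V_a^{*}t_{\open I_0}V_a$ on $\poset R_{\open N}$; since $V_a=W[\theta_a]$ and conjugation by $W[\theta_a]$ carries the Weyl algebra $\snet R a=\alg R[M_{a}]$ onto itself, the localization $t_a\in\snet R a$ forces $t_{\open I_0}\in\snet R a$ for every $a\subseteq\open N$; choosing $a_1,a_2\subseteq\open N$ disjoint and invoking $\snet R{a_1}\cap\snet R{a_2}=\bC\cdot1$ (Lemma \ref{lem:Tomita}) yields $t_{\open I_0}=c\cdot1$, whence $t_a=c\cdot1$ everywhere by pathwise connectedness, i.e.\ irreducibility. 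For \textbf{(c)}, ``if'' is the identity intertwiner; for ``only if'' let $t$ be a localized unitary intertwiner $\coc z^{(\chi)}_{(f,g)}\to\coc z^{(\chi)}_{(f',g')}$. Evaluating the intertwiner relation on a once-winding loop based at $a$ and using the structural lemma gives $t_aW[\psi]t_a^{*}=W[\psi']$ with $\psi=K(f,g)$, $\psi'=K(f',g')$; as $t_a\in\alg R[M_a]$ commutes with $\alg R[M_a]'=\alg R[M_a']$ (eq.\ (\ref{eq:LRT})), the Weyl relations give $e^{-i\sigma(\psi'-\psi,\phi)}=1$ and hence, by real-linearity of $M_a'$, $\sigma(\psi'-\psi,\phi)=0$ for all $\phi\in M_a'$; thus $\psi'-\psi\in(M_a')'=M_a$ for every $a\in\poset R$, so $\psi'-\psi\in M_{a_1}\cap M_{a_2}=\{0\}$ for disjoint $a_1,a_2$ (Lemma \ref{lem:Tomita}), and $K(f,g)=K(f',g')$ forces $(f,g)=(f',g')$ by injectivity of $K$ (Proposition \ref{prop:sympl-scalar}).

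\textbf{Main obstacle.} I expect the delicate point to be the structural lemma, specifically the loop telescoping: because the functions $\chi^{(\partial_j b_i)}$ are extended over mutually different supports $|b_i|$, the cancellation is \emph{not} the naive one valid on a simply connected space, and the surviving constant is exactly the winding number --- so it must be organized through the bump-forms $\omega_c$, with the companion identity (BK1) and the value-on-a-model-loop computation handled carefully. Equally, one must check that the phase-cancellations in (BK2) are uniform enough to collapse products of $\coc z(b_i)$ to single Weyl operators, since that is what makes ``conjugation preserves $\snet R a$'' available in (a). Finally it is worth stressing that, unlike the $\geq 3$ analysis of \cite{Brunetti_Ruzzi_08}, the proof of (a) and (c) rests essentially on the $2$-dimensional inputs LRT duality (\ref{eq:LRT}), punctured Haag duality and Lemma \ref{lem:Tomita} ($M_I\cap M_J=\{0\}$ for disjoint $I,J$), which here substitute for the localization-of-the-fundamental-group theorem that fails in this model.
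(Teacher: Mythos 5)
Your proposal is correct, and parts (b), (d) and the verification of the cocycle identity follow essentially the paper's own route: your bookkeeping identities (BK1)--(BK2) are exactly the two facts the paper uses (the simplicial relations $\partial_1 b_1=\partial_1 b_2$, $\partial_0 b_0=\partial_0 b_1$, $\partial_0 b_2=\partial_1 b_0$ after extending all $\chi^{(\partial_i b_j)}$ over $|c|$, and the vanishing of the Weyl phase $\int \phi\phi'(fg-gf)\,d\theta=0$ between ``parallel'' data). Where you genuinely diverge is in (a) and (c). The paper proves (a) locally: it expands $\itwiner{V}_{\partial_1 b}$ as a strongly convergent series of Weyl generators in $\snet{R}{\partial_1 b}$, notes that conjugation by $\coc{z}^{(\chi)}_{(f,g)}(b)$ only produces phases, concludes $\itwiner{V}_{\partial_0 b}\in\snet{R}{\partial_1 b}$ for generic simplices, and finishes with $\bigcap_{\open{I}}\snet{R}{\open{I}}=\bC\cdot 1$ --- no trivialization on simply connected subposets and no loop formula are needed. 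Likewise the paper's (c) is local: it isolates the gap $\open{I}_b$ between the (disjointly closed) endpoints of a positively oriented $1$-simplex, uses that $\chi^{(b)}\equiv 1$ there to commute the relation $\itwiner{V}_{\partial_0 b}\widetilde{\itwiner{V}}^*_{\partial_1 b}=\coc{z}^{(\chi)}_{(f'-f,g'-g)}(b)e^{\imath\varphi}$ against arbitrary $Z(r,s)$ with $r,s$ supported in $\open{I}_b$, and deduces $f=f'$, $g=g'$ first on $\open{I}_b$ and then everywhere by moving $\open{I}_b$ around the circle. Your route instead front-loads the structural lemma --- the telescoping identity $\sum_i\chi^{(b_i)}\equiv n$ on a winding-number-$n$ loop and hence $\coc{z}^{(\chi)}_{(f,g)}(\path{p})=Z(nf,ng)$ --- which in the paper is deferred to the separate theorem on representations of $\stdhomgroup{\bS^1}$ (formula (\ref{eq:repzf})), and then derives (c) from the loop relation $t_aZ(f,g)t_a^*=Z(f',g')$ together with $(M_a')'=M_a$ (lemma \ref{lem:central_lemma}), lemma \ref{lem:Tomita} and injectivity of $K$. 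Both are sound; your version is more conceptual and unifies the cocycle statement with the computation of $\pi_{(f,g)}$, at the price of the bump-form telescoping you rightly flag as the delicate step and of invoking the Leyland--Roberts--Testard duality machinery, whereas the paper's local arguments need only Weyl relations and irreducibility/factoriality of the net.
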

\begin{proof}
Let us prove that (\ref{eq:cocyclef})  defines a localized $1$-cocycle. First we notice that the unitary operator
$$Z(\chi^{(b)}f,\chi^{(b)}g):= W\left[2^{-1/2} (A^{1/4} (\chi^{(b)}f) + \imath A^{-1/4} (\chi^{(b)}g))\right]$$
is an element of $\snet{R}{|b|}$ since $\supp{\chi^{(b)}}\subset |b|$ as previously noticed.
 So, only the identity (\ref{eq:cocycle_identity}) remains to be proved. It can be established as follows. Consider a 
 $2$-cocycle $c$. To simplify the notation we define $b_k := \partial_k c$ for $k=0,1,2$.
 Since $|c|$ cannot coincide with the whole circle (and this is the crucial point),
 all functions $\chi^{(\partial_1 b_j)}$ can uniquely and smoothly be extended to functions defined on $|c|$ by defining them as constant functions outside their original domain. The extension procedure does not affect the definition of the functions $\chi^{(b_i)}$. We shall exploit this extension from now on.
 We have to show that $\coc{z}^{(\chi)}_{(f,g)}(b_0)\coc{z}^{(\chi)}_{(f,g)}(b_2) = \coc{z}^{(\chi)}_{(f,g)}(b_1)$, that is
 $$Z\left(\chi^{(b_0)}f, \chi^{(b_0)} g\right)
Z\left(\chi^{(b_2)}f, \chi^{(b_2)} g\right) = Z\left(\chi^{(b_1)}f, \chi^{(b_1)} g\right)$$
that is, in turn, 
\begin{equation}\label{eq:idcocycle} 
\begin{split}
&Z\left((\chi^{(\partial_1b_0)}-\chi^{(\partial_0b_0)})f,(\chi^{(\partial_1b_0)}-\chi^{(\partial_0b_0)})g\right)\cdot\\
&\cdot Z\left((\chi^{(\partial_1b_2)}-\chi^{(\partial_0b_2)})f,(\chi^{(\partial_1b_2)}-\chi^{(\partial_0b_2)})g\right)=\\  
&= Z\left((\chi^{(\partial_1b_1)}-\chi^{(\partial_0b_1)})f,(\chi^{(\partial_1b_1)}-\chi^{(\partial_0b_1)})g\right). 
\end{split}
\end{equation}
Now notice that, in view of the very definition of a $2$-cocycle, $\partial_1 b_1= \partial_1b_2$, 
$\partial_0 b_0= \partial_0b_1$ and $\partial_0 b_2 = \partial_1 b_0$, so that the left-hand side of (\ref{eq:idcocycle})
can be recast as
\begin{equation}
\begin{split}
&Z\left((\chi^{(\partial_1b_1)}-\chi^{(\partial_0b_2)})f,(\chi^{(\partial_1b_1)}-\chi^{(\partial_0b_2)})g\right)\cdot\\
&\cdot Z\left((\chi^{(\partial_0b_2)}-\chi^{(\partial_0b_1)})f,(\chi^{(\partial_0b_2)}-\chi^{(\partial_0b_1)})g\right)
\end{split}
\end{equation}
where {\em all} functions $\chi^{(\partial_ib_j)}$ are now defined on the whole $|c|$
and the differences $\chi^{(\partial_0b_2)}-\chi^{(\partial_0b_1)}$, $\chi^{(\partial_1b_1)}-\chi^{(\partial_0b_2)}$ are  defined everywhere on $\bS^1$ and compactly supported in $|c|$. Finally, making use of Weyl relations, taking the definition (\ref{eq:Vf}) of $Z(f,g)$ into account, we find that the terms $\pm\chi^{(\partial_0b_2)}$ cancel each other in the final exponent, and the left-hand side of (\ref{eq:idcocycle}) is:
\begin{equation}
\begin{split}
&Z\left((\chi^{(\partial_1b_1)}-\chi^{(\partial_0b_1)})f,(\chi^{(\partial_1b_1)}-\chi^{(\partial_0b_1)})g\right)\cdot\\
&\cdot e^{\imath \int_{\bS^1} (\chi^{(\partial_1b_1)}-\chi^{(\partial_0b_2)})(\chi^{(\partial_0b_2)}-\chi^{(\partial_0b_1)})(fg-gf)d\theta}\\ 
\end{split}
\end{equation}
Since the phase vanishes, we have found the very right-hand side of  (\ref{eq:idcocycle})  as wanted.\\
\textbf{(a)} Let us pass to the irreducibility property of the defined cocycles.
Let $\itwiner{V}: \simplex{0}{\poset{R}} \ni \open{I} \mapsto \itwiner{V}_\open{I} \in \snet{R}{\open{I}}$ be a field of
 unitary operators such that
\begin{equation}\label{eq:equivalence3}
\itwiner{V}_{\partial_0 b}  = \coc{z}^{(\chi)}_{(f,g)}(b) \itwiner{V}_{\partial_1 b}\, \coc{z}^{(\chi)*}_{(f,g)}(b)\:, \quad  \quad \mbox{for all $b\in \simplex{1}{\poset{R}}$.} 
\end{equation}
Since $\itwiner{V}_{\partial_1 b} \in \snet{R}{\partial_1b}$, then $\itwiner{V}_{\partial_1 b} = \sum_k c_k \iota(W(r_k,s_k))$ where $r_k,s_k$ are smooth real functions supported in $\partial_1b$, $c_k \in \bC$, and the series converges in the strong operatorial topology. Therefore, using Weyl relations,
\begin{equation}
\begin{split}
\coc{z}^{(\chi)}_{(f,g)}(b)\itwiner{V}_{\partial_1 b}\,\coc{z}^{(\chi)*}_{(f,g)}(b) &= \sum_k c_k \coc{z}^{(\chi)}_{(f,g)}(b)\,\iota(W(r_k,s_k))\,\coc{z}^{(\chi)}_{(f,g)}(b)^{*}=\\ 
&= \sum_k c_k\,\iota(W(r_k,s_k))\exp\{\imath\varphi_k\}\\
\end{split}
\end{equation}
for some reals $\varphi_k$. The final series converges in the strong operatorial topology, too.
Since $\iota(W(r_k,s_k)) \in \snet{R}{\partial_1 b}$ for hypotheses, 
$c_k\iota(W(r_k,s_k))\exp\{\imath\varphi_k\}\in \snet{R}{\partial_1 b}$ for every $k$, and thus we also have
$ \coc{z}^{(\chi)}_{(f,g)}(b)\itwiner{V}_{\partial_1 b}\,\coc{z}^{(\chi)*}_{(f,g)}(b)\in\snet{R}{\partial_1 b}$
since $\snet{R}{\partial_1 b}$ is closed with respect to the strong operatorial topology. As a consequence, it must hold
$$\itwiner{V}_{\partial_0 b}  = \coc{z}^{(\chi)}_{(f,g)}(b)\itwiner{V}_{\partial_1 b}\,\coc{z}^{(\chi)*}_{(f,g)}(b) \in \snet{R}{\partial_1 b}.$$
Since $\partial_0b, \partial_1 b \in \simplex{0}{\poset{R}}$ and $\itwiner{V}_{\partial_0 b}$ are generic, we have found that 
$\itwiner{V}_a \in \bigcap_{\open{I} \in \poset{R}} \snet{R}{\open{I}} = \bC\cdot 1$ in view of the irreducibility property of the net.\\
\textbf{(b)} Let us establish the equivalence of cocycles associated to different maps $\chi$ but with the same $(f,g)$.
 If $\chi_1$ and $\chi_2$ are defined as in (\ref{eq:assignchi}), for every $a\in \simplex{0}{\poset{R}}$ the function
 $\Delta\chi^{(a)} := \chi_1^{(a)}-\chi_2^{(a)}$ is smooth and compactly supported in the open set  $a$, so that it can uniquely and 
 smoothly be extended to a smooth function over $\bS^1$ supported in $a$. As usual,  we indicate by $\Delta\chi^{(a)}$ this unique extension.
Now, define the field of unitaries $\itwiner{V}: \simplex{0}{\poset{R}} \ni a \mapsto \itwiner{V}_a :=  Z\left( \Delta \chi^{(a)} f, \Delta \chi^{(a)} g\right) \in \snet{R}{a}$.
For every $b\in \simplex{1}{\poset{R}}$, we get
\begin{equation}
\begin{split}
\itwiner{V}_{\partial_0b} \coc{z}^{(\chi_1)}_{(f,g)}(b) &=  Z\left( \Delta \chi^{(\partial_0 b)} f, \Delta \chi^{(\partial_0 b)} g\right)
 Z\left((\chi_1^{(\partial_1 b)}- \chi_1^{(\partial_0 b)})f, (\chi_1^{(\partial_1 b)}- \chi_1^{(\partial_0 b)}) g\right)\\
  &= Z\left((\chi_1^{(\partial_1 b)}- \chi_1^{(\partial_0 b)} + \chi_1^{(\partial_0 b)} - \chi_2^{(\partial_0 b)})f, 
(\chi_1^{(\partial_1 b)}- \chi_1^{(\partial_0 b)} + \chi_1^{(\partial_0 b)} - \chi_2^{(\partial_0 b)}) g\right)\\
&= Z\left((\chi_1^{(\partial_1 b)}- \chi_2^{(\partial_0 b)})f,(\chi_1^{(\partial_1 b)} - \chi_2^{(\partial_0 b)}) g\right)\\ 
\end{split}
\end{equation}
where, passing from the first to the second line, we have omitted a phase arising from Weyl relations, since it vanishes as before.
With an analogous computation we similarly find:
$$ \coc{z}^{(\chi_2)}_{(f,g)}(b) \itwiner{V}_{\partial_1b} = Z\left((\chi_1^{(\partial_1 b)}- \chi_2^{(\partial_0 b)})f, 
(\chi_1^{(\partial_1 b)} - \chi_2^{(\partial_0 b)}) g\right)$$
so that $\itwiner{V}_{\partial_0b} \coc{z}^{(\chi_1)}_{(f,g)}(b) =  \coc{z}^{(\chi_2)}_{(f,g)}(b) \itwiner{V}_{\partial_1b}$ as wanted. \\
\textbf{(c)} Consider a positively-oriented $1$-symplex $b$ with $\overline{\partial_0 b}\cap\overline{\partial_1 b}=\emptyset$; let us indicate by $\open{I}_b\in \simplex{0}{\poset{R}}$ the open proper 
segment lying between $\partial_0 b$ and  $\partial_1 b$.
If  $\coc{z}^{(\chi)}_{(f,g)}$ and $\coc{z}^{(\chi)}_{(f',g')}$ are equivalent, we may write
$\itwiner{V}_{\partial_0b}  =  \coc{z}^{(\chi)}_{(f',g')}(b)\itwiner{V}_{\partial_1b} \coc{z}^{(\chi)*}_{(f,g)}(b),$
 for some unitaries $\itwiner{V}_{\partial_j b} \in \snet{R}{\partial_jb}$. Therefore
 $\itwiner{V}_{\partial_0b}  =  \coc{z}^{(\chi)}_{(f',g')}(b)\coc{z}^{(\chi)*}_{(f,g)}(b)\widetilde{\itwiner{V}_{\partial_1b}}\:,$
 and thus $\itwiner{V}_{\partial_0b} \widetilde{\itwiner{V}_{\partial_1b}}^*  =  \coc{z}^{(\chi)}_{(f',g')}(b)\coc{z}^{(\chi)*}_{(f,g)}(b)$
where we have introduced the unitary operator $\widetilde{\itwiner{V}_{\partial_1b}}:= \coc{z}^{(\chi)}_{(f,g)}(b)
 \itwiner{V}_{\partial_1b}\,\coc{z}^{(\chi)*}_{(f,g)}(b)$. As ${\itwiner{V}_{\partial_1b}} \in \snet{R}{\partial_1b}$, 
 following an argument as that in the proof 
 of (a), we achieve $\widetilde{\itwiner{V}_{\partial_1b}} \in  \snet{R}{\partial_1b}$ and so
 $\widetilde{\itwiner{V}_{\partial_1b}}^{*} \in  \snet{R}{\partial_1b}$. 
 The term $\coc{z}^{(\chi)}_{(f',g')}(b)  \coc{z}^{(\chi)*}_{(f,g)}(b)$ can be computed and, in view of Weyl relations, it finally arises
\begin{equation}\label{eq:centralN} 
\itwiner{V}_{\partial_0b} \widetilde{\itwiner{V}_{\partial_1b}}^*  =  \coc{z}^{(\chi)}_{(f'-f,g'-g)}(b) e^{\imath\varphi},
\end{equation}
where $\varphi \in \bR$ depends on $f,f',g,g'$ and $\chi$.
Now consider two real smooth functions $r,s$ supported in $\open{I}_b$. The Weyl generator $Z(r,s)$ belongs to $\snet{R}{\open{I}_b}$ and thus it commutes with
both $\widetilde{\itwiner{V}_{\partial_1b}}^*$ and $\itwiner{V}_{\partial_0b}$ so that (\ref{eq:centralN}) produces (notice that $\chi^{(b)}=1$
on $\open{I}_b$)
\begin{eqnarray}
\itwiner{V}_{\partial_0b} \widetilde{\itwiner{V}_{\partial_1b}}^*  &=& Z(r,s) \coc{z}^{(\chi)}_{(f'-f,g'-g)}  Z(r,s)^* e^{\imath\varphi}=\nonumber\\
&=& \coc{z}^{(\chi)}_{(f'-f,g'-g)} e^{\imath\varphi} \exp\left\{\imath \int_{\bS_1} ((f'-f)s - (g'-g)r) d\theta \right\}.\nonumber 
\end{eqnarray}
Comparing with  (\ref{eq:centralN}) we conclude that it must be 
$$\exp\left\{\imath \int_{\bS_1} ((f'-f)s - (g'-g)r) d\theta \right\}  = 1\:.$$
Since $r,s$ are arbitrary we have that $f'-f= 0$, $g'- g =0$ on $\open{I}_b$. Since the procedure can be implemented fixing 
$\open{I}_b$ as a sufficiently small neighborhood of every point on $\bS^1$, we conclude that $f=f'$ and $g=g'$ everywhere on $\bS^1$.
\textbf{(d)} Referring to theorem \ref{th:vacuum} one finds 
\begin{equation}
\begin{split}
U_{(r,0)} \coc{z}^{(\chi)}_{(f,g)}(b) U_{(r,0)}^* &= e^{-\imath r P^\otimes} \coc{z}^{(\chi)}_{(f,g)}(b) e^{\imath r P^\otimes}  =\\
&=W\left[e^{-\imath r P} 2^{-1/2} (A^{1/4} (\chi^{(b)}  f) + \imath A^{-1/4} (\chi^{(b)} g))\right]=\\
 &= W\left[ 2^{-1/2} A^{1/4} e^{-\imath r P}   (\chi^{(b)}  f)
+ \imath\,2^{-1/2} A^{-1/4} e^{-\imath r P}  (\chi^{(b)}  g)\right]\\
\end{split}
\end{equation}
where we used the fact that $e^{-\imath r P}$ and the spectral measure of $A$ commute as one can easily prove
exploiting the definition of $A$ and the fact that $P$ is the unique self-adjoint extension of the essentially self-adjoint operator $-\imath \frac{d}{d\theta} : 
\smoothfuncv{\bS^1}{\bC}\to \smoothfuncv{\bS^1}{\bC}$.  On the other hand, by direct 
inspection and working in Fourier representation, one finds that $(e^{-\imath r P} h)(\theta) = h(\theta -r) =: (\beta^*_r(h))(\theta)$
for every $h\in \sqint{\bS^1}{\theta}$. This, joined to (\ref{eq:covset}), implies immediately the validity 
of the thesis.
\end{proof}

\subsection{Topological representations of $\stdhomgroup{\bS^1}$} 

Int this section we state and prove some properties of the unitary representations of $\stdhomgroup{\bS^1}$ associated to the found cocycles.
 
\begin{theorem}  \label{th:top_cocycles_repr}
Consider the $1$-cocycle $\coc{z}^{(\chi)}_{(f,g)}$ defined in theorem \ref{th:propcocycle} and the associated 
 representation (\ref{eq:cocycle_repr_spec}) of $\stdhomgroup{\bS^1}\equiv \bZ$. In this case the representation reads
 \begin{equation}\label{eq:repzf}
 \pi_{(f,g)} : \bZ \ni n \mapsto Z(nf,ng),
 \end{equation}
 where it is manifest that the representation does not depend on the choice of $\chi$; moreover, it enjoys the following properties:\\ 
{\bf (a)} 
$\pi_{(f,g)}$ is trivial
-- equivalently $\coc{z}^{(\chi)}_{(f,g)}$ is trivial -- if and only if $f= g = 0$.\\
{\bf (b)}  For every pair  $(f,g), (f',g') \in\Cdata{\sdc{C}}\times\Cdata{\sdc{C}}$ with $(f,g) \neq (0,0) \neq (f',g')$,
 the unitary representations $\pi_{(f,g)}$ and $\pi_{(f',g')}$ are unitarily
equivalent.\\
{\bf (c)} For every $(f,g) \in \Cdata{\sdc{C}}$,  every $r\in\bR$ and every $n\in \bZ$, it holds
$$U_{(r,0)} \pi_{(f,g)}(n) U_{(r,0)}^* = \pi_{(\beta^*_r (f) ,\beta^*_r(g))} (n)\:,$$
where $U_{(r,0)}$ is the one-parameter unitary group implementing $\theta$-displacements $\beta_r$ and leaving the vacuum invariant, and $\beta_r^*$ is the pull-back action of $\theta$-displacements on functions defined on $\bS^1$. \\
{\bf (d)} If $(0,0) \neq (f,g)\in \Cdata{\sdc{C}}$, the space $\varhs{H}_0 = \fock{\hs{H}_0}$ decomposes as a countably infinite Hilbert sum of closed pairwise orthogonal 
subspaces $\varhs{H}_0 = \bigoplus_{k=0}^{+\infty} \varhs{H}^{(f,g)}_k$ such that the following holds for $k\in \bN$.

(i) $\varhs{H}^{(f,g)}_k$ is invariant under $\pi_{(f,g)}$.

(ii) There is a unitary map $U^{(f,g)}_k : \varhs{H}^{(f,g)}_k \to\sqint{\bR}{x}$ such that $\pi_{(f,g)} \rest_{\varhs{H}_k}$
admits a direct integral  decomposition into one-dimensional irreducible representations $\lambda_x$ of $\bZ$ (\ref{eq:lambdax})
as
\begin{equation}\label{eq:dirdec} 
U^{(f,g)*}_k \pi_{(f,g)} \rest_{\varhs{H}_k} U^{(f,g)}_k = \int_{\bR}^{\oplus} dx\:\: \lambda_x , 
\end{equation}
where $\sqint{\bR}{x}= \int_{\bR}^{\oplus} dx\:\:\hs{H}_x $, with $\hs{H}_x := \bC$ and $dx$ being the Lebesgue measure on $\bR$. 
\end{theorem}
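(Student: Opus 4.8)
The plan is to first pin down the explicit form of the representation $\pi_{(f,g)}$ of $\stdhomgroup{\bS^1}\simeq\bZ$ attached to $\coc{z}^{(\chi)}_{(f,g)}$ through (\ref{eq:cocycle_repr_spec}). Fix a positively oriented loop $\path{p}=\{b_N,\dots,b_1\}$ based at $\open{I}_0$ winding once around $\bS^1$, so that $[\path{p}]$ generates $\bZ$. Using $Z(h_1,h_2)=\iota(W(h_1,h_2))$ and the Weyl relations, $\coc{z}^{(\chi)}_{(f,g)}(\path{p})=\prod_i Z(\chi^{(b_i)}f,\chi^{(b_i)}g)$ collapses to $Z\big((\sum_i\chi^{(b_i)})f,(\sum_i\chi^{(b_i)})g\big)$, because every Weyl phase is $e^{-\frac{\imath}{2}\int_{\bS^1}\chi^{(b_i)}\chi^{(b_j)}(fg-fg)\,d\theta}=1$: the two entries of each symplectic pairing differ only by the scalar factors $\chi^{(b_i)},\chi^{(b_j)}$. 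The path relations $\partial_1b_{i+1}=\partial_0b_i$, $\partial_1b_1=\partial_0b_N=\open{I}_0$ make the telescoping sum $\sum_i\chi^{(b_i)}=\sum_i(\chi^{(\partial_1b_i)}-\chi^{(\partial_0b_i)})$ equal to the constant function $1$ on $\bS^1$ (a short inspection, using that $\chi^{(a)}$ vanishes near the former endpoint of $a$ and equals $1$ near the latter, and that adjacent bumps $\chi^{(b_i)},\chi^{(b_{i+1})}$ add up to $1$ on the overlap of their supports). Hence $\coc{z}^{(\chi)}_{(f,g)}(\path{p})=Z(f,g)$, and since $W[\phi]^n=W[n\phi]$ gives $Z(f,g)^n=Z(nf,ng)$ and $\coc{z}(\bar{\path{p}})=\coc{z}(\path{p})^*$, I get $\pi_{(f,g)}(n)=\coc{z}^{(\chi)}_{(f,g)}(\path{p}^{*n})=Z(nf,ng)$ for all $n\in\bZ$; in particular the right-hand side does not involve $\chi$.

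Part (a) is then immediate: $\pi_{(f,g)}$ is trivial iff $Z(f,g)=W\big[2^{-1/2}(A^{1/4}f+\imath A^{-1/4}g)\big]=1$, which holds iff $2^{-1/2}(A^{1/4}f+\imath A^{-1/4}g)=0$ (apply the operator to $\Psi_0$ to obtain a coherent vector that equals $\Psi_0$ only for the zero argument). By (c) of Proposition \ref{prop:hyperbolic_generators} both $A^{1/4}f$ and $A^{-1/4}g$ are real-valued, so the real and imaginary parts vanish separately, and injectivity of $A^{\pm1/4}$ forces $f=g=0$ (consistently with the remark following Theorem \ref{th:cocycle_representations_spec}). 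Part (c) follows from the one-particle-space computation already performed in the proof of Theorem \ref{th:propcocycle}(d): conjugating $\pi_{(f,g)}(n)=W\big[2^{-1/2}(A^{1/4}(nf)+\imath A^{-1/4}(ng))\big]$ by $U_{(r,0)}=e^{-\imath rP^\otimes}$, and using that $e^{-\imath rP}$ commutes with the spectral measure of $A$ and acts as $\beta^*_r$ on $\csqint{\bS^1}{\theta}$, gives $W\big[2^{-1/2}(A^{1/4}(n\beta^*_rf)+\imath A^{-1/4}(n\beta^*_rg))\big]=\pi_{(\beta^*_rf,\beta^*_rg)}(n)$.

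For (b) and (d) I would carry out the spectral analysis of the single unitary $Z(f,g)=W[\psi]$, with $\psi:=2^{-1/2}(A^{1/4}f+\imath A^{-1/4}g)$, which for $(f,g)\neq(0,0)$ is a nonzero vector of $\hs{H}_0=\csqint{\bS^1}{\theta}$ by part (a). Write $W[\psi]=e^{\imath\Phi(\psi)}$ with $\Phi(\psi):=\overline{\imath(a^*(\psi)-a(\psi))}$ self-adjoint; as $\Phi$ is real-homogeneous, $\Phi(\psi)=\norm{\psi}\,\Phi(e)$ with $e:=\psi/\norm{\psi}$. Using $\varhs{H}_0=\fock{\hs{H}_0}=\fock{\bC e}\otimes\fock{\{e\}^\perp}$ one has $\Phi(\psi)=\norm{\psi}\,\Phi_1(e)\otimes I$, where $\Phi_1(e)$ is the one-mode field operator on $\fock{\bC e}$; by the Schrödinger (Stone--von Neumann) representation $\Phi_1(e)$ has purely absolutely continuous spectrum equal to $\bR$ with uniform multiplicity one, so $\fock{\bC e}\cong\sqint{\bR}{x}$ with $\Phi_1(e)$ the multiplication by $x$. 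Choosing an orthonormal basis $\{\xi_k\}_{k}$ of the separable infinite-dimensional space $\fock{\{e\}^\perp}$ and setting $\varhs{H}^{(f,g)}_k:=\fock{\bC e}\otimes\bC\xi_k$ yields the countable orthogonal decomposition $\varhs{H}_0=\bigoplus_{k}\varhs{H}^{(f,g)}_k$; each $\varhs{H}^{(f,g)}_k$ is invariant under $\pi_{(f,g)}(n)=e^{\imath n\Phi(\psi)}$, which on $\varhs{H}^{(f,g)}_k\cong\sqint{\bR}{x}$ acts as multiplication by $e^{\imath n\norm{\psi}x}$. Rescaling $x\mapsto x/\norm{\psi}$ produces a unitary $U^{(f,g)}_k:\varhs{H}^{(f,g)}_k\to\sqint{\bR}{x}$ conjugating $\pi_{(f,g)}\rest_{\varhs{H}_k}$ into multiplication by $e^{\imath nx}$, i.e. into $\int_{\bR}^{\oplus}dx\;\lambda_x$, which is (d); and since the resulting decomposition of $\pi_{(f,g)}$ is manifestly independent of $(f,g)$ as long as $\psi\neq0$, any two such representations are unitarily equivalent, which is (b).

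The step I expect to be the most delicate is the telescoping identity $\sum_i\chi^{(b_i)}=1$ underlying $\pi_{(f,g)}(n)=Z(nf,ng)$: one must keep careful track of the successive one-sided constant extensions of the $\chi^{(a)}$'s over $|b_i|$ and then over $\bS^1$, and of the orientation convention, to see that the bumps $\chi^{(b_i)}$ of consecutive $1$-simplices fit together to the constant $1$. The other point requiring care is the identification of $\Phi_1(e)$ with the position operator on $\sqint{\bR}{x}$ (equivalently, that $\Psi_0$ restricts to a cyclic Gaussian vector for $\Phi_1(e)$), which I would quote from standard references on the CCR algebra (e.g. \cite{Bratteli_Robinson_II}, \cite{Araki_99}); everything else is bookkeeping with the Weyl relations and elementary Hilbert-space arguments.
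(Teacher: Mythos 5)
Your proof is correct, and the derivation of (\ref{eq:repzf}), (a) and (c) is essentially the paper's (the paper picks a specific two-simplex loop where $\chi^{(b)}+\chi^{(b')}=1$ is immediate, rather than a general winding loop with a telescoping sum, but the content is the same; for (a) the paper computes $\ip{\Psi_0}{W[\psi]\Psi_0}=e^{-\|\psi\|^2/2}$ instead of appealing to coherent-vector injectivity, again equivalent). The genuinely different part is your treatment of (b) and (d). The paper proves (b) first, by completing the one-particle vectors to maximal orthogonal systems, using the second-quantization $V\mapsto V_\otimes$ to handle equal norms, and invoking a separately proved dilation lemma (Lemma~\ref{lem:lemmaE}, constructed via the essentially self-adjoint squeezing generator $\tfrac{\imath}{2}(a(\psi)^2-a^*(\psi)^2)$) to handle unequal norms; it then proves (d) separately by applying the Stone--von~Neumann--Mackey theorem to the one-dimensional Weyl relations and exhibiting infinitely many orthonormal vectors $P\Psi_k$ in the range of the smoothing projector $P=\tfrac{1}{2\pi}\int e^{-(u^2+v^2)/4}U(u,v)\,du\,dv$. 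You instead factor $\fock{\hs{H}_0}\cong\fock{\bC e}\otimes\fock{\{e\}^{\perp}}$ with $e=\psi/\|\psi\|$, identify $\fock{\bC e}$ with the Schr\"odinger $\sqint{\bR}{x}$, and read off (d) directly, with the countable infinity of summands coming for free from $\dim\fock{\{e\}^{\perp}}=\aleph_0$; (b) then follows as a corollary since the resulting model $\bigoplus_k\int_{\bR}^{\oplus}\lambda_x\,dx$ manifestly has no memory of $(f,g)$. Your route is more structural and avoids both the squeezing-operator lemma and the explicit projector $P$; the paper's route is more hands-on and self-contained (it never uses the exponential factorization of Fock space) at the cost of two auxiliary constructions. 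One small caveat: the telescoping identity $\sum_i\chi^{(b_i)}\equiv 1$ is correct for a suitably chosen winding loop, but it genuinely requires that no three supports $|b_i|$ overlap (otherwise the pairwise cancellations you invoke no longer exhaust the sum); since you are free to pick the representative loop this is not a gap, but it is worth stating the non-triple-overlap condition explicitly, as the paper implicitly does by reducing to a two-simplex loop.
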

\begin{proof}
Let us first prove (\ref{eq:repzf}) for $n=1$. Since we know that (\ref{eq:cocycle_repr_spec}) gives rise to a group representation of $\pi_1(\bS^1)$
when $\coc{z}^{(\chi)}_{(f,g)}$ is a cocycle, to  prove (\ref{eq:repzf}) for  $n=1 \in \bZ = \stdhomgroup{\bS^1}$
 i.e.,  
\begin{equation}\label{eq:z1}
\coc{z}^{(\chi)}_{(f,g)}(p) = Z(f,g)\:, \quad \mbox{for $p\in 1$} 
\end{equation}
 it is enough to prove it for a fixed path  $p\in 1$, because the result depends only on the equivalence class. 
 To this end, if $\bS^1 = [-\pi,\pi]$ where $-\pi\equiv \pi$, consider the path $p\in 1$
made of the simplices $b$, with
$|b|:= (-\frac{\pi}{2}-\epsilon, \frac{\pi}{2} + \epsilon)$, $\partial_1 b := (-\frac{\pi}{2}-\epsilon, -\frac{\pi}{2} + \epsilon)$, 
$\partial_0 b := (\frac{\pi}{2}-\epsilon, \frac{\pi}{2} + \epsilon)$,
and  $b'$ with $|b'|:= (\frac{\pi}{2}-\epsilon, \pi] \cup [-\pi, -\frac{\pi}{2} + \epsilon)$,
$\partial_1 b' := (\frac{\pi}{2}-\epsilon, \frac{\pi}{2} + \epsilon)$, $\partial_0 b' := (-\frac{\pi}{2}-\epsilon, -\frac{\pi}{2} + \epsilon)$, where $\epsilon>0$ is so small that $\partial_0 b \cap \partial_1 b =\emptyset$. Using the definition of $\chi^{(b)}$
and $\chi^{(b')}$, it follows immediately that $\chi^{(b)} + \chi^{(b')} =1$ everywhere on $\bS^1$. 
Therefore we have that $\coc{z}^{(\chi)}_{(f,g)}(p)$ equals 
\begin{equation*}
\begin{split}
&Z(\chi^{(b')}f,\chi^{(b')}g)Z(\chi^{(b)}f,\chi^{(b)}g) = \\
&=Z((\chi^{(b')}+ \chi^{(b)})f,(\chi^{(b')}+ \chi^{(b)})g)\,e^{\imath\int_{\bS^1}\chi^{(b')}\chi^{(b)}(fg-gf) d\theta} = Z(f,g)\\
\end{split}
\end{equation*}
We have established (\ref{eq:z1}), i.e. (\ref{eq:cocycle_repr_spec}) for $n=1$. Let us generalize the result for $n\in \bZ$.
By the definition of $Z$ and making  use of Weyl commutation 
relations one gets  
\begin{equation}\label{eq:abelrel}
Z(nf,ng)Z(mf,mg) = Z((n+m)f, (n+m)g),\quad\forall\, n,m\in \bZ.
\end{equation}
Using the fact that $\pi_{\coc{z}^{(\chi)}_{(f,g)}}$ as defined in (\ref{eq:cocycle_repr_spec}) is a group representation
of  $\stdhomgroup{\bS^1}=\bZ$, which is Abelian and generated by $1$, one has that (\ref{eq:abelrel}) and (\ref{eq:z1}) together  yield eq. (\ref{eq:cocycle_repr_spec}) in the general case.\\
 Let us pass to prove (a).
 As a consequence of  (\ref{eq:repzf}), it is clear that this representation is trivial, that is $\coc{z}^{(\chi)}_{(f,g)}$ is such due to theorem \ref{eq:cocycle_repr_spec},
 if and only if $Z(f,g)=1$. It is equivalent to say $W\left[2^{-1/2}(A^{1/4}f+ \imath A^{-1/4} g)\right] =1$. By theorem \ref{prop:hyperbolic_generators}
 we know that
 $$\ip{\Psi_0}{W\left[2^{-1/2}\left(A^{1/4} f + \imath A^{-1/4} g\right)\right] \Psi_0} = e^{-\frac{1}{4}\left(\ip{f}{A^{1/2} f} + \ip{g}{A^{-1/2} g}\right)},$$
for all $f,g \in\testfuncv{\bS^1}{\bR}$. Since $\norm{\Psi_0}=1$ we have finally that $Z(f,g)=1$ entails  $\ip{f}{A^{1/2} f} + \ip{g}{A^{-1/2} g}=0$ and so $f,g =0$ because $A^{-1/4}$ and $A^{1/4}$
 are strictly positive. We have found that triviality of $\pi_{(f,g)}$ implies $f,g=0$. The converse is obvious and so
  the proof of (a) is concluded. \\Let us demonstrate (b).
 Assume that
$$\norm{2^{-1/2}(A^{1/4}f + \imath A^{-1/4}g)} = \norm{2^{-1/2}(A^{1/4}f' + \imath A^{-1/4}g')} = a \neq 0$$ 
(the case ``$=0$'' being obvious). Defining
 $\psi_1 := 2^{-1/2}(A^{1/4}f + \imath A^{-1/4}g)$
 we can complete this vector to a maximal orthogonal system $\{\psi_n\}_{n\in \bN}$ of $\sqint{\bS^1}{\theta}$
where $\norm{\psi_n}=a$ for every $n\in \bN$. Similarly, defining $\phi_1 := 2^{-1/2}(A^{1/4}f' + \imath A^{-1/4}g')$
 we can complete this vector to a maximal orthogonal system $\{\phi_n\}_{n\in \bN}$ of $\sqint{\bS^1}{\theta}$,
where $\norm{\psi_n}=a$ for every $n\in \bN$.  There is a unique unitary operator $U: \sqint{\bS^1}{\theta} \to \sqint{\bS^1}{\theta}$
completely individuated by imposing the conditions $U\psi_n = \phi_n$ for every $n\in \bN$.
It is a known property of Weyl generators $W[\psi] = e^{\overline{a(\psi)-a^*(\psi)}}$ that
$$V_\otimes W[\psi] V_\otimes^* = W[V\psi]$$
where the unitary operator $V_\otimes$ in the Fock space is defined by tensorialization of the unitary operator $V$
in the one-particle
space, with the requirement that $V_\otimes$ reduces to the identity acting on the vacuum vector.
As a consequence $U_\otimes W[\psi_1] U_\otimes^* = W[U\psi_1] = W[\phi_1]$ or, equivalently, $U_\otimes Z(f,g) U^*_\otimes
= Z(f',g')$ and thus $U_\otimes Z(nf,ng) U^*_\otimes
= Z(nf',ng')$, making use of (\ref{eq:abelrel}); we have found that $\pi_{(f,g)}$ and $\pi_{(f',g')}$ are unitarily equivalent.
 Let us pass to the the case
 $0 \neq \norm{2^{-1/2}(A^{1/4}f + \imath A^{-1/4}g)} \neq \norm{2^{-1/2}(A^{1/4}f' + \imath A^{-1/4}g')}  \neq 0$
 and define the real number
$$r := \norm{2^{-1/2}(A^{1/4}f + \imath A^{-1/4}g)}/\norm{2^{-1/2}(A^{1/4}f' + \imath A^{-1/4}g')}.$$
With the procedure used in the former case one achieves the existence of a unitary operator $V$ on the Fock space such that
$$W\left[2^{-1/2}(A^{1/4}f + \imath A^{-1/4}g)\right] = VW\left[r2^{-1/2}(A^{1/4}f' + \imath A^{-1/4}g')\right]V^*\:.$$
To conclude it is sufficient to establish the existence of a second unitary operator $E$ (depending on $f$, $g$ and $r$)
with
$$W\left[2^{-1/2}(A^{1/4}f' + \imath A^{-1/4}g')\right] = EW\left[r \: 2^{-1/2}(A^{1/4}f' + \imath A^{-1/4}g')\right]E^*\:.$$
This is an immediate consequence of the following result proved in appendix \ref{app:proofs}. 
\begin{lemma} \label{lem:lemmaE} 
Let $\hs{H}$ be a complex Hilbert space with associated bosonic Fock space $\fock{\hs{H}}$. Define the unitary Weyl
generators $W[\psi]$ as in (\ref{eq:Weylgen}) for every $\psi \in \hs{H}$. For every fixed $\psi\in \hs{H}$ with $\norm{\psi}=1$
there is a strongly continuous one-parameter group of unitary operators $\{E^{(\psi)}_\lambda\}_{\lambda\in \bR}$ such that
\begin{equation}\label{eq:exp}
E^{(\psi)}_\lambda W[\psi] E^{(\psi)*}_\lambda = W\left[e^\lambda\psi\right]\:, \quad \mbox{for all $\lambda \in \bR$.}
\end{equation}
\end{lemma}
\noindent The proof of (c) follows immediately from (d) in theorem \ref{th:propcocycle} taking the independence from $\chi$
into account.\\
Finally we prove (d). Fix $(f,g)\in \Cdata{\sdc{C}}$. In view of Weyl commutation relations, the unitary operators
$$U(a,b):= Z\left(\frac{(a+\imath b)f}{\norm{A^{1/4} f + \imath A^{-1/4}g}}, \frac{(a+ib)g}{\norm{A^{1/4} f + \imath A^{-1/4}g}}\right)\:,\quad  (a,b)\in \bR^2\:,$$ fulfill the one-dimensional Weyl  relations
$$U(a,b)U(a',b')= U(a+a',b+b') e^{-\imath (ab'-a'b)/2}\:,\quad U(a,b)^* = U(-a,-b)\:.$$
From the well-known uniqueness result, due to Stone, von Neumann and Mackey \cite{Strocchi_05}, the space $\varhs{H}_0$ decomposes into a direct sum of 
pairwise orthogonal closed subspaces $\varhs{H}_k$ where each $\varhs{H}_k$ is unitarily equivalent to $\sqint{\bR}{x}$ and the relevant 
unitary map satisfies $$U^{(f,g)*}_k U(a,b) \rest_{\varhs{H}_k} U^{(f,g)}_k = \exp{\imath\{\overline{aX+bP}\}}\:,$$ $X,P$ being the standard position and impulse
operators on the real line ($aX+bP$ is defined on the core given by the Schwartz space). As a consequence
$$U^{(f,g)*}_k \pi_{(f,g)}(n) \rest_{\varhs{H}_k} U^{(f,g)}_k = U^{(f,g)*}_k Z(nf,ng) U^{(f,g)}_k = e^{\imath n c X}\:, $$
with $c= \norm{2^{-1/2} (A^{1/4} f + \imath A^{-1/4} g)}>0$.
Then the spectral decomposition of $cX$ gives rise to (\ref{eq:dirdec}) immediately. To end the proof the only thing remained to show  is that the number of spaces $\varhs{H}_k$ is infinite. Since $\varhs{H}_0$ is separable that infinity must be countable.
It is known by the general theory of Weyl algebras on finite-dimensional symplectic spaces that the $\varhs{H}_k$'s can be obtained as follows. Using weak operator topology, define the operator
$$P=\frac{1}{2\pi}\int_{\bR^2} e^{- (u^2 +v^2)/4} U(u,v)\: dudv$$ which turns out to be a nonvanishing orthogonal projector.
If $\{\phi_k\}_{k\in G}$ is a Hilbert basis for the subspace $P(\varhs{H}_0)$, for any fixed $k\in G$, $\varhs{H}_k$ is the closed space generated by all of $U(a,b)\phi_k$ as $a,b \in \bR$. To conclude it is sufficient to prove that $G$ must be infinite. To this end consider a Hilbert basis in $\hs{H}_0$, 
$\psi_1= (A^{1/4} f +\imath A^{-1/4} g)/\norm{A^{1/4} f + \imath A^{-1/4} g}$, $\psi_2$, $\psi_3$, $\ldots$ and an associated orthonormal (not necessarily complete) system in $\varhs{H}_0$:
$\Psi_1:=\Psi_0$ (the vacuum), $\Psi_2 := a^*(\psi_2)\Psi_0$, $\Psi_3 := a^*(\psi_3)\Psi_0$, $\ldots$.
By construction, one can verify that
\begin{equation}
\begin{split}
\ip{P\Psi_h}{P\Psi_k} &= \frac{1}{2\pi}\int_{\bR^2} e^{- (u^2 +v^2)/4}\ip{\Psi_h}{U(u,v)\Psi_k}\: dudv=\\
&= \frac{\delta_{hk}}{2\pi}\int_{\bR^2} e^{- (u^2 +v^2)/4} e^{- (u^2 +v^2)/4}\: dudv.\\ 
\end{split}
\end{equation}
Therefore, up to normalization, $P \Psi_1, P\Psi_2, \ldots \in P(\varhs{H}_0)$ is an infinite orthonormal system in $P(\varhs{H}_0)$. This means that
$P(\varhs{H}_0)$ admits an infinite Hilbert basis.
\end{proof}

\subsection{Topological superselection sectors}
 In this section we show how it is possible to associate each localized cocycle $\coc{z}^{(\chi)}_{(f,g)}$
  with a sharp excitation of the reference vacuum representation $\{\iota, \bI\}$ in such a way that unitarily inequivalent pair of cocycles individuates unitarily inequivalent pair of unitary net representations fulfilling the selection criterion, and thus two different superselection sectors. The idea is similar to that exploited to define a relevant functor in the proof of theorem 4.3 in \cite{Brunetti_Ruzzi_08}, although there are two important differences. First of all, here we are dealing with a proper subset of 
cocycles and not with the whole category $\cocyclecat{\net{R}{R}}$. Secondly, as we shall see in details shortly, the map which associates 
cocycles to net representations in the proof of Theorem 4.3 in \cite{Brunetti_Ruzzi_08} does not turn out to be well posed in our lower dimensional model and has to be modified. 
Consider $\coc{z}^{(\chi)}_{(f,g)} \in \cocyclecat{\net{R}{R}}$ and, for $\open{I}, \widetilde{\open{I}} \in \poset{R}$ with $\widetilde{\open{I}} \subseteq \open{I}$ define
\begin{eqnarray}
\pi^{\coc{z}^{(\chi)}_{(f,g)}}_\open{I}(A) &:=& \coc{z}^{(\chi)}_{(f,g)}(b_\open{I}) \iota_\open{I}(A) \coc{z}^{(\chi)}_{(f,g)}(b_\open{I})^*\:, \quad A \in \sweyl{\open{I}}\label{eq:1pi}\\
\psi^{\coc{z}^{(\chi)}_{(f,g)}}_{\open{I},\tilde{\open{I}}} &: =& \coc{z}^{(\chi)}_{(f,g)}\left(\open{I},\widetilde{\open{I}}\right).\label{eq:1psi}
\end{eqnarray}
Above, $b_\open{I}$ is a $1$-simplex with final endpoint $\partial_0 b_\open{I} := \open{I}$ and initial endpoint $\partial_1 b_\open{I} := \open{J}$ where $\open{J}\perp\open{I}$ and, finally,  $b_\open{I}$ is positively-oriented; the $1$-simplex  $(\open{I},\widetilde{\open{I}})$
is that with $\partial_1\left(\open{I},\widetilde{\open{I}}\right) = \widetilde{\open{I}}$ and  $\partial_0\left(\open{I},\widetilde{\open{I}}\right) =\open{I} =
\left|\left(\open{I},\widetilde{\open{I}}\right)\right|$.\\
Finally define
\begin{equation}\label{eq:1pipsi}
 \pi^{\coc{z}^{(\chi)}_{(f,g)}} : \poset{R} \ni \open{I} \mapsto \pi^{\coc{z}^{(\chi)}_{(f,g)}}_\open{I} \:, \quad\mbox{and}
\quad \psi^{\coc{z}^{(\chi)}_{(f,g)}} : \poset{R}\times\poset{R} \ni (\open{I},\widetilde{\open{I}}) \mapsto \psi^{\coc{z}^{(\chi)}_{(f,g)}}_{\open{I},\tilde{\open{I}}}\quad 
\end{equation}
for $\open{I},\widetilde{\open{I}} \in \poset{R}$ and 
$\widetilde{\open{I}} \subseteq \open{I}$. We claim that $\{\pi^{\coc{z}^{(\chi)}_{(f,g)}}, \psi^{\coc{z}^{(\chi)}_{(f,g)}}\}$ is a net representation which satifies the topological selection criterion.\\ 

\noindent\textbf{Remark} The definitions given in (\ref{eq:1pi}) and (\ref{eq:1psi}) are the same as that used in theorem 4.3 
in \cite{Brunetti_Ruzzi_08} with the only difference that $b_\open{I}$ is now a (positive oriented) $1$-simplex  rather than a path. This is due to the fact that, if we adopted the very definition as in \cite{Brunetti_Ruzzi_08}
in our case, the defined objects would depend on the chosen path, differently from the higher dimensional case; we shall come back to this issue later.

\begin{theorem}
 If $\coc{z}^{(\chi)}_{(f,g)}\in \cocyclecat{\net{R}{R}}$, the pair $\{\pi^{\coc{z}^{(\chi)}_{(f,g)}}, \psi^{\coc{z}^{(\chi)}_{(f,g)}}\}$ defined as in (\ref{eq:1pipsi})
  is a unitary  net representation of $\alg{W}_{KG}$ over $\varhs{H}_0$, which is independent from the choice of the simplices $b_\open{I}$ made in (\ref{eq:1pi}). Moreover, this net representation enjoys the following properties.\\
{\bf (a)} $\{\pi^{\coc{z}^{(\chi)}_{(f,g)}}, \psi^{\coc{z}^{(\chi)}_{(f,g)}}\}$ is irreducible and satisfies the selection criterion, thus defines a 
sharp excitation of the reference vacuum net representation $\{\iota, \bI\}$, giving rise to a superselection sector.\\
{\bf (b)} $\{\pi^{\coc{z}^{(\chi)}_{(f,g)}}, \psi^{\coc{z}^{(\chi)}_{(f,g)}}\}$ and 
$\{\pi^{\coc{z}^{(\chi')}_{(f',g')}}, \psi^{\coc{z}^{(\chi')}_{(f',g')}}\}$ belong to the same superselection sector 
(i.e. they are unitarily equivalent) if and only if  $f=f'$ and $g=g'$. \\
{\bf (c)} The $1$-cocycle associated with the net representation $\{\pi^{\coc{z}^{(\chi)}_{(f,g)}}, \psi^{\coc{z}^{(\chi)}_{(f,g)}}\}$ 
as in (\ref{eq:zetapi}) coincides with $\coc{z}^{(\chi)}_{(f,g)}$ itself.
\end{theorem}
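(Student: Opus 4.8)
The plan is to verify the four claims in order, relying mainly on Theorem \ref{th:propcocycle} and the Weyl relations. First I would check that $\{\pi^{\coc{z}^{(\chi)}_{(f,g)}}, \psi^{\coc{z}^{(\chi)}_{(f,g)}}\}$ is a unitary net representation and is independent of the auxiliary $1$-simplices $b_\open{I}$. The compatibility relations (\ref{eq:UNR}) follow from the cocycle identity: for the first one, if $A \in \sweyl{\widetilde{\open{I}}}$ then $\psi_{\open{I}\widetilde{\open{I}}}\pi_{\widetilde{\open{I}}}(A) = \coc{z}(\open{I},\widetilde{\open{I}})\coc{z}(b_{\widetilde{\open{I}}})\iota_{\widetilde{\open{I}}}(A)\coc{z}(b_{\widetilde{\open{I}}})^* = \coc{z}(b_\open{I})\iota_\open{I}j_{\open{I}\widetilde{\open{I}}}(A)\coc{z}(b_\open{I})^*\psi_{\open{I}\widetilde{\open{I}}}$, using $\coc{z}(\open{I},\widetilde{\open{I}})\coc{z}(b_{\widetilde{\open{I}}}) = \coc{z}(b_\open{I})$ (which holds because these three $1$-simplices form the boundary of a $2$-simplex, up to composing with a fixed simplex into the common causal complement) together with the fact that $\iota$ is a net representation with trivial transition functions. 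The second relation in (\ref{eq:UNR}) is immediate from the cocycle identity applied to the $2$-simplex with boundary $(\open{I},\open{I}')$, $(\open{I}',\widetilde{\open{I}})$, $(\open{I},\widetilde{\open{I}})$. Path-independence of $b_\open{I}$ reduces to: if $b,b'$ are two positively-oriented $1$-simplices with $\partial_0 b = \partial_0 b' = \open{I}$ and $\partial_1 b, \partial_1 b' \perp \open{I}$, then $\coc{z}(b)\iota_\open{I}(A)\coc{z}(b)^* = \coc{z}(b')\iota_\open{I}(A)\coc{z}(b')^*$ and $\coc{z}(b)$, $\coc{z}(b')$ differ compatibly on overlaps; this uses the explicit form (\ref{eq:cocyclef}) and the observation that $\coc{z}(b)\coc{z}(b')^* = Z((\chi^{(b)}-\chi^{(b')})f, \ldots)$ is a Weyl operator localized in the causal complement of $\open{I}$, hence commutes with $\iota_\open{I}(\sweyl{\open{I}})$, and one checks $\chi^{(b)}-\chi^{(b')}$ vanishes on a neighborhood of $\open{I}$ because both extended functions equal $1$ there.

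For part (a), irreducibility of $\{\pi^{\coc{z}^{(\chi)}_{(f,g)}}, \psi^{\coc{z}^{(\chi)}_{(f,g)}}\}$ follows from part (a) of Theorem \ref{th:propcocycle}: a unitary intertwiner $T$ from the net representation to itself, when restricted appropriately, yields (via the relations (\ref{eq:intertREP}) rewritten with $\psi$) a field of unitaries satisfying (\ref{eq:equivalence3}) for $\coc{z}^{(\chi)}_{(f,g)}$, which is forced to be a scalar. The selection criterion (\ref{eq:SS}) is the heart of (a): given $\open{J} \in \poset{R}$ and a simply connected open $\open{N} \supset \bar{\open{J}}$, I must exhibit unitaries $W^{\open{N}\open{J}}_\open{I}$ for $\bar{\open{I}} \subset \open{N}$, $\open{I} \subset \open{J}'$ satisfying conditions 1--3. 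The natural candidate is $W^{\open{N}\open{J}}_\open{I} := \coc{z}^{(\chi)}_{(f,g)}(b_\open{I})^* = Z(-\chi^{(b_\open{I})}f, -\chi^{(b_\open{I})}g)$ — but here the subtlety flagged in the paper's remark enters: the choice of $b_\open{I}$ (hence the path implicit in $\chi^{(b_\open{I})}$) must be made coherently within $\open{N}$, choosing the initial endpoint of $b_\open{I}$ inside $\open{J}$ so that $\chi^{(b_\open{I})}$ is supported in $\open{N}$. With this choice, condition 1 is exactly (\ref{eq:1pi}); condition 2 follows from (\ref{eq:1psi}) and the cocycle identity; condition 3 (independence of $\open{N}$) follows because enlarging $\open{N}$ to $\open{N}_1$ does not change the relevant $\chi^{(b_\open{I})}$ once the initial endpoint is fixed in $\open{J}$. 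This is the step I expect to be the main obstacle, precisely because the analogue of Theorem 4.1 of \cite{Brunetti_Ruzzi_08} fails and one cannot invoke a general localization argument; the verification must be done by hand exploiting the very concrete structure of the $Z$'s and the partition-of-unity trick used throughout the proof of Theorem \ref{th:propcocycle}.

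For part (b), unitary equivalence of the two net representations means the existence of a unitary intertwiner $T = \{T_\open{I}\}$ satisfying (\ref{eq:intertREP}). Spelling this out using (\ref{eq:1pi})--(\ref{eq:1psi}), the field $\itwiner{V}_a := \coc{z}^{(\chi')}_{(f',g')}(b_a)^* T_a \coc{z}^{(\chi)}_{(f,g)}(b_a)$ turns out to be an intertwiner between the $1$-cocycles $\coc{z}^{(\chi)}_{(f,g)}$ and $\coc{z}^{(\chi')}_{(f',g')}$ in the sense of the cocycle category, localized because $T_\open{I} \in \bop{\varhs{H}_0}$ combined with the relations forces $\itwiner{V}_a \in \snet{R}{a}$ (by the argument used in the proof of part (a) and (c) of Theorem \ref{th:propcocycle}). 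Conversely a localized cocycle intertwiner gives such a $T$. Hence the two net representations are unitarily equivalent iff $\coc{z}^{(\chi)}_{(f,g)}$ and $\coc{z}^{(\chi')}_{(f',g')}$ are equivalent; by parts (b) and (c) of Theorem \ref{th:propcocycle} this holds iff $f=f'$ and $g=g'$ (the $\chi$-dependence being immaterial). Finally, part (c) is a direct computation: by (\ref{eq:zetapi}), $\zeta^{\pi}(b) = (\psi^{\coc{z}}_{|b|,\partial_0 b})^* \psi^{\coc{z}}_{|b|,\partial_1 b} = \coc{z}(|b|,\partial_0 b)^* \coc{z}(|b|,\partial_1 b)$; since $(|b|,\partial_0 b)$, $(|b|,\partial_1 b)$ and $b$ are the faces of a $2$-simplex $c$ with $\partial_0 c = (|b|,\partial_0 b)$, $\partial_1 c = (|b|,\partial_1 b)$, $\partial_2 c = b$ (or a relabeling thereof), the cocycle identity (\ref{eq:cocycle_identity}) gives $\coc{z}(|b|,\partial_0 b)^*\coc{z}(|b|,\partial_1 b) = \coc{z}(b) = \coc{z}^{(\chi)}_{(f,g)}(b)$, as claimed.
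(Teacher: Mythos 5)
Most of your architecture matches the paper's: the verification of (\ref{eq:UNR}) and the independence from $b_\open{I}$ via the observation that discrepancies are Weyl operators whose Cauchy data vanish on $\open{I}$ (the paper instead computes the explicit formula $\pi^{\coc{z}}_{\open{I}}(W(\Phi,\Pi)) = Z(\Phi,\Pi)e^{\imath\sigma((\Phi,\Pi),((1-\chi^{(\open{I})})f,(1-\chi^{(\open{I})})g))}$, from which independence is manifest), the choice $W^{\open{N}\open{J}}_\open{I}=\coc{z}^{(\chi)}_{(f,g)}(b_\open{I})^*$ with $|b_\open{I}|\subset\open{N}$ for the selection criterion, the reduction of (b) to cocycle equivalence followed by Theorem \ref{th:propcocycle}, and the cocycle-identity argument for (c) are all in line with the paper.

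There is, however, a genuine gap in the ``only if'' direction of (b). Your candidate cocycle intertwiner $\itwiner{V}_a := \coc{z}^{(\chi')}_{(f',g')}(b_a)^*\, T_a\, \coc{z}^{(\chi)}_{(f,g)}(b_a)$, with $b_a$ the simplex of (\ref{eq:1pi}) ending at $a$ and $T$ evaluated at $a$ itself, localizes in the \emph{wrong} algebra. Indeed, from $T_a\pi^{\coc{z}}_a(A)=\pi^{\coc{z}'}_a(A)T_a$ and $\pi^{\coc{z}}_a(A)=\coc{z}(b_a)\iota_a(A)\coc{z}(b_a)^*$ one gets $\itwiner{V}_a\,\iota_a(A)=\iota_a(A)\,\itwiner{V}_a$ for all $A\in\sweyl{a}$, i.e.\ $\itwiner{V}_a\in\snet{R}{a}'=\alg{R}(a')$ — the commutant, not $\snet{R}{a}$. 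Since Theorem \ref{th:propcocycle}(c) requires a \emph{localized} intertwiner ($\itwiner{V}_{\partial_j b}\in\snet{R}{\partial_j b}$), you cannot conclude $f=f'$, $g=g'$ from such a field, and the appeal to ``the argument used in the proof of (a) and (c) of Theorem \ref{th:propcocycle}'' does not repair this: that argument shows conjugation by a cocycle preserves a local algebra, whereas here $T_a$ is a priori only a bounded operator. The correct construction (the paper's) is $\itwiner{t}_\open{I}:=\coc{z}^{(\chi')}_{(f',g')}(b_{\open{O}\open{I}})^*\,T_{\open{O}}\,\coc{z}^{(\chi)}_{(f,g)}(b_{\open{O}\open{I}})$ with $\partial_1 b_{\open{O}\open{I}}=\open{I}$, $\partial_0 b_{\open{O}\open{I}}=\open{O}$ and $\open{O}\perp\open{I}$: the same computation then shows $\itwiner{t}_\open{I}$ commutes with $\iota_\open{O}(\sweyl{\open{O}})$ for \emph{every} $\open{O}\perp\open{I}$, and Haag duality yields $\itwiner{t}_\open{I}\in\alg{R}(\open{I}')'=\snet{R}{\open{I}}$; one must also check independence from $\open{O}$ and the cocycle-intertwiner identity, which use the relations $\coc{z}(b_{\open{O}\partial_0 b})\ast$-composed along suitable simplices. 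You should also tighten the first paragraph: the identity $\coc{z}(\open{I},\widetilde{\open{I}})\coc{z}(b_{\widetilde{\open{I}}})=\coc{z}(b_\open{I})$ is false as stated (the two sides are evaluations on non-homotopic paths with different initial points); what saves the argument is precisely the commutant mechanism you invoke afterwards, so the derivation of the first relation in (\ref{eq:UNR}) should be routed through that (or through the explicit formula), not through a literal cocycle identity.
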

\begin{proof}
 First of all we have to show that conditions (\ref{eq:UNR}) are fulfilled. By direct inspection, exploiting the definition
of $\coc{z}^{(\chi)}_{(f,g)}$, we find
\begin{equation}\label{eq:zz}
 \psi^{\coc{z}^{(\chi)}_{(f,g)}}_{\open{I},\tilde{\open{I}}} =
  Z\left((\chi^{(\tilde{\open{I}})}- \chi^{(\open{I})})f,(\chi^{(\tilde{\open{I}})}- \chi^{(\open{I})})g \right) ,
\end{equation}
where as before the functions $\chi^{(\tilde{\open{I}})}$ and $\chi^{(\tilde{\open{I}})}- \chi^{(\open{I})}$ has been extended to the whole circle.  With this definition the second identity in (\ref{eq:UNR}) straightforwardly arises from  (\ref{eq:zz}) and the Weyl relations.
Let us examine the first identity in (\ref{eq:UNR}).
By linearity and continuity, it suffices to consider only the case where $A= W(\Phi,\Pi)$ is a local Weyl generator, with $\Phi,\Pi$ supported in ${\tilde{\open{I}}}$. Remind that in our case $j_{\open{I}\tilde{\open{I}}}$ can be omitted, interpreting the elements of local Weyl algebras as elements of the global Weyl algebra $\alg{W}_{KG}$.
 By direct inspection, keeping in mind that $\iota\left(W(\Phi,\Pi)\right) = Z\left(\Phi,\Pi\right)$, 
  making use of Weyl relations and applying the definition of $\coc{z}^{(\chi)}_{(f,g)}$,
 one finds that if $\Phi,\Pi$ are supported in ${\open{I}}$
 \begin{equation}\label{eq:pp} 
\pi^{\coc{z}^{(\chi)}_{(f,g)}}_{\open{I}}\left( W(\Phi,\Pi)\right)= Z(\Phi,\Pi)\, 
 e^{\left\{\imath\sigma\left((\Phi,\Pi),((1-\chi^{(\open{I})}) f,(1-\chi^{(\open{I})}) g)\right)\right\}}.
\end{equation}
 Notice that only $\open{I}$ appears in the right-hand side, so different choices of $b_{\open{I}}$ yields 
the same result and the choice of $b_{\open{I}}$ is immaterial. Equations (\ref{eq:pp}) and (\ref{eq:zz}) entail, in view of Weyl identities
\begin{equation}
\begin{split}
&\psi^{\coc{z}^{(\chi)}_{(f,g)}}_{\open{I},\tilde{\open{I}}} \pi^{\coc{z}^{(\chi)}_{(f,g)}}_{\tilde{\open{I}}}\left( W(\Phi,\Pi)\right)
\psi^{\coc{z}^{(\chi)}_{(f,g)}*}_{\open{I},\tilde{\open{I}}} = \\
&=Z(\Phi,\Pi)\,e^{\imath\sigma\left((\Phi,\Pi),((1-\chi^{(\tilde{\open{I}})}) f,(1-\chi^{(\tilde{\open{I}})}) g)\right)}\,  e^{\imath\sigma\left((\Phi,\Pi),((\chi^{(\tilde{\open{I}})}-\chi^{(\open{I})}) f,(\chi^{(\tilde{\open{I}})}-\chi^{(\open{I})}) g)\right)}
 =\\ 
&=Z(\Phi,\Pi)\,e^{\imath\sigma\left((\Phi,\Pi),((1-\chi^{(\open{I})}) f,(1-\chi^{(\open{I})}) g)\right)} =\pi^{\coc{z}^{(\chi)}_{(f,g)}}_{\open{I}} (W(\Phi,\Pi))\\ 
\end{split}
\end{equation}
This is nothing but the identity (\ref{eq:UNR}). Let us prove the remaining statements of the theorem.\\
\textbf{(a)} If $\open{O} \in \poset{R}$ let  $\open{N} \subset \bS^1$ a (connected) simply connected open set (so that either $\open{N}\in \poset{R}$
or $\open{N}= \bS^1 \setminus \{p\}$ for some $p\in \bS^1$) with $\overline{\open{O}} \subset \open{N}$.
Fix $\open{I}\in \poset{R}$ with both $\overline{\open{I}} \subset \open{N}$ and $\open{I}\perp\open{O}$. We can define $W_\open{I}^{\open{N}\open{O}}$ as
\begin{equation}\label{eq:WINO}
W_\open{I}^{\open{N}\open{O}} := \coc{z}^{(\chi)}_{(f,g)}(b_\open{I})^*,
\end{equation}
where $b_\open{I}\in \simplex{1}{\poset{R}}$ is chosen as in (\ref{eq:1pi})
but  $|b_\open{I}| \subset \open{N}$. Given this definition the three conditions coming from (\ref{eq:SS}) turn out to be satisfied.
The first requirement is verified automatically in view of  (\ref{eq:1pi}), the remaining two have straightforward proofs
based on  Weyl relations and proceeding as above. 
The proof of the irreducibility of $\{\pi^{\coc{z}^{(\chi)}_{(f,g)}}, \psi^{\coc{z}^{(\chi)}_{(f,g)}}\}$ will be postponed at the end the 
of the proof of (b).\\
\textbf{(b)} In view of (b) and (c) in theorem \ref{th:propcocycle}, the thesis  is equivalent to say that 
$\{\pi^{\coc{z}^{(\chi)}_{(f,g)}}, \psi^{\coc{z}^{(\chi)}_{(f,g)}}\}$ and $\{\pi^{\coc{z}^{(\chi')}_{(f',g')}}, \psi^{\coc{z}^{(\chi')}_{(f',g')}}\}$
are unitarily equivalent if and only if $\coc{z}^{(\chi)}_{(f,g)}$ and $\coc{z}^{(\chi')}_{(f',g')}$ are unitarily equivalent. Suppose that $\itwiner{T} \in (\coc{z}^{(\chi)}_{(f,g)}, \coc{z}^{(\chi')}_{(f',g')})$ is unitary; as a consequence
$\itwiner{T} \in (\{\pi^{\coc{z}^{(\chi)}_{(f,g)}}, \psi^{\coc{z}^{(\chi)}_{(f,g)}}\},\{\pi^{\coc{z}^{(\chi')}_{(f',g')}}, \psi^{\coc{z}^{(\chi')}_{(f',g')}}\})$.
Indeed take $A\in\sweyl{\open{I}}$ and remind that $\itwiner{T}_{\partial_1b_\open{I}}\in \snet{R}{\partial_1b_\open{I}}$ and thus 
$\itwiner{T}_{\partial_1b_\open{I}}$ and $\itwiner{T}^*_{\partial_1b_\open{I}}$
 commute with $\iota(A)$ because $\partial_1b_\open{I} \perp \open{I}$. Hence
\begin{equation}
\begin{split}
\itwiner{T}_\open{I} \pi^{\coc{z}^{(\chi)}_{(f,g)}}(A) \itwiner{T}^*_\open{I} &= \itwiner{T}_\open{I} \coc{z}^{(\chi)}_{(f,g)}(b_\open{I}) \iota(A)
\coc{z}^{(\chi)*}_{(f,g)}(b_\open{I}) \itwiner{T}^*_\open{I} =\\  
&=\coc{z}^{(\chi')}_{(f',g')}(b_\open{I}) \itwiner{T}_{\partial_1 b_\open{I}}\iota(A)
(\itwiner{T}_\open{I} \coc{z}^{(\chi)}_{(f,g)}(b_\open{I}))^* = \\
&=\coc{z}^{(\chi')}_{(f',g')}(b_\open{I}) \iota(A) \itwiner{T}_{\partial_1 b_{\open{I}}} (\coc{z}^{(\chi')}_{(f',g')}(b_\open{I}) \itwiner{T}_{\partial_1 b_{\open{I}}})^*=\\
 &= \coc{z}^{(\chi')}_{(f',g')}(b_\open{I}) \iota(A)
 \itwiner{T}_{\partial_1 b_{\open{I}}} \itwiner{T}^*_{\partial_1b_{\open{I}}} \coc{z}^{(\chi')*}_{(f',g')}(b_\open{I})=\\ 
 &= \coc{z}^{(\chi')}_{(f',g')}(b_\open{I}) \iota(A) \coc{z}^{(\chi')*}_{(f',g')}(b_\open{I}) = \pi^{\coc{z}^{(\chi')}_{(f',g')}}(A).\\ 
\end{split}
\end{equation}
Similarly, directly by the definition of $\psi^{\coc{z}^{(\chi)}_{(f,g)}}$ one also gets, if $\widetilde{\open{I}}\subseteq \open{I}$,
$$\itwiner{T}_{\open{I}} \psi^{\coc{z}^{(\chi)}_{(f,g)}}_{\open{I},\tilde{\open{I}}} =
\itwiner{T}_\open{I} \coc{z}^{(\chi)}_{(f,g)}(\open{I},\widetilde{\open{I}}) = \coc{z}^{(\chi')}_{(f',g')}(\open{I},\widetilde{\open{I}}) \itwiner{T}_{\tilde{\open{I}}}
=  \psi^{\coc{z}^{(\chi')}_{(f',g')}}_{\open{I},\tilde{\open{I}}} \itwiner{T}_{\tilde{\open{I}}}.$$
Summarizing, we showed that unitary equivalence of cocycles entails unitary equivalence of associated net representations. 
Let us prove the converse. To this end suppose that 
$\itwiner{T} \in (\{\pi^{\coc{z}^{(\chi)}_{(f,g)}}, \psi^{\coc{z}^{(\chi)}_{(f,g)}}\},\{\pi^{\coc{z}^{(\chi')}_{(f',g')}}, \psi^{\coc{z}^{(\chi')}_{(f',g')}}\})$
is unitary.
For every $\open{I}\in \poset{R}$, define the unitary operator
\begin{equation}
\itwiner{t}_\open{I} := \coc{z}^{(\chi')}_{(f',g')}(b_{\open{O}\open{I}})^* \itwiner{T}_\open{O}\coc{z}^{(\chi)}_{(f,g)}(b_{\open{O}\open{I}})\:,
\end{equation}
where $b_{\open{O}\open{I}}\in \simplex{1}{\poset{R}}$ is a positive oriented simplex 
such that $\partial_1  b_{\open{O}\open{I}} =\open{I}$, $\partial_0  b_{\open{O}\open{I}} =\open{O}$ and $\open{I}\perp\open{O}$. We want to prove that $\itwiner{t}_\open{I}$ defines a 
localized intertwiner between $\coc{z}^{(\chi)}_{(f,g)}$ and $\coc{z}^{(\chi ')}_{(f',g')}$ .
First of all we notice that $\itwiner{t}_\open{I}$ doesn't depend on the choice of $\open{O}\perp\open{I}$ because, if $\widetilde{\open{O}} \subseteq \open{O}$ one has
\begin{equation}
\begin{split}
\itwiner{t}_\open{I} &:= \coc{z}^{(\chi')}_{(f',g')}(b_{\open{O}\open{I}})^* \itwiner{T}_\open{O} \coc{z}^{(\chi)}_{(f,g)}(b_{\open{O}\open{I}}) =
\coc{z}^{(\chi')}_{(f',g')}(b_{\open{O}\open{I}})^* \itwiner{T}_\open{O} \coc{z}^{(\chi)}_{(f,g)}(\open{O},\tilde{\open{O}})
\coc{z}^{(\chi)}_{(f,g)}(b_{\tilde{\open{O}}\open{I}})
 =\\ 
&=\coc{z}^{(\chi')}_{(f',g')}(b_{\open{O}\open{I}})^* \itwiner{T}_\open{O}\psi_{\open{O}\tilde{\open{O}}}^{\coc{z}^{(\chi)}_{(f,g)}}
\coc{z}^{(\chi)}_{(f,g)}(b_{\tilde{\open{O}}\open{I}}) 
= \coc{z}^{(\chi')}_{(f',g')}(b_{\open{O}\open{I}})^*  \psi_{\open{O}\tilde{\open{O}}}^{\coc{z}^{(\chi')}_{(f',g')}}\itwiner{T}_{\tilde{\open{O}}} \coc{z}^{(\chi)}_{(f,g)}(b_{\tilde{\open{O}}\open{I}})=\\ 
&=\coc{z}^{(\chi')}_{(f',g')}(b_{\open{O}\open{I}})^* \coc{z}^{(\chi')}_{(f',g')}(\open{O},\tilde{\open{O}})\itwiner{T}_{\tilde{\open{O}}} \coc{z}^{(\chi)}_{(f,g)}(b_{\tilde{\open{O}}\open{I}}) = \coc{z}^{(\chi')}_{(f',g')}(b_{\tilde{\open{O}}\open{I}})^* 
\itwiner{T}_{\tilde{\open{O}}} \coc{z}^{(\chi)}_{(f,g)}(b_{\tilde{\open{O}}\open{I}}). 
\end{split}
\end{equation}
Using a suitable chain of $1$-simplices and the above identity, one can pass to the initial $\open{O}\perp\open{I}$ to any other $\open{O}_1\perp\open{I}$. Now notice that, if $A\in \sweyl{\open{O}}$
\begin{equation}
\begin{split}
\itwiner{t}_{\open{I}}\iota_\open{O}(A) &= \coc{z}^{(\chi')}_{(f',g')}(b_{\open{O}\open{I}})^* \itwiner{T}_\open{O} \coc{z}^{(\chi)}_{(f,g)}(b_{\open{O}\open{I}})\iota_\open{O}(A)=\\ 
&=\coc{z}^{(\chi')}_{(f',g')}(b_{\open{O}\open{I}})^* \itwiner{T}_\open{O}\pi^{\coc{z}^{(\chi)}_{(f,g)}}_\open{O}(A) \coc{z}^{(\chi)}_{(f,g)}(b_{\open{O}\open{I}})=\\
&=\coc{z}^{(\chi')}_{(f',g')}(b_{\open{O}\open{I}})^*  \pi^{\coc{z}^{(\chi')}_{(f',g')}}_\open{O}(A) \itwiner{T}_\open{O}\coc{z}^{(\chi)}_{(f,g)}(b_{\open{O}\open{I}})  = \iota_\open{O}(A) \itwiner{t}_\open{I}.\\
\end{split}
\end{equation}
So $\itwiner{t}_\open{I} \in \iota_\open{O}(\sweyl{\open{O}})'$.
By Haag duality, and using the fact that $\open{O}\perp\open{I}$ is generic, we conclude that $\itwiner{t}_\open{I} \in \iota_\open{I}(\sweyl{\open{I}})'' = \snet{R}{\open{I}}$. 
Finally let us prove that $\itwiner{t}$ is a cocycle intertwiner. Consider $b\in \simplex{1}{\poset{R}}$ with $\partial_0b = \open{I}$.
Fix $\open{O}\perp |b|$ and choose two positively-oriented $1$-simplices ending at $\open{O}$ and starting from $\partial_0b$ and $\partial_1b$ respectively, say $b_{\open{O}\partial_0 b}$ and $b_{\open{O}\partial_1 b}$, such that
 $b_{\open{O}\partial_0 b}\ast b= b_{\open{O}\partial_1 b}$ (thus $b_{\open{O}\partial_0 b} = b_{\open{O}\partial_1 b}\ast\bar{b}$); then we can write
\begin{equation}
\begin{split}
\itwiner{t}_{\partial_0 b} \coc{z}^{(\chi)}_{(f,g)}(b) &= \coc{z}^{(\chi')}_{(f',g')}(b_{\open{O}\partial_0 b})^* \itwiner{T}_\open{O}\coc{z}^{(\chi)}_{(f,g)}(b_{\open{O}\partial_0 b}) \coc{z}^{(\chi)}_{(f,g)}(b) =\\
&=\left(\coc{z}^{(\chi')}_{(f',g')}(b_{\open{O}\partial_1 b})\cdot\coc{z}^{(\chi')}_{(f',g')}(\bar{b})\right)^{*}\itwiner{T}_\open{O}\coc{z}^{(\chi)}_{(f,g)}(b_{\open{O}\partial_1 b)} =\\
&=\coc{z}^{(\chi')}_{(f',g')}(b)\cdot\coc{z}^{(\chi')}_{(f',g')}(b_{\open{O}\partial_1 b})^{*}\itwiner{T}_\open{O}\coc{z}^{(\chi)}_{(f,g)}(b_{\open{O}\partial_1 b)} = \coc{z}^{(\chi')}_{(f',g')}(b)\itwiner{t}_{\partial_1 b}.\\
\end{split}
\end{equation}
Let us now prove that $\{\pi^{\coc{z}^{(\chi)}_{(f,g)}}, \psi^{\coc{z}^{(\chi)}_{(f,g)}}\}$ is irreducible.
Suppose there is a unitary intertwiner $\itwiner{U} \in (\{\pi^{\coc{z}^{(\chi)}_{(f,g)}}, \psi^{\coc{z}^{(\chi)}_{(f,g)}}\},\{\pi^{\coc{z}^{(\chi)}_{(f,g)}}, \psi^{\coc{z}^{(\chi)}_{(f,g)}}\})$\:.
As a consequence the  operators $\itwiner{t}_\open{I}:= \coc{z}^{(\chi)}_{(f,g)}(b_{\open{O}\open{I}})^* \itwiner{U}_\open{O} \coc{z}^{(\chi)}_{(f,g)}(b_{\open{O}\open{I}})$, where $\open{I},\open{O}\in \poset{R}$, 
$\open{O}\perp\open{I}$ and $b_{\open{O}\open{I}}$ is positively oriented, define a unitary intertwiner
$\itwiner{t}\in (\coc{z}^{(\chi)}_{(f,g)}, \coc{z}^{(\chi)}_{(f,g)})$. Statement (a) in theorem \ref{th:propcocycle}
implies that the $\itwiner{t}_\open{I}$ are all of the form $c\cdot 1$ with $c\in \bC$ and $|c|=1$. Therefore the $\itwiner{U}_\open{O}$ have the same form and so,
since $\open{O}$ can be chosen arbitrarily in $\poset{R}$, 
$\{\pi^{\coc{z}^{(\chi)}_{(f,g)}}, \psi^{\coc{z}^{(\chi)}_{(f,g)}}\}$ is irreducible.\\
(c) This statement is an immediate consequence of (\ref{eq:zz}) and Weyl relations.
\end{proof}

\noindent\textbf{Remark.} The definition of $\{ \pi^{\coc{z}_{(f,g)}^{(\chi)}}, \psi^{\coc{z}_{(f,g)}^{(\chi)}}\}$ can be modified 
changing the requirements on the simplex $b_\open{I}$. These changes don't affect the results in higher dimensions as 
established in Theorem 4.3 in \cite{Brunetti_Ruzzi_08} where $b_\open{I}$ can be replaced by any path $\path{p}_\open{I}$ ending on $\open{I}$
but starting from $\partial_1\path{p} \perp \open{I}$. Remarkably,  the situation is different here.
Replacing the $1$-simplex $b_\open{I}$ in (\ref{eq:1pi}) with a path $\path{p}_\open{I}$
ending in $\open{I}$ which winds $n\in \bZ$ times around the circle before reaching $\open{I}$ and such that the final $1$-simplex
ending on $\open{I}$ is positive oriented, with the first endpoint in $\open{I}'$, 
\begin{eqnarray}
\rho^{\coc{z}^{(\chi)}_{(f,g)}}_\open{I}(A) &:=& \coc{z}^{(\chi)}_{(f,g)}(\path{p}_\open{I}) \iota_\open{I}(A) \coc{z}^{(\chi)}_{(f,g)}(\path{p}_\open{I})^*\:, \quad A \in \sweyl{\open{I}}\label{1pi'}\\
\phi^{\coc{z}^{(\chi)}_{(f,g)}}_{\open{I},\tilde{\open{I}}} &: =& \coc{z}^{(\chi)}_{(f,g)}\left(\open{I},\widetilde{\open{I}}\right) \label{1psi'}\:.
\end{eqnarray}
defines a net representation which is not encompassed in the class of representations considered in the theorem just proved.
 However this new net representation turns out to be unitarily equivalent to  $\{{\pi}^{\coc{z}_{(f,g)}^{(\chi)}}, {\psi}^{\coc{z}_{(f,g)}^{(\chi)}}\}$, it being
\begin{eqnarray}
{\rho}^{\coc{z}_{(f,g)}^{(\chi)}}(A) &=&  Z(nf,ng) \pi^{\coc{z}_{(f,g)}^{(\chi)}}(A) Z(nf,ng)^*,\nonumber\\   
{\phi}^{\coc{z}_{(f,g)}^{(\chi)}} &=& Z(nf,ng)^* {\psi}^{\coc{z}_{(f,g)}^{(\chi)}} Z(nf,ng)= {\psi}^{\coc{z}_{(f,g)}^{(\chi)}}\nonumber. 
\end{eqnarray}
Another, more dramatic change may be performed in the definition (\ref{eq:1pi}), if one assumes that the $1$-simplex $b_\open{I}$ with end points  $\open{I}$ and $\open{J}$ 
is {\em negative} oriented. In this case one is committed to  replace also $\coc{z}^{(\chi)}_{(f,g)}\left(\open{I},\widetilde{\open{I}}\right)$ with 
$\coc{z}^{(\chi)*}_{(f,g)}\left(\open{I},\widetilde{\open{I}}\right)$ in the definition (\ref{eq:1psi}), in order to 
obtain a net representation. With these changes, definitions (\ref{eq:1pi}) and (\ref{eq:1psi}) work anyway 
and give rise to a different net representation
$\{ \tilde{\pi}^{\coc{z}_{(f,g)}^{(\chi)}}, \tilde{\psi}^{\coc{z}_{(f,g)}^{(\chi)}}\}$.
Also this net representation is not encompassed in the class of representations considered in the theorem.
However it is globally unitarily equivalent to a representation as those in the theorem, {\em but associated with a different cocycle}.
In fact it turns out to be unitarily equivalent to $\{\pi^{\coc{z}_{(-f,-g)}^{(\chi)}}, \psi^{\coc{z}_{(-f,-g)}^{(\chi)}}\}$, where we stress that the sign in front of $f$ and $g$, and thus the cocycle, has changed.
Indeed, after a trivial computation based on the explicit form of cocycles, one finds that:
\begin{eqnarray}
\tilde{\pi}^{\coc{z}_{(f,g)}^{(\chi)}}(A) &=&  Z(f,g) \pi^{\coc{z}_{(-f,-g)}^{(\chi)}}(A) Z(f,g)^*,\nonumber\\   
\tilde{\psi}^{\coc{z}_{(f,g)}^{(\chi)}} &=& Z(f,g)^* \tilde{\psi}^{\coc{z}_{(-f,-g)}^{(\chi)}} Z(f,g)\nonumber
= \tilde{\psi}^{\coc{z}_{(-f,-g)}^{(\chi)}}.
\end{eqnarray}

\appendix
\chapter{Proof of some propositions.} \label{app:proofs}
 
\noindent {\bf  Proof of theorem \ref{th:spatial Reeh-Schlieder}}.

First we prove that $\Psi_0$ is cyclic for $\pi_0(\sweyl{\open{I}})\Psi_0$, $\forall\;\open{I}\in \poset{R}$; it is sufficient to show that every $\Upsilon \in \varhs{H}_0$ with $\Upsilon \perp \pi_0(\sweyl{\open{I}})\Psi_0$ has to vanish.
If $\open{I}\in \poset{R}$, take another  $\open{I}_1 \in \poset{R}$ with $\overline{\open{I}_1} \subset \open{I}$. Recalling the remark made in subsection \ref{sec:intro_properties}, we can found a neighborhood $\open{O}$ of $(0,0) \in \bR^2$
such that $\mathbf{t} \in \open{O}$ implies $\open{I}_1+\mathbf{t} \subset \open{I}$. Consider operators $A_1,\ldots, A_n \in \pi_0(\sweyl{\open{I}_1})$ and, for
$\mathbf{t}_1,\ldots, \mathbf{t}_n$ define $A_i(\mathbf{t}_i) := U_{\mathbf{t}_i} A_i U_{\mathbf{t}_i}^*$, where $U_{\mathbf{t}} = U_{(r,s)} = e^{-\imath(rP^\otimes -s H^\otimes)}$ implements $\{\gamma_{\mathbf{t}}\}_{\mathbf{t} \in \bR^2}$ leaving $\Psi_0$ fixed
as established in theorem \ref{th:vacuum}. By construction, taking into account the propagation properties of solutions of the Klein-Gordon equation and spacetime simmetries, we have
$U_{\mathbf{t}_i} A_i U_{\mathbf{t}_i}^*  \in \pi_0(\sweyl{\open{I}_1+ \mathbf{t}_i})  \subset  \pi_0(\sweyl{\open{I}})$; so that $A_i(\mathbf{t}_i) \in \pi_0(\sweyl{\open{I}})$
for $i=1,\ldots,n$
whenever $\mathbf{t}_i \in \open{O}$. As a consequence of the hypothesis on $\Upsilon$ we get:
\begin{equation}\label{eq:central}
\ip{\Upsilon}{A_1(\mathbf{t}_1)\cdots A_n(\mathbf{t}_n) \Psi_0} = 0\quad \mbox{if}\: \mathbf{t}_1,\ldots,\mathbf{t}_n \in \open{O}. 
\end{equation}
Referring to the joint spectral measure of $P^\otimes$ and $H^\otimes$ (in the given order) we may write:
\begin{equation}\label{eq:central2} 
 U_{\mathbf{t}} = \int_{\Gamma_m} e^{-\imath (\mathbf{t}|\mathbf{q})} dP(\mathbf{q})
\end{equation}
where $((r,s)|(r',s')) = rr'-ss'$ is the Minkowskian product in $\bR^2$ and $\mathbf{q}$ ranges in the joint spectrum given by the
discrete hyperboloid:
$$\Gamma_m := \{ (r',s') \in \bR^2 \:\:|\:\: r'\in \bZ\setminus \{0\}\:,\: s'= k_m(r')\}.$$
We can extend $\mathbf{t}$ in the right-hand side of (\ref{eq:central2}) to complex values: $\mathbf{t} \to \mathbf{z} \in \bC^2$ provided
$(\Im m \: \mathbf{z} | \mathbf{q}) \leq 0$ for all $\mathbf{q}\in \Gamma_m$. Therefore $U_{\mathbf{t}}$ is extensible
for complex values of $\mathbf{z}$ such that $\Im m \: \mathbf{z}$ belongs to the closure of future light-cone $${V^+}
 := \{(r,s) \in \bR^2\:\:|\:\: ((r,s)|(r,s)) < 0\:,\, s > 0\}\:.$$ Employing standard tools of spectral theory
 in Hilbert spaces, one obtains that the map $\mathbf{z} \mapsto U_{\mathbf{z}}$
  is continuous for $\Im m \:\mathbf{z} \in \overline{V^+}$
 in the strong-operatorial topology and is holomorphic over $V^+$
 when viewed as a function valued in the Banach space of bounded operators on $\varhs{H}_0$. Finally $\norm{U_{\mathbf{z}}} \leq 1$
if $\Im m \:\mathbf{z} \in \overline{V^+}$.
Remembering that analyticity implies weak analyticity,
one can prove straightforwardly that, for $(\mathbf{z}_1, \ldots, \mathbf{z}_n)\in \bC^n$
 the function (notice that $U_{\mathbf{z}}\Psi_0 = \Psi_0$)
\begin{equation} \label{eq:inter}
(\mathbf{z}_1, \ldots, \mathbf{z}_n) \mapsto \ip{\Upsilon}{U_{\mathbf{z}_1} A_1 U_{\mathbf{z}_2}A_2 \cdots U_{\mathbf{z}_n} A_n \Psi_0},
\end{equation}
is well-defined if $(\Im m \:\mathbf{z}_1, \ldots, \Im m \:\mathbf{z}_n) \in \overline{V^+}\times \cdots \times \overline{V^+}$, it is
jointly continuous in this domain, analytic in each variable separately for
 $(\Im m \:\mathbf{z}_1, \ldots, \Im m\: \mathbf{z}_n) \in V^+\times \cdots \times V^+$ (and thus analytic as a function of several variables)
and it reduces to the left-hand side of (\ref{eq:central})
 when $\mathbf{z}_1 = \mathbf{t}_1 $, $\mathbf{z}_2 = \mathbf{t}_2-\mathbf{t}_1$, ... $\mathbf{z}_n = \mathbf{t}_n -\mathbf{t}_{n-1}$ for $\mathbf{t}_k \in \bR^2$, $k=1,\ldots,n$.
 In this case it must vanish when $\mathbf{t}_k \in \open{O}$, $k=1,\ldots,n$.
 Using the ``edge of the wedge theorem'' we are committed to conclude that the  function in (\ref{eq:inter})
 vanishes everywhere on $(\open{O} \times \ldots \times \open{O}) + \imath (V^+ \times \ldots \times V^+)$. However since the domain
 of analyticity of the considered function is connected, it must vanish everywhere therein. Finally, by continuity it must vanish
 on the boundary of the domain of analyticity too, in particular:
\begin{equation}\label{eq:central4}
\ip{\Upsilon}{A_1((r_1,0))\cdots A_n((r_n,0)) \Psi_0} = 0 \quad\mbox{if}\: r_1,\ldots,r_n \in \bR.  
\end{equation}
This implies that $\Upsilon$ is orthogonal to $\bigvee_{r\in \bR} \pi_0\left( \sweyl{\open{I}_1+(r,0) }\right) \Psi_0$. By spatial weak
additivity that space is dense in $\varhs{H}_0$ and thus $\Upsilon = 0$.

We conclude the proof by showing that $\Psi_0$ is separating for every $\snet{R}{\open{I}}$.  We have to prove that, for any fixed $\open{I}\in \poset{R}$, if $A,B \in \snet{R}{\open{I}}$ and $A\Psi_0 = B\Psi_0$ then $A=B$.
Take $\open{I}_1\in \poset{R}$ with $\open{I}_1 \cap \open{I} = \emptyset$ and thus $\snet{R}{\open{I}_1}$ commute with $\snet{R}{\open{I}}$ by locality.
If $C\in \snet{R}{\open{I}_1}$ we have $CA\Psi_0 = CB\Psi_0$ and so $AC\Psi_0 = BC \Psi_0$. $\snet{R}{\open{I}_1} \Psi_0$ is dense in $\varhs{H}_0$ due to (a); since $C\in \snet{R}{I_1}$ is arbitrary,  $AC\Psi_0 = BC \Psi_0$ entails $A=B$ on a
dense domain and thus everywhere in $\varhs{H}_0$ by continuity.   $\Box$\\

\noindent {\bf  Proof of lemma \ref{lem:lemmaE}}.
In the following  $\lambda \in \bR$. We complete the unit-norm vector $\psi\in \hs{H}$ to a Hilbert basis of
$\hs{H}$, pass to the associated Hilbert basis in $\fock{\hs{H}}$
 and denote by  $F$ the dense subspace of $\fock{\hs{H}}$ containing all the finite linear combinations of the vectors of that basis.
Assuming $E_{\lambda}^{(\psi)}= e^{\imath\lambda A}$, taking the derivative at $\lambda=0$ of the identity
$$E_{\lambda}^{(\psi)}  W[\psi] E_{\lambda}^{(\psi)*}  =
W\left[e^\lambda \psi\right]\:, 
$$
(without paying much attention to domain issues) and, finally,  making use of (\ref{eq:Weylgen}), one gets that
\begin{equation}\label{eq:commF}
\left[\imath A, a(\psi) -a^*(\psi)\right] \Phi = (a(\psi)-a^*(\psi)) \Phi,
\end{equation}
if $\Phi$ belongs to some suitable domain we  shall determine presently.
Taking the commutation relation
$[a(\psi), a^*(\psi)] = 1$  into account (recall that $\norm{\psi}=1$),  we see that a candidate for $A$
is some self-adjoint extension of $A:= (\imath/2) (a(\psi)a(\psi) - a^*(\psi)a^*(\psi))$.
$A$ turns out to be symmetric if defined on $F$. If $\Phi \in F$ contains exactly $k$ particles in the state $\psi$
one finds $\norm{A^n \Phi}\leq \sqrt{(2n + k)!}$. From that it arises
 $\sum_{n=0}^{+\infty}\lambda^n \norm{A^n \Phi}/n! <+\infty$ if $|\lambda| <1/2$. Therefore
the vectors in $F$ are analytic for $A$ and thus
 $A$ is essentially
self-adjoint on $F$, $\overline{A}$ being its unique self-adjoint extension.
In particular the commutation relation (\ref{eq:commF}) are, in fact, valid for $\Phi \in F$ and lead to the further commutation relations
\begin{equation}\label{eq:commF2}
\left[(\imath A)^n, a(\psi) -a^*(\psi)\right] \Phi = \sum_{k=0}^{n-1} \binom{n}{k}  (a(\psi)-a^*(\psi)) (\imath A)^k \Phi
\end{equation}
\noindent for all $\Phi\in F$.
Using (\ref{eq:commF2}) one easily proves the validity of this identity  for $|\lambda|<1/4$ and $\Phi \in F$:
\begin{equation}\label{eq:InterE} 
\sum_{n=0}^{+\infty} \frac{(\imath\lambda A)^n}{n!} (a(\psi)-a^*(\psi)) \Phi  =
\sum_{n=0}^{+\infty} (a(\psi)-a^*(\psi)) \frac{(\imath\lambda A + \lambda 1)^n}{n!}  \Phi.
\end{equation}
The series
$\sum_{n=0}^{+\infty} \frac{(\imath\lambda A + \lambda 1)^n}{n!}  \Phi$
 converges for every $\Phi \in F$ and $|\lambda |< 1/4$ as one can establish making use of the bounds
$\norm{(A+\imath 1)^n \Phi} \leq 2^n \sqrt{(2n+k)!}$ 
when $\Phi \in F$ contains exactly $k$ particles in the state $\psi$. Therefore closedness of $\overline{a(\psi)-a^*(\psi)}$
 imply, via (\ref{eq:InterE}), that the following two facts hold:\\

\noindent \textbf{(i)} $\sum_{n=0}^{+\infty} \frac{(\imath\lambda A + \lambda 1)^n}{n!}  \Phi \in \Dom{\overline{a(\psi)-a^*(\psi)}}$ when
 $\Phi \in F$, $|\lambda|< 1/4$,\\

\noindent\textbf{(ii)} $\overline{a(\psi)-a^*(\psi)} \sum_{n=0}^{+\infty} \frac{(\imath\lambda A + \lambda 1)^n}{n!} 
 \Phi = \sum_{n=0}^{+\infty} (a(\psi)-a^*(\psi)) \frac{(\imath\lambda A + \lambda 1)^n}{n!}  \Phi$\\

\noindent Therefore (\ref{eq:InterE}) can be re-written as
\begin{equation}\label{eq:InterEx}
 e^{\imath\lambda \overline{A}} (a(\psi)-a^*(\psi)) \Phi  =
\overline{a(\psi)-a^*(\psi)} \sum_{n=0}^{+\infty}  \frac{(\imath\lambda A + \lambda 1)^n}{n!}  \Phi,
\end{equation}
 where we have also used the fact that $(a(\psi)-a^*(\psi)) \Phi \in F$ when $\Phi\in F$ and thus the exponential
 $e^{\imath\lambda \overline{A}} (a(\psi)-a^*(\psi)) \Phi$ can be expanded in series. Since $\lambda 1$ and $\imath\lambda A$
 commute, following exactly the same proof as used for numbers,
 one achieves $ \sum_{n=0}^{+\infty}  \frac{(\imath\lambda A + \lambda
 1)^n}{n!}  \Phi =e^\lambda \sum_{n=0}^{+\infty}\frac{(\imath\lambda A)^n}{n!}\Phi$. 
On the other hand, since $\Phi$ is analytic for $A$, the right-hand side is nothing but
 $e^{\lambda} e^{\imath\lambda \overline{A}} \Phi$.   
Summing up, the identity (\ref{eq:InterEx}) can be re-stated as
$$e^{\imath\lambda \overline{A}}\, \overline{a(\psi)-a^*(\psi)} \Phi =  \overline{e^\lambda(a(\psi)-a^*(\psi))}\,e^{\imath\lambda \overline{A}}\Phi\:,
\quad \forall\,\, \Phi \in F,\, |\lambda| <1/4.$$
Using recursively this identity we come immediately to
\begin{equation} \label{eq:InterE2}
e^{\imath\lambda \overline{A}}\, \overline{a(\psi)-a^*(\psi)}^n \Phi = \overline{e^{\lambda}(a(\psi)-a^*(\psi))}^n
e^{\imath\lambda \overline{A}}\Phi\quad \forall\,\, \Phi \in F,\, |\lambda| <1/4.
\end{equation}
Since $e^{\imath\lambda \overline{A}}$ is unitary, (\ref{eq:InterE2}) entails that, for $\Phi \in F$, $|\lambda| <1/4$ and every $u \in \bC$:
$$ \sum_{n=0}^\infty \frac{u^n}{n!} \norm{\overline{e^{\lambda}(a(\psi)-a^*(\psi))}^n e^{\imath\lambda \overline{A}} \Phi}  =
 \sum_{n=0}^\infty \frac{u^n}{n!}\norm{\overline{a(\psi)-a^*(\psi)}^n \Phi} < +\infty\:,
$$
where we have used the fact that every $\Phi \in F$ is analytic (for every value of the parameter $u$) for $ \overline{\imath a(\psi)-\imath a^*(\psi)}$
as is well known (see \cite{Bratteli_Robinson_II}). We have found that $e^{\imath\lambda \overline{A}} \Phi$ is analytic for
$\overline{e^{\lambda}(a(\psi)-a^*(\psi))}$. In this context, the identity arising from (\ref{eq:InterE2}) for
$\Phi \in F$ and $|\lambda| <1/4$,
$$e^{\imath\lambda \overline{A}} \sum_{n=0}^{+\infty}\frac{1}{n!}\overline{a(\psi)-a^*(\psi)}^n \Phi =
\sum_{n=0}^{+\infty}\frac{1}{n!} \overline{e^{\lambda}(a(\psi)-a^*(\psi))}^n
e^{\imath\lambda \overline{A}}\Phi $$
 can be re-written as
$e^{\imath\lambda \overline{A}} e^{\overline{a(\psi)-a^*(\psi)}} \Phi =
e^{\overline{e^\lambda (a(\psi)-a^*(\psi))}} 
e^{\imath\lambda \overline{A}} \Phi$.
That is, taking advantage from the fact that $F$ is dense,
$E_{\lambda}^{(\psi)}  e^{\overline{a(\psi)-a^*(\psi)}} E_{\lambda}^{(\psi)*}  =
e^{ \overline{e^\lambda(a(\psi)-a^*(\psi))}} 
$,
where we have defined $E_{\lambda}^{(\psi)} := e^{\imath\lambda \overline{A}}$.
Finally, employing $\bR$-linearity of $\psi \mapsto a(\psi), a^*(\psi)$, the achieved formula can be
re-stated as
$$E_{\lambda}^{(\psi)}  W[\psi] E_{\lambda}^{(\psi)*}  =
W\left[e^\lambda \psi\right]\:. 
$$
The restriction $|\lambda| <1/4$ can be dropped by employing iteratively the identity above and noticing that
$E_{\lambda}^{(\psi)}$ is additive in $\lambda\in \bR$. Hence the obtained identity holds true for every
$\lambda \in \bR$.
$\Box$\\
 
\noindent {\bf Proof of proposition  \ref{prop:last}}.
 Choose two intervals $\open{J}_1, \open{J}_2 \in \poset{R}$ with $(-\pi, \pi) \supset \overline{\open{J}_2}$,
$\open{J}_2 \supset \overline{\open{J}_1}$ and $\open{J}_1 \supset \overline{\open{J}}$.
As a further ingredient we fix an open neighborhood of $1$, $\open{O} = (e^{-\omega}, e^{\omega})$ with $\omega >0$  so small that
(1) $\lambda \overline{\open{J}} \subset \open{J}_1$, (2) $\lambda \overline{\open{J}_2} \subset (-\pi, \pi)$ for all $\lambda \in \open{O}$. Notice that
$\lambda \in \open{O}$ iff $\lambda^{-1} \in \open{O}$.
With these definitions, let $\chi \in \smoothfuncv{\bS^1}{\bR}$ be 
 such that $0 \leq \chi(\theta) \leq 1$ for $\theta \in \bS^1$ and
 $\chi(\theta) =1$ for $\theta \in \open{J}_1$ but $\chi(\theta) =0$ in $\bS^1 \setminus \open{J}_2$.
Now  consider the class of operators $U_\lambda : \sqint{\bS^1}{\theta} \to \sqint{\bS^1}{\theta}$, $\lambda \in \open{O}$, defined by:
  $$(U_\lambda f)(\theta):= \frac{\chi(\theta)}{\sqrt{\lambda}} f(\theta/\lambda)\:, \quad \forall\,\theta \in (-\pi,\pi]\:.$$
   Exploiting the presence of the smoothing function $\chi$ and making a trivial change of variables where appropriate, one proves the following features of $U_\lambda$:
  \begin{eqnarray}
    U_\lambda (\smoothfuncv{\bS^1}{\bR})& \subset & \smoothfuncv{\bS^1}{\bR}\:, \quad \forall\,  \lambda \in \open{O},\label{eq:smoothU}\\
  \norm{U_\lambda} &\leq& 1\label{eq:normU}\:, \quad \forall\,  \lambda \in \open{O},\\
  U_1\rest_{\sqint{\open{J}}{\theta}} &=& 1 \label{eq:IU},\\
  \lim_{\lambda\to 1} U_\lambda f &=& f\quad\mbox{if}\: f \in \smoothfuncv{\open{J}}{\bC}. \label{eq:sadded}
  \end{eqnarray}
  By direct inspection one also finds that:
 \begin{equation}
  (U^*_\lambda f)(\theta)  =   \sqrt{\lambda}\,\chi(\lambda \theta) f(\lambda \theta), \quad \forall\, f\in
\sqint{\bS^1}{\theta},\label{eq:U*U}
  \end{equation}
\noindent for all $\theta \in (-\pi,\pi]$ and $\lambda \in \open{O}$.
  From (\ref{eq:U*U}) properties analogous to those found for $U_\lambda$ can be straightforwardly established:
   \begin{eqnarray}
    U^*_\lambda (\smoothfuncv{\bS^1}{\bR})&\subset& \smoothfuncv{\bS^1}{\bR}\:, \quad \forall\,  \lambda \in \open{O}\:,\label{eq:smoothU*}\\
  \norm{U^*_\lambda} &\leq& 1\label{eq:normU*}\:, \quad \forall\,  \lambda \in \open{O}\:,\\
  U^*_1\rest_{\sqint{\open{J}}{\theta}} &=& 1 \label{eq:IU*}\:,\\
   U^*_{1/\lambda} \rest_{\sqint{\open{J}}{\theta}} &=&  U_{\lambda} \rest_{\sqint{\open{J}}{\theta}} \label{eq:UU*}\:, \quad \forall\,  \lambda \in \open{O}\:,\\
   \lim_{\lambda\to 1} U^{*}_\lambda f &=& f\quad\mbox{if}\: f \in \smoothfuncv{\open{J}}{\bC}. \label{eq:ssadded}
  \end{eqnarray}
  \noindent\textbf{Remark}. 
In view of the definition of $U_\lambda$ and (\ref{eq:UU*}), if $f \in \testfuncv{\open{J}}{\bR}$ then $$\supp{U_\lambda f}= \supp{U^*_{1/\lambda}f} = \lambda \:\supp{f}\:.$$

\noindent  Taking the definition of the quantization map (\ref{eq:1-p struct}) into account, one realizes that a candidate for $D_\lambda$ is the following operator, initially defined on $\smoothfuncv{\bS^1}{\bC}$:
\begin{equation}\label{eq:candidate} 
D^{(0)}_\lambda \psi := A^{1/4} U_\lambda A^{-1/4} \:\Re e (\psi) \:+\:  \imath  A^{-1/4} U^*_{1/\lambda} A^{1/4}  \:\Im m (\psi)
\end{equation}
for all $\lambda \in \open{O}$ and $\psi \in M_{\open{J}_{0}}$.
The right hand side is in fact  well-defined if $\psi \in K(\Cdata{\open{L}})$ with $\poset{R} \ni L \subsetneq \open{J}$;
$A^{-1/4} \: \Re e (\psi)$ and $A^{1/4}  \:\Im m (\psi)$ belong to $\testfuncv{J_0}{\bR}$ so both act on vectors belonging to their domains due to (\ref{eq:smoothU}) and (\ref{eq:smoothU*}).
Moreover it fulfills (a) in the thesis since $D^{(0)}_\lambda \psi \in K(\Cdata{\lambda\open{L}})$ in view of the previous remark.
However both operators $A^{1/4} U_\lambda A^{-1/4}$ and $A^{-1/4} U^*_{1/\lambda} A^{1/4}$ are well defined on $\smoothfuncv{\bS^1}{\bC}$.
To extend the validity of (a)  to every space
 $M_{\open{L}} := \overline{K(\Cdata{\open{L}})}$
with $\open{L} \subsetneq \open{J}$ as requested in the thesis,
it is sufficient  to prove that the  operators $A^{1/4} U_\lambda A^{-1/4}$ and $A^{-1/4} U^*_{1/\lambda} A^{1/4}$ are bounded on $\smoothfuncv{\bS^1}{\bC}$ and to extend them and $\mathcal D^{(0)}_\lambda$ by continuity on the whole space $\sqint{\bS^1}{\theta}$. The restriction
$\mathcal D_\lambda$ to $M_{\open{L}}$ of this extension will satisfy (a) by construction.

We'll use an argument based on an interpolation theorem.
 Consider $f \in \smoothfuncv{\bS^1}{\bC}$ and define $\chi_\lambda(\theta) := \chi(\lambda \theta)$.
     By direct inspection one finds that
     $\norm{A U_\lambda f}_{L^2}^2 \leq \lambda^{-4}\norm{A_{\lambda m} (\chi_\lambda f)}_{L^2}^2$,
  where $A_{\lambda m}$ is $A$ with mass $m$ replaced by $\lambda m$. One also finds that for $\lambda <1$ $\norm{A_{\lambda m} g}_{L^2}^2$ is bounded by
  $\norm{A g}^2$, otherwise by $\lambda^4 \norm{A g}^2$. Summarizing
  $$\norm{A U_\lambda f}_{L^2} \leq \: \sup_{\lambda \in \open{O}}\{1,\lambda^{-4}\}\: \norm{A (\chi_\lambda f)}^2_{L^2} \:.$$
  We can improve this upper bound expanding $A (\chi_\lambda f)$ as follows:
 \begin{equation}\label{eq:long} 
\norm{A (\chi_\lambda f)}_{L^2} \leq \norm{\chi_\lambda A f}_{L^2} + \norm{\frac{d^2 \chi_\lambda}{d\theta^2} f}_{L^2}
+ 2\norm{\frac{d \chi_\lambda}{d\theta} \frac{d f}{d\theta}}_{L^2}\:.
\end{equation}
    Now, integrating per parts we get:
\begin{equation*}
\begin{split}
 \norm{\chi_\lambda A f}_{L^2} &\leq  \norm{\chi_\lambda}_\infty \norm{A f}_{L^2} = \norm{A f}_{L^2}\:,\\
    \norm{\frac{d^2 \chi_\lambda}{d\theta^2} f}_{L^2} &\leq \norm{\frac{d^2 \chi_\lambda}{d\theta^2}}_{\infty}\norm{f}_{L^2},\\
    \norm{\frac{d \chi_\lambda}{d\theta} \frac{d f}{d\theta}}_{L^2} &\leq \norm{\frac{d \chi_\lambda}{d\theta}}_{\infty}
    \norm{\frac{d f}{d\theta}}_{L^2} \leq  \norm{\frac{d \chi_\lambda}{d\theta}}_{\infty}
   \sqrt{ \ip{\overline{f}}{\frac{d^2 f}{d\theta^2}}}\leq\\ 
&\leq\norm{\frac{d \chi_\lambda}{d\theta}}_{\infty} \sqrt{ \norm{f}  _{L^2} \norm{\frac{d^2 f}{d\theta^2}}_{L^2}}
\end{split}
\end{equation*}
Now notice that  $A \geq \lambda_0 1$ where $\lambda_0 >0$ is the least eigenvalue of $A$ and thus $\norm{Af}_{L^2} \geq \lambda_0 \norm{f}_{L^2}$. Similarly $A \geq  -\frac{d^2}{d\theta^2}$ and thus $\norm{Af}_{L^2}\geq \norm{d^2 f/ d\theta^2}_{L^2}$, therefore:
   \begin{eqnarray}
    \norm{\frac{d \chi_\lambda}{d\theta} \frac{d f}{d\theta} }_{L^2} &\leq &  \lambda_0^{-1/2}\norm{\frac{d \chi_\lambda}{d\theta}}_{\infty} \norm{Af}_{L^2} \nonumber
    \end{eqnarray}
  Using these estimates in (\ref{eq:long}) we finally obtains:
\begin{equation}\label{eq:stimaM}
 \norm{A U_\lambda f}_{L^2}  \leq C \norm{Af}_{L^2}
\end{equation}
for all $\lambda \in \open{O}$ and $f\in \smoothfuncv{\bS^1}{\bR}$, where
 $$C =  \sup_{\lambda \in \open{O}}\{1,\lambda^{-4}\} \sup_{\lambda \in \open{O}} \left\{1 + \norm{\frac{d^2 \chi_\lambda}{d\theta^2}}_{\infty} +
\lambda_0^{-1/2}\norm{\frac{d \chi_\lambda}{d\theta}}_{\infty}\right\} \:.$$
 $C$ is finite: it can be proved by shrinking $\open{O}$ and noticing that the function $(\lambda, \theta) \mapsto\chi_\lambda(\theta)$
and its derivatives are bounded in the compact $\overline{\open{O}} \times \bS^1$ since they are continuous. Since $\testfunc{\bS^1}$ is a core for the self-adjoint (and thus closed) operator $A$, as a byproduct (\ref{eq:stimaM}) implies:
\begin{eqnarray}
U_\lambda (\Dom{A}) &\subset& \Dom{A}\quad \forall\, \lambda \in \open{O},  \nonumber\\
\norm{A U_\lambda f}_{L^2}  &\leq& C \norm{Af}_{L^2}\:, \forall\, \lambda \in \open{O},\quad f\in \Dom{A}. \nonumber
\end{eqnarray}
The proof is immediate noticing that if $f \in \Dom{A}$ there is a sequence $\testfunc{\bS^1} \ni f_n \to f$ with
$Af_n \to Af$ and , in view of continuity of $U_{\lambda}$,  $\{U_{\lambda}f_n\}_{n\in \bN}$ is Cauchy and, in view of (\ref{eq:stimaM}), $\{AU_{\lambda}f_n\}_{n\in \bN}$
is Cauchy too. Closedness of $A$ implies that $U_{\lambda}f_n \to U_{\lambda}f \in \Dom{A}$ and $A(U_{\lambda}f_n) \to A(U_{\lambda}f)$. This also proves that (\ref{eq:stimaM})
is still valid in $\operatorname{Dom}(A)$ by continuity. As $A\geq 0$ and (\ref{eq:normU}) holds, \cite[prop. 9, chap. IX.5]{Reed_Simon_II} used twice implies that 
 \begin{eqnarray}
U_\lambda (\operatorname{Dom}(A^{1/4})) &\subset& \operatorname{Dom}(A^{1/4})\quad \forall\, \lambda \in \open{O}, \nonumber\\
\norm{A^{1/4} U_\lambda f}_{L^2}  &\leq& C^{1/4} \norm{A^{1/4}f}_{L^2}\:, \quad\forall\, \lambda \in \open{O},\quad f\in \Dom{A^{1/4}},\nonumber
\end{eqnarray}
so that, since $\operatorname{Ran}(A^{-1/4}) = \Dom{A^{1/4}}$ and
$\Dom{A^{-1/4}}$ is the whole Hilbert space,
\begin{equation}\label{eq:stime}
A^{1/4} U_\lambda A^{-1/4}= B_\lambda: \csqint{\bS^1}{\theta}\to \csqint{\bS^1}{\theta} \quad  
\end{equation}
with $\norm{B_\lambda}\leq C^{1/4}$ for all $\lambda \in \open{O}$. 

\noindent This concludes the proof of the continuity of the
former operator in the right-hand side of (\ref{eq:candidate}). 

Let us pass to the latter operator. By construction, taking the adjoint of $B_\lambda$ and replacing\footnote{Remind that $\lambda \in \open{O}$ iff $1/\lambda \in \open{O}$.} $\lambda$ with $1/\lambda$, we get $A^{-1/4} U^*_{1/\lambda} A^{1/4} \subset  B_{1/\lambda}^*$ on the dense domain $\Dom{A^{1/4}}$.
Since $B^*_{1/\lambda}$ is defined on the whole Hilbert space and $\norm{B^*_{1/\lambda}} = \norm{B_{1/\lambda}} \leq C^{1/4}$, we conclude that
$A^{-1/4} U^*_{1/\lambda} A^{1/4}$ continuosly extends to $B^*_{1/\lambda}: \csqint{\bS^1}{\theta}\to \csqint{\bS^1}{\theta}$.
 This concludes the proof of (a).

 Concerning the property (b): $\lim_{\lambda\to 1} D_\lambda\psi = 1$ for $\psi \in M_\open{L}$ with $\poset{R} \ni \open{L} \subsetneq \open{J}$, it is equivalent to prove that $B_\lambda \Re e (\psi) \to \Re e (\psi)$ and $B^*_{1/\lambda} \Im m (\psi) \to \Im m (\psi)$
  as $\lambda \to 1$ for $\psi \in M_\open{L}$.\\
Notice that $A^{-1/4}$ is continuous and so, when $\psi \in K(\Cdata{\open{L}})$, one has
\begin{equation}
\begin{split}
A^{-1/4} U^*_{1/\lambda} A^{1/4} (\Im m (\psi)) &\to A^{-1/4} U^*_1 A^{1/4} (\Im m (\psi)) =\\  
&=A^{-1/4} A^{1/4} (\Im m (\psi)) = \Im m (\psi),\\
\end{split}
\end{equation}
where we have used  (\ref{eq:ssadded}) and
(\ref{eq:IU*}) noticing that $A^{1/4} (\Im m (\psi)) \in \smoothfunc{\open{L}} \subset\sqint{\open{J}}{\theta}$ when $\psi \in K(\Cdata{\open{L}})$.
Due to the uniform bound (\ref{eq:stime}) the result can be extended to $M_{\open{L}}:= \overline{K(\Cdata{\open{L}})}$. If $\psi\in M_{\open{L}}$, let
   $K(\Cdata{\open{L}}) \ni \psi_n \to \psi$ and denote $\Im m (\psi_n)$ and $\Im m (\psi)$ respectively by $f_n$ and $f$; obviously $f_n \to f$. One has (for $\lambda \in \open{O}$ so that (\ref{eq:stime}) holds)
\begin{equation}
\begin{split}
\norm{B^*_{1/\lambda} f- f} &\leq\norm{B^*_{1/\lambda} (f-f_n)} + \norm{B^*_{1/\lambda} f_n -f_n} + \norm{f_n-f} \leq\\
&\leq (C^{1/4}+1)\norm{f-f_n} + \norm{B^*_{1/\lambda} f_n -f_n}.
\end{split}
\end{equation}
For any fixed $\epsilon>0$, taking $n= n_\epsilon$  such that $(C^{1/4}+1)\norm{f-f_{n_\epsilon}} < \epsilon/2$, we can found $\delta>0$
  such that $\lambda \in (1-\delta, 1+\delta)$ entails $\norm{B^*_{1/\lambda}f_{n_\epsilon} -f_{n_\epsilon}}< \epsilon/2$. Hence
  for that $\epsilon>0$, $\norm{B^*_{1/\lambda} f- f} <
  \epsilon$  provided  that $\lambda \in (1-\delta, 1+\delta)$. That is $B^*_{1/\lambda} \Im m (\psi_n) \to \Im m (\psi)$ as $\lambda \to 1^-$ for all $\psi \in M_{\open{L}}$. 

 To conclude we prove that $\lim_{\lambda\to 1}B_{\lambda} \Re e (\psi) = \Re e (\psi)$ for $\psi \in M_\open{L}$ and $\open{L} \subsetneq \open{J}$. Let us indicate $\Re e(\psi)$ by $f$. As before, first consider the case
$\psi \in K(\Cdata{\open{L}})$. This means in particular that  $f = A^{1/4}h$ for some $h\in \testfuncv{\open{J}}{\bR}$. Now notice that:
\begin{equation}\label{eq.ustep}
\norm{B_{\lambda} f - f}^2 = \norm{B_{\lambda} f}^2 + \norm{f}^2 - 2 \Re e\ip{f}{B_{\lambda} f}.
\end{equation}
In our case, as $\lambda \to 1^-$, due to (\ref{eq:IU}) and (\ref{eq:sadded}):
\begin{equation}
\begin{split}
\ip{f}{B_{\lambda} f} &= \ip{A^{1/4} h}{A^{1/4} U_\lambda h} = \ip{ A^{1/2} h}{U_\lambda h}\to \ip{A^{1/2} h}{h} =\\
&= \ip{A^{1/4} h}{A^{1/4} h} = \ip{f}{f}.\\
\end{split}
\end{equation}
Similarly $\norm{B_{\lambda} f}^2 \to \ip{f}{f}$ as $\lambda \to 1^-$ because:
\begin{equation}
\begin{split}
\norm{B_{\lambda} f}^2 &= \ip{f}{A^{1/4} U^*_\lambda A^{-1/4} A^{1/4} U_\lambda A^{-1/4} f} =\\ 
&=\ip{f}{A^{1/4} U^*_\lambda U_\lambda A^{-1/4} A^{1/4} h} = \ip{f}{A^{1/4} U^*_\lambda U_\lambda h}\\
\end{split}
\end{equation}
and from the definition of $U_\lambda$ and $U_\lambda^{*}$ one finds that 
$U^*_\lambda U_\lambda h = h$, for each $h\in \smoothfuncv{\open{J}}{\bR}$. Putting all together one concludes that $\lim_{\lambda\to 1}B_{\lambda} \Re e (\psi) = \Re e (\psi)$ when $\psi \in K(\Cdata{\open{L}})$. Extension to the case $\psi \in M_\open{L} := \overline{K(\Cdata{\open{L}})}$ is the same as before.  $\Box$\\

\noindent {\bf Proof of lemma \ref{lem:Tomita}}. As a general fact it  holds $\Cdata{\open{I}} \subset (\Cdata{\open{I}'})'$ and
$\Cdata{\open{J}} \subset (\Cdata{\open{J}'})'$, thus applying the quantization map, taking closures and intersections, we get $M_\open{I} \cap M_\open{J}  \subset (M_{\open{I}'})' \cap (M_{\open{J}'})'$.
 In other terms, if
$\psi \in M_\open{I} \cap M_\open{J}$ then
$\Im m \ip{\psi}{ \phi} =0$ when either $\phi\in M_{\open{I}'}$ or $\phi \in M_{\open{J}'}$. In particular,
$\Im m \ip{\psi}{K(\Phi,\Pi)} =0$ when both the smooth real functions $\Phi, \Pi$ are supported in
$\open{I}'$ or in $\open{J}'$. Therefore the distributions\footnote{See the proof of lemma \ref{lem:central_lemma} where it was showed that those functionals are in fact distributions in $\mathcal{D}'(\bS^1)$.} $\smoothfuncv{\bS^1}{\bR} \ni
f \mapsto \ip{\Im m (\psi)}{A^{1/4} f}$
and $\smoothfuncv{\bS^1}{\bR} \ni
f \mapsto \ip{\Re e (\psi)}{A^{-1/4} f}$ have support included in $\bS^1 \setminus (\open{I}'\cup \open{J}') = (\bS^1 \setminus \open{I}') \cap (\bS^1 \setminus \open{J}')
= \overline{\open{I}} \cap \overline{\open{J}}$. Since $\open{I}$ and $\open{J}$ are disjoint proper open segments one has
$\overline{\open{I}} \cap \overline{\open{J}}=\partial \open{I} \cap \partial \open{J}$.
Therefore, if $\partial \open{I} \cap \partial \open{J}= \emptyset$ both distributions $\ip{\Re e (\psi)}{A^{-1/4} \cdot}$
and $\ip{\Im m (\psi)}{A^{1/4}  \cdot}$ vanish and this implies that $\psi =0$ since
$\overline{A^{\pm 1/4}(\smoothfunc{\bS^{1}})} = \sqint{\bS^1}{\theta}$ as proved in proposition \ref{prop:hyperbolic_generators}. Otherwise
$\partial I \cap \partial J$ contains two points at most, say $p$ and $q$.
We can assume, without loss of generality, that $\theta_p=0$ and $\theta_q \in (0,2\pi)$ (this situation can always be achieved by
redefining the origin of
coordinate $\theta$ on $\bS^1$). It is  a well-known result
of distributions theory that distributions with support given by a single point are
polynomials of derivatives of Dirac deltas supported on that point (the case of a finite number of points is a
trivial extension). Consider first of all $\ip{\Im m (\psi)}{A^{1/4}  f\cdot}$. In our case there must be a finite number
of coefficients $a_j,b_j \in \bR$ such that, for every $f \in \smoothfuncv{\bS^1}{\bR}$  it must hold
$$\ip{\Im m (\psi)}{A^{1/4} f} = \sum_{j=0}^{N_p} a_j \frac{d^j}{d\theta^j} f|_{p} +
\sum_{j=0}^{N_q} b_j \frac{d^j}{d\theta^j} f|_{q}\:.$$
Passing to Fourier transformation, the identity above can be re-written if $\psi_k$ and $f_k$ are the Fourier
coefficients of $\Im m (\psi)$ and $f$ respectively
$$\sum_{k\in \bZ}  \overline{\psi_k} (k^2+m^2)^{1/4} f_k = \sum_{k\in \bZ}  \left(
\sum_{j=0}^{N_p} a_j (\imath k)^j + \sum_{j=0}^{N_q} b_j (\imath k)^j e^{\imath k \theta_q}\right) f_k\:.$$
(notice that $f_k \to 0$ faster than every power $|k|^{-M}$ so that the right hand side is well defined).
Since the functions $f$ are dense in $\csqint{\bS^{1}}{\theta}$, this is equivalent to say that:
\begin{equation}\label{eq:AGG}
 \overline{\psi}_{k}:= (k^{2}+m^{2})^{-\frac{1}{4}}\left(\sum_{j=0}^{N} (a_{j} + e^{\imath k\theta_{q}} b_{j})(\imath k)^{j}\right)
\end{equation}
where we have defined $N:=\max(N_{p},N_{q})$ (assuming $a_j=0$ and $b_j=0$ for the added coefficients).
Let us prove that the right-hand side defines a $\ell^2(\bZ)$ sequence - as it is required by $\psi\in\csqint{\bS^1}{\theta}$ -  only if $a_j=0$ and $b_j=0$ for every $j$.
Assume that $\{{\psi}_{k}\}_k \in \ell^2(\bZ)$ so that the right-hand side of (\ref{eq:AGG}) defines a $\ell^2(\bZ)$ sequence.
If $c_{j,k}:= \Re e\left[ (a_{j} + e^{\imath k\theta_{q}} b_{j})\imath^{j}\right]$, then
\begin{equation}\label{eq:ReRaw}
(\Re e(\overline{\psi}_{k}))^{2}= \left( k^{2}+m^{2}\right)^{-\frac{1}{2}}\sum_{l,j=0}^{N} c_{j,k}\,c_{l,k}\, k^{l+j}\:.
\end{equation}
The sequence $\{k\theta_{q}\}_{k\in\mathbb{Z}}$ in $[0, 2\pi]$ may be either
 periodic -- and this happens when $\frac{\theta_{q}}{2\pi}$ is rational --
 or it is dense in $[0, 2\pi]$ -- and this arises for $\frac{\theta_{q}}{2\pi}$ irrational. In both
cases, fixing $k_{0}\in\mathbb{Z}\setminus \{0\}$ and $\epsilon > 0$, there is a sequence of integers
$\{k^{(\epsilon)}_{n}\}_{n\in\mathbb{Z}}$ such that:
\begin{equation*}             
|c_{N,k^{(\epsilon)}_{n}}-c_{N,k_{0}}|<\epsilon\:,\quad\forall\,\; n\in\mathbb{Z}\:.
\end{equation*}
Moreover, defining $M := \max_{j=0,\ldots, N} |a_j| +|b_j|$ one has $c_{j,k} \geq -M> -\infty$,
therefore a lower bound for the right-hand side of (\ref{eq:ReRaw}) is
\begin{equation*}\label{eq:ReIneq}
(\Re e \overline{\psi}_{k^{(\epsilon)}_{n}})^{2}\geq ((k^{(\epsilon)}_{n})^{2}+m^{2})^{-\frac{1}{2}}\left((c_{N,k_{0}}-
\mbox{sign}(c_{N,k_{0}})  \epsilon)^{2}\,
(k_{n}^{(\epsilon)})^{2N} -
 \sum_{l+j<2N} M^2\,|k^{(\epsilon)}_{n}|^{l+j}
\right).
\end{equation*}
If $c_{N,k_{0}} \neq 0$ the leading term in the latter expression is
\[((k_{n}^{(\epsilon)})^{2}+m^{2})^{-\frac{1}{2}}(c_{N,k_{0}}-\mbox{sign}(c_{N,k_{0}}) \epsilon)^{2}(k_{n}^{(\epsilon)})^{2N},\] 
so that the right-hand side of (\ref{eq:ReIneq})
diverges to $+\infty$ -- and
 $\{{\psi}_{k}\}_{k}\notin\ell^{2}(\mathbb{Z})$ - unless $c_{N,k_{0}}-\mbox{sign}(c_{N,k_{0}}) \epsilon=0$.
Arbitrariness of  $\epsilon$ implies $c_{N,k_{0}}=0$ that is
$\Re e \left[ (a_{N} + e^{\imath k_{0}\theta_{q}} b_{N})\imath^{j}\right]= 0$.
Analogously one sees that
$\Im m \left[ (a_{N} + e^{\imath k_{0}\theta_{q}} b_{N})\imath^{j}\right] = 0
$,
and thus
$a_{N} + e^{\imath k_{0}\theta_{q}} b_{N}=0$. However,
since  $k_{0}$ was arbitary one also has  $a_{N}=b_{N}=0$.
Iterating the procedure one achieves
$
a_{j}=b_{j}=0,\quad \forall\,\; j=N,N-1,\ldots,1\:.
$
It remains to consider the case $j=0$, that is the case of $\{\psi_k\}_{k\in \bZ} \in \ell^2(\bZ)$ with
\begin{equation*}
{\overline{\psi}}_{k}:= (k^{2}+m^{2})^{-\frac{1}{4}}\left(a_{0} + e^{\imath k\theta_{q}} b_{0}\right),
\end{equation*}
\noindent where $a_0,b_0\in \bR$ are constant. Now
\begin{equation*}
|{\psi}_{k}|^{2} =
|\overline{\psi}_{k}|^{2}=(k^{2}+m^{2})^{-\frac{1}{2}}|a_{0} +
 e^{\imath k\theta_{q}} b_{0}|^{2}\geq (k^{2}+m^{2})^{-\frac{1}{2}}||a_{0}|-|b_{0}||^{2},                                                           
\end{equation*}
and thus $\{\psi\}_{k}\notin\ell^{2}(\mathbb{Z})$ unless $b_{0}=\pm a_{0}$.
With that choice we have in turn:
\begin{equation*}
|{\psi}_{k}|^{2}=2a_{0}^{2}\,(k^{2}+m^{2})^{-\frac{1}{2}}(1\pm\cos(k\theta_{q})).                             
\end{equation*}
As the series $\sum_{k=0}^{\infty}\frac{\cos(k\theta_{q})}{k}$ converges ($\theta_{q}\neq 0 \mod{2\pi}$ by hypotheses),
and $(k^{2}+m^{2})^{-\frac{1}{2}} \sim \frac{1}{k}$ for $k\rightarrow\infty$, it arises that
 $\sum_{k=0}^{\infty}|{\psi}_{k}|^{2}$ diverges barring the case $a_{0}= b_0 = 0$. 
This concludes the proof of the fact that  $a_j = b_j=0$ for all $j$ if $\psi \in \sqint{\bS^1}{\theta}$.
We have found that the distribution  $\ip{\Im m (\psi)}{A^{1/4} \cdot}$ must vanish.
The proof for $\ip{\Re e (\psi)}{A^{-1/4}  \cdot}$ is strictly analogous. Since both distributions vanish
and $\overline{A^{\pm 1/4}(\smoothfunc{\bS^1}}) = \csqint{\bS^1}{\theta}$, we are commited to admit that $\psi =0$, so that
$M_{\open{I}} \cap M_{\open{J}} = \{0\}$. \\
Concerning the last statement, from (\ref{eq:LRT2}) one has
$\snet{R}{\open{I}} \cap \snet{R}{\open{J}} =  \alg{R}[M_\open{I}] \cap \alg{R}[M_\open{J}] = \alg{R}[M_\open{I}\cap M_\open{J}] = \alg{R}[\{0\}] = \bC\cdot 1$.
$\Box$

\chapter{States and characters.}
\section{Characters over $\Cdata{\sdc{C}}$.}
Starting from  the ground state $\omega_0$ it's possible to define other states making use of {\em characters} of $\Cdata{\sdc{C}}$.
If $U(1)$ is the Abelian multiplicative group of unitary complex numbers equipped with the topology induced by $\bC$,
an (algebraic) {\bf character} $\chi : \Cdata{\sdc{C}}\to U(1)$ is a group homomorphism, where $\Cdata{\sdc{C}}$ is viewed as an additive group.
Therefore if $\chi$ is a character, for all $(\Phi,\Pi), (\Phi',\Pi') \in \Cdata{\sdc{C}}$,
\begin{gather}\label{eq:char} 
\chi((\Phi,\Pi) + (\Phi',\Pi')) = \chi((\Phi,\Pi)) \chi((\Phi',\Pi')),\notag\\ 
\chi((0,0)) = 1 \:,\quad \chi(-(\Phi,\Pi))= \overline{\chi((\Phi,\Pi))}.
\end{gather}
The set of characters of
$\Cdata{\sdc{C}}$ will be denoted by $\Ch(\Cdata{\sdc{C}})$. $\Ch(\Cdata{\sdc{C}})$ is a commutative group with product given by pointwise product of functions.
Fixing the ground state $\omega_0$, $\Cdata{\sdc{C}}$ and every $\Cdata{\open{I}}$ become a real pre-Hilbert space if equipped with the scalar product:
\begin{equation}\label{eq:mu}
 \mu((\Phi,\Pi), (\Phi',\Pi')) := \Re e \ip{K(\Phi,\Pi)}{K(\Phi',\Pi') }.
\end{equation}
The $\mu$-Hilbert completion of $(\Cdata{\sdc{C}}, \mu)$ coincides with
$\overline{K(\Cdata{\sdc{C}})} \subset \hs{H}_0$. Similarly the $\mu$-Hilbert completion of $(\Cdata{\open{I}}, \mu)$ is
$\overline{K(\Cdata{\open{I}})} = M_\open{I} \subset \hs{H}_0$. \\

\begin{proposition}\label{prop:DMP}
Referring to $\Ch(\Cdata{\sdc{C}})$ the following facts hold.\\
{\bf (a)} $\chi \in \Ch(\Cdata{\sdc{C}})$ is continuous on $\Cdata{\sdc{C}}$ (resp. $\Cdata{\open{I}}$ for some $\open{I} \in \poset{R}$) w.r.t. $\mu$ if and only if
there is a linear continuous functional
$f_\chi : \Cdata{\sdc{C}} \to \bR$ (resp. $f_\chi : \Cdata{\open{I}} \to \bR$)
with:
$$\chi((\Phi,\Pi)) = e^{\imath f_\chi((\Phi,\Pi))}\:,\quad \forall\, (\Phi,\Pi) \in \Cdata{\sdc{C}}
\quad \mbox{(resp. $\Cdata{\open{I}}$).}$$
Such $f_\chi$ is uniquely individuated by $\chi$.\\
{\bf (b)}  The characters
\begin{eqnarray}
\chi_f((\Phi,\Pi)) &:=&  e^{\imath\int_{\bS^1} f(\theta) \Phi(\theta)\: d\theta}\:, \quad\forall\,(\Phi,\Pi) \in \Cdata{\sdc{C}} \mbox{(resp. $\Cdata{\open{I}}$),} \label{eq:chif}\\ 
\chi^{(g)}((\Phi,\Pi)) &:=&  e^{-\imath\int_{\bS^1} g(\theta) \Pi(\theta) \: d\theta}\:, \quad\forall\,(\Phi,\Pi) \in \Cdata{\sdc{C}} \mbox{(resp. $\Cdata{\open{I}}$),}  \label{eq:chig}
\end{eqnarray}
are continuous w.r.t. $\mu$ when $f\in \csqint{\bS^1}{\theta}$ and $g \in \Dom{A^{1/4}}$ are real-valued functions. 
\end{proposition}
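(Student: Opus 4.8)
\textbf{Proof proposal for Proposition \ref{prop:DMP}.}

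The plan is to treat the two parts separately, part (a) being essentially a soft functional-analytic statement about characters of a topological abelian group, and part (b) being a concrete verification that the specific functionals attached to $\chi_f$ and $\chi^{(g)}$ are $\mu$-continuous.

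For part (a), one implication is immediate: if $\chi((\Phi,\Pi)) = e^{\imath f_\chi((\Phi,\Pi))}$ for a $\mu$-continuous linear functional $f_\chi$, then $\chi$ is a composition of continuous maps, hence continuous. The substantive direction assumes $\chi \in \Ch(\Cdata{\sdc{C}})$ is $\mu$-continuous and produces $f_\chi$. Here I would argue as follows. Since $\chi$ is continuous at $0$ and $\chi(0)=1$, there is a $\mu$-neighbourhood of $0$ on which $\chi$ takes values in the right half of $U(1)$, so one may unambiguously define $f_\chi := -\imath \operatorname{Log}\chi$ (principal branch) on that neighbourhood; the homomorphism property $\chi(x+y)=\chi(x)\chi(y)$ forces $f_\chi$ to be additive modulo $2\pi$ locally, and a standard ``small-scale linearity'' argument (use $\chi(x/n)^n = \chi(x)$ together with continuity to pin down $\chi(x/n)$, then pass to dyadic rationals and use continuity again) upgrades this to genuine additivity and then $\bR$-homogeneity of $f_\chi$ on the whole space. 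Continuity of $f_\chi$ on the neighbourhood transfers to continuity everywhere by linearity. Uniqueness is clear since two continuous linear functionals inducing the same character differ by a continuous linear functional with values in $2\pi\bZ$, hence vanish. The same argument applies verbatim with $\Cdata{\open{I}}$ and the restriction of $\mu$.

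For part (b), by part (a) it suffices to exhibit, for each of $\chi_f$ and $\chi^{(g)}$, the linear functional realising it and to check its $\mu$-continuity, i.e. its continuity with respect to the Hilbert seminorm $\|(\Phi,\Pi)\| := \|K(\Phi,\Pi)\|$ coming from $\mu((\Phi,\Pi),(\Phi,\Pi)) = \operatorname{Re}\langle K(\Phi,\Pi),K(\Phi,\Pi)\rangle = \|K(\Phi,\Pi)\|^2$. Recall $K(\Phi,\Pi) = 2^{-1/2}(A^{1/4}\Phi + \imath A^{-1/4}\Pi)$, so that $\|K(\Phi,\Pi)\|^2 = \tfrac12(\|A^{1/4}\Phi\|^2 + \|A^{-1/4}\Pi\|^2)$; in particular $\|A^{1/4}\Phi\| \le \sqrt{2}\,\|K(\Phi,\Pi)\|$ and $\|A^{-1/4}\Pi\| \le \sqrt{2}\,\|K(\Phi,\Pi)\|$. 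For $\chi_f$ with real $f \in L^2(\bS^1)$, the relevant functional is $(\Phi,\Pi) \mapsto \int f\Phi\,d\theta = \langle f,\Phi\rangle$. Writing $\langle f,\Phi\rangle = \langle A^{-1/4}f, A^{1/4}\Phi\rangle$ — legitimate since $A^{-1/4}$ is bounded (part (d) of Proposition \ref{prop:hyperbolic_generators}), so $A^{-1/4}f \in L^2$, and $\Phi \in \Dom{A^{1/4}}$ — Cauchy--Schwarz gives $|\langle f,\Phi\rangle| \le \|A^{-1/4}f\|\,\|A^{1/4}\Phi\| \le \sqrt{2}\,\|A^{-1/4}f\|\,\|K(\Phi,\Pi)\|$, which is the desired $\mu$-continuity, with the bound uniform over the choice of $\Pi$. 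For $\chi^{(g)}$ with real $g \in \Dom{A^{1/4}}$, the functional is $(\Phi,\Pi)\mapsto -\int g\Pi\,d\theta = -\langle g,\Pi\rangle$; now write $\langle g,\Pi\rangle = \langle A^{1/4}g, A^{-1/4}\Pi\rangle$, which makes sense precisely because $g \in \Dom{A^{1/4}}$, and Cauchy--Schwarz again yields $|\langle g,\Pi\rangle| \le \|A^{1/4}g\|\,\|A^{-1/4}\Pi\| \le \sqrt{2}\,\|A^{1/4}g\|\,\|K(\Phi,\Pi)\|$. Both restrictions to $\Cdata{\open{I}}$ are automatic since the $\mu$-norm on $\Cdata{\open{I}}$ is the restriction of that on $\Cdata{\sdc{C}}$.

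The only mild obstacle I anticipate is in part (a): making rigorous the passage from ``locally a homomorphism mod $2\pi$'' to ``globally an $\bR$-linear functional'' without circular reasoning. The standard device — exploiting divisibility of $(\Cdata{\sdc{C}},+)$ together with $\mu$-continuity to define $f_\chi$ consistently on rational multiples and then extend by density/continuity — handles this, but it must be stated carefully because $\Cdata{\sdc{C}}$ is only a pre-Hilbert space (not complete) in the $\mu$-topology; nonetheless this causes no trouble since we only ever extend along lines $\bR\cdot(\Phi,\Pi)$, which are finite-dimensional and hence complete. Part (b) is then purely computational and should present no difficulty beyond bookkeeping of the boundedness/domain properties of $A^{\pm 1/4}$ already recorded in Proposition \ref{prop:hyperbolic_generators}.
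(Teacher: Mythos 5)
Your proposal is correct and takes essentially the same route as the paper: part (b) is exactly the paper's computation, rewriting $\int f\Phi\,d\theta$ as $\ip{A^{-1/4}f}{A^{1/4}\Phi}$ (resp. $\ip{A^{1/4}g}{A^{-1/4}\Pi}$) and applying Cauchy--Schwarz against $\sqrt{2}\,\norm{K(\Phi,\Pi)}$, and your uniqueness argument in (a) (a continuous linear functional with values in $2\pi\bZ$ on a connected domain must vanish) is the paper's. The only divergence is that for the existence half of (a) the paper simply cites the appendix of \cite{DMP}, whereas you sketch the standard local-logarithm/divisibility argument yourself; your sketch, including the remark that extension only happens along the complete one-dimensional subspaces $\bR\cdot(\Phi,\Pi)$, is sound.
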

\begin{proof}
 (a) The first statement has been proved in the appendix of \cite{DMP} in the general case where $\Cdata{\sdc{C}}$ is a topological vector space. Concerning the uniqueness of $f_\chi$, we notice that if there was
another such function $f_\chi'$, the difference $g:= f_\chi - f_\chi'$ would be a continuous linear function
$g: \Cdata{\sdc{C}} \to \bR$
(resp. $g: \Cdata{\open{I}} \to \bR$)
 whose range is a subset of $W:= \{2\pi k\:|\: k\in \bZ\}$. Since the range must be connected -
because of  continuity and connectedness of $\Cdata{\sdc{C}}$ (resp. $\Cdata{\open{I}}$) -
$\Ran{g}$ must be one of the components of $W$. Since $g(0)=0$ by linearity, the only possible component is $\{0\}$. Hence
$g \equiv 0$ and $f_\chi=f_\chi'$.\\
(b) Concerning (\ref{eq:chif}) one has, using $\Dom{A^{-1/4}} = \sqint{\bS^1}{\theta}$, self-adjointness of $A$, $f= \bar{f}$
 and the fact that $A^{-1/4}$ commutes with complex conjugation:
$$\int_{\bS^1} f \Phi\: d\theta = \int_{\bS^1} f A^{-1/4} A^{1/4}\Phi\: d\theta =
 \int_{\bS^1} \overline{A^{-1/4} f}  A^{1/4}\Phi\: d\theta = \ip{A^{-1/4} f}{ A^{1/4}\Phi}.$$
As a consequence:
\begin{equation}
\begin{split}
\left|\int_{\bS^1} f \Phi\: d\theta\right|^2  &\leq \norm{A^{-1/4} f}^2 \norm{A^{1/4}\Phi}^2 \leq\\
&\leq\norm{A^{-1/4} f}^2\left(\norm{A^{1/4} \Phi}^2  +  \norm{A^{-1/4} \Pi}^2\right) =\\
&= 2\norm{A^{-1/4} f}^2 \: \mu((\Phi,\Pi),(\Phi,\Pi))\\
\end{split}
\end{equation}
and so $\chi_f$ is continuous with respect to $\mu$.
The proof of (\ref{eq:chig}) is analogous.
\end{proof}

\subsection{States induced by characters.}  Every character $\chi$ individuates an associated pure state $\omega_\chi$ over $\alg{W}_{KG}$. Indeed, consider the unique Weyl-algebra isomorphism $\alpha_\chi: \alg{W}_{KG} \to \alg{W}_{KG}$ induced by linearity and continuity by:
$$\alpha_\chi : W(\Phi,\Pi) \mapsto \chi((\Phi,\Pi))\: W(\Phi,\Pi)\:,$$
(notice that Weyl relations are preserved in view of (\ref{eq:char})). The state $\omega_\chi$ is defined by
\begin{equation}\label{eq:omegachi}
\omega_\chi(a) := \omega_{0}(\alpha_\chi(a))\:, \quad \forall\, a \in \alg{W}_{KG}.
\end{equation}
 By construction one sees that a possible representation for the GNS triple of $\omega_\chi$ is
 $(\varhs{H}_0, \pi_\chi, \Psi_0)$, that is the same GNS triple as that for $\omega_0$ but with $\pi_\chi$ induced by
 \begin{equation}
 \pi_\chi(W(\Phi,\Pi)) := \chi((\Phi,\Pi)) \pi_0(W(\Phi,\Pi)) \label{eq:pichi}
 \end{equation}
Making use of the  GNS triple $(\varhs{H}_0, \pi_\chi, \Psi_0)$ for $\omega_\chi$
one immediately finds that  $\omega_\chi$ is pure since $\pi_0$ and thus $\pi_\chi$ is irreducible. Since
each local von Neumann algebra $\snet{R}{\open{I}}_\chi  := \pi_\chi\left(\sweyl{\open{I}}\right)''$ associated with $\omega_\chi$
coincides with $\snet{R}{\open{I}}$, associated  with the vacuum $\omega_0$, we have that  the isotonous, spatial local and irreducible class of von Neumann algebras 
$\{\snet{R}{\open{I}}_\chi \}_{\open{I} \in \poset{R}}$ satisfies {\em spatial weak additivity}, {\em spatial   additivity}, {\em Reeh-Schlieder property},
{\em spatial local
definiteness}, {\em spatial Haag duality}, {\em spatial punctured Haag duality} and {\em factoriality}. Further properties are those
established in the following theorem. 
 
\begin{theorem} \label{th:chiequiv}
 If $\chi_1, \chi_2 \in \Ch(\Cdata{\sdc{C}})$ and $\omega_{\chi_i} : \alg{W}_{KG} \to \bC$ denotes the pure states
  individuated by the two characters via (\ref{eq:omegachi}), with GNS triple $(\varhs{H}_0, \pi_{\chi_i}, \Psi_0)$, the following facts hold.\\
{\bf (a)} For $\open{I} \in \poset{R}$ there is an unitary operator $U_{\open{I}} : \varhs{H}_0 \to \varhs{H}_0$ with
   $$U_{\open{I}}\pi_{\chi_1}(a)  =  \pi_{\chi_2}(a) U_{\open{I}} \:,\quad \forall\, a \in \sweyl{\open{I}}$$
   if and only if the character $\chi_1\overline{\chi_2}$ is continuous on $\Cdata{\open{I}}$ w.r.t $\mu$.\\
{\bf (b)}   $\omega_{\chi_1}$ is unitarily equivalent to $\omega_{\chi_2}$
if and only if the character $\chi_1\overline{\chi_2}$
is continuous on $\Cdata{\sdc{C}}$ w.r.t. $\mu$. \\
{\bf (c)} The characters $\chi_f$ and $\chi^{(g)}$ defined in (\ref{eq:chif}) and (\ref{eq:chig})
for real-valued functions $f\in \sqint{\bS^1}{\theta}$ and $g \in \Dom{A^{1/4}}$ induce states $\omega_{\chi_f}$ and $\omega_{\chi^{(g)}}$
unitarily equivalent to $\omega_0$ such that:
\begin{eqnarray}
\pi_{\chi_f} (a) &=&  W\left[\imath\,2^{-1/2} A^{-1/4} f\right] \pi_{0}(a) W\left[\imath\,2^{-1/2} A^{-1/4} f\right]^*,\\ \label{eq:intert1}
 \pi_{\chi^{(g)}} (a)  &=& W\left[-2^{-1/2} A^{1/4} g\right] \pi_{0}(a)  W\left[-2^{-1/2} A^{1/4} g\right]^*, \label{eq:intert2}
\end{eqnarray}
\noindent for all $a\in \alg{W}_{KG}$.\\
{\bf (d)} $\omega_{\chi_1}$ is unitarily equivalent to $\omega_{\chi_2}$ if and only if it is locally unitarily equivalent to
 $\omega_{\chi_2}$.
\end{theorem}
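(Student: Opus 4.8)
The plan is to reduce part (d) to a single analytic fact about the character $\chi := \chi_1\overline{\chi_2}$, using parts (a) and (b) of theorem \ref{th:chiequiv} which are already available. By (b), $\omega_{\chi_1}$ and $\omega_{\chi_2}$ are globally unitarily equivalent if and only if $\chi$ is $\mu$-continuous on $\Cdata{\sdc{C}}$; by (a), $\omega_{\chi_1}$ is locally unitarily equivalent to $\omega_{\chi_2}$ (i.e. for every $\open{I}\in\poset{R}$ there is an intertwining unitary on $\sweyl{\open{I}}$) precisely when $\chi$ is $\mu$-continuous on $\Cdata{\open{I}}$ for every $\open{I}\in\poset{R}$. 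Since the scalar product $\mu$ on each $\Cdata{\open{I}}$ is the restriction of $\mu$ on $\Cdata{\sdc{C}}$, the implication ``global $\Rightarrow$ local'' is immediate (a restriction of a continuous map is continuous). So the whole content is the converse: if $\chi$ is $\mu$-continuous on every $\Cdata{\open{I}}$, then $\chi$ is $\mu$-continuous on all of $\Cdata{\sdc{C}}$.

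To prove this I would fix two proper open intervals $\open{I},\open{J}\in\poset{R}$ with $\open{I}\cup\open{J}=\bS^1$ (two suitably placed overlapping arcs of length $3\pi/2$ do the job, their complements having nonempty interior) and a smooth partition of unity $g+h=1$ on $\bS^1$ subordinate to $\{\open{I},\open{J}\}$, so $\supp g\subset\open{I}$ and $\supp h\subset\open{J}$. For $(\Phi,\Pi)\in\Cdata{\sdc{C}}$ one has $(g\Phi,g\Pi)\in\Cdata{\open{I}}$ and $(h\Phi,h\Pi)\in\Cdata{\open{J}}$, and, since $(g\Phi,g\Pi)+(h\Phi,h\Pi)=(\Phi,\Pi)$, the character property (\ref{eq:char}) gives
\[\chi((\Phi,\Pi)) \;=\; \chi((g\Phi,g\Pi))\,\chi((h\Phi,h\Pi)).\]
As $(\Cdata{\sdc{C}},\mu)$ is metrizable, sequential continuity suffices, and by the displayed factorization it is enough to check that $(\Phi,\Pi)\mapsto(g\Phi,g\Pi)$ is $\mu$-continuous from $\Cdata{\sdc{C}}$ into $\Cdata{\open{I}}$ (and likewise for $h$): one then composes with $\chi\rest_{\Cdata{\open{I}}}$ and $\chi\rest_{\Cdata{\open{J}}}$, which are continuous by hypothesis, and multiplies the two limits.

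The remaining point, and the technical heart of the argument, is therefore the $\mu$-boundedness of multiplication by a fixed smooth function. Using $\|K(\Phi,\Pi)\|^2=\tfrac12(\|A^{1/4}\Phi\|^2+\|A^{-1/4}\Pi\|^2)$, this amounts to the two estimates $\|A^{1/4}(g\Phi)\|\le C\|A^{1/4}\Phi\|$ and $\|A^{-1/4}(g\Pi)\|\le C\|A^{-1/4}\Pi\|$ for $\Phi,\Pi\in\smoothfuncv{\bS^1}{\bC}$, that is, the boundedness of $A^{1/4}M_gA^{-1/4}$ on $\csqint{\bS^1}{\theta}$, where $M_g$ is the multiplication operator (the second estimate is the adjoint of the first, since $g$ is real and $A^{\pm 1/4}$ self-adjoint). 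This is exactly the kind of bound established in the appendix for proposition \ref{prop:last}: since $[A,M_g]=-(g''+2g'\,d/d\theta)$ is first order, the elementary inequalities $\|A\Phi\|\ge m^2\|\Phi\|$, $\|A\Phi\|\ge\sqrt2\,m\|\Phi'\|$, $\|A\Phi\|\ge\|\Phi''\|$ (consequences of proposition \ref{prop:hyperbolic_generators}) give $\|A(g\Phi)\|\le C_1\|A\Phi\|$, so $M_g$ is bounded on $\Dom(A)$ in the graph norm and trivially bounded on $\csqint{\bS^1}{\theta}=\Dom(A^0)$; the interpolation result \cite[prop. 9, chap. IX.5]{Reed_Simon_II} then yields boundedness of $A^{1/4}M_gA^{-1/4}$. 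With this in hand, $(\Phi,\Pi)\mapsto\chi((g\Phi,g\Pi))$ and $(\Phi,\Pi)\mapsto\chi((h\Phi,h\Pi))$ are $\mu$-continuous, hence so is $\chi$ on $\Cdata{\sdc{C}}$, and part (b) delivers the global unitary equivalence of $\omega_{\chi_1}$ and $\omega_{\chi_2}$. I expect the only non-routine step to be this multiplication estimate, but it is essentially a citation to the technique already used for proposition \ref{prop:last}; everything else is bookkeeping with the partition of unity.
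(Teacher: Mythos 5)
Your argument for part (d) is correct and is in fact the same as the paper's: the paper also reduces to $\chi:=\chi_1\overline{\chi_2}$, covers $\bS^1$ by two elements $\open{I},\open{J}\in\poset{R}$, factorizes $\chi((\Phi,\Pi))=\chi((g_{\open{I}}\Phi,g_{\open{I}}\Pi))\chi((g_{\open{J}}\Phi,g_{\open{J}}\Pi))$ via a subordinate partition of unity, and settles everything by the boundedness of $A^{1/4}M_hA^{-1/4}$ and $A^{-1/4}M_hA^{1/4}$, obtained exactly as you say by rerunning the argument of proposition \ref{prop:last} at $\lambda=1$ (graph-norm bound on $\Dom{A}$ from the first-order commutator, then the interpolation result of Reed--Simon, then the adjoint for the other power). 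Your observation that the converse direction of (d) is immediate also matches the paper.

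The genuine gap is that you have proved only one quarter of the statement. The theorem as stated comprises (a), (b), (c) and (d), and you explicitly take (a) and (b) as ``already available'' and never mention (c); but in the paper all four items are established inside this single proof, and (a) is where the real work sits. Concretely, for the ``only if'' direction of (a) one must extract continuity of $\chi$ on $\Cdata{\open{I}}$ from the existence of the local intertwiner: the paper writes $\chi((\Phi,\Pi))=\ip{\pi_0(W(-(\Phi,\Pi)))\Psi_0}{U_{\open{I}}\pi_0(W(-(\Phi,\Pi)))U_{\open{I}}^*\Psi_0}$ and uses strong continuity of $\psi\mapsto e^{\overline{a(\psi)-a^*(\psi)}}$ together with injectivity and continuity of $K$. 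For the ``if'' direction one needs proposition \ref{prop:DMP} to write $\chi=e^{\imath f_\chi}$ with $f_\chi$ linear and continuous, Riesz's theorem to represent $f_\chi((\Phi,\Pi))=\Re e\ip{\psi_\chi}{K(\Phi,\Pi)}$ with $\psi_\chi\in M_{\open{I}}$, and the Weyl relations to check that $U_{\open{I}}:=W[-\imath\psi_\chi/2]$ intertwines. Part (c) then requires verifying the $\mu$-continuity of $\chi_f$ and $\chi^{(g)}$ (proposition \ref{prop:DMP}(b)) and identifying the explicit intertwiners $W[\imath 2^{-1/2}A^{-1/4}f]$ and $W[-2^{-1/2}A^{1/4}g]$ from the same Riesz computation. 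None of this is routine bookkeeping, and without it your reduction of (d) to (a) and (b) is circular as a proof of the theorem as a whole. You should either supply proofs of (a), (b), (c) or make explicit that you are only proving (d) conditionally on them.
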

\begin{proof}
 First of all we notice that in statements (a), (b) and (d) it is always possible to reduce to the simpler
 case where $\chi_2=1$ costantly. This is because
   $U_{\open{I}}\pi_{\chi_1}(a)  =  \pi_{\chi_2}(a) U_{\open{I}}$ for all $a \in \sweyl{\open{I}}$ is equivalent to
 $$U_{\open{I}}\pi_{\chi_3}(a)  =  \pi_{0}(a) U_{\open{I}}\:,\quad \forall\, a \in \sweyl{\open{I}} \quad \mbox{and where $\chi_3:=\chi_1\overline{\chi_2}$}\:.$$
 Therefore in the following we deal with the case $\chi_2=1$ and $\chi_1$ will be denoted by $\chi$.\\
\textbf{(a,b)} We only prove (a) since the proof of (b) is strictly analogous.
  Suppose that there is $U_{\open{I}}$ as in the hypotheses. Referring to the representation $\pi_\chi$
  defined in (\ref{eq:pichi}), there must be an unitary operator $U_{\open{I}}:\varhs{H}_0 \to \varhs{H}_0$ with
  $U_{\open{I}} \pi_\chi(a) U_{\open{I}}^*= \pi_0(a)$ for all $a \in \sweyl{\open{I}}$. In particular one has
  $\chi((\Phi,\Pi)) U_{\open{I}}\pi_0\left(W(\Phi,\Pi) \right)U_{\open{I}}^* = \pi_0\left(W(\Phi,\Pi) \right)$
 that is
 $$\chi((\Phi,\Pi)) = \ip{\Psi_0}{\pi_0\left(W(\Phi,\Pi) \right)U_{\open{I}}\pi_0\left(W(-(\Phi,\Pi)) \right)U_{\open{I}}^*\: \Psi_0}.$$
 By theorem \ref{th:vacuum},
 $\pi_0\left(W(-(\Phi,\Pi)) \right) =  e^{\overline{a(\psi) -  a^*(\psi)}}$
 where $\psi :=K(\Phi,\Pi)$. It is known \cite{Bratteli_Robinson_II} that $\hs{H}_0 \ni \psi \mapsto e^{\overline{a(\psi) -  a^*(\psi)}}$ is strongly continuous. If $\Cdata{\open{I}} \ni (\Phi_n,\Pi_n) \to 0$ in the topology induced by the scalar product $\mu$ (\ref{eq:mu}), the sequence
 $\psi_n := K(\Phi_n,\Pi_n)$ tends to $0$ in $\hs{H}_0$. Since
 $$\chi((\Phi,\Pi)) := \ip{\pi_0\left(W(-(\Phi,\Pi)) \right)\Psi_0}{U_{\open{I}}\pi_0\left(W(-(\Phi,\Pi)) \right)U_{\open{I}}^*
 \Psi_0},$$
\noindent we conclude that $\chi((\Phi_n,\Pi_n)) \to 1$ if $(\Phi_n,\Pi_n) \to 0$. From (\ref{eq:char})
 this is enough to establish  the continuity of $\chi$ in $\Cdata{\open{I}}$, viewed as an additive subgroup of $\Cdata{\sdc{C}}$.

 Now suppose that $\chi \in \Ch(\Cdata{\sdc{C}})$ is continuous on $\Cdata{\open{I}}$ and consider the state $\omega_\chi$. By proposition \ref{prop:DMP},
 there is $f_{\chi}: \Cdata{\open{I}} \to \bR$ linear and continuous such that
 $\chi((\Phi, \Pi)) = e^{\imath f_{\chi}((\Phi,\Pi))}$ for every $(\Phi,\Pi) \in \Cdata{\open{I}}$. Riesz theorem entails that there is $\psi_\chi$ in the
 completion of $(\mu, \Cdata{\open{I}})$, that is in $\overline{K(\Cdata{\open{I}})} = M_{\open{I}}$, with
$f_{\chi}((\Phi,\Pi)) = \Re e \ip{ \psi_\chi}{K(\Phi,\Pi)}$.
 Therefore, for $(\Phi,\Pi) \in \Cdata{\open{I}}$, one finds that it must hold
 $$\pi_\chi(W(\Phi, \Pi)) = e^{\imath  \Re e \ip{\psi_{\chi}}{K(\Phi,\Pi)}} \pi_0\left(W(\Phi,\Pi) \right) =
 e^{-\imath \Im m\ip{\imath \psi_\chi}{K(\Phi,\Pi)}} \pi_0\left(W(\Phi,\Pi) \right).$$
 \noindent Following (\ref{eq:Weylgen}) we define
$W[\imath\psi_\chi/2] := e^{\overline{a(\imath\psi_\chi/2) -  a^*(\imath\psi_\chi/2)}}$.
 Finally, making use of
 (\ref{eq:Ksigmaext}) and exploiting Weyl relations we find that, for every $(\Phi,\Pi) \in \Cdata{\sdc{C}}$:
 \begin{equation}\label{eq:riesz3} 
W[\imath\psi_\chi/2]\pi_0\left(W(\Phi,\Pi) \right) W[\imath\psi_\chi/2]^* =
 e^{-\imath\Im m \ip{\imath\psi_\chi}{K(\Phi,\Pi) }} \pi_0\left(W(\Phi,\Pi) \right).
\end{equation}
 The right-hand side is nothing but $\pi_\chi(W(\Phi, \Pi))$ provided that $(\Phi,\Pi) \in \Cdata{\open{I}}$.
 Hence restricting ourselves to $\Cdata{\open{I}}$, this identity proves that the unitary operator $U_{\open{I}}:= W[-\imath\psi_\chi/2]$ fulfills
 $U_{\open{I}} \pi_\chi(W(\Phi,\Pi)) = \pi_0 (W(\Phi,\Pi)) U_{\open{I}}$, for all $(\Phi,\Pi) \in \Cdata{\open{I}}$.
 Extending this equation by linearity and continuity we have the thesis.  \\
\textbf{(c)} The characters $\chi_f$ and $\chi^{(g)}$ are continuous as established in Proposition \ref{prop:DMP}, so
 the associated states are unitarily equivalent to $\omega_0$.
The explicit espression for the unitary intertwiners  appearing in (\ref{eq:intert1})
 and (\ref{eq:intert2}) is obtained making use of (\ref{eq:riesz3})
taking into account that it must hold, for $\chi_f$ and $\chi^{(g)}$ respectively,
$i\int  f \Phi\: d\theta = -\imath  \Im m\ip{\imath\psi_{\chi_f}}{K(\Phi,\Pi)}$ and
$-\imath\int  g \Pi \: d\theta = -\imath\Im m \ip{\imath\psi_{\chi^{(g)}}}{K(\Phi,\Pi)}$
for all $(\Phi,\Pi) \in \Cdata{\sdc{C}}$.\\
 \textbf{(d)} We have only to prove that local unitary equivalence implies unitary equivalence, the other implication being
trivial. Suppose that $\omega_\chi$ is  locally unitarily equivalent to $\omega_0$ and consider $\open{I},\open{J} \in \poset{R}$ with
$\open{I} \cup \open{J} = \bS^1$. Let $f_\open{I} : \Cdata{\open{I}} \to \bR$ and $f_\open{J} : \Cdata{\open{J}} \to \bR$ be the continuous
linear functions associated with $\chi\rest_{\Cdata{\open{I}}}$ and $\chi\rest_{\Cdata{\open{J}}}$ as established in proposition \ref{prop:DMP}.
Finally consider two functions $g_{\open{I}}, g_{\open{J}} \in \smoothfuncv{\bS^1}{\bR}$ with $g_{\open{I}}(\theta) + g_{\open{J}}(\theta) = 1$ for all $\theta \in  \bS^1$
and $\supp{g_{\open{I}}} \subset \open{I}$, $\supp{g_{\open{J}}} \subset\open{J}$. If $(\Phi, \Pi) \in \Cdata{\sdc{C}}$ one has:
$$\chi((\Phi, \Pi)) = \chi((g_{\open{I}}\Phi, g_{\open{I}}\Pi))\chi((g_{\open{J}}\Phi, g_{\open{J}}\Pi)) =
e^{\imath \left[ f_{\open{I}}((g_{\open{I}}\Phi, g_{\open{I}}\Pi) + f_{\open{J}}((g_{\open{J}}\Phi, g_{\open{J}}\Pi))\right]}\:.$$
To conclude the proof it is sufficient to establish that the maps $\Cdata{\sdc{C}} \ni (\Phi, \Pi) \mapsto f_{\open{I}}((g_{\open{I}}\Phi, g_{\open{I}}\Pi))$ and
$\Cdata{\sdc{C}} \ni (\Phi, \Pi) \mapsto f_{\open{J}}((g_{\open{J}}\Phi, g_{\open{J}}\Pi))$
are continuous. In fact, in this case
the linear function  $\Cdata{\sdc{C}} \ni (\Phi, \Pi) \mapsto f_{\open{I}}((g_{\open{I}}\Phi, g_{\open{I}}\Pi)) +  f_{\open{J}}((g_{\open{J}}\Phi, g_{\open{J}}\Pi))$ would be continuous and
thus the character $\chi$ would be such, therefore
  $\omega_\chi$ has to be unitarily equivalent to $\omega_0$ due to (a).
Since $f_{\open{I}}$ and $f_{\open{J}}$ are continuous, continuity of $\Cdata{\sdc{C}} \ni (\Phi, \Pi) \mapsto f_{\open{I}}((g_{\open{I}}\Phi, g_{\open{I}}\Pi))$ and
$\Cdata{\sdc{C}} \ni (\Phi, \Pi) \mapsto f_{\open{J}}((g_{\open{J}}\Phi, g_{\open{J}}\Pi))$  holds  if the multiplicative operator $\Cdata{\sdc{C}} \ni (\Phi, \Pi) \mapsto (h\Phi, h\Pi) \in \Cdata{\sdc{C}}$, for any fixed
 $h \in \smoothfuncv{\bS^1}{\bR}$ (in particular $h= g_{\open{I}},g_{\open{J}}$),
 is continuous with respect to the norm induced by the real scalar product $\mu$ on $\Cdata{\sdc{C}}$.
In turn, since $$\mu((\Phi,\Pi),(\Phi,\Pi)) = \norm{K((\Phi,\Pi))}^2 = \frac{1}{2}\norm{A^{1/4} \Phi}^2 +  \frac{1}{2}
\norm{A^{-1/4} \Pi}^2$$
that is equivalent to say that, if 
$h \in \smoothfuncv{\bS^1}{\bR}$
is used as a multiplicative operator in $\csqint{\bS^1}{\theta}$, there are constants $C_h, C_h'\geq 0$
with
 $\norm{A^{1/4} h f} \leq C_h\norm{A^{1/4} f}$
and $\norm{A^{-1/4} h f} \leq C'_h\norm{A^{-1/4} f}$ for all $f \in \smoothfuncv{\bS^1}{\bR}$.
Following the same proof as that of proposition \ref{prop:last}, but restricting to the case $\lambda=1$, one finds that
$A^{1/4} h A^{-1/4} $ and $ A^{-1/4} h A^{1/4}$ are well defined
on the dense spaces $\Dom{A^{-1/4}}$ and $\Dom{A^{1/4}}$
and continuously extend  to bounded operators over
$\sqint{\bS^1}{\theta}$ so that there are constants $C_h, C_h'\geq 0$ with
$\norm{A^{1/4} h A^{-1/4} \phi} \leq C_h\norm{\phi}$ and $\norm{A^{-1/4} h A^{1/4} \phi} \leq C_h'\norm{\phi}$. This happens in particular when
$\phi \in A^{1/4}\,\smoothfuncv{\bS^1}{\bR}$ and $\phi\in A^{-1/4}\,\smoothfuncv{\bS^1}{\bR}$ respectively since
$A^{\pm 1/4}\,\smoothfuncv{\bS^1}{\bR} \subset \smoothfuncv{\bS^1}{\bR} \subset \Dom{A^{\pm 1/4}}$ due to proposition \ref{prop:hyperbolic_generators}, hence
$\norm{A^{1/4} h f} \leq C_h\norm{A^{1/4} f}$ and $\norm{A^{-1/4} h  f} \leq C_h'\norm{A^{-1/4} f}$ for all $f \in \smoothfuncv{\bS^1}{\bR}$
as requested.
\end{proof}

\noindent\textbf{Remark}\\ 
\noindent{\textbf{(1)}} Validity of (d) for free QFT in the cylindric flat spacetime implies that there is {\em no straightforward generalization of  DHR superselection sectors} in our theory
referring to the states induced by characters $\chi \in \Ch(\Cdata{\sdc{C}})$.
This is because any state in a DHR sector different from the vacuum sector must be however locally unitarily equivalent
to the vacuum state. Using characters, sectors different from that containing the vacuum  may arise  in the occurrence of  breakdown of local unitarily equivalence only.
 In Minkowski spacetime the poset $\poset{R}$ is replaced by a suitable directed set of relatively compact regions of a fixed Cauchy
surface and there is no finite number of such regions whose union covers the Cauchy surface; thus the proof given for (c) does not apply.\\
{\bf (2)} The unitary intertwiners $W\left[\imath\,2^{-1/2} A^{-1/4} f\right]$ and $W\left[-2^{-1/2} A^{1/4} g\right]$  do not belong to $\pi_0(\alg{W}_{KG})$ in general but they
belong to $\pi_0(\alg{W}_{KG})''= \bop{\varhs{H}_0}$ in any case.

\chapter{Universal algebras}\label{app:universal_algebras}
Through this appendix we consider a class of $C^*$-algebras with unit $1$ in common,
 $\{\snet{A}{I}\}_{I\in \poset{I}}$, where $\poset{I}$
is a poset; we denote by $\subset$ the ordering relation in
$\poset{I}$. Assume that the class $\{\snet{A}{I}\}_{I\in \poset{I}}$ is isotonous, i.e.
$$\snet{A}{I} \subset \snet{A}{J} \quad \mbox{when $I\subset J$ }$$
for $I,J\in \poset{I}$, where $\snet{A}{I} \subset \snet{A}{J}$ means that the former is a sub $C^*$-algebra
of the latter. It is {\em not} assumed that  $\poset{I}$ is directed with respect to $\subset$ and thus one cannot define the inductive limit of the
class $\alg{A}$. However, as pointed out by Fredenhagen in \cite{Fredenhagen_90}, it is possible to give a sort of generalized  inductive limit
of the isotonous class of $C^*$-algebras $\{\snet{A}{I}\}_{I\in \poset{I}}$ which corresponds, in physical
applications, to the $C^*$-algebra of quasi local observables also in those
contexts where the set $\poset{I}$ is not directed. This is the case treated here, where $\poset{I}\equiv\poset{R}$ and $\snet{A}{I}\equiv\sweyl{\open{I}}$.

\begin{definition}\label{def:UA}
An unital $C^*$-algebra $\alg{A}$ is called an {\bf universal algebra} associated with $\{\snet{A}{I}\}_{I\in \poset{I}}$
if it fulfills the following properties.\\
\noindent {\bf (1)} $\alg{A}$ contains every $\snet{A}{I}$ as a $C^*$-subalgebra for all $I \in \poset{I}$ and coincides with the $C^*$-algebra generated by all of the subalgebras together\footnote{This requirement was not assumed in \cite{Fredenhagen_90} but it has been added in the subsequent \cite{Fredenhagen_Rehren_Schroer_92}; it's essential for the uniqueness of $\alg{A}$.},\\
\noindent {\bf (2)} if $\{\pi_I\}_{I\in \poset{I}}$ is a class of representations on $\bop{\hs{H}}$, for some fixed Hilbert space $\hs{H}$:
$$\pi_I : \snet{A}{I} \to \bop{\hs{H}},$$
satisfying  compatibility conditions
\begin{equation}\label{eq:comp} 
\pi_I\rest_{\snet{A}{J}} = \pi_J\quad \mbox{when $J\subset I$ },
\end{equation}
\noindent for $I,J\in \poset{I}$, then there is an {\em unique} representation $\pi: \alg{A} \to \bop{\hs{H}}$ such that:
\begin{equation}\label{eq:comp2} 
\pi\rest_{\snet{A}{I}} = \pi_I\quad \mbox{for every $I\in \poset{I}$}. 
\end{equation} 
\end{definition}
\noindent The next proposition establishes existence and uniqueness of the universal algebra, and shows that it truly extends the notion of inductive limit of a $C^{*}$-algebras net.
\begin{proposition}
With the given hypotheses on $\{\snet{A}{I}\}_{I\in \poset{I}}$, the following facts hold.\\
{\bf (a)}  $\{\snet{A}{I}\}_{I\in \poset{I}}$ admits an universal algebra $\alg{A}$.\\
{\bf (b)} The universal algebra is uniquely determined 
 up to $C^*$-algebras isomorphisms.\\
 {\bf (c)} If $(\poset{I}, \subset )$ is directed, $\alg{A}$ is isomorphic to the inductive limit of the net $\{\snet{A}{I}\}_{I\in \poset{I}}$. 
\end{proposition}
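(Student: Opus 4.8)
The plan is to realise $\alg A$ as a quotient of the unital full free product $\mathcal F := \ast_{I\in\poset I}\snet A I$ of the given family, and then to verify the two defining properties together with the remaining claims. Let $\iota_I:\snet A I\to\mathcal F$ be the canonical unital $*$-homomorphisms (isometric, as for any full free product), and let $\mathcal J\subset\mathcal F$ be the closed two-sided $*$-ideal generated by all elements $\iota_I(a)-\iota_J(a)$ with $J\subset I$ in $\poset I$ and $a\in\snet A J$; these relations express coherence of the inclusions. Put $\alg A:=\mathcal F/\mathcal J$ with quotient map $q:\mathcal F\to\alg A$ and $k_I:=q\circ\iota_I:\snet A I\to\alg A$. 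By construction $\alg A$ is generated as a $C^*$-algebra by $\bigcup_I k_I(\snet A I)$ and the $k_I$ are coherent, $k_I\rest_{\snet A J}=k_J$ for $J\subset I$; hence, once the $k_I$ are known to be isometric, we may identify $\snet A I$ with $k_I(\snet A I)$ and property (1) holds. Property (2) is then immediate: given a compatible family $\{\pi_I\}$ on $\hs H$, the universal property of the free product yields $\tilde\pi:\mathcal F\to\bop{\hs H}$ with $\tilde\pi\circ\iota_I=\pi_I$; compatibility (\ref{eq:comp}) forces $\tilde\pi$ to annihilate every generator of $\mathcal J$, so it descends to $\pi:\alg A\to\bop{\hs H}$ with $\pi\rest_{\snet A I}=\pi_I$, and uniqueness of $\pi$ follows because the $k_I(\snet A I)$ generate $\alg A$.

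The one substantial point in (a) is that each $k_I$ is isometric, equivalently $\iota_I(\snet A I)\cap\mathcal J=\{0\}$. I would derive this from the key fact that \emph{for every $I_0\in\poset I$ and every nonzero $a\in\snet A{I_0}$ there is a compatible family $\{\pi_I\}$ with $\pi_{I_0}(a)\neq 0$}: by property (2), already established, such a family produces a representation of $\alg A$ not annihilating $k_{I_0}(a)$. In full generality this abundance of compatible families over a non-directed poset is precisely the content of the Fredenhagen / Fredenhagen--Rehren--Schroer construction, which I would invoke. I note moreover that in the situation relevant to this work, where all $\snet A I$ are $C^*$-subalgebras of one common unital $C^*$-algebra $\mathcal C$ with coherent inclusions, injectivity of $k_I$ is immediate: applying property (2) to the family $\{\snet A I\hookrightarrow\mathcal C\}$ gives a morphism $\alg A\to\mathcal C$ whose composite with $k_I$ is the original isometric inclusion. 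Producing enough compatible families in the non-directed case is the main obstacle of the whole proof.

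For uniqueness (b): let $\alg A'$ be another universal algebra and fix a faithful representation $\rho':\alg A'\to\bop{\hs H}$. The family $\{\rho'\rest_{\snet A I}\}$ is compatible, so property (2) of $\alg A$ gives a unique $\Phi:\alg A\to\bop{\hs H}$ with $\Phi\rest_{\snet A I}=\rho'\rest_{\snet A I}$ for all $I$. Using property (1) for both algebras, $\Phi(\alg A)$ is the $C^*$-algebra generated by the $\rho'(\snet A I)$, that is $\rho'(\alg A')$; after identifying $\alg A'$ with $\rho'(\alg A')$, $\Phi$ is a surjective $*$-homomorphism $\alg A\to\alg A'$ acting as the identity on each $\snet A I$. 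Symmetrically one obtains $\Phi':\alg A'\to\alg A$ with the same property, and then $\Phi'\circ\Phi$, $\Phi\circ\Phi'$ are continuous $*$-endomorphisms restricting to the identity on the respective generating subalgebras, hence are the identity maps; so $\Phi$ is a $C^*$-isomorphism.

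Finally (c): when $(\poset I,\subset)$ is directed, let $\alg B$ be the inductive limit of $\{\snet A I\}$ with its canonical isometric coherent embeddings $j_I:\snet A I\to\alg B$; then $\bigcup_I j_I(\snet A I)$ is a dense $*$-subalgebra of $\alg B$ (it is a subalgebra exactly because the index set is directed), so $\alg B$ satisfies (1). For (2), given a compatible family $\{\pi_I\}$ on $\hs H$, directedness lets one define $\pi$ on the dense subalgebra $\bigcup_I j_I(\snet A I)$ by $\pi\circ j_I:=\pi_I$; this is well posed and $*$-multiplicative since any two indices lie below a common one, it is contractive because each $\pi_I$ is norm-decreasing, and it therefore extends uniquely to $\pi:\alg B\to\bop{\hs H}$ with $\pi\rest_{\snet A I}=\pi_I$, uniqueness coming from density. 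Thus $\alg B$ is a universal algebra, and by (b) it is isomorphic to $\alg A$.
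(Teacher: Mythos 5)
Your proof is correct and, for parts (b) and (c), follows essentially the same route as the paper: uniqueness via the two mutually inverse morphisms supplied by property (2) and the fact that the local subalgebras generate, and (c) by checking that the inductive limit itself satisfies the two defining properties. For part (a) the paper simply cites Fredenhagen for existence, whereas you unpack the construction (full free product modulo the coherence ideal) and correctly isolate the one substantive point --- injectivity, equivalently the existence of enough compatible families of representations over a non-directed poset --- as the step that must still be taken from Fredenhagen / Fredenhagen--Rehren--Schroer; since that is precisely the content of the cited result, your argument carries the same logical weight as the paper's, with the added benefit that your observation about a common ambient algebra $\mathcal{C}$ makes injectivity immediate in the case actually used in this work ($\alg{W}_{KG}\supset\sweyl{\open{I}}$).
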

\begin{proof}
(a) The existence of an universal algebra $\alg{A}$ has been proved in \cite{Fredenhagen_90}. \\
(b) Consider two universal algebras $\alg{A}_1$ and $\alg{A}_2$ and (faithfully and isometrically) represent these $C^*$-algebras
in terms of subalgebras of $\bop{\hs{H}_1}$ and $\bop{\hs{H}_2}$ respectively, for suitable Hilbert spaces $\hs{H}_1$ and $\hs{H}_2$.
For $i=1,2$ the classes of embeddings $\{(\iota_I)_i\}_{I\in\poset{I}}$
 $(\iota_I)_i : \snet{A}{I} \to \alg{A}_i$ can be viewed as classes of representations $\{(\pi_I)_i\}_{I\in\poset{I}}$
valued on $\bop{\hs{H}_i}$. By construction both $\{(\pi_I)_1\}_{I\in\poset{I}}$ and $\{(\pi_I)_2\}_{I\in\poset{I}}$
fulfill separately the compatibility conditions (\ref{eq:comp}). Considering $\alg{A}_1$ as the universal algebra, property (2)
of the definition implies that there is  representation $\pi_{12}: \alg{A}_1 \to \bop{\hs{H}_2}$ such that
$$\pi_{12} \circ (\pi_I)_{1} = (\pi_I)_{2}\quad \forall\, I \in \poset{I}\:.$$
Interchanging the role of $\alg{A}_1$ and $\alg{A}_2$, one finds another representation $\pi_{21}:\alg{A}_2 \to \bop{\hs{H}_1}$ with
 $$\pi_{21} \circ (\pi_I)_{2} = (\pi_I)_{1}\quad \forall\, I \in \poset{I}\:.$$
 These two classes of identities together implies:
\begin{align}
(\pi_{21} \circ \pi_{12})\rest_{(\pi_I)_{1}(\snet{A}{I})} &= id_{(\pi_I)_1(\snet{A}{I})}\\
(\pi_{12} \circ \pi_{21})\rest_{(\pi_I)_2(\snet{A}{I})} &= id_{(\pi_I)_2(\snet{A}{I})} .
\end{align}
for all $I \in \poset{I}$. Then, using continuity of representations $\pi_{21}$
and  $\pi_{12}$ and closedness of their domains,
the identities above entail
that (i) $\pi_{21}$ includes $\pi_{12}(\alg{A}_{g1})$ in its  domain
and  $\pi_{12}$ includes $\pi_{21}(\alg{A}_{g2})$ in its  domain,
where $\alg{A}_{g1}$ and $\alg{A}_{g2}$ are the sub $C^*$-algebras of $\alg{A}_1$ and $\alg{A}_2$ respectively generated by all of
$\alg{A}_1(I)$ and all of $\alg{A}_2(I)$, and (ii)
${\pi_{21}} \circ \pi_{12}\rest_{\alg{A}_{g1}}  = id_{\alg{A}_{g1}}\:, \quad {\pi_{12}} \circ \pi_{21}
\rest_{\alg{A}_{g2}}  =
 id_{\alg{A}_{g2}}$.
 Since $\alg{A}_{gi} = \alg{A}_i$ we have actually obtained that:
 $${\pi_{21}} \circ \pi_{12} = id_{\alg{A}_{1}}\:,
 \quad {\pi_{12}} \circ \pi_{21} =
 id_{\alg{A}_{2}} $$
so that $\pi_{12}$ and $\pi_{21}$ are in fact $C^*$-algebra isomorphisms, and, in particular $\alg{A}_2 = \pi_{12}(\alg{A}_1)$.\\
(c) The inductive limit $\alg{A}$ is the completion of the $*$-algebra $\bigcup_{I\in \poset{I}} \snet{A}{I}$.
If $a\in \alg{A}$,
there must be a sequence $\{I_n\}_{n\in \bN} \subset \poset{I}$, with $I_i \subset I_k$  for $i \leq  k$,
such that $a_n \to a$ as $n\to +\infty$ and $a_n \in \alg{A}(I_n)$. if $\{\pi_I\}_{I\in \poset{I}}$ is a class
of representations on $\bop{\hs{H}}$, for some Hilbert space $\hs{H}$:
$$\pi_I : \alg{A}_I \to \bop{\hs{H}},$$
satisfying  compatibility conditions (\ref{eq:comp})
and $\pi$ is a representation (on $\bop{\hs{H}}$) of $\alg{A}$ which reduces to $\pi_I$ on every $\snet{A}{I}$, it holds, {\em remembering that
representations are norm decreasing and thus continuous}:
$$\pi(a) = \pi\left(\lim_{n \to +\infty} a_n\right) = \lim_{n \to +\infty} \pi(a_n) = \lim_{n \to +\infty} \pi_{I_n}(a_n)$$
 so that $\pi$ is completely individuated by the class of $\pi_I$. On the other hand, such a class of representations
individuates a representation $\pi$ of $\alg{A}$ by means of the same rule
(notice that, if $m\geq n$,
 $\norm{\pi_{I_n}(a_n) - \pi_{I_m}(a_m)} = \norm{\pi_{I_m}(a_n)- \pi_{I_m}(a_m)} \leq \norm{a_n-a_m}$
 so that $\{\pi_n(a_n)\}$ is Cauchy when $\{a_n\}$ is such). We have proved that the inductive limit is an universal algebra. 
\end{proof}
 
\noindent\textbf{Remark.} If $\alg{B}$ is a  unital sub $C^*$-algebra of an unital $C^*$-algebra $\alg{A}$ and every representation $\pi$ of $\alg{B}$ on some space of bounded operators on a Hilbert space admits an unique extension to $\alg{A}$, it is anyway possible that $\alg{B} \subsetneq \alg{A}$: it is sufficient that $\alg{B}$ includes a closed two-sided ideal of $\alg{A}$ (see \cite{Dixmier_69}). Therefore the requirement that the sub algebras $\snet{A}{I}$ generate $\alg{A}$ is essential in proving the uniqueness of the universal algebra $\alg{A}$.\\
  
\noindent As an example consider the theory on $\bS^1$ studied earlier and focus on the class of unital
$C^*$-algebras (Weyl algebras) $\{\sweyl{\open{I}}\}_{\open{I} \in \poset{R}}$: it's simply proved that $\alg{W}_{KG}$ is the associated universal algebra. \\
\begin{proposition}
 $\alg{W}_{KG}$ is the  universal algebra for $\{\sweyl{\open{I}}\}_{I \in \poset{R}}$.
\end{proposition}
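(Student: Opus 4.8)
The plan is to verify the two defining properties of a universal algebra from Definition~\ref{def:UA} for the pair $(\alg{W}_{KG}, \{\sweyl{\open{I}}\}_{\open{I}\in\poset{R}})$. First I would check property (1): each $\sweyl{\open{I}}$ is by construction a sub $C^*$-algebra of $\alg{W}_{KG}$ (it is the Weyl algebra of the symplectic subspace $(\Cdata{\open{I}},\delta_{\open{I}}) \subset (\Cdata{\sdc{C}},\delta_{\sdc{C}})$, and all share the same unit $W(0,0)=1$), and the $C^*$-algebra generated by all of them is $\alg{W}_{KG}$ itself. The latter holds because $\alg{W}_{KG}$ is generated by the Weyl generators $W(\Phi,\Pi)$ with $(\Phi,\Pi)\in\Cdata{\sdc{C}}$, each of which has compactly supported data and hence, for any decomposition by a partition of unity subordinate to a finite cover of its support by proper intervals, factorizes (up to a phase, via the Weyl relations) into a product of generators from the subalgebras $\sweyl{\open{I}_{i_j}}$ — this is precisely the argument already carried out in the proof of Proposition~\ref{prop:additivity}(a).

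Next I would verify property (2): given a compatible family $\{\pi_{\open{I}}\}_{\open{I}\in\poset{R}}$ of representations on a common $\bop{\hs{H}}$ satisfying $\pi_{\open{I}}\rest_{\sweyl{\open{J}}}=\pi_{\open{J}}$ for $\open{J}\subset\open{I}$, I must produce a unique representation $\pi:\alg{W}_{KG}\to\bop{\hs{H}}$ restricting to each $\pi_{\open{I}}$. The construction is the natural one: define $\pi$ on a Weyl generator $W(\Phi,\Pi)$ by choosing any proper interval $\open{I}$ with $\supp{\Phi}\cup\supp{\Pi}\subset\open{I}$ and setting $\pi(W(\Phi,\Pi)):=\pi_{\open{I}}(W(\Phi,\Pi))$. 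The key point — and the one requiring the compatibility conditions — is that this is well defined, independent of the chosen $\open{I}$: if $\open{I}_1,\open{I}_2$ both contain the support, one cannot in general find a proper interval containing both (since $\poset{R}$ is not directed), so one must instead pick a third proper interval $\open{I}_3$ with $\open{I}_3\subset\open{I}_1\cap\open{I}_2$ still containing the (compact) support — possible since the support is a proper compact subset — and use $\pi_{\open{I}_1}\rest_{\sweyl{\open{I}_3}}=\pi_{\open{I}_3}=\pi_{\open{I}_2}\rest_{\sweyl{\open{I}_3}}$. Then I extend $\pi$ to the $*$-algebra finitely generated by all Weyl generators, checking it respects the Weyl relations (immediate, since locally it agrees with a genuine representation), and finally extend continuously to all of $\alg{W}_{KG}$ using that $*$-homomorphisms of $C^*$-algebras are norm-decreasing and $\bigcup_{\open{I}}\sweyl{\open{I}}$ generates a dense $*$-subalgebra by property (1). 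Uniqueness is then automatic: any representation of $\alg{W}_{KG}$ restricting to all the $\pi_{\open{I}}$ must agree with $\pi$ on the generators, hence on the dense generated $*$-subalgebra, hence everywhere by continuity.

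The main obstacle is the well-definedness of $\pi$ on a single Weyl generator, i.e.\ handling the fact that $\poset{R}$ is not directed so the "obvious" choice of localizing interval is not canonical; this is exactly where the compatibility relations \eqref{eq:comp} are used, and one must be slightly careful that a proper compact subset of $\bS^1$ always sits inside some proper open interval contained in the intersection of two given ones — a small point-set topology verification on the circle. Everything else (that the local factorization of a generator into pieces supported in small intervals makes $\pi$ multiplicative in the right way, compatibility with $*$, continuous extension, uniqueness) follows the template of Proposition~\ref{prop:additivity}(a) and the general proof of existence/uniqueness of the universal algebra given just above, so I would keep those steps brief.
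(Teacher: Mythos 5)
Your outline follows the same overall strategy as the paper's proof (partition of unity for the generation property, local definition plus extension for the representation property), but the step you label ``immediate'' is exactly where the content of the proof sits, and as written it has a genuine hole. Your $\pi$ is defined only on Weyl generators $W(\Phi,\Pi)$ whose data lie in some $\Cdata{\open{I}}$, $\open{I}\in\poset{R}$; a generic $(\Phi,\Pi)\in\Cdata{\sdc{C}}$ (say $\Phi\equiv 1$) is supported on all of $\bS^1$ and belongs to no $\sweyl{\open{I}}$, yet $W(\Phi,\Pi)$ is a generator of $\alg{W}_{KG}$. You propose to reach such elements by extending from the $*$-algebra generated by the localized generators, but that extension is not free: by the Weyl relations the product of two localized generators is a phase times a generator that need not be localized, so ``checking the Weyl relations'' forces you to assign values to the non-localized generators and to verify that different factorizations (different covers, different partitions of unity) yield the same operator, phases included. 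The paper sidesteps this by fixing once and for all a two-interval cover $\open{I}\cup\open{J}=\bS^1$ with subordinate partition $f+g=1$ and \emph{defining} $\pi(W(\Phi,\Pi)):=e^{-\imath\sigma(f(\Phi,\Pi),g(\Phi,\Pi))/2}\,\pi_{\open{I}}(W(f(\Phi,\Pi)))\,\pi_{\open{J}}(W(g(\Phi,\Pi)))$ for \emph{every} $(\Phi,\Pi)\in\Cdata{\sdc{C}}$, checking by direct computation that these operators satisfy the Weyl relations and restrict to $\pi_{\open{K}}$ on each $\Cdata{\open{K}}$, and then invoking the uniqueness/existence theorem for CCR representations to obtain the representation of $\alg{W}_{KG}$. (Your uniqueness argument survives: the value of any candidate $\pi$ on a non-localized generator is forced by multiplicativity and the factorization, which is also how the paper argues.)

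Two further points. First, the CCR theorem you would need for the extension requires the assigned operators to be nonvanishing unitaries, whereas a representation $\pi_{\open{I}}:\sweyl{\open{I}}\to\bop{\hs{H}}$ in the sense of Definition \ref{def:UA} may be degenerate; the paper devotes a separate paragraph to showing that if one $\pi_{\open{I}}$ annihilates a single Weyl generator then every $\pi_{\open{L}}$ vanishes on all generators, so the zero representation does the job in that case. Second, your well-definedness argument for localized generators is too quick: two proper intervals $\open{I}_1,\open{I}_2$ both containing $\supp{\Phi}\cup\supp{\Pi}$ can intersect in two disjoint components with the support meeting both, in which case no proper interval $\open{I}_3\subseteq\open{I}_1\cap\open{I}_2$ contains the support; you must then split the generator by a further partition of unity into pieces supported in the components and compare the two values via the Weyl relations and the compatibility conditions. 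Both issues are repairable, but neither is cosmetic.
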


\begin{proof}
 Condition (1) in definition \ref{def:UA} is trivially fulfilled.
Then consider a class of representations $\{\pi_\open{I}\}_{\open{I}\in \poset{R}}$ on $\bop{\hs{H}}$, for some Hilbert space $\hs{H}$
satisfying  compatibility conditions (\ref{eq:comp}). Suppose that there is $\pi: \alg{W}_{KG} \to \bop{\hs{H}}$ satisfying (\ref{eq:comp2}).
Fix $\open{I},\open{J} \in \poset{R}$ with $\open{I} \cup \open{J} = \bS^1$ and $f,g \in \smoothfuncv{\bS^1}{\bR}$
with $f+g =1$ and $\supp{f} \subset \open{I}$, $\supp{g} \subset \open{J}$. For $(\Pi,\Phi) \in \Cdata{\sdc{C}}$ one has, if  $h(\Phi,\Pi)$ denotes the couple $(h\cdot \Phi, h \cdot \Pi)$:
\begin{equation}
\begin{split}
\pi\left(W(\Phi,\Pi)\right) &= \pi\left(W(f(\Phi,\Pi) + g(\Phi,\Pi))\right)=\\ 
&= \pi\left(W(f(\Phi,\Pi)\right)\pi\left(W(g(\Phi,\Pi))\right) e^{-\imath\sigma(f(\Phi,\Pi), g(\Phi,\Pi))/2}\\
\end{split}
\end{equation}
 We have found that:
$$\pi\left(W(\Phi,\Pi)\right) =e^{-\imath\sigma(f(\Phi,\Pi), g(\Phi,\Pi))/2} \pi_\open{I}\left(W(f(\Phi,\Pi)\right)
 \pi_\open{J}\left(g(\Phi,\Pi))\right)
$$
Incidentally, by direct inspection, one finds that $\sigma(f(\Phi,\Pi), g(\Phi,\Pi))=0$  also if $f\cdot g \neq 0$. Therefore
\begin{equation} \pi\left(W(\Phi,\Pi)\right) = \pi_\open{I}\left(W(f(\Phi,\Pi)\right)
 \pi_\open{J}\left(W(g(\Phi,\Pi))\right) \:. \label{eq:coinc}
\end{equation}
The right-hand side does not depend on $\pi$. Since every element of $\alg{W}_{KG}$ is obtained by linearity and continuity
from generators $W(\Phi,\Pi)$ and representations are continuous, we conclude that every representation of $\alg{W}_{KG}$ satisfying
(\ref{eq:comp2}) must coincide with $\pi$ due to (\ref{eq:coinc}). Now we prove that
 $\{\pi_\open{I}\}_{\open{I}\in \poset{R}}$
satisfying  compatibility conditions (\ref{eq:comp}) individuates a representation $\pi$ fulfilling (\ref{eq:comp2}).
First of all suppose that there is $(\Phi,\Pi)$ supported in some $\open{I}\in \poset{R}$ with
$\pi_\open{I}\left(W(\Phi,\Pi)\right)=0$.
Using Weyl relations, for every $\open{J} \in \poset{R}$ such that there is $\open{K}\in \poset{R}$ with $\open{K} \supset \open{I},\open{J}$:
\begin{equation}
\begin{split}
\pi_\open{J}(W(\Phi',\Pi')) &=
 \pi_\open{K}\left(W(\Phi',\Pi')\right) =\\
&= c\,\pi_\open{K}\left(W(\Phi'- \Phi,\Pi'-\Pi)\right)\pi_\open{K}\left(W(\Phi,\Pi)\right) =0
\end{split}
\end{equation}
whenever $(\Phi',\Pi') \in \Cdata{\open{J}}$, $c \in \bC$ being the appropriate exponential arising by Weyl relations. Taking two such $\open{J}$ one easily concludes that
$\pi_\open{L}\left(W(\Phi',\Pi')\right)=0$ for all $\open{L}\in \poset{R}$ and $(\Phi',\Pi') \in\Cdata{\open{L}}$.
Therefore, by continuity all representations $\pi_\open{I}$ are degenerate.
A representation $\pi$ fulfilling (\ref{eq:comp2}) in this case is the degenerate one $\pi(a)=0$ for all $a \in \alg{W}_{KG}$.

Now consider the case where $\pi_\open{L}\left(W(\Phi,\Pi)\right) \neq 0$ for all $(\Phi,\Pi)\in\Cdata{\open{L}}$ and $\open{L}\in \poset{R}$.
Fix $\open{I},\open{J}\in \poset{R}$ with $\open{I}\cup \open{J} = \bS^1$ and $f,g \in \smoothfuncv{\bS^{1}}{\bR}$
with $f+g =1$ and $\supp{f}  \subset \open{I}$, $\supp{g}  \subset \open{J}$.
  For $(\Pi,\Phi) \in\Cdata{\sdc{C}}$ define
  $$\pi\left(W(\Phi,\Pi)\right) :=e^{-\imath\sigma(f(\Phi,\Pi), g(\Phi,\Pi))/2} \pi_\open{I}\left(W(f(\Phi,\Pi)\right)
 \pi_\open{J}\left(W(g(\Phi,\Pi))\right)
$$
The right-hand side cannot vanish because all the factors appearing  therein  are invertible by construction.
Making use of (\ref{eq:comp}), it is simply proved that, for every fixed $\open{K} \in \poset{R}$
\begin{equation} \label{eq:QQQ}
\pi\left(W(\Phi,\Pi)\right) = \pi_\open{K}\left(W(\Phi,\Pi)\right) \quad \mbox{for all $(\Phi,\Pi) \in \Cdata{\open{K}}$.} 
\end{equation}
By direct inspection, using Weyl relations one verifies that the nonvanishing operators
$\pi\left(W(\Phi,\Pi)\right)$ fulfills Weyl relations for every $W(\Phi,\Pi) \in \alg{W}_{KG}$.
Finally consider the sub $C^*$-algebra $\hat{\alg{W}}$ generated in $\bop{\hs{H}}$ by the generators
$\pi\left(W(\Phi,\Pi)\right)$.
As is well-known (\cite{Bratteli_Robinson_II}) there is a faithful representation $\pi$ of $\alg{W}_{KG}$ onto $\hat{\alg{W}}$
(notice that the unit of $\alg{W}_{KG}$ is in general represented by an orthogonal projector in $\bop{\hs{H}}$) which uniquely
extends the map $W(\Phi,\Pi) \mapsto \pi\left(W(\Phi,\Pi)\right)$ by linearity and continuity.
By construction (\ref{eq:comp2}) is fulfilled by $\pi$ due to (\ref{eq:QQQ}).
\end{proof}

\chapter{Miscellaneous results}\label{app:misc}

In this appendix we collect some definitions and results cited en passant in the course of our exposition. First of all we recall some basic definitions from the theory of von Neumann algebras. 

\begin{definition}
Let $\alg{L}$ be a von Neumann algebra on a Hilbert space $\hs{H}$; the center of $\alg{L}$ is the subalgebra $\alg{L}\cap\alg{L}^{'}$.\\
\noindent \textit{i)} $\alg{L}$ is said to be a factor if it has a trivial center, i.e. if $\alg{L}\cap\alg{L}^{'} = \bC\cdot 1$;\\
\textit{ii)} A factor is said to be of \emph{type I} if it contains a minimal (non-zero) projection;\\
\textit{iii)} Two projections $E,F\in\alg{L}$ are said to be equivalent (in $\alg{L}$) if there exists $W\in\alg{L}$ such that $E = W^{*}W$ and $F = WW^{*}$; we write $E\sim F$.\\
\end{definition}

Type I factors are in fact isomorphic to $\bop{\hs{H}}$, for some Hilbert space $\hs{H}$ (see e.g. \cite[Prop. 2.7.19]{Bratteli_Robinson_I}).

Now comes some material about inclusions of von Neumann algebras \cite{Roberts_04}.

\begin{definition}\label{def:vNa_inclusion}
Let $\alg{L}$, $\alg{M}$, $\alg{N}$ be von Neumann algebras on a fixed Hilbert space $\hs{H}$; then:

\noindent \textit{i}) An inclusion $\alg{L}\subset\alg{N}$ is said to have \emph{property B} if every non-zero projection $E$ in $\alg{L}$ is equivalent to $1$ in $\alg{N}$;\\
\textit{ii)} An inclusion $\alg{L}\subset\alg{N}$ is said to be \emph{split} if there is an intermediate type $I$ factor $\alg{M}$, i.e. if $\alg{L}\subset\alg{M}\subset\alg{N}$;\\
\textit{iii)} A triple $\Lambda\equiv\left\lbrace\alg{L},\alg{N},\Omega\right\rbrace$ is a \emph{standard split inclusion} if  the inclusion $\alg{L}\subset\alg{N}$ is split and $\Omega\in\hs{H}$ is a joint cyclic and separating vector for $\alg{L}$, $\alg{N}$ and $\alg{L}^{'}\cap\alg{N}$;\\
\textit{iv)} A net of von Neumann algebras $\net{A}{K}$ has the \emph{Borchers property} if given $O\in\poset{K}$ there is $O_{1}\subset O$ such that the inclusion $\snet{A}{O_{1}}\subset\snet{A}{O}$ has the property $B$.
\end{definition}

We need a basic property of standard split inclusions.
\begin{proposition}\label{prop:standard_split}
A standard split inclusion has property $B$.
\end{proposition}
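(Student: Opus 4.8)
The plan is to reduce the statement $\Lambda=\{\alg{L},\alg{N},\Omega\}$ standard split $\Rightarrow$ property $B$ to the existence of the intermediate type $I$ factor together with the cyclic/separating vector. First I would unpack the hypothesis: by splitness there is a type $I$ factor $\alg{M}$ with $\alg{L}\subset\alg{M}\subset\alg{N}$, and by the definition of standard split inclusion $\Omega$ is jointly cyclic and separating for $\alg{L}$, $\alg{N}$ and $\alg{L}'\cap\alg{N}$. The key intermediate fact I would invoke (standard in the theory of standard split inclusions, due to Doplicher--Longo) is that one may choose the type $I$ factor $\alg{M}$ so that the triple $\{\alg{L},\alg{M},\Omega\}$ and $\{\alg{M},\alg{N},\Omega\}$ are again standard split; in particular $\Omega$ is cyclic and separating for $\alg{M}$ as well. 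Since $\alg{M}$ is a type $I$ factor it is isomorphic to $\bop{\hs{K}}$ for some Hilbert space $\hs{K}$, and in fact the split structure yields a natural spatial isomorphism $\alg{L}\vee(\alg{L}'\cap\alg{N})\cong \alg{L}\,\overline{\otimes}\,(\alg{L}'\cap\alg{N})$ implemented by a unitary $W:\hs{H}\to\hs{H}_1\otimes\hs{H}_2$, under which $\alg{M}$ corresponds to $\alg{L}\,\overline{\otimes}\,\bC 1$ (up to the ampliation).

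Next I would carry out the comparison-of-projections argument. Take a non-zero projection $E\in\alg{L}$. Viewing $E$ inside the type $I$ factor $\alg{M}\supset\alg{L}$, I want to show $E\sim 1$ in $\alg{N}$. Because $\Omega$ is cyclic and separating for $\alg{L}$ and $\alg{L}'\cap\alg{N}$, and the inclusion is split, the vector $\Omega$ induces, via the tensor splitting, a vector of the form (unitarily equivalent to) $\Omega_1\otimes\Omega_2$ with $\Omega_1$ cyclic and separating for $\alg{L}$ acting on $\hs{H}_1$; crucially $\hs{H}_2$ is infinite dimensional (this is where separability/infinite-dimensionality enters, guaranteed here since all our local von Neumann algebras act on the infinite-dimensional Fock space). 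Then in $\alg{M}\cong\bop{\hs{H}_1}\otimes\bC 1$ — or more precisely in $\alg{N}$, which contains $\alg{M}\vee(\alg{L}'\cap\alg{N})\supseteq\bop{\hs{H}_1}\,\overline{\otimes}\,\alg{B}$ for a suitable infinite type $I$ piece $\alg{B}$ — the projection $E\otimes 1$ is equivalent to $1\otimes 1$ by the standard Murray--von Neumann comparison: in a properly infinite (indeed type $I_\infty$) von Neumann algebra every non-zero projection whose central support is $1$ is equivalent to the identity, and here the relevant central support is $1$ precisely because $E\neq 0$ and $\Omega$ is separating, forcing $E$ to have central support $1$ in the factor $\alg{M}$, hence after tensoring with the infinite factor $E\otimes 1$ has central support $1$ in the enveloping type $I_\infty$ subalgebra of $\alg{N}$.

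Finally I would assemble these pieces: the partial isometry $W\in\alg{N}$ realizing $E\otimes 1\sim 1\otimes 1$ inside the type $I_\infty$ subalgebra of $\alg{N}$ is a fortiori an element of $\alg{N}$ with $W^*W=E$ and $WW^*=1$, which is exactly the assertion $E\sim 1$ in $\alg{N}$, i.e.\ property $B$. The main obstacle, and the step I would write most carefully, is the bookkeeping of the tensor-product decomposition coming from the split property and verifying that the ``second factor'' is genuinely infinite so that the comparison theorem in a properly infinite algebra applies; everything else is either a direct citation (Doplicher--Longo on standard split inclusions, e.g.\ via \cite{D'Antoni_Longo_Radulescu_98} and the references therein) or routine von Neumann algebra comparison theory. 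I would also note explicitly that the reference vector $\Psi_0$ supplied by the Reeh--Schlieder property (Proposition \ref{th:spatial Reeh-Schlieder}) is what makes the hypothesis of a joint cyclic-separating vector available in the application, so the proposition is exactly what is needed to close the Borchers-property argument in the proof of Theorem \ref{th:genprop}(v).
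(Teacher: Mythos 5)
The paper itself does not actually prove this proposition --- the printed proof is just the citation ``See e.g.\ \cite{Roberts_04}'' --- so you are attempting more than the text does, and your overall strategy (Doplicher--Longo tensor splitting followed by Murray--von Neumann comparison) is the standard and correct one. As written, however, there are two genuine gaps. The first is the comparison principle you invoke: it is \emph{false} that in a properly infinite von Neumann algebra every non-zero projection with central support $1$ is equivalent to the identity --- a rank-one projection in $\bop{\ell^2}$ is a counterexample --- and this is exactly the danger here, because a non-zero $E\in\alg{L}$ may correspond to a finite-rank projection $E_1\otimes 1$ of the intermediate type $I$ factor $\alg{M}\cong\bop{\hs{H}_1}\otimes\bC 1$, so that $E\not\sim 1$ already inside $\alg{M}$ (note also that $\alg{M}$ corresponds to $\bop{\hs{H}_1}\otimes\bC 1$, not to $\alg{L}\,\overline{\otimes}\,\bC 1$). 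The correct principle is that a \emph{properly infinite} projection with full central support in a $\sigma$-finite algebra is equivalent to $1$; hence the entire content of the proof is to show that $E$, viewed in $\alg{N}$, dominates a projection whose reduced algebra is properly infinite. Your second gap sits exactly where this should be supplied: you posit $\alg{N}\supseteq\bop{\hs{H}_1}\overline{\otimes}\alg{B}$ for ``a suitable infinite type $I$ piece'' $\alg{B}\subseteq\alg{L}'\cap\alg{N}$, but nothing in the hypotheses produces an infinite type $I$ factor inside $\alg{L}'\cap\alg{N}$, and the natural second leg $\alg{M}'\cap\alg{N}$ could a priori be finite, in which case $E\otimes 1$ would again be a finite projection and the comparison would fail.

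The repair is to tensor against $\alg{N}$ itself rather than against a hypothetical piece of $\alg{L}'\cap\alg{N}$. Standardness makes $\Omega$ separating for $\alg{L}\vee\alg{N}'=(\alg{L}'\cap\alg{N})'$, and the split property gives the algebraic isomorphism $AB'\mapsto A\otimes B'$ onto $\alg{L}\overline{\otimes}\alg{N}'$; since both algebras possess cyclic and separating vectors ($\Omega$ and $\Omega\otimes\Omega$ respectively), this isomorphism is unitarily implemented by some $U:\hs{H}\to\hs{H}\otimes\hs{H}$, whence $U\alg{N}U^*=\bop{\hs{H}}\overline{\otimes}\alg{N}$ and $UEU^*=E\otimes 1$. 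Cyclicity and separation of $\Omega$ for $\alg{M}$ force $\dim\hs{H}_1=\dim\hs{H}_2=\aleph_0$, so $\alg{M}$ is type $I_\infty$ and $\alg{N}$, containing it unitally, is properly infinite (and $\sigma$-finite, $\Omega$ being separating). Picking a rank-one $p\le E$ in $\bop{\hs{H}}$, the reduced algebra of $p\otimes 1$ is isomorphic to $\alg{N}$, so $p\otimes 1$ is properly infinite and splits into countably many mutually orthogonal copies of itself; writing $1\otimes 1=\sum_n p_n\otimes 1$ with $p_n$ rank one and $p_n\otimes 1\sim p\otimes 1$ then gives $1\otimes 1\preceq p\otimes 1\le E\otimes 1\le 1\otimes 1$, and Schr\"oder--Bernstein for projections yields $E\otimes 1\sim 1\otimes 1$, i.e.\ $E\sim 1$ in $\alg{N}$. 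Your closing observation on how the proposition feeds the Borchers property in Theorem \ref{th:genprop}(v) via the Reeh--Schlieder vector is correct.
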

\begin{proof}
See e.g. \cite{Roberts_04}.
\end{proof}

We conclude with the definition of a $C^*$-category.

\begin{definition}[$C^{*}$- category]
A category $\category{C}$ is said to be a $C^*$-category if the following conditions are satisfied:
\begin{itemize}
 \item The set of arrows $(z,z_1)$ between two elements $z, z_1\in\category{C}$ has  a complex vector space structure $\left\lbrace (z,z_1), +, \cdot\right\rbrace$; composition between arrows is bilinear with respect to it;
\item The space $(z,z_1)$ admits a complete norm $\norm{\cdot}$ satisfying the $C^*$-property (namely $\norm{r^*r} = \norm{r}^2$ for each $r\in (z,z_1)$), making it into a complex Banach space;
\item There should be an adjoint $*$, that is an involutive controvariant functor acting as the identity on the objects.
\end{itemize}
As a consequence, if $\category{C}$ is a $C^*$-category then $(z,z)$ is a $C^*$-algebra for every $z\in\category{C}$.
\end{definition}

\addcontentsline{toc}{chapter}{Bibliography}


\end{document}